\newcommand\independent{\protect\mathpalette{\protect\independenT}{\perp}}
\def\independenT#1#2{\mathrel{\rlap{$#1#2$}\mkern2mu{#1#2}}}
\newtheorem{theorem}{Theorem}[section]
\title{\textbf{A Doubly Robust Framework for Addressing Outcome-Dependent Selection Bias in Multi-Cohort EHR Studies}}
\author
{RITOBAN KUNDU$^{1*}$ ,
XU SHI$^1$, MICHAEL KLEINSASSER$^1$, LARS G. FRITSCHE $^1$,\\ \textbf{MAXWELL SALVATORE$^2$, BHRAMAR MUKHERJEE$^{3,4}$}  \\
$^{1}$Department of Biostatistics, University of Michigan, Ann Arbor, Michigan, U.S.A. \\
$^{2}$Biomedical and Translational Informatics Laboratory, Perelman School of Medicine,\\
University of Pennsylvania, Philadelphia, Pennsylvania, U.S.A.\\
$^{3}$Department of Biostatistics, Yale School of Public Health, New Haven, Connecticut, U.S.A.\\
$^{4}$Department of Epidemiology (Chronic Disease), Yale School of Public Health, New Haven, Connecticut, U.S.A.\\
$^{*}$Corresponding Author}
\begin{document}
\maketitle

\centerline{\Large \bf Abstract}
\vspace{0.1in}
\noindent
Selection bias can hinder accurate estimation of association parameters in binary disease risk models using non-probability samples like electronic health records (EHRs). The issue is compounded when participants are recruited from multiple clinics/centers with varying selection mechanisms that may depend on the disease/outcome of interest. Traditional inverse-probability-weighted (IPW) methods, based on constructed parametric selection models, often struggle with misspecifications when selection mechanisms vary across cohorts. This paper introduces a new Joint Augmented Inverse Probability Weighted (JAIPW) method, which integrates individual-level data from multiple cohorts collected under potentially outcome-dependent selection mechanisms, with data from an external probability sample. JAIPW offers double robustness by incorporating a flexible auxiliary score model to address potential misspecifications in the selection models. We outline the asymptotic properties of the JAIPW estimator, and our simulations reveal that JAIPW achieves up to six times lower relative bias and five times lower root mean square error (RMSE) compared to the best performing joint IPW methods under scenarios with misspecified selection models. Applying JAIPW to the Michigan Genomics Initiative (MGI), a multi-clinic EHR-linked biobank, combined with external national probability samples, resulted in cancer-sex association estimates  closely aligned with national benchmark estimates. We also analyzed the association between cancer and polygenic risk scores (PRS) in MGI to illustrate a situation where the exposure variable is not measured in the external probability sample.

\vspace{0.2in}

\noindent {\bf Keywords:} Data Integration, Double Robustness, Inverse Probability Weighting (IPW), Joint Augmented Inverse Probability Weighted (JAIPW), Multi-Cohort Sampling, Selection Bias.

\section{Introduction}
\label{sec:intro}
Electronic Health Records (EHRs) are vital for clinical care, biomedical research, and managing population health. They support various tasks, such as analyzing disease-exposure associations, studying treatment patterns, and tracking patient outcomes. However, observational studies using EHR data often struggle with validity and generalizability due to biases such as selection bias, information bias, and confounding \citep{beesley2020analytic}. These systematic biases become more amplified in large datasets, a phenomenon termed the ``curse of large n" \citep{kaplan2014big,bradley2021unrepresentative}. Selection bias, in particular \citep{haneuse2016general,beesley2022statistical}, poses significant challenges in non-probability samples, such as the reported healthy volunteer bias in the UK Biobank, which skewed genetic association analyses \citep{schoeler2023participation}. Recent work by \citet{salvatore2024weight} tackled selection bias across multiple biobanks using weighting methods.\\

\noindent
\citet{kundu2024framework} posit a framework for analyzing selection bias using Selection Directed Acyclic Graphs (DAGs) and proposed inverse probability weighted (IPW) methods that rely on external data from the target population. However, their approach assumes homogeneous sampling mechanism in the internal, non-probability EHR sample, which is an untenable assumption when recruitment is done through specialized clinics with distinct selection strategies. Heterogeneity in multi-center EHR data can result from various potential sources such as varying covariate distributions, differences in clinical and coding practices, differences in underlying patient population or through center-specific data harmonization choices. It often goes unrecognized in practice that heterogeneity can also result from differing recruitment mechanisms. While the problem of differing covariate distributions \citep{fu2020assessment} and clinical practices \citep{leese2023clinical} have been studied elsewhere, this paper specifically focuses on EHR heterogeneity introduced by different sampling mechanisms.\\

\noindent
Our work is motivated by the Michigan Genomics Initiative (MGI) \citep{zawistowski2021michigan}, a longitudinal repository at the University of Michigan that integrates EHRs with multi-modal data including germline genetics. Since 2012, MGI has recruited over 100,000 individuals through at least six major specialized clinics: MGI Anesthesiology (BB), MIPACT (Michigan Predictive Activity and Clinical Trajectories), MEND (Metabolism, Endocrinology, and Diabetes), MHB (Mental Health BioBank), PROMPT (PROviding Mental health Precision Treatment), and MY PART (Michigan and You – Partnering to Advance Research Together, a diverse cohort that oversamples Black, Latino/a or LatinX, Middle Eastern, and North African populations), each sub-cohort giving rise to different participant profiles as a result of different patient selection mechanisms. This is common practice in hospital-based cohort building where some clinicians agree to enroll participants to a given biobanking effort. Our goal in the current paper is to derive principled inferential techniques that account for such sampling heterogeneity. In this example, our target population of interest is the United States adult population.\\

\noindent
In Section \ref{sec:jointipw1}, we extend the existing IPW methods to multi-cohort settings with heterogeneous selection mechanisms. However, IPW estimators are susceptible to model misspecification, arising from either incorrect functional forms or unmeasured selection variables. This challenge persists even in single-center studies \citep{kundu2024framework}, underscoring the need for robust methodologies to mitigate biases arising from such misspecification. In the causal inference literature, IPW estimators for Average Treatment Effect (ATE) are augmented with a prediction model for the outcome variable to improve efficiency and robustness, resulting in the widely known class of augmented inverse probability weighted (AIPW) estimators \citep{robins1994estimation, scharfstein1999adjusting} which are Doubly Robust (DR). \citet{chen2020doubly} and \citet{yang2020doubly} focused on developing DR methods for estimating population means in presence of selection bias by integrating a single non-probability sample with individual-level data arising from external probability samples. Under a causal DR framework in EHR, \citet{du2024doubly} additionally studied the effect of selection bias for estimating average treatment effects.\\

\noindent
Unlike standard AIPW contexts, where the primary goal is to estimate population means or treatment effects, our study focuses on estimating disease model association parameters while accounting for the influence of the disease or outcome on selection, an issue largely unaddressed in existing literature. In Section \ref{sec:aipws}, we introduce a novel DR estimator tailored for estimating disease model parameters under outcome-dependent selection within a single cohort. The challenge of model misspecification becomes even more pronounced when multiple sub-cohorts with distinct selection mechanisms are combined together. To address this, in Section \ref{sec:jaipw} we extend the DR framework to the Joint Augmented Inverse Probability Weighting (JAIPW) method, which ensures double robustness across multiple outcome-dependent selection mechanisms. We further propose a flexible cross-fitted non-parametric version of JAIPW for scenarios where the auxiliary score model may be highly complex. We derive the asymptotic distributions and variance estimators for JAIPW estimates.  Extensive simulations (Section \ref{sec:simu}) and two applications to MGI data (Section \ref{sec:data}) demonstrate the superior performance of JAIPW compared to joint IPW methods.

\section{Problem Setup and Notations}\label{sec:setup}
We focus on estimating the association between a binary disease indicator \( D \) and a set of covariates \( \boldsymbol{Z} \) (\( p = \text{dim}(\boldsymbol{Z}) \)) in a target population of size \( N \). For example, in the context of MGI data, the target population of interest is the US adult population. We analyze data from \( K \) internal non-probability EHR samples (cohorts) drawn from the same target population, where each cohort is represented by a binary selection indicator \( S_k \) for \( k = 1, 2, \ldots, K \). We assume that individual-level data can be shared across the cohorts. Figure \ref{fig:popumulti} illustrates the population structure with $K=3$.  Within each cohort, the variables \( \boldsymbol Z_{1k} \) are those that only appear in the disease model, while \( \boldsymbol Z_{2k} \) are shared by both the disease and selection models. Across all cohorts, the full covariate set is consistently specified as \( \boldsymbol{Z} = (1, \boldsymbol{Z}_{1k}, \boldsymbol{Z}_{2k}) \), where the intercept term is explicitly included for ease of notation. The corresponding parameter vector is given by \( \boldsymbol{\theta}_{\boldsymbol{Z}}\) which is of dimension $(p+1)$,  where $p = \text{dim}(\boldsymbol Z_{1k}) + \text{dim}(\boldsymbol Z_{2k})$. For example, suppose the goal is to estimate the effects of depression and diabetes on cancer using data from two cohorts. Cohort 1 primarily consists of diabetic patients, while Cohort 2 focuses on mental health patients. In this scenario, cancer (\(D\)) is the outcome variable, and the covariates (\(\boldsymbol{Z}\)) include depression and diabetes. In Cohort 1, $\boldsymbol{Z}_{11}$ represents depression, and $\boldsymbol{Z}_{21}$ represents diabetes. Conversely, in Cohort 2, $\boldsymbol{Z}_{12}$ represents diabetes, and $\boldsymbol{Z}_{22}$ represents depression. The primary disease model in the target population is expressed by:
\begin{equation}
    \text{logit}\{P(D=1|\boldsymbol Z)\}=\boldsymbol\theta_{\boldsymbol Z}'\boldsymbol Z \cdot \label{eq:eq1}
\end{equation}
Additionally, \( \boldsymbol W_k \) denotes the variables appearing exclusively in the selection model for cohort \( k \). We also account for the possibility that the disease indicator \( D \) may influence the selection process. The cohort-specific selection probability is defined as,
\begin{equation*}
    P(S_k=1|\boldsymbol X_k) = \pi_k(\boldsymbol X_k), \hspace{0.2cm} \boldsymbol X_k =(D,\boldsymbol Z_{2k}, \boldsymbol W_k) ,
\end{equation*}
where $\boldsymbol X_k$ includes all the variables affecting selection indicator $S_k$ directly. The following conditions are imposed on the selection mechanisms, collectively denoted as \textbf{Condition C1}:
\begin{itemize}
\item \textbf{Condition C1.1:} The target population of interest is considered a finite random sample drawn from a hypothetical infinite super-population, similar to the framework considered in \citet{liu2025superpopulation}.
\item \textbf{Condition C1.2:} For each cohort $k \in \{1, \dots, K\}$ and every individual in the target population, the selection probabilities are bounded away from zero and one, i.e., $0 < \pi_{k}(\boldsymbol{X}_{k}) < 1$.
\item \textbf{Condition C1.3:} For any two distinct cohorts $j, k \in \{1, \dots, K\}$ (where $j \neq k$), the selection indicators are independent given the covariates: $S_j \independent S_k \mid \boldsymbol{X}_j, \boldsymbol{X}_k$."
\end{itemize}
Condition C1.3 may be violated when, for example, there exists common unmeasured variables such as access to healthcare that cannot be included in $\boldsymbol X_k$ as it may not be available in the EHR, but is likely to influence selection across cohorts. Let \( S^{\text{mult}} = \max(S_1, S_2, \ldots, S_K) \) denote the composite selection indicator which captures overall selection into the combined study sample. Mathematically, our main goal is to estimate the association parameters of the conditional distribution \( D|\boldsymbol Z \) in equation \eqref{eq:eq1}, but we can only directly assess \( D|\boldsymbol Z, S^{\text{mult}} = 1 \) based on available data. As shown by \citet{beesley2022statistical}, the true model parameters and those from the naively fitted model (conditional on \( S^{\text{mult}} = 1 \)) are related as follows:
\begin{equation}
 \text{logit}\{P(D=1|\boldsymbol Z,S^{\text{mult}}=1)\}=\boldsymbol\theta_{\boldsymbol Z}'\boldsymbol Z+ \text{log}\{r(\boldsymbol Z)\},  \label{eq:eq2}
\end{equation}
where \( r(\boldsymbol Z) = \frac{P(S^{\text{mult}}=1|D=1,\boldsymbol Z)}{P(S^{\text{mult}}=1|D=0,\boldsymbol Z)} \) represents how the disease model covariates \( \boldsymbol Z \) affect the selection mechanism. We present this identity to explicitly demonstrate that naive logistic regression yields biased estimates in most scenarios. Figure \ref{fig:dagmulti} illustrates the relationships among the disease and selection model variables through a selection DAG. While we considered the most general and complex DAG setup, in some cohorts, certain arrows may be absent and strengths of associations may vary. It is important to note that this diagram is not a causal DAG in the traditional intervention sense. Rather, they follow the selection DAG framework described in prior work \citep{kundu2024framework}, depicting a joint association model that captures the structural dependencies between disease, selection, and the covariates driving each of these mechanisms. We assume that if a directed path from $D$ to $S_k$ exists (either directly, $D \to S_k$, or indirectly, $D\rightarrow \boldsymbol W_k\rightarrow S_k$), then $D$ must occur temporally before $S_k$. If the association between $D$ and $S_k$ arises solely from backdoor paths between $D$ and $S_k$, no temporal ordering restriction is imposed. Unweighted logistic regression, as described in equation \eqref{eq:eq2}, typically leads to biased estimates of \(\boldsymbol \theta_{\boldsymbol Z} \). Therefore in the next section we propose IPW and AIPW methods to reduce selection bias in the context of multi-cohort analysis. Even if the disease model is not logistic, all the methods proposed in this paper remain valid, as they only require the model to be specified through a valid estimating equation derived from a parametric model.
\section{Methods}
\subsection{IPW Methods} \label{sec:jointipw}
First, for each individual \( i \) in the target population, we define the selection probability into combined sample as \(\pi(\boldsymbol{X}^{\text{mult}}_i ) = P(S^{\text{mult}}_i = 1 \mid \boldsymbol{X}^{\text{mult}}_i )\), which represents the probability of being selected into at least one of the $K$ cohorts, where  \(\boldsymbol{X}^{\text{mult}}_i = \boldsymbol{X}_{1i} \cup \boldsymbol{X}_{2i} \cup \ldots \cup \boldsymbol{X}_{Ki}\). Using Condition C1.3 we obtain,
\begin{align}
     \pi(\boldsymbol{X}^{\text{mult}}_i)&=P(\text{max}(S_{1i},S_{2i},..,S_{Ki})=1|\boldsymbol{X}^{\text{mult}}_i )=1-P[\text{max}(S_{1i},S_{2i},..,S_{Ki})=0|\boldsymbol{X}^{\text{mult}}_i ] \nonumber\\
     & = 1-[1-\pi_1(\boldsymbol X_{1i})][1-\pi_2(\boldsymbol X_{2i})]...[1-\pi_K(\boldsymbol X_{Ki})]\cdot \label{eq:eq3}
\end{align} 
\noindent
With known cohort-specific selection propensity functions, \( \pi_1(.), \pi_2(.), \ldots, \pi_K(.) \), we estimate the parameters \( \boldsymbol{\theta}_{\boldsymbol{Z}} \) using a weighted logistic score equation:
\begin{equation}
     \frac{1}{N}\sum_{i=1}^{N}\frac{S^{\text{mult}}_i}{\pi(\boldsymbol{X}^{\text{mult}}_i)}\left\{D_i\boldsymbol Z_i-\frac{e^{\boldsymbol\theta_{\boldsymbol Z}'\boldsymbol Z_i}}{(1+e^{\boldsymbol\theta_{\boldsymbol Z}'\boldsymbol Z_i})} \boldsymbol Z_i\right\}=\mathbf{0}.\label{eq:eq4}
\end{equation}
\noindent
The consistency of the estimated parameters \( \widehat{\boldsymbol{\theta}}_{\boldsymbol{Z}} \), derived from equation \eqref{eq:eq4}, assuming known selection probabilities \( \pi(\boldsymbol{X}^{\text{mult}}_i) \), is demonstrated in Supplementary Section S2.1. Supplementary Sections S2.2 and S2.3 provide the asymptotic distribution of \( \widehat{\boldsymbol{\theta}}_{\boldsymbol{Z}} \) and a consistent estimator for the asymptotic variance, with detailed proofs respectively.

\subsection{Extension of existing IPW methods for multiple non-probability samples}\label{sec:jointipw1}
For non-probability samples, selection probabilities are typically unknown, necessitating the use of two-step joint IPW approaches. These approaches typically begin with estimating selection probabilities \(\widehat{\pi}(\boldsymbol{X}^{\text{mult}})\). Estimating each cohort's individual selection probabilities \(\widehat{\pi}_1(\boldsymbol{X}_1)\), \(\widehat{\pi}_2(\boldsymbol{X}_2)\), ..., \(\widehat{\pi}_K(\boldsymbol{X}_K)\) enables the construction of the overall joint selection probability \(\widehat{\pi}(\boldsymbol{X}^{\text{mult}})\) as defined in equation \eqref{eq:eq3}. Disease model parameters \(\boldsymbol{\theta}_{\boldsymbol{Z}}\) are then estimated using the weighted score equation \eqref{eq:eq4} where the true selection probabilities \(\pi(\boldsymbol{X}^{\text{mult}})\) are replaced with estimated \(\widehat{\pi}(\boldsymbol{X}^{\text{mult}})\).\\

\noindent
To estimate cohort-specific selection models, we use selection propensity estimation techniques from \citet{kundu2024framework}, including Pseudolikelihood (PL) \citep{chen2020doubly}, Simplex Regression (SR) \citep{beesley2022statistical}, Post-Stratification (PS) \citep{holt1979post}, and Calibration (CL) \citep{wu2003optimal}, extended to multi-cohort contexts and denoted by JPL, JSR, JPS, and JCL. The first two methods rely on access to individual-level data on selection variables from an external probability sample representing the target population, while the latter two rely on summary-level statistics of the selection variables. In our data analysis in Section \ref{sec:data}, our target population of interest is the 2018 adult US population $(\text{Age}\geq 18)$. This definition aligns with our use of the
NHANES (National Health and Nutrition Examination Survey) 2017-18  data as an individual-level external probability sample, which provides representative data on the selection variables for this population. NHANES is a complex multi-stage survey that provides a representative sample of the U.S. civilian, non-institutionalized population. It includes extensive data on health and nutrition, physical examinations, laboratory tests, demographic and socioeconomic information, and responses to health-related questionnaires. For summary level statistics we used SEER (Surveillance, Epidemiology, and End Results), the U.S. Census, and the CDC (Centers for Disease Control). These external data sources are depicted in Figure \ref{fig:popumulti}.\\

\noindent
While the details of each joint IPW method are provided in Supplementary Sections S3, S4, S5 and S6, we illustrate our extension approach for one of them, namely JPL here. For JPL, we first specify a parametric form for \(\pi_k(\boldsymbol{X}_k)\) as \(\pi_k(\boldsymbol{X}_k, \boldsymbol{\alpha}_k) = \frac{e^{\boldsymbol{\alpha}_k' \boldsymbol{X}_k}}{1 + e^{\boldsymbol{\alpha}_k' \boldsymbol{X}_k}}\), for \(k = 1, 2, \ldots, K\), where $\boldsymbol \alpha_k$ parametrize the selection model function $\pi_k(.)$. Selection parameters \(\boldsymbol{\alpha}_k\) are estimated using the pseudolikelihood estimating equation, $$\frac{1}{N} \sum_{i=1}^N S_{ki}\boldsymbol{X}_{ki} - \frac{1}{N} \sum_{i=1}^N \left(\frac{S_{\text{ext},i}}{\pi_{\text{ext},i}}\right) \pi_k(\boldsymbol{X}_{ki}, \boldsymbol{\alpha}_k) \boldsymbol{X}_{ki} = \mathbf{0},$$
where $S_{\text{ext}}$ and $\pi_{\text{ext}}$ denote the selection indicator variable and known sampling/design probabilities for the external probability sample. The second term of this estimating equation is estimated using the individual-level data from the external data. The Newton-Raphson method is used to solve for $\boldsymbol{\alpha}_k$ in this equation. The internal selection probabilities \(\pi_k(\boldsymbol{X}_{ki}, \boldsymbol{\alpha}_k)\) are then estimated by plugging in estimated \(\widehat{\boldsymbol{\alpha}}_k\) for all the cohorts. Using these probabilities and equation \eqref{eq:eq3}, we calculate the joint selection probability \(\widehat{\pi}(\boldsymbol{X}^{\text{mult}})\), which is used in the IPW equation \eqref{eq:eq4} to estimate \(\widehat{\boldsymbol{\theta}}_{\boldsymbol{Z}}\). Proofs of consistency, asymptotic distribution, and variance estimator for \(\widehat{\boldsymbol{\theta}}_{\boldsymbol{Z}}\) under multiple selection mechanisms are detailed in Supplementary Section S3. Descriptions and asymptotic theory for the other three joint IPW methods are provided in Supplementary Sections S4, S5, and S6. However, consistent estimation of disease model parameters using all of these joint IPW methods requires the correct specification of the combined selection model \(\pi(\boldsymbol{X}^{\text{mult}})\), which is a challenging task in practice. Next, we propose a Doubly Robust method to estimate disease model parameters while ensuring robustness against misspecified selection models. 

\vspace{-0.5cm}
\subsection{AIPW method for a single study}\label{sec:aipws}
We first propose our DR method for a single study with selection indicator \( S \), disease covariates \( \boldsymbol{Z} = (1,\boldsymbol{Z}_1, \boldsymbol{Z}_2) \) with corresponding $\boldsymbol \theta_{\boldsymbol Z}=(\theta_0,\boldsymbol \theta_1,\boldsymbol \theta_2)$ and selection variables \( \boldsymbol{X} \), where \( \boldsymbol{X} = (D, \boldsymbol{Z}_2, \boldsymbol{W}) \). The standard DR estimating equation integrates a selection propensity score component with an outcome prediction model (auxiliary score model). However, when the outcome variable \( D \) directly influences the selection mechanism, the conventional outcome prediction model, typically constructed based on the conditional distribution \( P(D \mid \boldsymbol{X}) \)—is misspecified, as \( P(D = 1 \mid \boldsymbol{X}) \neq P(D = 1 \mid \boldsymbol{X}, S = 1) \). To overcome this issue, we identify disease model covariates that do not directly affect selection (\(\boldsymbol{Z}_1\)) and construct an auxiliary score model that captures the relationship between \( \boldsymbol{Z}_1 \) and \( \boldsymbol{X} \), incorporating information from an external probability sample to improve robustness. Formally, we impose the following condition:\\

\noindent
\textbf{Condition C2:} There exists a non-empty subset of $\boldsymbol Z$, denoted $\boldsymbol{Z}_{1}$, such that $\boldsymbol{Z}_1 \independent S \mid \boldsymbol{X}$. Moreover, for all individuals in the target population, the positivity condition, $0<P(S=1|\boldsymbol X)<1$ holds. Individual-level data on \(\boldsymbol{X}\) is available in an external probability sample.\\

\noindent
This is a Missing At Random (MAR) assumption \citep{rubin1976inference,little2019statistical}, where $\boldsymbol{Z}_1$ is effectively ``missing" for individuals not selected into the sample (i.e., when $S=0$). This is analogous to the scenario in \citet{chen2020doubly}, where the outcome variable was assumed to be MAR in the internal sample. In our framework, because selection $S$ may depend on the outcome $D$, we require this subset $\boldsymbol{Z}_{1}$ to meet the MAR condition. The construction of our DR estimating equation, therefore, follows a similar approach to AIPW methods developed for handling missing data. The positivity condition in Condition C2 directly follows from Condition C1.2 for $K=1$.
\par
Since we have established the MAR framework, we formally specify the auxiliary score model using an approach similar to that proposed in \citet{williamson2012doubly}. Specifically, we project the disease model score function onto the space spanned by \(\boldsymbol{X}\). The disease model score function is: $
\mathcal{U}(\boldsymbol{\theta}_{\boldsymbol{Z}}) = D \boldsymbol Z - \frac{e^{\boldsymbol{\theta}_{\boldsymbol{Z}}' \boldsymbol{Z}}}{1 + e^{ \boldsymbol{\theta}_{\boldsymbol{Z}}' \boldsymbol{Z}}}\boldsymbol{Z}$. The auxiliary score model is denoted by this projection, \(\boldsymbol{f}(\boldsymbol{X}, \boldsymbol\theta_{\boldsymbol{Z}}):=\mathbb{E}(\mathcal{U}(\boldsymbol{\theta}_{\boldsymbol{Z}}) \mid \boldsymbol{X}, S=1) = \mathbb{E}(\mathcal{U}(\boldsymbol{\theta}_{\boldsymbol{Z}}) \mid \boldsymbol{X}) \). The first equality holds due to  \(\boldsymbol{Z}_{1} \independent S \mid \boldsymbol{X}\), from Condition C2. The MAR DR estimating equation in this context is given by:
\begin{equation}
\frac{1}{N} \sum_{i=1}^N \frac{S_i}{\pi(\boldsymbol{X}_i)} (\mathcal{U}_i(\boldsymbol{\theta}_{\boldsymbol{Z}}) - \boldsymbol f(\boldsymbol{X}_i, \boldsymbol{\theta}_{\boldsymbol{Z}})) + \frac{1}{N} \sum_{i=1}^N \boldsymbol f(\boldsymbol{X}_i, \boldsymbol{\theta}_{\boldsymbol{Z}}) = \boldsymbol 0 \cdot \label{eq:eq5}
\end{equation}
Since the second term cannot be estimated directly from available data, we replace it with \(\frac{1}{N} \sum_{i=1}^N \frac{S_{\text{ext},i}}{\pi_{\text{ext},i}} \boldsymbol{f}_i(\boldsymbol{X}_i, \boldsymbol\theta_{\boldsymbol{Z}}) \), which is possible due to availability of $\boldsymbol X$ in the external probability sample as specified in Condition C2. The proposed DR estimating equation is:
\begin{equation}
\frac{1}{N} \sum_{i=1}^N \frac{S_i}{\pi(\boldsymbol{X}_i)}  (\mathcal{U}_i(\boldsymbol{\theta}_{\boldsymbol{Z}}) - \boldsymbol f(\boldsymbol{X}_i, \boldsymbol{\theta}_{\boldsymbol{Z}})) + \frac{1}{N} \sum_{i=1}^N \frac{S_{\text{ext},i}}{\pi_{\text{ext},i}} \boldsymbol f(\boldsymbol{X}_i, \boldsymbol{\theta}_{\boldsymbol{Z}}) = \boldsymbol 0\cdot \label{eq:eq6}
\end{equation}
It is important to note that the availability of \(\boldsymbol{Z}_{1}\) in the external probability sample is not required. For instance, in the second data example in Section \ref{sec:prsresult}, \(\boldsymbol{Z}_{1}\) corresponds to Polygenic Risk Scores (PRS), which is available in the MGI data but not in the NHANES data. Even if \(\boldsymbol{Z}_{1}\) were available in NHANES, which has a smaller sample size compared to the MGI, integrating both the datasets using an auxiliary score model would improve estimation efficiency, as demonstrated in the first data example in Section \ref{sec:cansex}. Next we discuss estimation of nuisance components in equation \eqref{eq:eq6}, namely $\pi(.)$ and $\boldsymbol f(.)$.\\

\noindent
\textbf{Nuisance Parameter Estimation:} The forms of the selection weight function $\pi(\boldsymbol X)$ and the auxiliary score function $\boldsymbol f(\boldsymbol X, \theta_{\boldsymbol Z})$ are rarely known a priori. Consequently, we need to estimate these functions.\\

\noindent
\textbf{Estimation of the Selection Model:} We parameterize the selection model as \(\pi(\boldsymbol{X}_i, \boldsymbol{\alpha})\) and estimate the selection probabilities using one of the four IPW methods, namely PL, SR, PS, or CL for a single cohort, outlined in Section \ref{sec:jointipw1}. Let $\boldsymbol{q}(\boldsymbol{\alpha}, \boldsymbol{X}_i, S_i, S_{\text{ext},i}, \pi_{\text{ext},i})$ represent the selection estimating function using any of these IPW methods. For example in case of pseudolikelihood, $\boldsymbol{q}(\boldsymbol{\alpha}, \boldsymbol{X}_i, S_i, S_{\text{ext},i}, \pi_{\text{ext},i})=\left[S_{i}\boldsymbol{X}_{i} - \left(\frac{S_{\text{ext},i}}{\pi_{\text{ext},i}}\right) \pi(\boldsymbol{X}_{i}, \boldsymbol{\alpha}) \boldsymbol{X}_{i}\right]$. The general estimating equation for selection model estimation is then given by:
\begin{equation}
\frac{1}{N} \sum_{i=1}^N \boldsymbol{q}(\boldsymbol{\alpha}, \boldsymbol{X}_i, S_i, S_{\text{ext},i}, \pi_{\text{ext},i}) = \boldsymbol{0}\cdot \label{eq:eq7}
\end{equation}
\noindent
\textbf{Estimation of the Auxiliary Score Model - parametric approach:}
One can express:
\begin{equation}
\boldsymbol{f}(\boldsymbol{X}, \boldsymbol{\theta}_{\boldsymbol{Z}}) = \mathbb{E}(\mathcal{U}(\boldsymbol{\theta}_{\boldsymbol{Z}}) \mid \boldsymbol{X}) = \int_{\boldsymbol{Z}_{1}} \mathcal{U}(\boldsymbol{\theta}_{\boldsymbol{Z}}) \boldsymbol f(\boldsymbol{Z}_{1} \mid \boldsymbol{X}) d\boldsymbol{Z}_{1}. \label{eq:eq8}
\end{equation}
\noindent
In this approach, we assume a parametric distribution for \(\boldsymbol{Z}_{1} \mid \boldsymbol{X}\), parameterized by \(\boldsymbol{\gamma}\). The parameter \(\boldsymbol{\gamma}\) is estimated using maximum likelihood estimation (MLE) based on this specified distribution. Let $\boldsymbol{p}(\boldsymbol{\gamma}, \boldsymbol{X}_i, S_i, \boldsymbol{Z}_{1i})$ denote the score function corresponding to the maximum likelihood estimator (MLE) of $\boldsymbol{\gamma}$. The MLE score equation is given by:
\begin{equation}
\frac{1}{N} \sum_{i=1}^N \boldsymbol{p}(\boldsymbol{\gamma}, \boldsymbol{X}_i, S_i, \boldsymbol{Z}_{1i}) = \boldsymbol{0}. \label{eq:eq9}
\end{equation}
Once \(\boldsymbol{\gamma}\) is estimated, we compute \(\mathbb{E}(\mathcal{U}(\boldsymbol{\theta}_{\boldsymbol{Z}}) \mid \boldsymbol{X})\) using equation \eqref{eq:eq8}. If integration proves overly complex due to high dimensionality, nonlinearity, or other factors, alternative strategies for approximation can be used. These may include numerical quadrature approximation or Monte Carlo methods. Evaluating the integral in equation \eqref{eq:eq8} can be computationally challenging, especially with multivariate \(\boldsymbol{Z}_{1}\), and specifying the conditional distribution of \(\boldsymbol{Z}_{1} \mid \boldsymbol{X}\) parametrically can be difficult. Therefore, we offer an alternative way to model \(\boldsymbol{f}(\boldsymbol{X}, \boldsymbol{\theta}_{\boldsymbol{Z}})\) directly using assumption-lean flexible methods.\\

\noindent
\textbf{Non-parametric/Machine Learning approach for estimation of the auxiliary score model:} For flexible modelling of auxiliary score function, we rewrite 
\begin{equation}
    \mathcal{U}(\boldsymbol \theta_{\boldsymbol Z})=D(1,\boldsymbol Z_{1},\boldsymbol Z_{2})-\frac{e^{\theta_0+\boldsymbol \theta_{1} '\boldsymbol Z_{1}+\boldsymbol \theta_{2}' \boldsymbol Z_{2}}}{1+e^{\theta_0+\boldsymbol \theta_{1}' \boldsymbol Z_{1}+\boldsymbol \theta_{2} '\boldsymbol Z_{2}}} (1,\boldsymbol Z_{1},\boldsymbol Z_{2}),\label{eq:eq10}
\end{equation}
We can write the auxiliary score model as,
\begin{align}
    &\mathbb{E}(\mathcal{U}(\boldsymbol \theta_{\boldsymbol Z})|\boldsymbol X)=\boldsymbol f(\boldsymbol X, \boldsymbol \theta_{\boldsymbol Z})=[f_2(\boldsymbol X,\boldsymbol \theta_{\boldsymbol Z}),\boldsymbol f_1(\boldsymbol X,\boldsymbol \theta_{\boldsymbol Z}),f_2(\boldsymbol X,\boldsymbol \theta_{\boldsymbol Z}) \boldsymbol Z_{2}],\label{eq:eq11}\quad \text{where,}\\
    &\boldsymbol f_1(\boldsymbol X,\boldsymbol \theta_{\boldsymbol Z})=\mathbb{E}\left[\left.\left(D \boldsymbol Z_{1}-D\boldsymbol Z_{1}\frac{e^{\theta_0+\boldsymbol \theta_{1}' \boldsymbol Z_{1}+\boldsymbol \theta_{2} '\boldsymbol Z_{2}}}{1+e^{\theta_0+\boldsymbol \theta_{1}' \boldsymbol Z_{1}+\boldsymbol \theta_{2}' \boldsymbol Z_{2}}}\right)\right|\boldsymbol X\right],\quad
    f_2(\boldsymbol X,\boldsymbol \theta_{\boldsymbol Z})=\mathbb{E}\left[\left.\left(D-D\frac{e^{\theta_0+\boldsymbol \theta_{1}' \boldsymbol Z_{1}+\boldsymbol \theta_{2} '\boldsymbol Z_{2}}}{1+e^{\theta_0+\boldsymbol \theta_{1}' \boldsymbol Z_{1}+\boldsymbol \theta_{2}' \boldsymbol Z_{2}}}\right)\right|\boldsymbol X\right]\cdot\nonumber
\end{align}
For modeling both \( \boldsymbol{f}_1(\cdot) \) and \( \boldsymbol{f}_2(\cdot) \), it is natural to use non-parametric or machine learning algorithms, such as Random Forests or XGBoost due to their flexibility in capturing nonlinear relationships and interactions. Such flexible modeling substantially reduces the risk of incorrect functional specification of the conditional expectation \( \mathbb{E}[U(\boldsymbol{\theta}_{\boldsymbol{Z}}) \mid \boldsymbol{X}] \). Since we adopt machine learning estimators, we follow a sample splitting cross-fitted strategy to estimate the combined nuisance parameter function consisting of the selection propensity and auxiliary score functions based on the Double Machine Learning (DML) framework of \citet{chernozhukov2018double} to ensure valid inference, avoid overfitting, and circumvent Donsker conditions. Specifically, we use the DML2 estimator from \citet{chernozhukov2018double} which employs sample splitting and cross-fitting.\\

\noindent
Since we provide methods for estimating the auxiliary score function using both parametric and non-parametric approaches, we use \(\widehat{f}(\boldsymbol{X}, \boldsymbol{\theta}_{\boldsymbol{Z}})\) to denote the estimated auxiliary score function in a general form. In the parametric case, it is given by \(\widehat{\boldsymbol{f}}(\boldsymbol{X}, \boldsymbol{\theta}_{\boldsymbol{Z}}) = \boldsymbol{f}(\boldsymbol{X}, \widehat{\boldsymbol{\gamma}}, \boldsymbol{\theta}_{\boldsymbol{Z}})\), where \(\widehat{\boldsymbol{\gamma}}\) is estimated from equation \eqref{eq:eq9}. For flexible, non-parametric methods, \(\widehat{\boldsymbol{f}}(.)\) is estimated using equation \eqref{eq:eq11}. After estimating the two nuisance components, equation \eqref{eq:eq6} can be expressed as:
\begin{equation}
    \frac{1}{N} \sum_{i=1}^N \boldsymbol g(\boldsymbol \theta_{\boldsymbol Z},\widehat{\boldsymbol \alpha},\widehat{\boldsymbol f})=\boldsymbol 0, \quad \text{where,}\quad \boldsymbol g(\boldsymbol \theta_{\boldsymbol Z},\boldsymbol \alpha,\boldsymbol f)=\frac{S}{\pi(\boldsymbol X,\boldsymbol \alpha)} 
\left[\mathcal{\boldsymbol U}(\boldsymbol \theta_{\boldsymbol Z}) - \boldsymbol f(\boldsymbol X,\boldsymbol \theta_{\boldsymbol Z})\right] 
+ \frac{S_{\text{ext}}}{\pi_{\text{ext}}} f(\boldsymbol X,\boldsymbol \theta_{\boldsymbol Z})\cdot\label{eq:eq12}
\end{equation}
\subsubsection{Double Robustness Property}\label{sec:drsingle}
The following theorem establishes the double robustness property of the proposed estimating equation \eqref{eq:eq6}. Correct specification of the propensity score model implies that \( P(S=1 \mid \boldsymbol{X}) = \pi(\boldsymbol{X}, \boldsymbol{\alpha}^{*}) \), where \(\boldsymbol{\alpha}^{*}\) represents the in-probability limit of \(\widehat{\boldsymbol{\alpha}}\), estimated from equation \eqref{eq:eq8}. Similarly, correct specification of the parametric auxiliary score model requires that \(\mathbb{E}(\mathcal{U}(\boldsymbol{\theta}_{\boldsymbol{Z}}) \mid \boldsymbol{X}) = \boldsymbol{f}(\boldsymbol{X}, \boldsymbol{\theta}_{\boldsymbol{Z}}) = \boldsymbol{f}^*(\boldsymbol{X}, \boldsymbol{\theta}_{\boldsymbol{Z}})=\boldsymbol{f}(\boldsymbol{X}, \boldsymbol{\theta}_{\boldsymbol{Z}}, \boldsymbol{\gamma}^{*})\), where \(\boldsymbol{\gamma}^{*}\) is the in-probability limit of \(\widehat{\boldsymbol{\gamma}}\), estimated from equation \eqref{eq:eq9}. For flexible, non-parametric approaches, correct specification of the auxiliary score model implies that \(\mathbb{E}(\mathcal{U}(\boldsymbol{\theta}_{\boldsymbol{Z}}) \mid \boldsymbol{X}) = \boldsymbol{f}(\boldsymbol{X}, \boldsymbol{\theta}_{\boldsymbol{Z}}) = \boldsymbol{f}^*(\boldsymbol{X}, \boldsymbol{\theta}_{\boldsymbol{Z}})\), where \(\boldsymbol{f}^*(\boldsymbol{X}, \boldsymbol{\theta}_{\boldsymbol{Z}})\) denotes the limit in $L_2$ norm in probability of the estimated function \(\widehat{\boldsymbol{f}}(\boldsymbol{X}, \boldsymbol{\theta}_{\boldsymbol{Z}})\), obtained from equation \eqref{eq:eq11}.
\begin{theorem}
      Under Conditions C1 and C2 and regularity assumptions \textit{A1}, \textit{A2} for $K=1$, in Supplementary Section S1 and assuming either the selection propensity  model, specified by $\pi(\boldsymbol{X}, \boldsymbol{\alpha}^{*})$  or the auxiliary score model specified by $\boldsymbol{f}^*(\boldsymbol{X}, \boldsymbol{\theta}_{\boldsymbol{Z}})$ , or both, are correctly specified, \(\widehat{\boldsymbol{\theta}}_{\boldsymbol{Z}}\) estimated using equation \eqref{eq:eq6} is consistent for \(\boldsymbol{\theta}^*_{\boldsymbol{Z}}\) as \(N \rightarrow \infty\).
\end{theorem}
\noindent
The proof of this theorem is presented in Supplementary Section S7.1.
In the next section, we extend the proposed method to the multi-cohort setting. 
\subsection{JAIPW for multiple cohorts}\label{sec:jaipw}
In this section, we extend the proposed AIPW estimator from the previous section to the framework introduced in Section \ref{sec:setup}, accommodating \( K \) cohorts. This extension accounts for cohort-specific selection mechanisms while ensuring robustness through the incorporation of auxiliary score model across multiple cohorts. We impose the condition:\\

\noindent
\textbf{Condition C3}: There exists a non-empty subset of $\boldsymbol Z$, denoted $\boldsymbol{Z}_{1\cap}$, such that $\boldsymbol{Z}_{1\cap} \independent S^{\text{mult}} \mid \boldsymbol{X}^{\text{mult}}$. Moreover, for all individuals in the target population, the positivity condition $0<P(S^{\text{mult}}=1|\boldsymbol X^{\text{mult}})<1$ holds. We also assume individual-level data on $\boldsymbol{X}^{\text{mult}}$ is available from an external probability sample.\\

\noindent
This condition requires a non-empty subset $\boldsymbol{Z}_{1\cap}$ that is conditionally independent of the combined selection indicator $S^{\text{mult}}$ given $\boldsymbol{X}^{\text{mult}}$. This is a MAR assumption, where $\boldsymbol{Z}_{1\cap}$ is effectively ``missing" for individuals not selected into the combined sample. The positivity condition in Condition C3 directly follows from Condition C1.2.
In this scenario, we define the auxiliary score component of the DR estimating equation by projecting the disease model score function onto the space spanned by \(\boldsymbol{X}^{\text{mult}} \). This projection is denoted as \(
\mathbb{E}(\mathcal{U}(\boldsymbol{\theta}_{\boldsymbol{Z}}) \mid \boldsymbol{X}^{\text{mult}}, S^{\text{mult}}=1) = \mathbb{E}(\mathcal{U}(\boldsymbol{\theta}_{\boldsymbol{Z}}) \mid \boldsymbol{X}^{\text{mult}}) = \boldsymbol{f}(\boldsymbol{X}^{\text{mult}}, \boldsymbol{\theta}_{\boldsymbol{Z}}).
\) 
The first equality directly follows from Condition C3. The final DR estimating equation for the multi-cohort case is obtained with a slight modification of equation \eqref{eq:eq6}, given by:  
\begin{equation}
\frac{1}{N} \sum_{i=1}^N \frac{S^{\text{mult}}_i}{\pi(\boldsymbol{X}^{\text{mult}}_i)}  (\mathcal{U}_i(\boldsymbol{\theta}_{\boldsymbol{Z}}) - \boldsymbol f(\boldsymbol{X}^{\text{mult}}_i, \boldsymbol{\theta}_{\boldsymbol{Z}})) + \frac{1}{N} \sum_{i=1}^N \frac{S_{\text{ext},i}}{\pi_{\text{ext},i}} \boldsymbol f(\boldsymbol{X}^{\text{mult}}_i, \boldsymbol{\theta}_{\boldsymbol{Z}}) = \boldsymbol 0\cdot \label{eq:eq13}
\end{equation}
\textbf{Nuisance Parameter Estimation :} For estimating the selection components of the DR estimating equation, we utilize one of the four joint IPW methods, namely JPL, JSR, JPS, or JCL to estimate $\boldsymbol{\alpha}^{\text{mult}}$, using,
\begin{equation}
\frac{1}{N} \sum_{i=1}^N \boldsymbol{q}(\boldsymbol{\alpha}^{\text{mult}}, \boldsymbol{X}^{\text{mult}}_i, S^{\text{mult}}_i, S_{\text{ext},i}, \pi_{\text{ext},i}) = \boldsymbol{0}, \label{eq:eq14}
\end{equation}  
where \(\boldsymbol{\alpha}^{\text{mult}} = (\boldsymbol{\alpha}_1, \boldsymbol{\alpha}_2, \dots, \boldsymbol{\alpha}_K)\) represents the selection model parameters across cohorts.
\par
For parametric estimation of the auxiliary score, we specify a parametric distribution for \(\boldsymbol{Z}_{1\cap} \mid \boldsymbol{X}^{\text{mult}}\), parameterized by \(\boldsymbol{\gamma}^{\text{mult}}\). The parameter \(\boldsymbol{\gamma}^{\text{mult}}\) is estimated using the MLE score function derived from the assumed parametric distribution of \(\boldsymbol{Z}_{1\cap} \mid \boldsymbol{X}^{\text{mult}}\):  
\begin{equation}
\frac{1}{N} \sum_{i=1}^N \boldsymbol{p}(\boldsymbol{\gamma}^{\text{mult}}, \boldsymbol{X}^{\text{mult}}_i, S^{\text{mult}}_i, \boldsymbol{Z}_{1\cap,i}) = \boldsymbol{0}\cdot \label{eq:eq15}
\end{equation}  
For the non-parametric version of auxiliary score estimation, the framework remains identical to the single-cohort case, except that \(\boldsymbol{f}_1(.)\) and \(f_2(.)\) in equation \eqref{eq:eq11} are now defined as:  
\begin{align*}
    \boldsymbol f_1(\boldsymbol X^{\text{mult}},\boldsymbol \theta_{\boldsymbol Z})&=\mathbb{E}\left[\left.\left(D \boldsymbol Z_{1\cap}-D\boldsymbol Z_{1\cap}\frac{e^{\theta_0+\boldsymbol \theta_{\boldsymbol Z_{1\cap}}' \boldsymbol Z_{1\cap}+\boldsymbol \theta_{\boldsymbol Z_{-1\cap}} '\boldsymbol Z_{-1\cap}}}{1+e^{\theta_0+\boldsymbol \theta_{\boldsymbol Z_{1\cap}}' \boldsymbol Z_{1\cap}+\boldsymbol \theta_{\boldsymbol Z_{-1\cap}}' \boldsymbol Z_{-1\cap}}}\right)\right|\boldsymbol X^{\text{mult}}\right],\\
    f_2(\boldsymbol X^{\text{mult}},\boldsymbol \theta_{\boldsymbol Z})&=\mathbb{E}\left[\left.\left(D-D\frac{e^{\theta_0+\boldsymbol \theta_{\boldsymbol Z_{1\cap}}' \boldsymbol Z_{1\cap}+\boldsymbol \theta_{\boldsymbol Z_{-1\cap}}' \boldsymbol Z_{-1\cap}}}{1+e^{\theta_0+\boldsymbol \theta_{\boldsymbol Z_{1\cap}}' \boldsymbol Z_{1\cap}+\boldsymbol \theta_{\boldsymbol Z_{-1\cap}}' \boldsymbol Z_{-1\cap}}}\right)\right|\boldsymbol X^{\text{mult}}\right],
\end{align*}
where \(\boldsymbol{Z}_{-1\cap} = \boldsymbol{Z} / \boldsymbol{Z}_{1\cap}\). In this case also, we use cross fitting technique to estimate the nuisance parameters to avoid Donsker conditions. The following theorem establishes the double robustness property of the proposed estimating equation \eqref{eq:eq13}.
\begin{theorem}
    Under Conditions C1 and C3 and regularity assumptions \textit{A1} and \textit{A2} in Supplementary Section S1 and assuming either the combined selection propensity  model specified by $\pi(\boldsymbol X^{\text{
mult}},\boldsymbol \alpha^{\text{
mult}*})$ or the auxiliary score model specified by $\boldsymbol f^*(\boldsymbol X^{\text{
mult}}, \boldsymbol{\theta}_{\boldsymbol{Z}})$, or both, is correctly specified, \(\widehat{\boldsymbol{\theta}}_{\boldsymbol{Z}}\) estimated using equation \eqref{eq:eq13}  is consistent for \(\boldsymbol{\theta}^*_{\boldsymbol{Z}}\) as \(N \rightarrow \infty\).
\end{theorem}
\noindent
The proof of this theorem is presented in Supplementary Section S7.2.
\subsubsection{Asymptotic Distribution and Variance Estimation}
In this section, we present the asymptotic properties of the JAIPW estimator \(\widehat{\boldsymbol{\theta}}_{\boldsymbol{Z}}\), including its consistency, asymptotic distribution, and a consistent estimator of its asymptotic variance.\\

\noindent
\textbf{Parametric Estimation in the Auxiliary Score Model:}  
We first establish the asymptotic distribution of \(\widehat{\boldsymbol{\theta}}_{\boldsymbol{Z}}\) under parametric estimation of the nuisance components. 
\begin{theorem}
     Under Conditions C1 and C3 and regularity assumptions \textit{A1}, \textit{A2} and \textit{A}7 in the Supplementary Section S1, the asymptotic distribution of $\widehat{\boldsymbol \theta}_{\boldsymbol Z}$ using JAIPW with parametric nuisance functions' estimator is given by
     \begin{align*}
         & \sqrt{N}(\widehat{\boldsymbol{\theta}}_{\boldsymbol{Z}} - \boldsymbol{\theta}^*_{\boldsymbol{Z}}) \xrightarrow{d} \mathcal{N}(\mathbf{0}, V), \quad \text{where,}\\
         & V = (G_{\boldsymbol{\theta}^*_{\boldsymbol{Z}}})^{-1} \, \mathbb{E} \left[\left( \boldsymbol{g}^{\text{mult}*} + G_{\boldsymbol{\alpha}^{\text{mult}*}, \boldsymbol{\gamma}^{\text{mult}*}} \boldsymbol{\Psi} \right) \left(\boldsymbol{g}^{\text{mult}*} + G_{\boldsymbol{\alpha}^{\text{mult}*}, \boldsymbol{\gamma}^{\text{mult}*}} \boldsymbol{\Psi} \right)'\right] \left( G_{\boldsymbol{\theta}^*_{\boldsymbol{Z}}} \right)^{-1},\\
         & \boldsymbol g=\frac{S^{\text{mult}}}{\pi(\boldsymbol X^{\text{mult}},\boldsymbol \alpha^{\text{mult}*})} 
\left[\mathcal{\boldsymbol U}(\boldsymbol \theta_{\boldsymbol Z}^*) - \boldsymbol f(\boldsymbol X^{\text{mult}},\boldsymbol \theta_{\boldsymbol Z}^*,\boldsymbol \gamma^{\text{mult}*})\right] 
+ \frac{S_{\text{ext}}}{\pi_{\text{ext}}} f(\boldsymbol X^{\text{mult}},\boldsymbol \theta_{\boldsymbol Z}^*,\boldsymbol \gamma^{\text{mult}*})\cdot
     \end{align*}
\end{theorem}
\noindent
The matrices \( G_{\boldsymbol \theta_{\boldsymbol Z}^*} \) and \( G_{\boldsymbol \alpha^{\text{mult}*},\boldsymbol \gamma^{\text{mult}*}} \) are defined as the expectations of the first-order gradients of $\boldsymbol g$ with respect to \( \boldsymbol \theta_{\boldsymbol Z} \) and \( (\boldsymbol \alpha^{\text{mult}} ,\boldsymbol \gamma^{\text{mult}}) \) respectively evaluated at their true values. The quantity \( \boldsymbol \Psi \) is given by \( \boldsymbol \Psi = -H^{-1} \boldsymbol h \), evaluated at the true values of the nuisance parameters. Here, \( \boldsymbol h \) consists of the joint estimating equations for the nuisance parameters \( (\boldsymbol \alpha^{\text{mult}},\boldsymbol \gamma^{\text{mult}} )\), while \( H \) is the Jacobian matrix of \( \boldsymbol h \) with respect to those parameters. The proof and explicit expressions for all components are provided in Supplementary Section S7.3. The corresponding empirical variance estimator for \(V\) is discussed in Supplementary Section S7.4. Notably, this theorem holds even if either the selection or auxiliary score model is correctly specified, due to the inclusion of the nuisance contribution in \(V\).\\

\noindent
\textbf{Flexible Estimation via Machine Learning:}  
For most parametric models especially maximum likelihood estimation, the rate of convergence for estimators is $O(N^{-1/2})$, then we do not require to impose any conditions of the rates of convergences of the nuisance parameter estimation. In case of non-parametric nuisance parameter estimation, the use of cross-fitting allows us to circumvent the restrictive Donsker class conditions. However, we need to impose specific convergence rate constraints on the non-parametric nuisance parameter estimators. The rate constraints are formally specified in Assumptions \textit{A}9 and \textit{A}10  in Supplementary Section S1. These assumptions are drawn primarily from \citet{chernozhukov2018double} and have been tailored to our specific setting. In smooth problems, these assumptions essentially translate to the requirement that the nuisance parameters are estimated at a rate of $o(N^{-1/4})$.
\begin{theorem}
    Let \(c_0, c_1 > 0\), \(q > 2\), and let \(\delta_N, \Delta_N \to 0\) be sequences such that \(\delta_N \geq N^{-1/2 + 1/q} \log N\), and \(N^{-1/2} \log N \leq \tau_N \leq \delta_N\) for all \(N \geq 1\). Let \(L \geq 2\) denote the number of cross-fitting folds. Under conditions C1 and C3 and regularity assumptions, \textit{A1}, \textit{A2}, \textit{A}9 and \textit{A}10  in Supplementary Section S1, and assuming correct specification of both the selection and auxiliary score models, then for $\widehat{\boldsymbol{\theta}}_{\boldsymbol{Z}}$ obtained using JAIPW with non-parametric auxiliary score model estimator is given by,
\begin{align*}
    &\sqrt{N}(\widehat{\boldsymbol{\theta}}_{\boldsymbol{Z}} - \boldsymbol{\theta}^*_{\boldsymbol{Z}}) \xrightarrow{d} \mathcal{N}(\mathbf{0}, V) \quad \text{where,} \quad V = (G_{\boldsymbol{\theta}^*_{\boldsymbol{Z}}})^{-1} \, \mathbb{E}[\boldsymbol{g}^{\text{mult}*} \boldsymbol{g}^{\text{mult}*'}] \, (G_{\boldsymbol{\theta}^*_{\boldsymbol{Z}}})^{-1},\\
    & \boldsymbol{g}^{\text{mult}*}=\boldsymbol g^{\text{mult}}(\boldsymbol \theta_{\boldsymbol Z}^*,\boldsymbol \alpha^{\text{mult}*},\boldsymbol f^*)=\frac{S^{\text{mult}}}{\pi(\boldsymbol X^{\text{mult}},\boldsymbol \alpha^{\text{mult}*})} 
\left[\mathcal{\boldsymbol U}(\boldsymbol \theta_{\boldsymbol Z}^*) - \boldsymbol f^*(\boldsymbol X^{\text{mult}},\boldsymbol \theta^*_{\boldsymbol Z})\right] 
+ \frac{S_{\text{ext}}}{\pi_{\text{ext}}} f^*(\boldsymbol X^{\text{mult}},\boldsymbol \theta^*_{\boldsymbol Z})\cdot
\end{align*}
\end{theorem}
\noindent
The proof is presented in the Supplementary Section S7.5.\\

\noindent
Under correct specification of both nuisance parameter models, the property of Neyman orthogonality is satisfied, ensuring that the influence of first-order perturbations in the nuisance estimators vanishes. Specifically, the Gâteaux derivative of the estimating function \(\boldsymbol{g}^{\text{mult}}\) with respect to the nuisance parameters is zero in expectation when evaluated at their true values. In the parametric setting, this condition corresponds to \(G_{\boldsymbol{\alpha}^{\text{mult}*}, \boldsymbol{\gamma}^{\text{mult}*}} = 0\), effectively eliminating the nuisance term from the variance expression.
Intuitively, Neyman orthogonality arises from the structure of the estimating equation \(\boldsymbol{g}\), where its constituent components are constructed to offset errors introduced by nuisance parameter estimation. The auxiliary function \(\boldsymbol{f}\) captures the conditional expectation of the unweighted score, yielding a residual with mean zero. Simultaneously, the internal and external sample weights are designed to align with a common target population. Consequently, small deviations in either the selection model \(\pi\) or the auxiliary function \(\boldsymbol{f}\) do not impact the first-order behavior of the estimating function, thereby preserving valid inference.
As a result, when both nuisance models are correctly specified, the empirical estimator of the asymptotic variance \(V\), constructed via cross-fitting, is consistent. However, if either the selection model or the auxiliary score model is misspecified, the orthogonality property fails to hold, and the resulting variance estimator may be biased. The details of this variance estimator are presented in Supplementary Section S7.6.

\section{Simulation Study}\label{sec:simu}
\subsection{Basic Setup}
We simulated $K=3$ cohorts from a target population of 50,000 individuals. Using three disease model covariates $\boldsymbol{Z} = (Z_1, Z_2, Z_3)$, the outcome $D$ was generated via the logistic model: $D \mid Z_1, Z_2, Z_3 \sim \text{Ber}\left(\frac{e^{\theta_0 + \theta_1 Z_1 + \theta_2 Z_2 + \theta_3 Z_3}}{1 + e^{\theta_0 + \theta_1 Z_1 + \theta_2 Z_2 + \theta_3 Z_3}}\right)$, with \(\theta_0 = -2\), \(\theta_1 = 0.35\), \(\theta_2 = 0.45\), and \(\theta_3 = 0.25\). $\boldsymbol Z_{1k},\boldsymbol Z_{2k}$ varied by cohort, e.g., $\boldsymbol Z_{11}=Z_1, \boldsymbol Z_{12}=(Z_1,Z_2), \boldsymbol Z_{13}=(Z_1,Z_3)$, $\boldsymbol Z_{21}=(Z_2,Z_3), \boldsymbol Z_{22}=Z_3, \boldsymbol Z_{23}=Z_2$, and \( W_1,W_2,W_3 \) were simulated as conditional on \( \boldsymbol{Z} \) and \( D \). We examined two simulation scenarios differing in the complexity of the selection model \(\pi(\boldsymbol{X}^{\text{mult}})\). The first scenario included only main effects of the selection variables, while the second introduced interaction terms in the selection models. We evaluated both correct and incorrect specifications of the selection and auxiliary score models. In both scenarios, \(S_1\) depended on \(Z_2, Z_3, W_1\), and \(D\); \(S_2\) on \(Z_3, W_2\), and \(D\); and \(S_3\) on \(Z_2\) and \(W_3\). In the second scenario, interaction terms were added between the selection variables. We implemented unweighted logistic regression (with and without cohort-specific intercepts), all joint IPW methods, and the proposed DR JAIPW with a flexible machine-learning based auxiliary score model. Further simulation details are available in Supplementary Section S8.1.

\subsection{Assessing Robustness under Different Degrees of Model Misspecification}
\textbf{Selection Model:} For the IPW methods (JPL, JSR and JCL), we estimated the selection model under four simulation scenarios: (1) all cohort models correctly specified, (2) one model misspecified (Cohort 3), (3) two models misspecified (Cohorts 2 and 3), and (4) all models misspecified. For JAIPW, we only considered either all selection  models correctly specified or all the selection models misspecified. In Setup 1 (main effects only, no interactions), misspecification involved ignoring \(D\) in the first two cohorts and \(Z_2\) in the third. In Setup 2, all interaction terms were ignored. For JPS, we used two approaches: assuming the exact joint distribution of discretized selection variables or approximating joint probabilities using the product of conditionals. The details of JPS are provided in Supplementary Section S8.2. \\
\noindent
\textbf{Auxiliary Score Model}: In both simulation scenarios, \(\boldsymbol Z_{1\cap} = Z_1\). To estimate the auxiliary score component, we used XGBoost. This method was fitted with and without the inclusion of the disease indicator \(D\), representing correct and incorrect specifications of the auxiliary score model.
\subsection{Evaluation Metrics for Comparing Methods}
Across $R=500$ simulation replications, we compared the unweighted logistic regression (with and without cohort-specific intercepts), joint IPW methods, and JAIPW across all setups. Performance for $\theta_1, \theta_2, \text{and } \theta_3$ was assessed using Relative Percentage Bias (RBP), defined as $[\frac{1}{R}\sum_{r=1}^R (\widehat{\theta}_r - \theta)] / \theta \times 100$, and Relative Mean Squared Error (RMSE). The RMSE was calculated relative to the naive unweighted estimator ($\widehat{\theta}_{\text{naive}}$) as $\sum_{r=1}^R (\widehat{\theta}_r - \theta)^2 / \sum_{r=1}^R (\widehat{\theta}_{\text{naive},r} - \theta)^2$. Additionally, we assessed the accuracy of the variance estimators by examining Coverage Probabilities (CP) and comparing the Estimated Standard Error (ESE) defined as the average of the proposed standard errors against the Monte Carlo Standard Error (MCSE) of the estimates across simulation runs. Moreover, Monte Carlo standard errors for each metric were calculated to assess simulation precision.

\subsection{Simulation Results}
\subsubsection{Parameter Estimation}
Tables~\ref{tab:table_r1} and~\ref{tab:table_r2} summarize the parameter estimation performance across both setups.\\

\noindent
\textbf{Unweighted Methods:} Under both setups, the unweighted logistic regression without cohort-specific intercepts exhibited substantial relative biases, exceeding -30\% for both \(\theta_2\) and \(\theta_3\). Including cohort-specific intercepts, further inflated both bias and RMSE, particularly for \(\theta_2\).
\par
\textbf{Individual-Level IPW Methods:}  
Under Setup 1 with correctly specified selection models, both JPL and JSR performed well with low bias and RMSE. For example, JPL exhibited bias percentages around $\pm 1$\%, and JSR achieved similar performance with RMSEs not exceeding 0.05. However, under Setup 2 with interaction terms, JPL exhibited increased variance. For instance, the RMSE for \(\theta_1\) under JPL rose to 0.92 compared to 0.28 under JSR. This increased variance is attributable to the inclusion of interaction terms in JPL. As more selection models were misspecified, both bias and RMSE increased for both methods. Under all models' misspecification in Setup 2, RMSEs for \(\theta_1\) and \(\theta_3\) were 1.62 and 1.79 for JPL, compared to 1.39 and 1.39 for JSR, respectively.
\par
\textbf{Summary-Level IPW Methods:}  
Under Setup 1, JPS with exact joint distributions showed modest bias (-7\%) and RMSE (0.15). JPS with approximate joint probabilities, yielded higher bias (-21\%) and RMSE (0.32). Under Setup 2, both methods showed increased bias and RMSE. JCL followed a pattern similar to JSR and JPL. 
\par
\textbf{JAIPW:} Under Setup 1, JAIPW demonstrated very low bias and RMSE when either all selection models were correctly specified or the auxiliary score model was correctly specified. The maximum observed bias was around -2\%, with most bias percentages falling below $\pm 1$\%. Corresponding RMSEs were all below 0.1, indicating high estimation accuracy. However, when both nuisance parameter models were misspecified, bias and RMSE increased notably. Under Setup 2, a similar trend was observed, with slightly elevated RMSEs due to the added complexity from interaction terms. For example, the RMSE for \(\theta_1\) increased modestly to 0.39. Notably, under scenarios where the selection model was misspecified but the auxiliary score model was correctly specified, JAIPW achieved up to six times lower relative bias and five times lower root mean square error (RMSE) compared to the best-performing joint IPW methods, highlighting its superior robustness and efficiency. 
\par
\textbf{Variance Estimation:} 
Under both simulation setups, the unweighted method performed poorly, yielding near-zero coverage probabilities due to its extremely high bias. The joint IPW methods (JPL, JSR, and JCL) achieved coverage probabilities above 90\% when all selection models were correctly specified. However, their coverage dropped to near zero under misspecification of the selection models for all the cohorts. In contrast, the JAIPW method demonstrated its double-robustness property. In Setup 1, it maintained the nominal 95\% coverage rate even when only one of the two nuisance models (either the overall selection model or the auxiliary score model) was correctly specified. A similar pattern was observed in Setup 2. In order to assess the validity of the proposed variance estimators, we compared the average standard errors across simulation runs with the Monte Carlo standard errors of the point estimates across simulation runs. From Supplementary Tables S1 and S2, we observed that these values were closely aligned in most cases, confirming the validity of the proposed variance estimators.
\par
\textbf{Summary Takeaways:}
Our simulations reveal that unweighted logistic regression performs poorly under selection bias, exhibiting high bias and RMSE. Joint individual-level IPW methods such as JPL and JSR perform well under correctly specified selection models in Setup 1, but show increased variance and RMSE in Setup 2 with interaction terms—especially for JPL. JCL performs comparably to JSR and JPL. While JPL, JSR, and JCL outperform JPS when selection models are correct. In contrast, JAIPW consistently achieves low bias and RMSE whenever either the selection or auxiliary model is correctly specified, demonstrating its double robustness. Across both setups, JAIPW achieves up to six times lower bias and five times lower RMSE than the best-performing joint IPW methods under selection model misspecification. It also maintains approximately 95\% coverage when either nuisance model is correctly specified in Setup 1, with similar performance in Setup 2 except when only the selection model is correct, where coverage drops to around 90\%. This drop likely reflects instability in JPL under interactions and highlights that the DML-based variance estimator is theoretically valid only under correct specification of both nuisance models, necessitating bootstrap inference otherwise. In Tables \ref{tab:table_r1} and \ref{tab:table_r2}, the Monte Carlo standard errors for each performance metric were observed to be very small relative to the magnitude of the estimate, indicating that the mean estimates across simulation runs are precise and the number of replications is adequate.

\section{Data Example - Multi clinic recruitment in MGI}\label{sec:data}

\subsection{Introduction}
In this section, we use data from MGI, a large biobank that links biosamples to EHRs at the University of Michigan. Since 2012, MGI has recruited over 100,000 participants across six specialized clinics: MGI Anesthesiology, MIPACT, MEND, MHB, PROMPT, and MY PART. We conduct two analyses using MGI data. Our target population of interest for both data examples is the 2018 United States adult population. This definition is based on our use of the NHANES 2017-18 as an external probability sample, which provides representative data for our selection variables. The post-stratification and calibration weights were calculated using age-specific and marginal statistics from SEER, the U.S. Census, and the CDC. In the first analysis, we investigate the association between cancer (\(D\)) and biological sex (\(Z_1\)), both unadjusted and adjusted for age (\(Z_2\)). The unadjusted log-odds ratio is compared to benchmark SEER data (2008-2016), which show lower cancer risk for women than men, with marginal log-odds ratios ranging from -0.24 to -0.07.
In the second analysis, we estimate the age-adjusted association (\(Z_2\)) between cancer (\(D\)) and the Polygenic Risk Scores (PRS) for having any cancer based on summary data from the UK Biobank (\(Z_1\)) (\url{https://www.pgscatalog.org/score/PGS000356/}). Unlike in Analysis 1, no external reference data (such as SEER) is available to compare method performance for estimating the association of PRS with cancer. However, this example illustrates a scenario where external level information on \(Z_1\) (cancer PRS in this case) is not available in NHANES.
\subsection{Data Descriptive Statistics}
\textbf{Analysis 1:} For this analysis, we used data from four MGI cohorts (MGI Anesthesiology, MIPACT, MEND, MHB) collected up to August 2022. The MY PART and PROMPT cohorts were excluded due to small sample sizes, leaving a total of 80,371 participants: 69,294 from MGI Anesthesiology, 5,821 from MIPACT, 3,303 from MEND, and 1,953 from MHB. Cancer prevalence varied across cohorts: 52.2\% in MGI Anesthesiology, 24.7\% in MIPACT, 35.8\% in MEND, and 16.8\% in MHB. The distribution of biological sex was relatively consistent, except for MHB, where females comprised 62.7\% of the cohort. The heterogeneity in descriptive statistics across relevant variables for each cohort is summarized in details in Supplementary Table S3.\\

\noindent
\textbf{Analysis 2:} In this analysis, we use genotyped MGI data from 2018, comprising 38,360 participants. The analysis employs the LASSOSUM polygenic risk score (PRS), which demonstrated the highest predictive power in previous studies \citep{fritsche2020cancer}. To select the tuning parameters for LASSOSUM PRS, the data were divided into training and testing sets, with parameter selection performed exclusively on the training set. To prevent overfitting, the final analysis was conducted on the testing set, which consisted of 15,291 individuals. Since this dataset is a subset of the 2018 MGI cohort, the analysis focuses on the MGI Anesthesiology cohort. The prevalence of cancer in this sample is approximately 68\%.
\subsection{Specifics of Model Fitting}
\textbf{Analysis 1:} For each MGI cohort, distinct covariates (\(\boldsymbol{W}_K\)) were linked to different selection processes (see Supplementary Figure S1). We applied the four joint IPW methods under two scenarios: one excluding and one including cancer $(D)$ in the selection models. Initially, selection weights did not account for cancer due to its low prevalence in NHANES, but an adjustment factor was later introduced, as per \citet{kundu2024framework}. The JAIPW method was also implemented in four forms, considering both the inclusion and exclusion of cancer in the selection and auxiliary score models. Biological sex was used as \(\boldsymbol{Z}_{1\cap}\), assuming no influence on selection conditional the covariates $\boldsymbol X^{\text{mult}}$ to satisfy Condition C3. The rationale behind the choice of biological sex as \(\boldsymbol{Z}_{1\cap}\) is provided in Supplementary Section S9.2.
XGBoost was used to model the conditional relationship between biological sex and other covariates in constructing the auxiliary score.\\

\noindent
\textbf{Analysis 2:} For this analysis, we applied the unweighted and IPW methods designed for a single cohort. For the AIPW method, we used the single-cohort AIPW estimator described in Section \ref{sec:aipws}.
Supplementary Figure S2 shows the DAG relationships between disease and selection variables, where Cancer PRS is the new $Z_1$. The rationale behind the choice of Cancer PRS as \(\boldsymbol{Z}_{1}\) is provided in Supplementary Section S9.2. We constructed selection model similarly as Analysis 1 for both IPW and AIPW methods. For the auxiliary model, we tested both XGBoost and linear regression to model the relationship between Cancer PRS and the selection variables (\(\boldsymbol{X}\)). Since both methods produced similar results, we chose linear regression for its lower computational time. 
\subsection{Results}
\subsubsection{Analysis 1}\label{sec:cansex}
The benchmark national estimates from SEER show a lower lifetime cancer risk among women, with marginal log-odds ratios of -0.24 (2008-2010), -0.19 (2010-2012), -0.08 (2012-2014), and -0.07 (2014-2016). Here, we estimate the marginal log odds ratio (OR) between cancer and biological sex. Figure \ref{fig:realdata} presents the results for the Anesthesiology, MIPACT, MEND, and Combined cohorts using Unweighted logistic regression, JPL, JPS, and JAIPW. Results for the JCL and JSR estimators are included in Supplementary Figure S3. For the MHB cohort, results for all methods are shown in Supplementary Figure S3. Age-adjusted analyses are provided in Supplementary Section S9.3.\\

\noindent
\textbf{Anesthesiology:}
Unweighted logistic regression estimates the log OR to be 0.08 (95\% CI [-0.11, -0.05]),  where the upper limit is slightly outside the SEER range. Including cancer in the selection model improved estimates for PL, SR, PS, and CL. The estimates derived from the four IPW methods: PL, SR, PS, and CL are -0.15 (95\% CI [-0.19, -0.11]), -0.16 (95\% CI [-0.23, -0.10]), -0.13 (95\% CI [-0.17, -0.08]), and -0.13 (95\% CI [-0.17, -0.09]) respectively. Conversely, when cancer is not included in the selection model, the estimates for all methods exhibit bias in the opposite direction. For the AIPW method, estimates were -0.13 (95\% CI [-0.16, -0.09]) and -0.15 (95\% CI [-0.18, -0.11]) when cancer was included in both models and with inclusion/exclusion in the selection and auxiliary score model, respectively. When the selection model excluded cancer with the auxiliary score model including it, the result was -0.10 (95\% CI [-0.15, -0.06]). Hence the AIPW method, even when forced to use this misspecified ``no cancer" selection model was still able to produce a correct estimate that matched the SEER benchmark, when ``with cancer" auxiliary score model was used.\\

\noindent
\textbf{MIPACT:} Unweighted logistic regression estimated 0.33 (95\% CI [0.21, 0.45]), biased outside the SEER range. None of the IPW or AIPW estimates fell within the SEER range due to missing critical selection variables like income and education but helped to reduce the bias to some extent when cancer was included in either selection or the auxiliary score model.\\

\noindent
\textbf{MEND:} Unweighted logistic regression estimated 0.09 (95\% CI [-0.06, 0.23]). IPW methods showed bias opposite to national data, but PS yielded closer estimates (-0.08, 95\% CI [-0.27, 0.12]). AIPW performed better, with mean estimates within the SEER range when the auxiliary model excluded cancer (-0.13, 95\% CI [-0.24, -0.03]; -0.10, 95\% CI [-0.22, 0.02]). None of the 95\% CIs fully aligned with SEER data due to small sample sizes increasing variance.\\

\noindent
\textbf{Combined:} Unweighted logistic regression estimated -0.04 (95\% CI [-0.07, -0.01]), outside the SEER range. Including cancer in the selection model, estimates from JPL, JSR, JPS, and JCL were -0.08 (95\% CI [-0.13, -0.03]), -0.08 (95\% CI [-0.15, -0.01]), -0.10 (95\% CI [-0.13, -0.07]), and -0.08 (95\% CI [-0.11, -0.04]), respectively. However, the upper bounds for JPL, JCL, and JSR did not align within the SEER range due to variability. Excluding cancer from the selection model led to opposite biases in estimates for all IPW methods. JAIPW estimates, when cancer was included in the selection model, had 95\% CIs within the SEER range: -0.13 (95\% CI [-0.16, -0.10]) and -0.10 (95\% CI [-0.13, -0.07]). Excluding cancer in the selection model but including it in the auxiliary model resulted in estimates consistent with expectations, though the 95\% CI did not fully align with the SEER range due to bias induced by MIPACT.\\

\noindent
\textbf{MHB:} Unweighted logistic regression showed high bias with an estimate of 0.61 (95\% CI [0.34, 0.88]). IPW and AIPW methods reduced bias when accounting for cancer. CL with cancer included gave the closest estimate to SEER data (-0.05, 95\% CI [-0.77, 0.68]). High variance in estimates, likely due to the small MHB sample size, led to wider confidence intervals and less precise estimates.\\

\subsubsection{Analysis 2}\label{sec:prsresult}
Supplementary Figure S5 illustrates the age-adjusted association analysis between cancer and the cancer polygenic risk score (PRS). The unweighted method estimated a log odds ratio of 0.14 (95\% CI [0.09, 0.19]) with one unit change in Interquartile Range (IQR) of PRS. In comparison, the JIPW and JAIPW methods that incorporated cancer in their models consistently produced higher point estimates compared to those excluding cancer. Notably, even when cancer was included solely in the auxiliary score model and not in the selection model, the JAIPW method estimated a log odds ratio of 0.19 (95\% CI [0.09, 0.28]) with one unit change in IQR of PRS. This estimate closely aligned with those obtained from the JAIPW and JIPW methods that incorporated cancer in the selection models. These results highlight the robustness of the JAIPW method in delivering reliable estimates, comparable to those from correctly specified JIPW models, and its effectiveness in addressing potential model misspecifications.

\section{Discussion}
In this paper, we introduced the JAIPW method, designed to estimate disease model parameters while addressing outcome-dependent selection biases from overlapping cohorts. JAIPW shows double robustness by producing consistent estimates even when selection models are misspecified, provided a correctly specified auxiliary score model is used.\\

\noindent
\textbf{Limitations of Existing Methods:} Individual-level methods like JPL and JSR struggle with misspecified selection models, while JPS and JCL are always restricted by the limited availability of summary data from external probability sample/target population. JPS fails when exact joint distributions for numerous selection variables cannot be obtained. JCL finds it challenging to specify selection models accurately using only summary statistics on the marginal distribution of the selection variables.\\

\noindent
\textbf{Advantages of JAIPW:} JAIPW consistently estimates disease model parameters despite all selection models being misspecified if the auxiliary score model is correct. Applying JAIPW to the MGI data on cancer and biological sex demonstrates its superiority over traditional IPW methods in cases with potential selection model inaccuracies. In applying the JAIPW method to the cancer-PRS association, we found that the method remains effective even when PRS data is unavailable in the external probability dataset.\\

\noindent
\textbf{Limitations of the JAIPW method:} JAIPW's efficiency diminishes when crucial selection variables are missing, compromising the method’s ability to adjust for selection bias fully as observed in the case of MIPACT. If key variables influencing selection are absent, the Condition C3 of conditional independence between \(\boldsymbol{Z}_{1\cap}\) and \(S^{\text{mult}}\) given \(\boldsymbol{X}^{\text{mult}}\) fails. Methods to assess and quantify bias, such as those by \citet{west2021assessing} and \citet{neuhaus1999bias}, could be explored when facing missing selection variables.\\

\noindent
\textbf{Future Directions:} Meta-analysis is a standard statistical technique that provides an alternative approach for combining estimates from multiple studies aiming to produce a single pooled and more precise estimate. Fixed-effects and random-effects models are two common approaches for this purpose \citep{borenstein2010basic}. However, standard meta-analysis methods are more relevant for synthesizing evidence and may face challenges when applied to studies with small cohort sizes or rare disease prevalence. A potential avenue for future research is to integrate the proposed AIPW estimator with random-effects models \citep{han2011random} to better account for other sources of potential heterogeneity across different cohorts. \citet{dahabreh2020towards, dahabreh2023efficient} introduce a class of methods known as causally interpretable meta-analysis, which integrate information from multiple randomized trials to facilitate causal inference for a target population of substantive interest. Extending the idea of JAIPW in the context of causally interpretable meta-analysis might be an interesting direction.
\citet{wu2022statistical} proposes a non-parametric extension of the PL method using kernel smoothing estimators. \citet{chen2023dealing} proposes calibrated IPW and split population approaches to address stochastic or deterministic under-coverage. However, implementing such estimators poses significant challenges in our case of multi-dimensional selection variables. Adapting the density ratio function for estimating selection weights in data missing at random (MAR) contexts, as suggested by \citet{wang2021information}, is a potential direction. For scenarios where individual-level data cannot be shared across the cohorts, federated learning offers a privacy-preserving solution. This approach, which develops models using decentralized data without exchanging individual data points, can address selection bias while respecting privacy constraints. Research by \citet{luo2020renewable}, \citet{dang2022federated}, \citet{brisimi2018federated}, and \citet{jordan2018communication} provides a foundation for federated learning methodologies. Future work will focus on adapting these principles to mitigate selection bias in multi-center studies where data sharing is restricted. Finally, data from EHRs is subject to numerous biases beyond selection bias, such as misclassification, missing data, clinically informative patient encounter processes, confounding, lack of consistent data harmonization across cohorts, true heterogeneity of the studied populations, and alike. Developing robust strategies to simultaneously address a confluence of biases remains a major challenge.

\section*{Software}
Michigan Genomics Initiative Data are available after institutional review board approval to select researchers ( \url{https://precisionhealth.umich.edu/our-research/michigangenomics/}).
The link to the CRAN R-package EHRmuse can be found here, \url{https://CRAN.R-project.org/package=EHRmuse}.

\section*{Acknowledgments}
This work was supported through grant DMS1712933 from the National Science Foundation and MI-CARES grant 1UG3CA267907 from the National Cancer Institute. 
Data collection adhered to the Declaration of Helsinki principles. The University of Michigan Medical School Institutional Review Board reviewed and approved the consent forms and protocols of MGI study participants (IRB ID HUM00099605 and HUM00155849). Opt-in written informed consent was obtained. The authors acknowledge the Michigan Genomics Initiative participants, Precision Health at the University of Michigan, the University of Michigan Medical School Central Biorepository, the University of Michigan Medical School Data Office for Clinical and Translational Research, and the University of Michigan Advanced Genomics Core for providing data and specimen storage, management, processing, and distribution services, and the Center for Statistical Genetics in the Department of Biostatistics at the School of Public Health for genotype data curation, imputation, and management in support of the research reported in this work. 

\bibliographystyle{abbrvnat} 
\bibliography{references.bib}

\newpage
\begin{table}[H]
\renewcommand{\arraystretch}{2.2}
\centering
\begin{adjustbox}{max width=0.9\textwidth}
\begin{tabular}{|cccccccccccccc|}
\hline
\textbf{Method}                 & \textbf{\begin{tabular}[c]{@{}c@{}}Selection \\ Model 1\end{tabular}} & \textbf{\begin{tabular}[c]{@{}c@{}}Selection \\ Model 2\end{tabular}} & \textbf{\begin{tabular}[c]{@{}c@{}}Selection \\ Model 3\end{tabular}} & \textbf{\begin{tabular}[c]{@{}c@{}}Auxiliary\\ Model\end{tabular}} & \textbf{RBP$(\theta_1)$} & \textbf{RBP $(\theta_2)$} & \textbf{RBP $(\theta_3)$} & \textbf{RMSE $(\theta_1)$} & \textbf{RMSE $(\theta_2)$} & \textbf{RMSE $(\theta_3)$} & \textbf{CP $(\theta_1)$} & \textbf{CP $(\theta_2)$} & \textbf{CP $(\theta_3)$} \\ \hline
\textbf{Unweighted}             & -                                                                     & -                                                                     & -                                                                     & -                                                                  & -25.08\% (0.22\%)                & \textbf{-41.00\%} (0.18\%)               & \textbf{-53.84\%} (0.30\%)               & 1.00 (0.02)                      & 1.00 (0.01)                      & 1.00 (0.02)                      & 0.00 (0.00)                    & 0.00 (0.00)                    & 0.00 (0.00)                    \\ \hline
\textbf{Unweighted Diff}        & -                                                                     & -                                                                     & -                                                                     & -                                                                  & -28.94\% (0.22\%)                & -74.22\% (0.18\%)                  & -14.79\% (0.30\%)                  & 1.70 (0.03)                        & 4.13 (0.03)                        & 0.15 (0.02)                      & 0.00 (0.00)                    & 0.00 (0.00)                    & 0.48 (0.02)                    \\ \hline
\multirow{4}{*}{\textbf{JPL}}   & Correct                                                               & Correct                                                               & Correct                                                               & -                                                                  & \textbf{0.20\%}  (0.23\%)                & 0.08\% (0.19\%)                  & -0.21\% (0.33\%)                   & \textbf{0.04} (0.01)                        & 0.01 (0.01)                        & 0.02 (0.01)                      & 0.93 (0.01)                    & 0.90 (0.01)                    & 0.91 (0.01)                    \\ \cline{2-14} 
                                & Incorrect                                                             & Correct                                                               & Correct                                                               & -                                                                  & -9.01\% (0.26\%)                 & -12.01\% (0.22\%)                  & -38.42\% (0.37\%)                  & 0.17 (0.01)                        & 0.10 (0.01)                        & 0.52(0.01)                       & 0.65 (0.02)                    & 0.28 (0.02)                    & 0.01  (0.01)                   \\ \cline{2-14} 
                                & Incorrect                                                             & Incorrect                                                             & Correct                                                               & -                                                                  & -21.38\% (0.28\%)                & -36.81\% (0.24\%)                  & -49.30\% (0.38\%)                  & 0.76 (0.02)                        & 0.82 (0.01)                        & 0.85 (0.02)                      & 0.05 (0.01)                    & 0.00 (0.00)                    & 0.00 (0.00)                    \\ \cline{2-14} 
                                & Incorrect                                                             & Incorrect                                                             & Incorrect                                                             & -                                                                  & -24.66\% (0.27\%)                & -34.80\% (0.23\%)                  & -54.71\% (0.38\%)                  & 0.99 (0.03)                        & 0.73 (0.01)                        & 1.04 (0.02)                      & 0.02 (0.02)                    & 0.00 (0.00)                    & 0.00 (0.00)                    \\ \hline
\multirow{4}{*}{\textbf{JSR}}   & Correct                                                               & Correct                                                               & Correct                                                               & -                                                                  & -1.03\% (0.23\%)                 & -1.05\% (0.19\%)                   & -4.49\% (0.32\%)                   & 0.04 (0.01)                        & 0.01 (0.01)                        & 0.02 (0.01)                      & 0.95 (0.01)                    & 0.93 (0.01)                    & 0.89 (0.01)                    \\ \cline{2-14} 
                                & Incorrect                                                             & Correct                                                               & Correct                                                               & -                                                                  & -9.98\% (0.25\%)                 & -12.63\% (0.20\%)                  & -42.52\% (0.36\%)                  & 0.20 (0.01)                        & 0.11 (0.01)                        & 0.63 (0.01)                      & 0.61 (0.02)                    & 0.20 (0.02)                    & 0.00 (0.00)                    \\ \cline{2-14} 
                                & Incorrect                                                             & Incorrect                                                             & Correct                                                               & -                                                                  & -20.96\% (0.27\%)                & -35.64\% (0.24\%)                  & -50.16\% (0.36\%)                  & 0.73 (0.02)                        & 0.77 (0.01)                        & 0.88 (0.02)                      & 0.05 (0.01)                    & 0.00 (0.00)                    & 0.00 (0.00)                    \\ \cline{2-14} 
                                & Incorrect                                                             & Incorrect                                                             & Incorrect                                                             & -                                                                  & -24.30\% (0.27\%)                & -33.97\% (0.23\%)                  & -55.05\% (0.36\%)                  & 0.96 (0.03)                        & 0.88 (0.01)                        & 1.06 (0.02)                      & 0.01 (0.01)                    & 0.00 (0.00)                    & 0.00 (0.00)                    \\ \hline
\multirow{4}{*}{\textbf{JCL}}   & Correct                                                               & Correct                                                               & Correct                                                               & -                                                                  & 0.24\% (0.23\%)                  & 0.24\% (0.19\%)                    & -0.02\% (0.30\%)                   & 0.04 (0.01)                        & 0.01 (0.01)                        & 0.03 (0.01)                      & 0.95 (0.01)                    & 0.95 (0.01)                    & 0.97 (0.01)                    \\ \cline{2-14} 
                                & Incorrect                                                             & Correct                                                               & Correct                                                               & -                                                                  & -9.67\% (0.25\%)                 & -12.08\% (0.20\%)                  & -38.67\% (0.35\%)                  & 0.19 (0.01)                        & 0.10 (0.01)                        & 0.53 (0.01)                      & 0.62 (0.02)                    & 0.24 (0.02)                    & 0.01 (0.01)                    \\ \cline{2-14} 
                                & Incorrect                                                             & Incorrect                                                             & Correct                                                               & -                                                                  & -21.69\% (0.27\%)                & -37.28\% (0.23\%)                  & -49.37\% (0.37\%)                  & 0.78 (0.03)                        & 0.83 (0.01)                        & 0.85 (0.02)                      & 0.05 (0.01)                    & 0.00 (0.00)                    & 0.00 (0.00)                    \\ \cline{2-14} 
                                & Incorrect                                                             & Incorrect                                                             & Incorrect                                                             & -                                                                  & -24.77\% (0.27\%)                & -34.98\% (0.23\%)                  & -54.81\% (0.36\%)                  & 0.99 (0.03)                        & 0.74 (0.01)                        & 1.04 (0.02)                      & 0.01 (0.01)                    & 0.00 (0.00)                    & 0.00 (0.00)                    \\ \hline
\textbf{JPS Exact}              & -                                                                     & -                                                                     & -                                                                     & -                                                                  & \textbf{-7.21\%} (0.23\%)                 & -4.36\% (0.18\%)                   & 6.37\% (0.31\%)                    & \textbf{0.12} (0.01)                        & 0.02 (0.01)                        & 0.03 (0.01)                      & 0.72 (0.02)                    & 0.82 (0.02)                    & 0.86 (0.02)                    \\ \hline
\textbf{JPS Approximate}        & -                                                                     & -                                                                     & -                                                                     & -                                                                  & -11.74\% (0.23\%)                & \textbf{-20.26\%} (0.18\%)                 & 20.54\% (0.31\%)                   & 0.25 (0.01)                        & 0.25 (0.01)                        & 0.16 (0.01)                      & 0.37 (0.02)                    & 0.01 (0.01)                    & 0.14 (0.02)                    \\ \hline
\multirow{4}{*}{\textbf{JAIPW}} & Correct                                                               & Correct                                                               & Correct                                                               & Correct                                                            & 0.48\% (0.28\%)                  & -0.05\% (0.24\%)                   & 0.20\% (0.46\%)                    & 0.06 (0.01)                        & 0.02 (0.01)                        & 0.04 (0.01)                      & 0.96 (0.01)                    & 0.95 (0.01)                    & 0.96 (0.01)                    \\ \cline{2-14} 
                                & Correct                                                               & Correct                                                               & Correct                                                               & Incorrect                                                          & 0.58\% (0.28\%)                  & -0.37\% (0.26\%)                   & -0.93\% (0.43\%)                   & 0.06 (0.01)                        & 0.02 (0.01)                        & 0.03 (0.01)                      & 0.96 (0.01)                    & 0.96 (0.01)                    & 0.95 (0.01)                    \\ \cline{2-14} 
                                & Incorrect                                                             & Incorrect                                                             & Incorrect                                                             & Correct                                                            & \textbf{-2.34\%} (0.30\%)                 & 0.33\% (0.24\%)                    & 1.31\% (0.46\%)                    & \textbf{0.08} (0.01)                        & 0.02 (0.01)                        & 0.04 (0.01)                      & \textbf{0.96} (0.01)                    & \textbf{0.96} (0.01)                    & \textbf{0.95} (0.01)                    \\ \cline{2-14} 
                                & Incorrect                                                             & Incorrect                                                             & Incorrect                                                             & Incorrect                                                          & -23.79\% (0.30\%)                & -36.74\% (0.27\%)                  & -49.88\% (0.42\%)                  & 0.94 (0.03)                        & 0.82 (0.01)                        & 0.88 (0.02)                      & 0.05 (0.01)                    & 0.01 (0.01)                    & 0.01 (0.01)                    \\ \hline
\end{tabular}
\end{adjustbox}
\vspace{0.5cm}
\caption{Comparison of Relative Bias Percentage (RBP), Relative Mean Squared Error (RMSE), and Coverage Probabilities (CP) across the unweighted, four joint IPW methods (JPL, JSR, JCL, JPS), and JAIPW under simulation setup 1. Bolded values indicate key findings discussed in the text. For each metric, the mean estimate is reported with its monte carlo standard error in parentheses. The results are obtained using number of simulation replications as $R=500$.\\
\textbf{Abbreviations:} Unweighted = Unweighted Logistic Regression; JPL = Joint Pseudolikelihood; JSR = Joint Simplex Regression; JPS = Joint Post Stratification; JCL = Joint Calibration; JAIPW = Joint Augmented Inverse Probability Weighted.}
\label{tab:table_r1}
\end{table}

\begin{table}[H]
\renewcommand{\arraystretch}{2.2}
\centering
\begin{adjustbox}{max width=0.9\textwidth}
\begin{tabular}{|cccccccccccccc|}
\hline
\textbf{Method}                 & \textbf{\begin{tabular}[c]{@{}c@{}}Selection \\ Model 1\end{tabular}} & \textbf{\begin{tabular}[c]{@{}c@{}}Selection \\ Model 2\end{tabular}} & \textbf{\begin{tabular}[c]{@{}c@{}}Selection \\ Model 3\end{tabular}} & \textbf{\begin{tabular}[c]{@{}c@{}}Auxiliary\\ Model\end{tabular}} & \textbf{RBP$(\theta_1)$} & \textbf{RBP $(\theta_2)$} & \textbf{RBP $(\theta_3)$} & \textbf{RMSE $(\theta_1)$} & \textbf{RMSE $(\theta_2)$} & \textbf{RMSE $(\theta_3)$} & \textbf{CP $(\theta_1)$} & \textbf{CP $(\theta_2)$} & \textbf{CP $(\theta_3)$} \\ \hline
\textbf{Unweighted}             & -                                                                     & -                                                                     & -                                                                     & -                                                                  & -15.21\% (0.23\%)                & \textbf{-32.46\%} (0.18\%)               & \textbf{-31.02\%} (0.30\%)
& 1.00 (0.00)                      & 1.00 (0.00)                      & 1.00 (0.00)                      & 0.00 (0.00)                     & 0.00 (0.00)                    & 0.00 (0.00)                    \\ \hline
\textbf{Unweighted Diff}        & -                                                                     & -                                                                     & -                                                                     & -                                                                  & -22.79\% (0.23\%)                & -82.29\% (0.19\%)                  & 21.78\% (0.32\%)                   & 2.13 (0.07)                      & 6.35 (0.08)                      & 0.52 (0.05)                      & 0.00 (0.00)                    & 0.00 (0.00)                    & 0.20 (0.02)                    \\ \hline
\multirow{4}{*}{\textbf{JPL}}   & Correct                                                               & Correct                                                               & Correct                                                               & -                                                                  & -1.83\% (0.68\%)
& -4.83\% (0.88\%)                   & -3.04\% (1.02\%)                   & \textbf{0.92} (0.15)                      & 0.38 (0.05)                      & 0.53 (0.05)                      & 0.97 (0.01)                    & 0.96 (0.01)                    & 0.98 (0.01)                    \\ \cline{2-14} 
                                & Incorrect                                                             & Correct                                                               & Correct                                                               & -                                                                  & 16.55\% (0.24\%)                 & 15.25\% (0.19\%)                   & 37.98\% (0.34\%)                   & 1.17 (0.04)                      & 0.23 (0.01)                      & 1.49 (0.04)                      & 0.15 (0.02)                    & 0.09 (0.01)                    & 0.02 (0.01)                    \\ \cline{2-14} 
                                & Incorrect                                                             & Incorrect                                                             & Correct                                                               & -                                                                  & 20.36\%  (0.24)                & 19.60\% (0.20)                   & 42.78\% (0.34)                   & 1.71 (0.06)                      & 0.38 (0.01)                      & 1.87 (0.05)                      & 0.04 (0.01)                    & 0.01 (0.01)                    & 0.00 (0.01)                    \\ \cline{2-14} 
                                & Incorrect                                                             & Incorrect                                                             & Incorrect                                                             & -                                                                  & 19.75\% (0.23\%)                 & 15.77\% (0.19\%)                   & 41.81\% (0.33\%)                   & \textbf{1.62} (0.06)                      & 0.25 (0.01)                      & \textbf{1.79} (0.04)                      & 0.05 (0.01)                    & 0.05 (0.01)                    & 0.00 (0.01)                    \\ \hline
\multirow{4}{*}{\textbf{JSR}}   & Correct                                                               & Correct                                                               & Correct                                                               & -                                                                  & 0.71\% (0.37\%)                    & -0.19\% (0.34\%)                   & 1.32\% (0.53\%)                    & \textbf{0.28} (0.03)                      & 0.06 (0.01)                      & 0.14 (0.02)                      & 0.94 (0.01)                    & 0.93 (0.01)                    & 0.92 (0.01)                    \\ \cline{2-14} 
                                & Incorrect                                                             & Correct                                                               & Correct                                                               & -                                                                  & 15.39\% (0.23\%)                 & 14.97\% (0.19\%)                   & 33.74\% (0.33\%)
                                & 1.03 (0.04)                      & 0.23 (0.01)                      & 1.18 (0.03)                      & 0.17 (0.02)                    & 0.05 (0.01)                    & 0.02 (0.01)                    \\ \cline{2-14} 
                                & Incorrect                                                             & Incorrect                                                             & Correct                                                               & -                                                                  & 18.63\% (0.23\%)                 & 18.33\% (0.18\%)                   & 37.40\% (0.33\%)                   & 1.45 (0.05)                      & 0.33 (0.01)                      & 1.44 (0.04)                      & 0.06 (0.01)                    & 0.01 (0.01)                    & 0.00 (0.00)                    \\ \cline{2-14} 
                                & Incorrect                                                             & Incorrect                                                             & Incorrect                                                             & -                                                                  & 18.20\% (0.23\%)                 & 15.28\% (0.18\%)                   & 36.68\% (0.33\%)                   & \textbf{1.39} (0.05)                      & 0.23 (0.01)                      & \textbf{1.39} (0.04)                      & 0.06 (0.01)                    & 0.04 (0.01)                    & 0.00 (0.00)                    \\ \hline
\multirow{4}{*}{\textbf{JCL}}   & Correct                                                               & Correct                                                               & Correct                                                               & -                                                                  & 0.24\% (0.30\%)                  & -0.14\% (0.23\%)                   & -0.64\% (0.35\%)                    & 0.18 (0.01)                      & 0.03 (0.02)                      & 0.06 (0.01)                      & 0.93 (0.01)                    & 0.94 (0.01)                    & 0.96 (0.01)                    \\ \cline{2-14} 
                                & Incorrect                                                             & Correct                                                               & Correct                                                               & -                                                                  & 14.44\% (0.23\%)                 & 13.65\% (0.19\%)                   & 32.22\% (0.32\%)                   & 0.92 (0.04)                      & 0.19 (0.01)                      & 1.08 (0.03)                      & 0.21 (0.02)                    & 0.11 (0.01)                    & 0.01 (0.01)                    \\ \cline{2-14} 
                                & Incorrect                                                             & Incorrect                                                             & Correct                                                               & -                                                                  & 16.10\% (0.23\%)                 & 13.88\% (0.19\%)                   & 35.10\% (0.32\%)                   & 1.11 (0.04)                      & 0.20 (0.01)                      & 1.27 (0.03)                      & 0.13 (0.02)                    & 0.09 (0.01)                    & 0.00 (0.00)                    \\ \cline{2-14} 
                                & Incorrect                                                             & Incorrect                                                             & Incorrect                                                             & -                                                                  & 15.97\% (0.23\%)                 & 16.56\% (0.19\%)                   & 34.98\% (0.32\%)                   & 1.09 (0.04)                      & 0.27 (0.01)                      & 1.26 (0.03)                      & 0.15 (0.02)                    & 0.02 (0.01)                    & 0.00 (0.00)                    \\ \hline
\textbf{JPS Exact}              & -                                                                     & -                                                                     & -                                                                     & -                                                                  & 4.06\% (0.24\%)                  & 2.10\% (0.19\%)                    & 19.52\% (0.31\%)                   & 0.18 (0.01)                      & 0.02 (0.01)                      & 0.43 (0.02)                      & 0.87 (0.02)                    & 0.92 (0.01)                    & 0.23 (0.02)                    \\ \hline
\textbf{JPS Approximate}        & -                                                                     & -                                                                     & -                                                                     & -                                                                  & 1.40\% (0.24\%)                  & -10.73\% (0.19\%)                  & 49.19\% (0.30\%)                   & 0.12 (0.01)                      & 0.12 (0.01)                      & 2.45 (0.06)                      & 0.94 (0.01)                    & 0.27 (0.02)                    & 0.00 (0.00)                    \\ \hline
\multirow{4}{*}{\textbf{JAIPW}} & Correct                                                               & Correct                                                               & Correct                                                               & Correct                                                            & 0.10\% (0.45\%)                  & 0.10\% (0.26\%)                    & 0.26\% (0.50\%)                    & \textbf{0.39} (0.04)                      & 0.03 (0.01)                      & 0.12 (0.01)                      & 0.93 (0.01)                    & 0.96 (0.01)                    & 0.96 (0.01)                    \\ \cline{2-14} 
                                & Correct                                                               & Correct                                                               & Correct                                                               & Incorrect                                                          & 1.15\% (0.45\%)                    & 0.62\% (0.35\%)                    & 1.19\% (0.63\%)                    & 0.39 (0.03)                      & 0.06 (0.01)                      & 0.19 (0.01)                      & \textbf{0.87} (0.02)                    & \textbf{0.89} (0.01)                    & \textbf{0.86} (0.02)                    \\ \cline{2-14} 
                                & Incorrect                                                             & Incorrect                                                             & Incorrect                                                             & Correct                                                            & \textbf{3.24\%} (0.26\%)                  & \textbf{-0.80\%} (0.24\%)                   & \textbf{-1.46\%} (0.45\%)                   & \textbf{0.17} (0.01)                      & 0.03 (0.01)                      & 0.10 (0.01)                      & \textbf{0.92} (0.01)                    & \textbf{0.95} (0.01)                    & \textbf{0.95} (0.01)                    \\ \cline{2-14} 
                                & Incorrect                                                             & Incorrect                                                             & Incorrect                                                             & Incorrect                                                          & 19.30\% (0.27\%)                 & 13.01\% (0.23\%)                   & 38.00\% (0.41\%)                   & 1.58 (0.06)                      & 0.18 (0.01)                      & 1.52 (0.04)                      & 0.11 (0.01)                    & 0.24 (0.02)                    & 0.01 (0.01)                    \\ \hline
\end{tabular}
\end{adjustbox}
\vspace{0.5cm}
\caption{Comparison of Relative Bias Percentage (RBP), Relative Mean Squared Error (RMSE), and Coverage Probabilities (CP) across the unweighted, four joint IPW methods (JPL, JSR, JCL, JPS), and JAIPW under simulation setup 2. Bolded values indicate key findings discussed in the text. For each metric, the mean estimate is reported with its monte carlo standard error in parentheses. The results are obtained using number of simulation replications as $R=500$.\\
\textbf{Abbreviations:} Unweighted = Unweighted Logistic Regression; JPL = Joint Pseudolikelihood; JSR = Joint Simplex Regression; JPS = Joint Post Stratification; JCL = Joint Calibration; JAIPW = Joint Augmented Inverse Probability Weighted.}
\label{tab:table_r2}
\end{table}
\begin{figure}[H]
    \centering
    \includegraphics[width=0.9\linewidth]{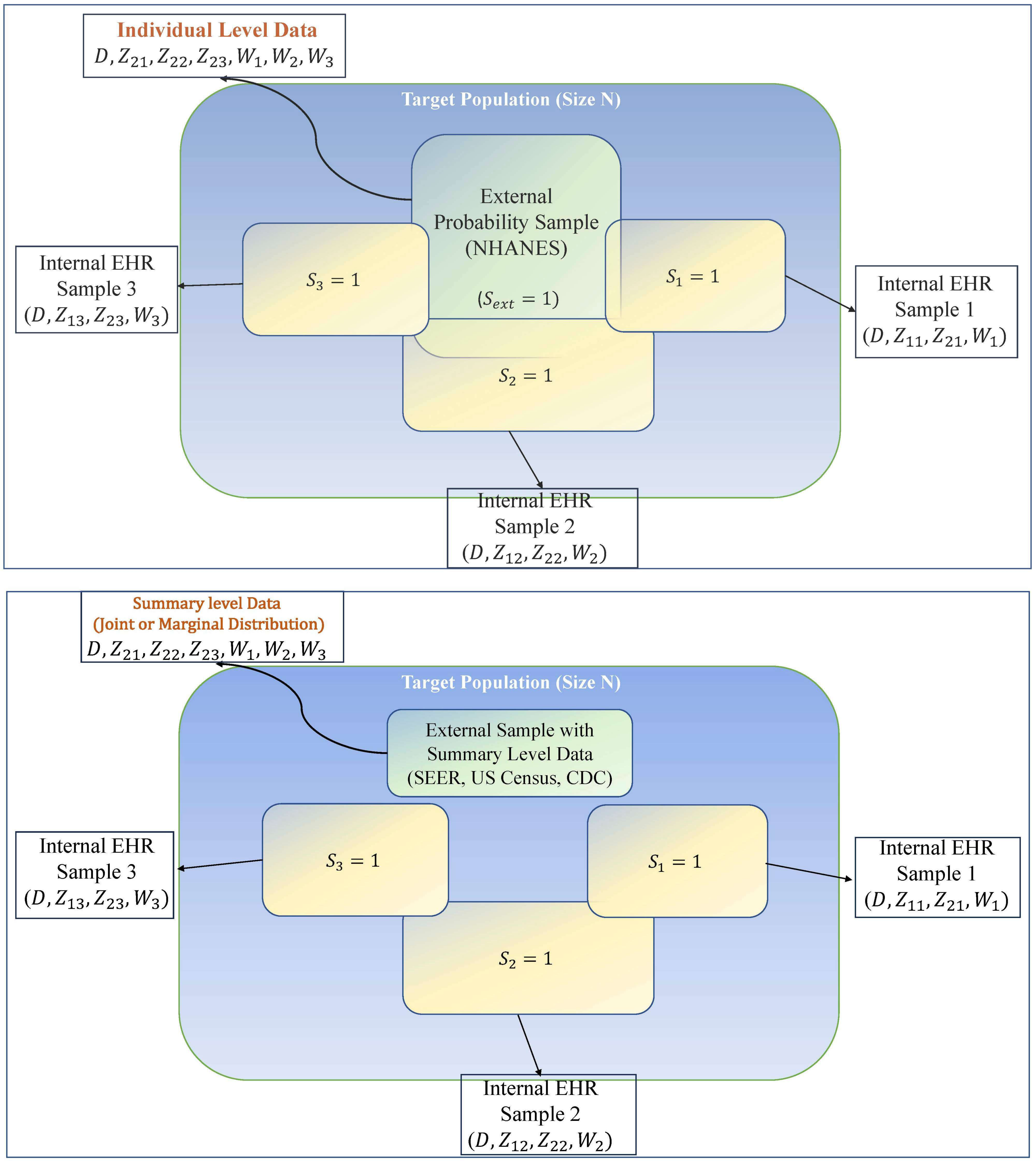}
    \caption{Figure depicts the relationship between the target population, three internal non-probability samples ($K=3$) and external data sources. $S_1$, $S_2$ and $S_3$ are the selection indicator variables of the three internal samples respectively. In the Sub-figure 1, $S_{\text{ext}}=1$ denote the individual-level external data which is NHANES in our work. In Sub-figure 2, external data sources include SEER, US Census and CDC. $D$ is the outcome of interest. For $k=1,2,3$, $\boldsymbol Z_{1k}\rightarrow D,\boldsymbol Z_{1k}\not \rightarrow S_k$, $\boldsymbol Z_{2k}\rightarrow D, \boldsymbol Z_{2k}\rightarrow S_k$ and $\boldsymbol W_{k}\not\rightarrow D,\boldsymbol W_{k}\rightarrow S_k$.}\label{fig:popumulti}
\end{figure}

\begin{figure}[H]
    \centering
    \includegraphics[width=\linewidth]{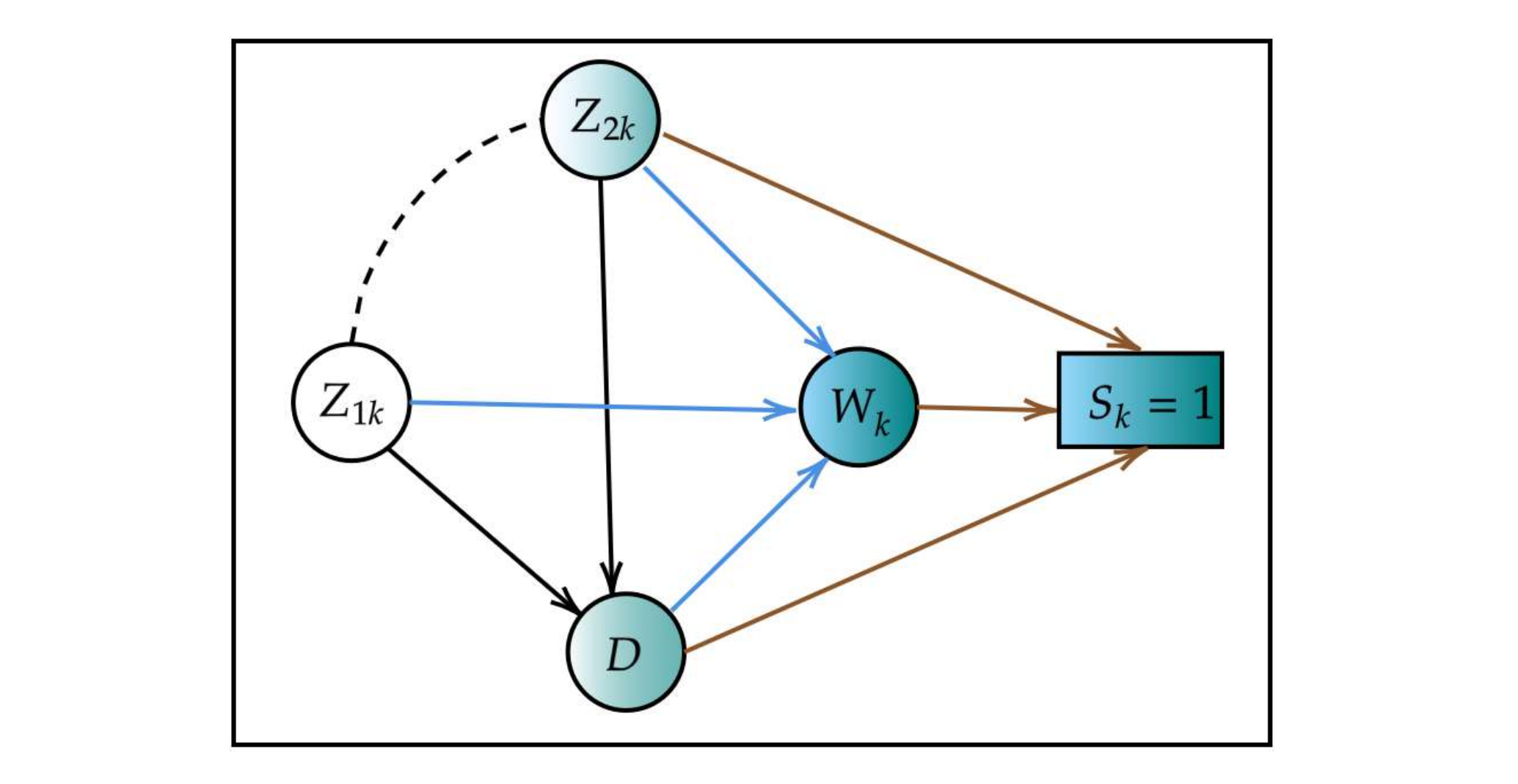}
    \caption{Selection Directed Acyclic Graphs (DAGs)  representing the relationships between different variables of interest in cohort $k$, where $k \in \{1,2,...,K\}$: $D$ (Disease Indicator), $S_k$ (Selection Indicator into the cohort $k$), $\boldsymbol Z_{1k}\rightarrow D,\boldsymbol Z_{1k}\not \rightarrow S_k$, $\boldsymbol Z_{2k}\rightarrow D, \boldsymbol Z_{2k}\rightarrow S_k$ and $\boldsymbol W_{k}\not\rightarrow D,\boldsymbol W_{k}\rightarrow S_k$. The dotted line between $\boldsymbol Z_{1k}$ and  $\boldsymbol Z_{2k}$ denotes association between those two variables.}\label{fig:dagmulti}
\end{figure}
\begin{figure}[H]
    \centering
    \includegraphics[width =\linewidth]{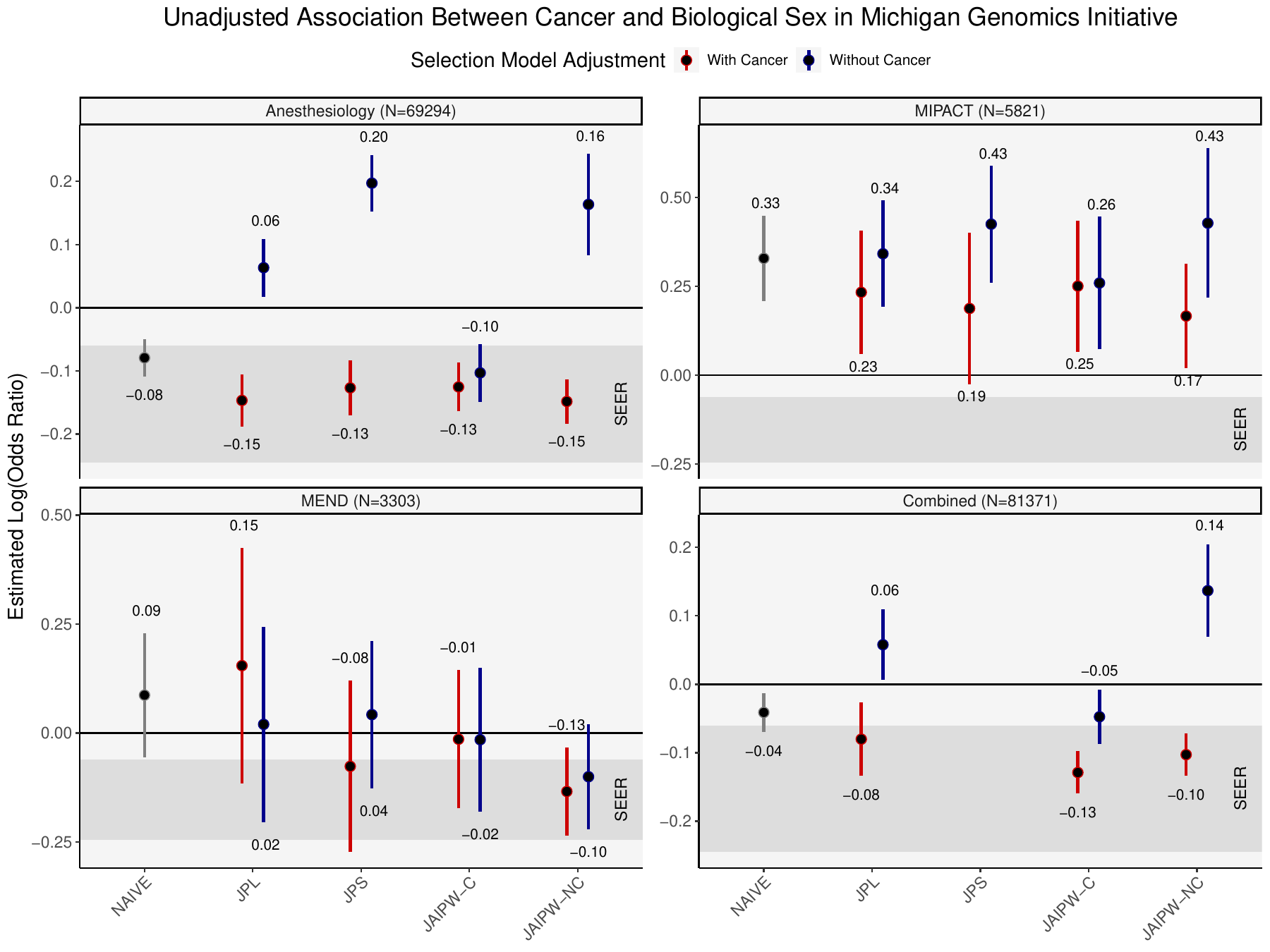}
    \caption{Estimates of the marginal log (Odds Ratio) for the association between cancer and biological sex, with 95\% confidence intervals, across Anesthesiology, MIPACT, MEND and the combined cohort. Comparisons are shown for the unweighted logistic regression (NAIVE, gray) and methods adjusting for selection bias (JPL, JPS, JAIPW-C, JAIPW-NC). For the IPW and JAIPW methods, estimates are shown either in red or blue depending on the selection model including cancer status (red, Cancer) or not (blue, No Cancer). JAIPW-C includes cancer in the auxiliary score model, while JAIPW-NC do not. The gray horizontal band represents the estimates of the log (Odds Ratio) from the SEER 2008-2016 registry.}\label{fig:realdata}
\end{figure}

\newpage
\section*{Supplementary Materials: A Doubly Robust Framework for Addressing Outcome-Dependent Selection Bias in Multi-Cohort EHR Studies}

\pagenumbering{arabic}

\setcounter{figure}{0}
\renewcommand{\thefigure}{S\arabic{figure}}
\setcounter{table}{0}
\renewcommand{\thetable}{S\arabic{table}}
\setcounter{equation}{0}
\renewcommand{\theequation}{S\arabic{equation}}
\setcounter{section}{0}
\renewcommand{\thesection}{S\arabic{section}}
\setcounter{theorem}{0}
\renewcommand{\thetheorem}{S\arabic{theorem}}
\section{Supplementary Assumptions} \label{sec:assumptions}
\textit{A}1. Let \(\boldsymbol{\theta}^*_{\boldsymbol{Z}}\) denote the true value of \(\boldsymbol{\theta}_{\boldsymbol{Z}}\), as defined in equation (2.1) of the main text. We assume that \(\boldsymbol{\theta}^*_{\boldsymbol{Z}}\) belongs to the compact parameter space \(\Theta_{\boldsymbol{Z}}\).\\

\noindent
\textit{A}2. Let the sample sizes of the $K$ different cohorts and the external probability sample be $n_1,n_2,...n_K$ and $n_b$. We assume that $n_1,n_2,...n_K,n_b=\mathcal{O}(N)$ where $N$ is the target population size.\\

\noindent
\textit{A}3. \textbf{IPW Variance Estimation Without Nuisance Parameter Contribution:}\\ $\mathbb{E}[\sup_{\boldsymbol \theta_{\boldsymbol Z} \in \Theta_{\boldsymbol Z}}(G_{\boldsymbol \theta_{\boldsymbol Z}})]<\infty$ and $\mathbb{E}[\sup_{\boldsymbol \theta_{\boldsymbol Z} \in \Theta_{\boldsymbol Z}}\{\boldsymbol g(\boldsymbol \theta_{\boldsymbol Z}) \boldsymbol g(\boldsymbol \theta_{\boldsymbol Z})'\}]<\infty$.\\

\noindent
\textit{A}4. The true selection model parameters, \(\boldsymbol{\alpha}^{\text{mult}*}\), are assumed to lie within the compact parameter space \(\Theta_{\boldsymbol{\alpha}^{\text{mult}}}\), where \(\boldsymbol{\alpha}^{\text{mult}*}\) represents the true value of \(\boldsymbol{\alpha}^{\text{mult}}\), as defined in equation (3.14) of the main text.\\

\noindent
\textit{A}5. \textbf{PL Variance Estimation:} Let $\eta=(\boldsymbol \theta_{\boldsymbol Z},\boldsymbol \alpha^{\text{mult}})$ and $\mathcal{N}=\Theta_{\boldsymbol Z} \times \Theta_{\boldsymbol \alpha^{\text{mult}}}$. Each of the following expectations are finite.
\begin{align*}
    & \mathbb{E}[\sup_{\boldsymbol \eta \in \mathcal{N}}(G_{\boldsymbol \theta_{\boldsymbol Z}})]<\infty \hspace{0.8cm} 
    \mathbb{E}[\sup_{\boldsymbol \eta \in \mathcal{N}}(G_{\boldsymbol \alpha^{\text{mult}}})]<\infty \hspace{0.8cm} \mathbb{E}[\sup_{\boldsymbol \alpha^{\text{mult}} \in \Theta_{\boldsymbol \alpha^{\text{mult}}}}(H)]<\infty\\
    & \mathbb{E}\left(\sup_{\boldsymbol \eta \in \mathcal{N}}\left[S \cdot \left\{\frac{1}{\pi(\boldsymbol X^{\text{mult}},\boldsymbol \alpha^{\text{mult}})}\right\}^2\left\{D-\frac{e^{\boldsymbol \theta_{\boldsymbol Z}' \boldsymbol Z}}{(1+e^{\boldsymbol \theta_{\boldsymbol Z}'\boldsymbol Z})}\right\}^2\cdot \boldsymbol Z \boldsymbol Z'\right]\right)<\infty\\
    & \mathbb{E}\left(\sup_{\boldsymbol \eta \in \mathcal{N}}\left[\frac{1}{\pi(\boldsymbol X^{\text{mult}},\boldsymbol \alpha^{\text{mult}})}\begin{pmatrix}
     S_{1}\boldsymbol X_{1}\\
     \vdots\\
      S_{K}\boldsymbol X_{K}\\
         \end{pmatrix}\cdot \left\{D\boldsymbol Z'-\frac{e^{\boldsymbol \theta_{\boldsymbol Z}'\boldsymbol Z}}{(1+e^{\boldsymbol \theta_{\boldsymbol Z}' \boldsymbol Z})} \cdot \boldsymbol Z'\right\}\right.\right.\\
         & \left.\left.-\frac{S\cdot S_{\text{ext}}}{\pi_{\text{ext}}(\boldsymbol M)} \begin{pmatrix}
       \pi_1(\boldsymbol X_{1},\boldsymbol \alpha_1).\boldsymbol X_{1}\\
     \vdots\\
     \pi_K(\boldsymbol X_{K},\boldsymbol \alpha_K).\boldsymbol X_{K}\\
         \end{pmatrix}\cdot \left\{D \boldsymbol Z'-\frac{e^{\boldsymbol \theta_{\boldsymbol Z}' \boldsymbol Z}}{(1+e^{\boldsymbol \theta_{\boldsymbol Z}' \boldsymbol Z})} \cdot \boldsymbol Z'\right\}\right]\right)<\infty\\
    & \mathbb{E}\left(\sup_{\boldsymbol \alpha^{\text{mult}}\in \Theta_{\boldsymbol \alpha^{\text{mult}}}}\left[\begin{pmatrix}
    S_1 \boldsymbol X_{1}\\
     \vdots\\
      S_K \boldsymbol X_{K}\\
         \end{pmatrix}\cdot(S_1 \boldsymbol X_{1}',\cdot \cdot \cdot,S_K \boldsymbol X_{K}')-2S_{\text{ext}}\begin{pmatrix}
    \frac{\pi_1(\boldsymbol X_{1},\boldsymbol \alpha_1)}{\pi_{\text{ext}}(\boldsymbol M)}\cdot \boldsymbol X_{1}\\
     \vdots\\
    \frac{\pi_K(\boldsymbol X_{K},\boldsymbol \alpha_K)}{\pi_{\text{ext}}(\boldsymbol M)}\cdot \boldsymbol X_{K}\\
         \end{pmatrix}\cdot (S_1\boldsymbol X_{1}',\cdot \cdot \cdot,S_K\boldsymbol X_{K}')\right.\right.\\
    & \left.\left.+  S_{\text{ext}} \cdot \left\{\frac{1}{\pi_{\text{ext}}(\boldsymbol M)}\right\}^2 \cdot \begin{pmatrix}
    \frac{\pi_1(\boldsymbol X_{1},\boldsymbol \alpha_1)}{\pi_{\text{ext}}(\boldsymbol M)}\cdot \boldsymbol X_{1}\\
     \vdots\\
    \frac{\pi_K(\boldsymbol X_{K},\boldsymbol \alpha_K)}{\pi_{\text{ext}}(\boldsymbol M)}\cdot \boldsymbol X_{K}\\
         \end{pmatrix}\left\{ \frac{\pi_1(\boldsymbol X_{1},\boldsymbol \alpha_1)}{\pi_{\text{ext}}(\boldsymbol M)}\cdot \boldsymbol X_{1}',\cdot\cdot\cdot, \frac{\pi_K(\boldsymbol X_{K},\boldsymbol \alpha_K)}{\pi_{\text{ext}}(\boldsymbol M)}\cdot \boldsymbol X_{K}' \right\}\right]\right) <\infty \cdot
\end{align*}
\noindent
\textit{A}6. \textbf{CL Variance Estimation:}  Each of the following expectations are finite.
\begin{align*}
    & \mathbb{E}[\sup_{\boldsymbol \eta \in \mathcal{N}}(G_{\boldsymbol \theta_{\boldsymbol Z}})]<\infty \hspace{0.8cm} 
    \mathbb{E}[\sup_{\boldsymbol \eta \in \mathcal{N}}(G_{\boldsymbol \alpha^{\text{mult}}})]<\infty \hspace{0.8cm} \mathbb{E}[\sup_{\boldsymbol \alpha^{\text{mult}}\in \Theta_{\boldsymbol \alpha^{\text{mult}}}}(H)]<\infty\\
    & \mathbb{E}\left(\sup_{\boldsymbol \eta \in \mathcal{N}}\left[S \cdot \left\{\frac{1}{\pi(\boldsymbol X^{\text{mult}},\boldsymbol \alpha^{\text{mult}})}\right\}^2\left\{D-\frac{e^{\boldsymbol \theta_{\boldsymbol Z}' \boldsymbol Z}}{(1+e^{\boldsymbol \theta_{\boldsymbol Z}'\boldsymbol Z})}\right\}^2\cdot \boldsymbol Z \boldsymbol Z'\right]\right)<\infty\\
    & \mathbb{E}\left(\sup_{\boldsymbol \eta \in \mathcal{N}}\left[\frac{1}{\pi(\boldsymbol X^{\text{mult}},\boldsymbol \alpha^{\text{mult}})}\begin{pmatrix}
     \frac{S_{1}\boldsymbol X_{1}}{\pi_1(\boldsymbol X_1,\boldsymbol \alpha_1)}\\
     \vdots\\
       \frac{S_{K}\boldsymbol X_{K}}{\pi_K(\boldsymbol X_K,\boldsymbol \alpha_K)}\\
         \end{pmatrix}\cdot \left\{D\boldsymbol Z'-\frac{e^{\boldsymbol \theta_{\boldsymbol Z}'\boldsymbol Z}}{(1+e^{\boldsymbol \theta_{\boldsymbol Z}' \boldsymbol Z})} \cdot \boldsymbol Z'\right\}\right.\right.\\
         & \left.\left.-S
         \begin{pmatrix}
      \boldsymbol X_{1}\\
     \vdots\\
    \boldsymbol X_{K}
         \end{pmatrix}
         \cdot \left\{D \boldsymbol Z'-\frac{e^{\boldsymbol \theta_{\boldsymbol Z}' \boldsymbol Z}}{(1+e^{\boldsymbol \theta_{\boldsymbol Z}' \boldsymbol Z})} \cdot \boldsymbol Z'\right\}\right]\right)<\infty
\end{align*}
\begin{align*}
     & \mathbb{E}\left(\sup_{\boldsymbol \alpha^{\text{mult}}\in \Theta_{\boldsymbol \alpha^{\text{mult}}}}\left[\begin{pmatrix}
     \frac{S_{1}\boldsymbol X_{1}}{\pi_1(\boldsymbol X_1,\boldsymbol \alpha_1)}\\
     \vdots\\
       \frac{S_{K}\boldsymbol X_{K}}{\pi_K(\boldsymbol X_K,\boldsymbol \alpha_K)}\\
         \end{pmatrix}\cdot\left(\frac{S_{1}\boldsymbol X_{1}'}{\pi_1(\boldsymbol X_1,\boldsymbol \alpha_1)},\cdot \cdot \cdot, \frac{S_{K}\boldsymbol X_{K}'}{\pi_K(\boldsymbol X_K,\boldsymbol \alpha_K)}\right)\right.\right.\\
         & \left.\left. -2\begin{pmatrix}
    \boldsymbol X_{1}\\
     \vdots\\
    \boldsymbol X_{K}\\
         \end{pmatrix}\cdot \left(\frac{S_{1}\boldsymbol X_{1}'}{\pi_1(\boldsymbol X_1,\boldsymbol \alpha_1)},\cdot \cdot \cdot, \frac{S_{K}\boldsymbol X_{K}'}{\pi_K(\boldsymbol X_K,\boldsymbol \alpha_K)}\right).+ \begin{pmatrix}
    \boldsymbol X_{1}\\
     \vdots\\
    \boldsymbol X_{K}\\
         \end{pmatrix}(\boldsymbol X_1',\cdot \cdot \cdot ,\boldsymbol X_{K}')\right]\right) <\infty \cdot
\end{align*}
\noindent
\textit{A}7. Let the combined nuisance parameter be $\boldsymbol\eta^{\text{mult}}=(\boldsymbol \alpha^{\text{mult}},\boldsymbol \gamma^{\text{mult}})$. Then, the true nuisance parameters, \(\boldsymbol{\eta}^{\text{mult}*}\), belong to the compact parameter space \(\Theta_{\boldsymbol{\eta}^{\text{mult}*}}\), where \(\boldsymbol{\eta}^{\text{mult}*}\) represents the true value of \(\boldsymbol{\eta}^{\text{mult}}\). \\

\noindent
\textit{A}8. \textbf{Parametric JAIPW Variance Estimation:} Let the combined parameter space be $\mathcal{N}=\Theta_{\boldsymbol Z} \times \Theta_{\boldsymbol \alpha^{\text{mult}}} \times \Theta_{\boldsymbol \gamma^{\text{mult}}}$. Each of the following expectations are finite.
\begin{align*}
    & \mathbb{E}[\sup_{\boldsymbol \eta \in \mathcal{N}}(G_{\boldsymbol \theta_{\boldsymbol Z}})]<\infty \hspace{0.8cm} 
    \mathbb{E}[\sup_{\boldsymbol \eta \in \mathcal{N}}(G_{\boldsymbol \alpha^{\text{mult}},\boldsymbol \gamma^{\text{mult}}})]<\infty \hspace{0.8cm} \mathbb{E}\left[\sup_{\boldsymbol \eta \in \mathcal{N}}(H)\right]<\infty\\
    & \mathbb{E}\left[\sup_{\boldsymbol \eta \in \mathcal{N}}\left\{\boldsymbol g(\boldsymbol \theta_{\boldsymbol Z},\boldsymbol \alpha^{\text{mult}},\boldsymbol \gamma^{\text{mult}})\cdot  g(\boldsymbol \theta_{\boldsymbol Z},\boldsymbol \alpha^{\text{mult}},\boldsymbol \gamma^{\text{mult}})'\right\}\right]<\infty\\
    & \mathbb{E}\left[\sup_{\boldsymbol \eta \in \mathcal{N}}\left\{\boldsymbol g(\boldsymbol \theta_{\boldsymbol Z},\boldsymbol \alpha^{\text{mult}},\boldsymbol \gamma^{\text{mult}})\cdot  \boldsymbol h(\boldsymbol X^{\text{mult}},\boldsymbol Z_{1\cap},\boldsymbol M,\boldsymbol \alpha^{\text{mult}},\boldsymbol \gamma^{\text{mult}})'\right\}\right]<\infty\\
     & \mathbb{E}\left[\sup_{\boldsymbol \eta \in \mathcal{N}}\left\{\boldsymbol h(\boldsymbol X^{\text{mult}},\boldsymbol Z_{1\cap},\boldsymbol M,\boldsymbol \alpha^{\text{mult}},\boldsymbol \gamma^{\text{mult}})\cdot  \boldsymbol h(\boldsymbol X^{\text{mult}},\boldsymbol Z_{1\cap},\boldsymbol M,\boldsymbol \alpha^{\text{mult}},\boldsymbol \gamma^{\text{mult}})'\right\}\right]<\infty \cdot
\end{align*}
\textit{A}9. Let the combined nuisance parameter function be $\boldsymbol\eta^{\text{mult}}=(\boldsymbol \alpha^{\text{mult}},\boldsymbol f)$. Let the combined sample size be $N_K$. For all \( N_K \geq 3 \), the following conditions hold:
\begin{enumerate}
      \item The map \( (\boldsymbol \theta_{\boldsymbol Z}, \boldsymbol\eta^{\text{mult}})) \mapsto \mathbb{E}[\boldsymbol g^{\text{mult}}( \boldsymbol \theta_{\boldsymbol Z}, \boldsymbol\eta^{\text{mult}})] \) is twice continuously Gateaux-differentiable on \( \Theta_{\boldsymbol Z} \times  \Theta_{\boldsymbol\eta^{\text{mult}}} \).
       \item For all \( \boldsymbol \theta_{\boldsymbol Z} \in \Theta_{\boldsymbol Z} \), the identification condition holds:
       \begin{equation*}
        2 \left\| \mathbb{E}[\boldsymbol g^{\text{mult}}( \boldsymbol \theta_{\boldsymbol Z}, \boldsymbol\eta^{\text{mult}})] \right\| \geq \left\| J_0 ( \boldsymbol \theta_{\boldsymbol Z} -  \boldsymbol \theta_{\boldsymbol Z}^*) \right\| \wedge c_0,
    \end{equation*}
    where the Jacobian matrix $J_0 := \left. \frac{\partial}{\partial \boldsymbol \theta_{\boldsymbol Z}'} \, 
\mathbb{E}\left[\boldsymbol g^{\text{mult}}( \boldsymbol \theta_{\boldsymbol Z}, \boldsymbol\eta^{\text{mult}*})\right] 
\right|_{\boldsymbol \theta_{\boldsymbol Z} = \boldsymbol \theta_{\boldsymbol Z}^*}$ has singular values bounded between \( c_0 \) and \( c_1 \).
\end{enumerate}

\noindent
\textit{A}10.
Let \( L \) be a fixed integer. For all \( N _K\geq 3 \) the following conditions hold:
\begin{enumerate}
    \item Given a random subset \( I \subset \{1, \dots, N_K\} \) of size \( n = N_K/L \), the nuisance parameter estimator \( \hat{\boldsymbol\eta}^{\text{mult}}_I = \hat{\boldsymbol\eta}^{\text{mult}}(i \in I^c) \) belongs to the realization set \( \Theta^N_{\boldsymbol\eta^{\text{mult}}}\) with probability at least \( 1 - \Delta_N \), where \( \Theta_{\boldsymbol\eta^{\text{mult}}}^N\subset\Theta_{\boldsymbol\eta^{\text{mult}}}\) contains \( \boldsymbol\eta^{\text{mult}} \) and satisfies the regularity conditions below.

    \item The parameter space \( \Theta_{\boldsymbol Z} \) is bounded, and for each \( \boldsymbol\eta^{\text{mult}} \in \Theta_{\boldsymbol\eta^{\text{mult}}}^N\), the function class
    \[
    \mathcal{F}_{1,\boldsymbol\eta^{\text{mult}}} = \left\{ g^{\text{mult}}_j( \boldsymbol \theta_{\boldsymbol Z}, \boldsymbol\eta^{\text{mult}}) : j = 1, \dots, p, \; \boldsymbol \theta_{\boldsymbol Z} \in \Theta_{\boldsymbol Z} \right\}
    \]
    is suitably measurable and its uniform covering entropy obeys
    \[
    \sup_Q \log N\left( \epsilon \| \mathcal{F}_{1,\boldsymbol\eta^{\text{mult}}} \|_{Q,2}, \mathcal{F}_{1,\boldsymbol\eta^{\text{mult}}}, \| \cdot \|_{Q,2} \right) \leq v \log\left( \frac{a}{\epsilon} \right), \quad \text{for all } 0 < \epsilon \leq 1,
    \]
    where \( \mathcal{F}_{1,\boldsymbol\eta^{\text{mult}}} \) has a measurable envelope function with \( \| \mathcal{F}_{1,\boldsymbol\eta^{\text{mult}}} \|_{P,q} \leq c_1 \).

    \item The following conditions on the statistical rates \( r_N, r_N', \lambda_N' \) hold:
    \begin{align*}
        &r_N := \sup_{\boldsymbol\eta^{\text{mult}} \in \Theta_{\boldsymbol\eta^{\text{mult}}}^N,\, \boldsymbol \theta_{\boldsymbol Z} \in \Theta_{\boldsymbol Z}} 
        \left\| \mathbb{E} \left[ \boldsymbol g^{\text{mult}}(\boldsymbol \theta_{\boldsymbol Z}, \boldsymbol\eta^{\text{mult}}) - \mathbb{E}[ \boldsymbol g^{\text{mult}}(\boldsymbol \theta_{\boldsymbol Z}, \boldsymbol\eta^{\text{mult}*}) ] \right] \right\| \leq \delta_N \tau_N, \\
        &r_N' = \lambda_N' := \sup_{\boldsymbol\eta^{\text{mult}} \in \Theta_{\boldsymbol\eta^{\text{mult}}}^N,\, \|\boldsymbol \theta_{\boldsymbol Z} - \boldsymbol \theta_{\boldsymbol Z}^*\| \leq \tau_N}
        \left( \mathbb{E}\left[ \| \boldsymbol g^{\text{mult}}(\boldsymbol \theta_{\boldsymbol Z}, \boldsymbol\eta^{\text{mult}}) - \boldsymbol g^{\text{mult}}(\boldsymbol \theta_{\boldsymbol Z}^*, \boldsymbol\eta^{\text{mult}*}) \|^2 \right] \right)^{1/2}, \\
        &r_N' \sqrt{\log(1/r_N')} \leq \delta_N, \quad \text{and} \\
        &\sup_{r \in (0,1),\, \boldsymbol\eta^{\text{mult}} \in \Theta_{\boldsymbol\eta^{\text{mult}}}^N,\, \| \boldsymbol \theta_{\boldsymbol Z} - \boldsymbol \theta_{\boldsymbol Z}^* \| \leq \tau_N}
      \left\| \frac{\partial^2}{\partial r^2} \mathbb{E} \left[
        \boldsymbol g^{\text{mult}}\left(
\boldsymbol \theta_{\boldsymbol Z}^* + r(\boldsymbol \theta_{\boldsymbol Z} - \boldsymbol \theta_{\boldsymbol Z}^*)\right.\right.\right.\\
&\left.\left.\left.\hspace{4cm},\boldsymbol\eta^{\text{mult}*} + r(\boldsymbol\eta^{\text{mult}} - \boldsymbol\eta^{\text{mult}*})
\right) \right] \right\| \leq \delta_N N^{-1/2}.
    \end{align*}
    \item The variance of the score is non-degenerate: all eigenvalues of the matrix
    \[
    \mathbb{E} \left[ \boldsymbol g^{\text{mult}}(\boldsymbol \theta_{\boldsymbol Z}^*, \boldsymbol\eta^{\text{mult}*}) \, \boldsymbol g^{\text{mult}}(\boldsymbol \theta_{\boldsymbol Z}^*, \boldsymbol\eta^{\text{mult}*})' \right]
    \]
    are bounded below by \( c_0 > 0 \).
\end{enumerate}
\section{Known Selection Probabilities}
\subsection{Theorem S1} \label{sec:pr1}
\begin{theorem}
    Under condition C1 of the main text and assumptions \textit{A}1 and \textit{A}2 in Supplementary Section \ref{sec:assumptions} if $\pi_1(.)$, $\pi_2(.)$, ..., $\pi_K(.)$ are known for each individual in the target population, $\widehat{\boldsymbol \theta}_{\boldsymbol Z}$ estimated from the following weighted logistic regression estimating equation is consistent for $\theta^*_{\boldsymbol Z}$, where $\theta^*_{\boldsymbol Z}$ is the true value of $\theta_{\boldsymbol Z}$ satisfying equation (2.1) of the main text.
    \begin{equation}
     \frac{1}{N}\sum_{i=1}^{N}\frac{S^{\text{mult}}_i}{\pi(\boldsymbol X^{\text{mult}}_i)}\left\{D_i\boldsymbol Z_i-\frac{e^{\boldsymbol\theta_{\boldsymbol Z}'\boldsymbol Z_i}}{(1+e^{\boldsymbol\theta_{\boldsymbol Z}'\boldsymbol Z_i})}\cdot \boldsymbol Z_i\right\}=\mathbf{0}.
\end{equation}\label{thm1}
\end{theorem}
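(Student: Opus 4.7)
The plan is to treat $\widehat{\boldsymbol \theta}_{\boldsymbol Z}$ as a Z-estimator and follow the standard consistency recipe: (i) verify that the population version of the estimating equation vanishes at $\boldsymbol \theta_{\boldsymbol Z}^*$, (ii) establish uniform convergence in $\boldsymbol \theta_{\boldsymbol Z} \in \Theta$ of the sample estimating function to its population limit, and (iii) argue that $\boldsymbol \theta_{\boldsymbol Z}^*$ is the unique root of the limiting equation. These three pieces, together with A1 (compactness of $\Theta$) and A2--A3 (iid sampling with sample size $\Theta(N)$), yield $\widehat{\boldsymbol \theta}_{\boldsymbol Z} \xrightarrow{p} \boldsymbol \theta_{\boldsymbol Z}^*$ via a textbook Z-estimator consistency argument.

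The crucial calculation is (i). Writing $\mathcal{U}_i(\boldsymbol \theta_{\boldsymbol Z}) = D_i \boldsymbol Z_i - \frac{e^{\boldsymbol \theta_{\boldsymbol Z}' \boldsymbol Z_i}}{1+e^{\boldsymbol \theta_{\boldsymbol Z}' \boldsymbol Z_i}} \boldsymbol Z_i$, I would evaluate $\mathbb{E}\!\left[\tfrac{S_i}{\pi(\boldsymbol X_i)} \mathcal{U}_i(\boldsymbol \theta_{\boldsymbol Z}^*)\right]$ by iterated expectation, conditioning on $(D_i, \boldsymbol Z_i, \boldsymbol W_i)$, which is enough to determine $\boldsymbol X_i = \bigcup_k \boldsymbol X_{ki}$. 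The cohort-level DAG assumption $S_k \independent \boldsymbol Z_{1k} \mid \boldsymbol X_k$ implies $P(S_k=1 \mid D, \boldsymbol Z, \boldsymbol W) = \pi_k(\boldsymbol X_k)$, and combining this with the stated conditional independence of the cohort selection mechanisms given their selection variables yields $P(S=1 \mid D,\boldsymbol Z,\boldsymbol W) = 1 - \prod_{k=1}^K (1 - \pi_k(\boldsymbol X_k)) = \pi(\boldsymbol X_i)$. Hence the IPW factor cancels in the inner conditional expectation, leaving $\mathbb{E}[\mathcal{U}_i(\boldsymbol \theta_{\boldsymbol Z}^*)]$. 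A further conditioning on $\boldsymbol Z_i$ and the disease model (1) gives $\mathbb{E}[D_i \mid \boldsymbol Z_i] = e^{\boldsymbol \theta_{\boldsymbol Z}^{*\prime} \boldsymbol Z_i}/(1+e^{\boldsymbol \theta_{\boldsymbol Z}^{*\prime} \boldsymbol Z_i})$, so $\mathbb{E}[\mathcal{U}_i(\boldsymbol \theta_{\boldsymbol Z}^*)] = \boldsymbol 0$.

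For (ii), the estimating function is continuous in $\boldsymbol \theta_{\boldsymbol Z}$, bounded pointwise (because $D \in \{0,1\}$, the logistic link is bounded, and by positivity $\pi(\boldsymbol X)$ is bounded away from zero on the support that matters), and has an integrable envelope under the usual moment conditions on $\boldsymbol Z$; together with compactness of $\Theta$ from A1 and iid sampling from A3, a standard uniform law of large numbers delivers uniform convergence in probability to the population expectation. For (iii), the population criterion function whose gradient is this estimating equation is the expected weighted logistic log-likelihood, which is strictly concave in $\boldsymbol \theta_{\boldsymbol Z}$ as long as $\mathbb{E}\!\left[\frac{S}{\pi(\boldsymbol X)}\boldsymbol Z \boldsymbol Z'\right]$ is positive definite. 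Under the standard identifiability assumption on $\boldsymbol Z$ this holds, giving a unique maximizer and a unique root of the gradient.

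The main obstacle is step (i): correctly translating the DAG-level conditional independence statements into the identity $P(S=1\mid D,\boldsymbol Z,\boldsymbol W) = \pi(\boldsymbol X)$. The subtle point is that $\boldsymbol Z$ properly contains the cohort-specific $\boldsymbol Z_{1k}$, which are absent from every $\boldsymbol X_k$, so one must separately invoke the within-cohort conditional independence $S_k \independent \boldsymbol Z_{1k} \mid \boldsymbol X_k$ and the between-cohort conditional independence of $\{S_k\}_{k=1}^K$ given $\boldsymbol X$ to reduce the conditional selection probability to the product-complement form in (3). Once this identity is in hand, the remainder of the argument is routine.
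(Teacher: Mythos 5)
Your proposal is correct and follows essentially the same route as the paper's proof: both reduce consistency to showing the estimating function has mean zero at $\boldsymbol\theta^*_{\boldsymbol Z}$ (the paper delegates the remaining Z-estimator machinery to a citation of Tsiatis, while you spell out the ULLN and uniqueness steps), and both establish the key identity by iterated expectation over the same conditioning set — your $(D_i,\boldsymbol Z_i,\boldsymbol W_i)$ is exactly the paper's $(\boldsymbol X_i,\boldsymbol Z_{1\cap,i})$ — so that $\mathbb{E}(S\mid D,\boldsymbol Z,\boldsymbol W)=\pi(\boldsymbol X)$ cancels the weight, after which conditioning on $\boldsymbol Z$ and the disease model finishes the argument. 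Your derivation of that identity from the cohort-level DAG independences is the same justification the paper gives for its condition C2, so there is no substantive difference.
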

\begin{proof}
 Let  $$\phi_N(\boldsymbol\theta_{\boldsymbol Z})=\frac{1}{N}\sum_{i=1}^{N}\frac{S^{\text{mult}}_i}{\pi(\boldsymbol X^{\text{mult}}_i)}\left\{D_i \boldsymbol Z_i-\frac{e^{\boldsymbol\theta_{\boldsymbol Z}'\boldsymbol Z_i}}{(1+e^{\boldsymbol\theta_{\boldsymbol Z}'\boldsymbol Z_i})}\cdot \boldsymbol Z_i\right\}\cdot$$
From \citet{tsiatis2006semiparametric}, under condition C1 of the main text and assumptions \textit{A}1 and \textit{A}2 in Supplementary Section \ref{sec:assumptions} it is enough to show that $\mathbb{E}(\phi_N(\boldsymbol\theta^*_{\boldsymbol Z}))=0$, in order to prove that $\widehat{\boldsymbol\theta_{\boldsymbol Z}}\xrightarrow{p}\boldsymbol\theta^*_{\boldsymbol Z}$, where $\widehat{\boldsymbol\theta_{\boldsymbol Z}}$ is obtained from solving $\phi_N(\boldsymbol\theta_{\boldsymbol Z})=\mathbf{0}$.
\begin{align*}
     \mathbb{E}[\phi_N(\boldsymbol\theta^*_{\boldsymbol Z})] & = \mathbb{E}\left[\frac{1}{N}\sum_{i=1}^{N}\frac{S^{\text{mult}}_i}{\pi(\boldsymbol X^{\text{mult}}_i)}\left\{D_i \boldsymbol Z_i-\frac{e^{\boldsymbol\theta_{\boldsymbol Z}^{*'}\boldsymbol Z_i}}{(1+e^{\boldsymbol\theta_{\boldsymbol Z}^{*'}\boldsymbol Z_i})}\cdot \boldsymbol Z_i\right\}\right]\\
     & =\frac{1}{N}
   \sum_{i=1}^{N} \mathbb{E}\left[\frac{S^{\text{mult}}_i}{\pi(\boldsymbol X^{\text{mult}}_i)}\left\{D_i \boldsymbol Z_i-\frac{e^{\boldsymbol\theta_{\boldsymbol Z}^{*'}\boldsymbol Z_i}}{(1+e^{\boldsymbol\theta_{\boldsymbol Z}^{*'}\boldsymbol Z_i})}\cdot \boldsymbol Z_i\right\}\right]\\
   & = \frac{1}{N}
   \sum_{i=1}^{N} \mathbb{E}_{\boldsymbol X^{\text{mult}}_i,\boldsymbol Z_{1\cap,i}}\left[\mathbb{E}\left\{\frac{S^{\text{mult}}_i}{\pi(\boldsymbol X^{\text{mult}}_i)}\left.\left(D_i \boldsymbol Z_i-\frac{e^{\boldsymbol\theta_{\boldsymbol Z}^{*'}\boldsymbol Z_i}}{(1+e^{\boldsymbol\theta_{\boldsymbol Z}^{*'}\boldsymbol Z_i})}\cdot \boldsymbol Z_i\right)\right| \boldsymbol X^{\text{mult}}_i,\boldsymbol Z_{1\cap,i}\right\}\right]\\
   & = \frac{1}{N}
   \sum_{i=1}^{N} \mathbb{E}_{\boldsymbol X^{\text{mult}}_i,\boldsymbol Z_{1\cap,i}}\left[\left\{D_i \boldsymbol Z_i-\frac{e^{\boldsymbol\theta_{\boldsymbol Z} ^{*'}\boldsymbol Z_{i}}}{(1+e^{\boldsymbol\theta_{\boldsymbol Z} ^{*'}\boldsymbol Z_i})}\cdot \boldsymbol Z_i\right\}\cdot \frac{1}{\pi(\boldsymbol X^{\text{mult}}_i)}\mathbb{E}\left(S^{\text{mult}}_i|\boldsymbol X^{\text{mult}}_i,\boldsymbol Z_{1\cap,i}\right)\right]\cdot
\end{align*}
Using definition of $\boldsymbol  Z_{1\cap}$ we obtain that $\mathbb{E}(S|\boldsymbol X^{\text{mult}},\boldsymbol  Z_{1\cap})=
\mathbb{E}(S|\boldsymbol X^{\text{mult}})=\pi(\boldsymbol X^{\text{mult}})$. Hence we obtain
\begin{align*}
     &\mathbb{E}[\phi_N(\boldsymbol\theta^*_{\boldsymbol Z})]  =\frac{1}{N}
   \sum_{i=1}^{N} \mathbb{E}_{\boldsymbol X^{\text{mult}}_i,\boldsymbol  Z_{1\cap,i}}\left\{D_i \boldsymbol Z_i-\frac{e^{\boldsymbol\theta_{\boldsymbol Z}^{*'}\boldsymbol Z_i}}{(1+e^{\boldsymbol\theta_{\boldsymbol Z}^{*'}\boldsymbol Z_i})}\cdot \boldsymbol Z_i\right\}\\
   &=\frac{1}{N}
   \sum_{i=1}^{N} \mathbb{E}_{D_i,\boldsymbol Z_i}\left\{D_i \boldsymbol Z_i-\frac{e^{\boldsymbol\theta_{\boldsymbol Z}^{*'}\boldsymbol Z_i}}{(1+e^{\boldsymbol\theta_{\boldsymbol Z}^{*'}\boldsymbol Z_i})}\cdot \boldsymbol Z_i\right\}
\end{align*}
\begin{align*}
   & = \frac{1}{N}
   \sum_{i=1}^{N} \mathbb{E}_{\boldsymbol z_i}\left[\mathbb{E}\left \{\left. \left(D_i \boldsymbol Z_i-\frac{e^{\boldsymbol\theta_{\boldsymbol Z}^{*'}\boldsymbol Z_i}}{(1+e^{\boldsymbol\theta_{\boldsymbol Z}^{*'}\boldsymbol Z_i})}\cdot \boldsymbol Z_i\right) \right|\boldsymbol Z_i\right\}\right] \\
   &= \frac{1}{N}
   \sum_{i=1}^{N} \mathbb{E}\left\{\frac{e^{\boldsymbol\theta_{\boldsymbol Z}^{*'}\boldsymbol Z_i}}{(1+e^{\boldsymbol\theta_{\boldsymbol Z}^{*'}\boldsymbol Z_i})}\cdot \boldsymbol Z_i-\frac{e^{\boldsymbol\theta_{\boldsymbol Z}^{*'}\boldsymbol Z_i}}{(1+e^{\boldsymbol\theta_{\boldsymbol Z}^{*'}\boldsymbol Z_i})} \cdot \boldsymbol Z_i\right\}=\mathbf{0}\cdot
\end{align*}
The last step is obtained from the relation between $D$ and $\boldsymbol Z$ given by equation (2.1) of the main text.
\end{proof}
\subsection{Theorem S2}\label{sec:pr2}
\begin{theorem}
    Under assumptions of Theorem S1 when the selection weights are known and we do not take into consideration estimation of selection model parameters, the asymptotic distribution of $\widehat{\boldsymbol \theta}_{\boldsymbol Z}$ is given by 
$$\sqrt{N}(\widehat{\boldsymbol \theta_{\boldsymbol Z}}-\boldsymbol\theta^*_{\boldsymbol Z})\xrightarrow{d}\mathcal{N}(0,V)\hspace{0.5cm} \text{as} \hspace{0.5cm} N\rightarrow \infty\cdot$$
where
\begin{align*}
    & V=(G_{\boldsymbol \theta_{\boldsymbol Z}^*})^{-1}\cdot \mathbb{E}[\boldsymbol g\cdot \boldsymbol g']\cdot(G_{\boldsymbol \theta_{\boldsymbol Z}^*}^{-1})^{'}\hspace{0.8cm} G_{\boldsymbol \theta_{\boldsymbol Z}^*} = \mathbb{E}\left\{-\frac{S^{\text{mult}}}{\pi(\boldsymbol X^{\text{mult}})}\cdot \frac{e^{\boldsymbol \theta_{\boldsymbol Z}^{*'}\boldsymbol Z}}{(1+e^{\boldsymbol \theta_{\boldsymbol Z}^{*'}\boldsymbol Z})^2}\cdot \boldsymbol Z\boldsymbol Z'\right\}\\
    & \boldsymbol g(\boldsymbol \theta_{\boldsymbol Z}^*) = \frac{S^{\text{mult}}}{\pi(\boldsymbol X^{\text{mult}})}\left\{D \boldsymbol Z-\frac{e^{\boldsymbol \theta_{\boldsymbol Z}^{*'}\boldsymbol Z}}{(1+e^{\boldsymbol \theta_{\boldsymbol Z}^{*'}\boldsymbol Z})}\cdot \boldsymbol Z\right\}\cdot
\end{align*}\label{thm2}
\end{theorem}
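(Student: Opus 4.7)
The plan is to proceed via the standard Z-estimator linearization argument, using the consistency already established in Theorem S1. Define
\[
\phi_N(\boldsymbol\theta_{\boldsymbol Z}) = \frac{1}{N}\sum_{i=1}^N \frac{S_i}{\pi(\boldsymbol X_i)}\left\{D_i\boldsymbol Z_i - \frac{e^{\boldsymbol\theta_{\boldsymbol Z}'\boldsymbol Z_i}}{1+e^{\boldsymbol\theta_{\boldsymbol Z}'\boldsymbol Z_i}}\boldsymbol Z_i\right\},
\]
so that $\widehat{\boldsymbol\theta}_{\boldsymbol Z}$ solves $\phi_N(\widehat{\boldsymbol\theta}_{\boldsymbol Z})=\boldsymbol 0$. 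A first-order Taylor expansion of $\phi_N$ about $\boldsymbol\theta^*_{\boldsymbol Z}$ gives
\[
\boldsymbol 0 = \phi_N(\boldsymbol\theta^*_{\boldsymbol Z}) + \nabla_{\boldsymbol\theta_{\boldsymbol Z}} \phi_N(\tilde{\boldsymbol\theta}_{\boldsymbol Z})\,(\widehat{\boldsymbol\theta}_{\boldsymbol Z}-\boldsymbol\theta^*_{\boldsymbol Z})
\]
for some $\tilde{\boldsymbol\theta}_{\boldsymbol Z}$ on the line segment between $\widehat{\boldsymbol\theta}_{\boldsymbol Z}$ and $\boldsymbol\theta^*_{\boldsymbol Z}$, so that
\[
\sqrt{N}(\widehat{\boldsymbol\theta}_{\boldsymbol Z}-\boldsymbol\theta^*_{\boldsymbol Z}) = -\bigl[\nabla_{\boldsymbol\theta_{\boldsymbol Z}} \phi_N(\tilde{\boldsymbol\theta}_{\boldsymbol Z})\bigr]^{-1}\sqrt{N}\,\phi_N(\boldsymbol\theta^*_{\boldsymbol Z}).
\]

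Next I would treat the two factors separately. For the score $\sqrt{N}\,\phi_N(\boldsymbol\theta^*_{\boldsymbol Z})$, Theorem S1's proof already shows that each summand has mean zero (it is exactly $\boldsymbol g(\boldsymbol\theta^*_{\boldsymbol Z})$ as defined in the statement). Under assumptions \textit{A}1--\textit{A}3 and the finite second moment implicit in \textit{A}4, the summands are i.i.d.\ with mean zero and finite covariance $\mathbb{E}[\boldsymbol g\,\boldsymbol g']$, so the multivariate Lindeberg--L\'evy CLT yields
\[
\sqrt{N}\,\phi_N(\boldsymbol\theta^*_{\boldsymbol Z}) \xrightarrow{d} \mathcal{N}\bigl(\boldsymbol 0,\;\mathbb{E}[\boldsymbol g\,\boldsymbol g']\bigr).
\]
For the Jacobian, direct differentiation gives
\[
\nabla_{\boldsymbol\theta_{\boldsymbol Z}}\phi_N(\boldsymbol\theta_{\boldsymbol Z}) = -\frac{1}{N}\sum_{i=1}^N \frac{S_i}{\pi(\boldsymbol X_i)}\cdot\frac{e^{\boldsymbol\theta_{\boldsymbol Z}'\boldsymbol Z_i}}{(1+e^{\boldsymbol\theta_{\boldsymbol Z}'\boldsymbol Z_i})^2}\,\boldsymbol Z_i\boldsymbol Z_i'.
\]
I would invoke a uniform law of large numbers (justified by compactness of $\Theta$ from \textit{A}1, the envelope condition in \textit{A}4, and continuity in $\boldsymbol\theta_{\boldsymbol Z}$) to conclude that $\sup_{\boldsymbol\theta_{\boldsymbol Z}\in\Theta}\bigl\|\nabla_{\boldsymbol\theta_{\boldsymbol Z}}\phi_N(\boldsymbol\theta_{\boldsymbol Z})-\mathbb{E}[\nabla_{\boldsymbol\theta_{\boldsymbol Z}}\phi_N(\boldsymbol\theta_{\boldsymbol Z})]\bigr\|\xrightarrow{p} 0$. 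Combined with the consistency $\tilde{\boldsymbol\theta}_{\boldsymbol Z}\xrightarrow{p}\boldsymbol\theta^*_{\boldsymbol Z}$ from Theorem S1 and the continuous mapping theorem, this gives $\nabla_{\boldsymbol\theta_{\boldsymbol Z}}\phi_N(\tilde{\boldsymbol\theta}_{\boldsymbol Z})\xrightarrow{p} G_{\boldsymbol\theta^*_{\boldsymbol Z}}$.

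Finally, assuming $G_{\boldsymbol\theta^*_{\boldsymbol Z}}$ is non-singular (standard identifiability/information condition), the continuous mapping theorem gives $[\nabla_{\boldsymbol\theta_{\boldsymbol Z}}\phi_N(\tilde{\boldsymbol\theta}_{\boldsymbol Z})]^{-1}\xrightarrow{p} G_{\boldsymbol\theta^*_{\boldsymbol Z}}^{-1}$, and Slutsky's theorem combines the two limits to yield
\[
\sqrt{N}(\widehat{\boldsymbol\theta}_{\boldsymbol Z}-\boldsymbol\theta^*_{\boldsymbol Z}) \xrightarrow{d} \mathcal{N}\bigl(\boldsymbol 0,\; G_{\boldsymbol\theta^*_{\boldsymbol Z}}^{-1}\,\mathbb{E}[\boldsymbol g\,\boldsymbol g']\,(G_{\boldsymbol\theta^*_{\boldsymbol Z}}^{-1})'\bigr),
\]
which is the stated sandwich form $V$. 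The main obstacle is not any single algebraic step but the uniform-convergence argument for the Jacobian: one must verify that the integrable envelope condition on the derivative holds over $\Theta$ (leveraging compactness in \textit{A}1 and the moment bound in \textit{A}4), since the weights $S_i/\pi(\boldsymbol X_i)$ can be heavy-tailed when $\pi(\boldsymbol X_i)$ is close to zero. Once this regularity is in hand, the remainder of the proof is a routine application of the CLT plus Slutsky.
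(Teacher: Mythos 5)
Your proposal is correct and follows essentially the same route as the paper: the paper simply invokes \citet{tsiatis2006semiparametric}'s standard Z-estimation theory to obtain the sandwich form $V=(G_{\boldsymbol \theta_{\boldsymbol Z}^*})^{-1}\,\mathbb{E}[\boldsymbol g \boldsymbol g']\,(G_{\boldsymbol \theta_{\boldsymbol Z}^*}^{-1})'$ and then computes $G_{\boldsymbol \theta_{\boldsymbol Z}^*}$ by differentiating the weighted score, which is exactly the linearization (Taylor expansion, CLT for $\sqrt{N}\phi_N(\boldsymbol\theta^*_{\boldsymbol Z})$, ULLN plus consistency for the Jacobian, Slutsky) that you write out explicitly. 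The only additional care you flag — the integrable envelope for the Jacobian and non-singularity of $G_{\boldsymbol \theta_{\boldsymbol Z}^*}$ — corresponds to assumptions \textit{A}1, \textit{A}4 and the implicit invertibility the paper also uses, so no gap.
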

\begin{proof}
By \citet{tsiatis2006semiparametric}'s arguments for a Z-estimation problem under assumptions of Theorem S1 we obtain  
$$\sqrt{N}(\widehat{\boldsymbol \theta_{\boldsymbol Z}}-\boldsymbol\theta^*_{\boldsymbol Z})\xrightarrow{d}\mathcal{N}(0,V)\hspace{0.5cm} \text{as} \hspace{0.5cm} N\rightarrow \infty\cdot$$
where
\begin{align*}
    & V=(G_{\boldsymbol \theta_{\boldsymbol Z}^*})^{-1}\cdot \mathbb{E}[\boldsymbol g\cdot \boldsymbol g']\cdot(G_{\boldsymbol \theta_{\boldsymbol Z}^*}^{-1})^{'}\hspace{0.8cm} G_{\boldsymbol \theta_{\boldsymbol Z}^*} = \left.\mathbb{E}\left\{\frac{\partial g(\boldsymbol \theta_{\boldsymbol Z}^*)}{\partial \boldsymbol \theta_{\boldsymbol Z}}\right\}\right|_{\boldsymbol\theta=\boldsymbol \theta_{\boldsymbol Z}^*}\\
    & \boldsymbol g(\boldsymbol \theta_{\boldsymbol Z}^*) = \frac{S^{\text{mult}}}{\pi(\boldsymbol X^{\text{mult}})}\left\{D \boldsymbol Z-\frac{e^{\boldsymbol \theta_{\boldsymbol Z}^{*'}\boldsymbol Z}}{(1+e^{\boldsymbol \theta_{\boldsymbol Z}^{*'}\boldsymbol Z})}\cdot \boldsymbol Z\right\}\cdot
\end{align*}
    This proof just requires the calculation of $G_{\boldsymbol \theta_{\boldsymbol Z}^*}$.
\subsubsection*{Calculation for $G_{\boldsymbol \theta_{\boldsymbol Z}^*}$}
\begin{align*}
    & \left.\frac{\partial \boldsymbol g(\boldsymbol \theta_{\boldsymbol Z})}{\partial \boldsymbol \theta_{\boldsymbol Z}}\right|_{\boldsymbol \theta_{\boldsymbol Z}= \boldsymbol \theta_{\boldsymbol Z}^*}=\left.\frac{\partial}{\partial \boldsymbol \theta_{\boldsymbol Z}}\left[\frac{S^{\text{mult}}}{\pi(\boldsymbol X^{\text{mult}})}\left\{D\boldsymbol Z-\frac{e^{\boldsymbol \theta_{\boldsymbol Z}^{*'}\boldsymbol Z}}{(1+e^{\boldsymbol \theta_{\boldsymbol Z}^{*'}\boldsymbol Z})}\cdot \boldsymbol Z\right\}\right]\right|_{\boldsymbol \theta_{\boldsymbol Z}= \boldsymbol \theta_{\boldsymbol Z}^*}\\
    & = -\frac{S^{\text{mult}}}{\pi(\boldsymbol X^{\text{mult}})} \cdot \frac{e^{\boldsymbol \theta_{\boldsymbol Z}^{*'}\boldsymbol Z}}{(1+e^{\boldsymbol \theta_{\boldsymbol Z}^{*'}\boldsymbol Z})^2}\cdot \boldsymbol Z\boldsymbol Z'
\end{align*} 
Therefore we obtain 
\begin{align*}
     G_{\boldsymbol \theta_{\boldsymbol Z}^*} = \mathbb{E}\left\{-\frac{S^{\text{mult}}}{\pi(\boldsymbol X^{\text{mult}})} \cdot \frac{e^{\boldsymbol \theta_{\boldsymbol Z}^{*'} \boldsymbol Z}}{(1+e^{\boldsymbol \theta_{\boldsymbol Z}^{*'}\boldsymbol Z})^2}\cdot \boldsymbol Z \boldsymbol Z'\right\}\cdot
\end{align*}
\end{proof}
\subsection{Theorem S3}\label{sec:pr3}
\begin{theorem}
     Under Condition C1 in the main text and assumptions \textit{A}1, \textit{A}2 and \textit{A}3 in Supplementary Section \ref{sec:assumptions}, $\frac{1}{N} \cdot \widehat{G_{\boldsymbol \theta_{\boldsymbol Z}}}^{-1}\cdot \widehat{E}\cdot (\widehat{G_{\boldsymbol \theta_{\boldsymbol Z}}}^{-1})^{'}$ is a consistent estimator of the asymptotic variance of $\widehat{\boldsymbol \theta}_{\boldsymbol Z}$ where
\begin{align*}
    & \widehat{G_{\boldsymbol \theta_{\boldsymbol Z}}}=\frac{1}{N}\sum_{i=1}^N \left\{\frac{S^{\text{mult}}_i}{\pi(\boldsymbol X^{\text{mult}}_i)}\cdot \frac{e^{\widehat{\boldsymbol \theta_{\boldsymbol Z}}'\boldsymbol Z_i} }{(1+e^{\widehat{\boldsymbol \theta_{\boldsymbol Z}}'\boldsymbol Z_i})^2}\cdot \boldsymbol Z_i \boldsymbol Z_i'\right\}\cdot \\
    & \widehat{E}=\frac{1}{N}\sum_{i =1}^N S^{\text{mult}}_i \cdot \left\{\frac{1}{\pi(\boldsymbol X^{\text{mult}}_i)}\right\}^2\left\{D_i-\frac{e^{\widehat{\boldsymbol \theta_{\boldsymbol Z}}'\boldsymbol Z_i}}{(1+e^{\widehat{\boldsymbol\theta}' \boldsymbol Z_i})}\right\}^2\cdot \boldsymbol Z_i \boldsymbol Z_i' \cdot
\end{align*}\label{thm3}
\end{theorem}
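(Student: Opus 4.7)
The plan is to show that the sandwich estimator converges in probability to $V$ by establishing its two pieces separately and then invoking the continuous mapping theorem. Recall from Theorem S2 that $V = (G_{\boldsymbol \theta_{\boldsymbol Z}^*})^{-1} \mathbb{E}[\boldsymbol g \boldsymbol g'] (G_{\boldsymbol \theta_{\boldsymbol Z}^*}^{-1})'$, so it suffices to show (i) $\widehat{G_{\boldsymbol \theta_{\boldsymbol Z}}} \xrightarrow{p} -G_{\boldsymbol \theta_{\boldsymbol Z}^*}$ and (ii) $\widehat{E} \xrightarrow{p} \mathbb{E}[\boldsymbol g(\boldsymbol \theta_{\boldsymbol Z}^*) \boldsymbol g(\boldsymbol \theta_{\boldsymbol Z}^*)']$. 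The overall sign difference in (i) washes out in the sandwich since $(-A)^{-1} B ((-A)^{-1})' = A^{-1} B (A^{-1})'$, so the target variance is recovered.

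First, I would establish consistency of $\widehat{\boldsymbol \theta}_{\boldsymbol Z}$ for $\boldsymbol \theta_{\boldsymbol Z}^*$ by appealing directly to Theorem S1. Next, for step (i), define the map $\boldsymbol \theta_{\boldsymbol Z} \mapsto m_N(\boldsymbol \theta_{\boldsymbol Z}) := \frac{1}{N} \sum_{i=1}^N \frac{S_i}{\pi(\boldsymbol X_i)} \cdot \frac{e^{\boldsymbol \theta_{\boldsymbol Z}' \boldsymbol Z_i}}{(1 + e^{\boldsymbol \theta_{\boldsymbol Z}' \boldsymbol Z_i})^2} \boldsymbol Z_i \boldsymbol Z_i'$. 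Under assumptions \textit{A}1--\textit{A}3 and the dominating integrability condition $\mathbb{E}[\sup_{\boldsymbol \theta \in \Theta} G_{\boldsymbol \theta_{\boldsymbol Z}}] < \infty$ in \textit{A}4, together with continuity of the integrand in $\boldsymbol \theta_{\boldsymbol Z}$ and compactness of $\Theta$, a uniform law of large numbers (e.g., Jennrich's ULLN or Newey--McFadden Lemma 2.4) yields $\sup_{\boldsymbol \theta_{\boldsymbol Z} \in \Theta} \| m_N(\boldsymbol \theta_{\boldsymbol Z}) - \mathbb{E}[m_N(\boldsymbol \theta_{\boldsymbol Z})]\| \xrightarrow{p} 0$. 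Combined with $\widehat{\boldsymbol \theta}_{\boldsymbol Z} \xrightarrow{p} \boldsymbol \theta_{\boldsymbol Z}^*$ and continuity, this gives $m_N(\widehat{\boldsymbol \theta}_{\boldsymbol Z}) \xrightarrow{p} \mathbb{E}[m(\boldsymbol \theta_{\boldsymbol Z}^*)] = -G_{\boldsymbol \theta_{\boldsymbol Z}^*}$, i.e. $\widehat{G_{\boldsymbol \theta_{\boldsymbol Z}}} \xrightarrow{p} -G_{\boldsymbol \theta_{\boldsymbol Z}^*}$.

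For step (ii), apply the same ULLN argument to $\boldsymbol \theta_{\boldsymbol Z} \mapsto \frac{1}{N}\sum_{i=1}^N \frac{S_i}{\pi(\boldsymbol X_i)^2}\bigl\{D_i - \tfrac{e^{\boldsymbol \theta_{\boldsymbol Z}' \boldsymbol Z_i}}{1 + e^{\boldsymbol \theta_{\boldsymbol Z}' \boldsymbol Z_i}}\bigr\}^2 \boldsymbol Z_i \boldsymbol Z_i'$, using the second dominating condition in \textit{A}4, namely $\mathbb{E}[\sup_{\boldsymbol \theta_{\boldsymbol Z} \in \Theta} \boldsymbol g(\boldsymbol \theta_{\boldsymbol Z}) \boldsymbol g(\boldsymbol \theta_{\boldsymbol Z})'] < \infty$. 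Since $\boldsymbol g(\boldsymbol \theta_{\boldsymbol Z}) \boldsymbol g(\boldsymbol \theta_{\boldsymbol Z})' = \frac{S}{\pi(\boldsymbol X)^2}(D - \mu(\boldsymbol \theta_{\boldsymbol Z}' \boldsymbol Z))^2 \boldsymbol Z \boldsymbol Z'$ (using $S^2 = S$), this is precisely the object inside the expectation. Uniform convergence plus consistency of $\widehat{\boldsymbol \theta}_{\boldsymbol Z}$ yields $\widehat{E} \xrightarrow{p} \mathbb{E}[\boldsymbol g(\boldsymbol \theta_{\boldsymbol Z}^*) \boldsymbol g(\boldsymbol \theta_{\boldsymbol Z}^*)']$.

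Finally, combine (i) and (ii) by the continuous mapping theorem: matrix inversion is continuous at a nonsingular matrix (here $G_{\boldsymbol \theta_{\boldsymbol Z}^*}$ is assumed invertible for the asymptotic normality in Theorem S2 to be well defined), so $\widehat{G_{\boldsymbol \theta_{\boldsymbol Z}}}^{-1} \widehat{E} (\widehat{G_{\boldsymbol \theta_{\boldsymbol Z}}}^{-1})' \xrightarrow{p} V$, and dividing by $N$ gives a consistent estimator of the asymptotic variance $V/N$ of $\widehat{\boldsymbol \theta}_{\boldsymbol Z}$. The main obstacle is the ULLN step: it requires verifying the envelope integrability conditions baked into \textit{A}4 and careful handling of continuity of the summands in $\boldsymbol \theta_{\boldsymbol Z}$; the rest is routine plug-in and continuous mapping.
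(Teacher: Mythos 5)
Your proposal is correct and follows essentially the same route as the paper's proof: a uniform law of large numbers over the compact set $\Theta$ using the envelope conditions in \textit{A}4, followed by plugging in $\widehat{\boldsymbol \theta}_{\boldsymbol Z}$ (consistent by Theorem S1) and invoking the continuous mapping theorem for both $\widehat{G_{\boldsymbol \theta_{\boldsymbol Z}}}$ and $\widehat{E}$, then assembling the sandwich and appealing to Theorem S2 to identify $V/N$ as the asymptotic variance. If anything, your explicit remarks that the sign of $\widehat{G_{\boldsymbol \theta_{\boldsymbol Z}}}$ versus $G_{\boldsymbol \theta_{\boldsymbol Z}^*}$ cancels in the sandwich, that $S^2 = S$ identifies the $\widehat{E}$ summand with $\boldsymbol g \boldsymbol g'$, and that matrix inversion requires nonsingularity of $G_{\boldsymbol \theta_{\boldsymbol Z}^*}$ tidy up details the paper leaves implicit.
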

\begin{proof}
First we prove that as $N \rightarrow \infty$
$$\frac{1}{N}\sum_{i=1}^N \left\{\frac{S^{\text{mult}}_i}{\pi(\boldsymbol X^{\text{mult}}_i)}\cdot \frac{e^{\widehat{\boldsymbol \theta_{\boldsymbol Z}}'\boldsymbol Z_i} }{(1+e^{\widehat{\boldsymbol \theta_{\boldsymbol Z}}'\boldsymbol Z_i})^2}\cdot \boldsymbol Z_i \boldsymbol Z_i'\right\}
\xrightarrow{p}\mathbb{E}\left\{\frac{S^{\text{mult}}}{\pi(\boldsymbol X^{\text{mult}})}\cdot \frac{e^{\boldsymbol \theta_{\boldsymbol Z}^{*'}\boldsymbol Z}}{(1+e^{\boldsymbol \theta_{\boldsymbol Z}^{*'}\boldsymbol Z})^2}\cdot \boldsymbol Z\boldsymbol Z'\right\} \cdot$$
Using Condition C1 in the main text and assumptions \textit{A}1, \textit{A}2 and \textit{A}3 in Supplementary Section \ref{sec:assumptions} and Uniform Law of Large Numbers (ULLN), we obtain 
\begin{align}
    \sup_{\boldsymbol \theta_{\boldsymbol Z} \in \Theta_{\boldsymbol Z}}\left|\left|
    \frac{1}{N}\sum_{i=1}^N \left\{\frac{S^{\text{mult}}_i}{\pi(\boldsymbol X^{\text{mult}}_i)}\cdot \frac{e^{\boldsymbol \theta_{\boldsymbol Z}'\boldsymbol Z_i} }{(1+e^{\boldsymbol \theta_{\boldsymbol Z}'\boldsymbol Z_i})^2}\cdot \boldsymbol Z_i \boldsymbol Z_i'\right\}-\mathbb{E}\left\{\frac{S^{\text{mult}}}{\pi(\boldsymbol X^{\text{mult}})}\cdot \frac{e^{\boldsymbol \theta_{\boldsymbol Z}^{'}\boldsymbol Z}}{(1+e^{\boldsymbol \theta_{\boldsymbol Z}^{'}\boldsymbol Z})^2}\cdot \boldsymbol Z\boldsymbol Z'\right\}
    \right|\right|\xrightarrow{p}\mathbf{0}\cdot \label{eq:st11}
\end{align}
Since this above expression holds for any $\boldsymbol \theta_{\boldsymbol Z} \in \Theta_{\boldsymbol Z}$, therefore it is true for $\widehat{\boldsymbol \theta}_{\boldsymbol Z}$. This implies
\begin{align}
    \left|\left|
    \frac{1}{N}\sum_{i=1}^N \left. \left\{\frac{S^{\text{mult}}_i}{\pi(\boldsymbol X^{\text{mult}}_i)}\cdot \frac{e^{\widehat{\boldsymbol \theta_{\boldsymbol Z}}'\boldsymbol Z_i} }{(1+e^{\widehat{\boldsymbol \theta_{\boldsymbol Z}}'\boldsymbol Z_i})^2}\cdot \boldsymbol Z_i \boldsymbol Z_i'\right\}-\mathbb{E}\left\{\frac{S^{\text{mult}}}{\pi(\boldsymbol X^{\text{mult}})}\cdot \frac{e^{\boldsymbol \theta_{\boldsymbol Z}^{'}\boldsymbol Z}}{(1+e^{\boldsymbol \theta_{\boldsymbol Z}^{'}\boldsymbol Z})^2}\cdot \boldsymbol Z\boldsymbol Z'\right\}\right|_{\boldsymbol \theta_{\boldsymbol Z}=\widehat{\boldsymbol \theta_{\boldsymbol Z}}}
    \right|\right|\xrightarrow{p}\mathbf{0}\cdot \label{eq:st12}
\end{align}
Using Triangle Inequality we obtain
\begin{align}
    &  \left|\left|\frac{1}{N}\sum_{i=1}^N \left\{\frac{S^{\text{mult}}_i}{\pi(\boldsymbol X^{\text{mult}}_i)}\cdot \frac{e^{\widehat{\boldsymbol \theta_{\boldsymbol Z}}'\boldsymbol Z_i} }{(1+e^{\boldsymbol \theta_{\boldsymbol Z}'\boldsymbol Z_i})^2}\cdot \boldsymbol Z_i \boldsymbol Z_i'\right\}-\mathbb{E}\left\{\frac{S^{\text{mult}}}{\pi(\boldsymbol X^{\text{mult}})}\cdot \frac{e^{\boldsymbol \theta_{\boldsymbol Z}^{*'}\boldsymbol Z}}{(1+e^{\boldsymbol \theta_{\boldsymbol Z}^{*'}\boldsymbol Z})^2}\cdot \boldsymbol Z\boldsymbol Z'\right\}
    \right|\right|\leq \label{eq:st132}\\
    &  \left|\left|
    \frac{1}{N}\sum_{i=1}^N \left. \left\{\frac{S^{\text{mult}}_i}{\pi(\boldsymbol X^{\text{mult}}_i)}\cdot \frac{e^{\widehat{\boldsymbol \theta_{\boldsymbol Z}}'\boldsymbol Z_i} }{(1+e^{\widehat{\boldsymbol \theta_{\boldsymbol Z}}'\boldsymbol Z_i})^2}\cdot \boldsymbol Z_i \boldsymbol Z_i'\right\}-\mathbb{E}\left\{\frac{S^{\text{mult}}}{\pi(\boldsymbol X^{\text{mult}})}\cdot \frac{e^{\boldsymbol \theta_{\boldsymbol Z}^{'}\boldsymbol Z}}{(1+e^{\boldsymbol \theta_{\boldsymbol Z}^{'}\boldsymbol Z})^2}\cdot \boldsymbol Z\boldsymbol Z'\right\}\right|_{\boldsymbol \theta_{\boldsymbol Z}=\widehat{\boldsymbol \theta_{\boldsymbol Z}}}
    \right|\right|+\label{eq:st131}\\
    & \left. \left|\left|
   \mathbb{E}\left\{\frac{S^{\text{mult}}}{\pi(\boldsymbol X^{\text{mult}})}\cdot \frac{e^{\boldsymbol \theta_{\boldsymbol Z}^{'}\boldsymbol Z}}{(1+e^{\boldsymbol \theta_{\boldsymbol Z}^{'}\boldsymbol Z})^2}\cdot \boldsymbol Z\boldsymbol Z'\right\}\right|_{\boldsymbol \theta_{\boldsymbol Z}=\widehat{\boldsymbol \theta_{\boldsymbol Z}}}- \mathbb{E}\left\{\frac{S^{\text{mult}}}{\pi(\boldsymbol X^{\text{mult}})}\cdot \frac{e^{\boldsymbol \theta_{\boldsymbol Z}^{*'}\boldsymbol Z}}{(1+e^{\boldsymbol \theta_{\boldsymbol Z}^{*'}\boldsymbol Z})^2}\cdot \boldsymbol Z\boldsymbol Z'\right\}
    \right|\right|\cdot \label{eq:st13}
\end{align}
Since we proved that $\widehat{\boldsymbol \theta_{\boldsymbol Z}}\xrightarrow{p} \boldsymbol \theta_{\boldsymbol Z}^{*}$ in Theorem S1, therefore by Continuous Mapping Theorem
\begin{align}
    \left. \left|\left|
   \mathbb{E}\left\{\frac{S^{\text{mult}}}{\pi(\boldsymbol X^{\text{mult}})}\cdot \frac{e^{\boldsymbol \theta_{\boldsymbol Z}^{'}\boldsymbol Z}}{(1+e^{\boldsymbol \theta_{\boldsymbol Z}^{'}\boldsymbol Z})^2}\cdot \boldsymbol Z\boldsymbol Z'\right\}\right|_{\boldsymbol \theta_{\boldsymbol Z}=\widehat{\boldsymbol \theta_{\boldsymbol Z}}}- \mathbb{E}\left\{\frac{S^{\text{mult}}}{\pi(\boldsymbol X^{\text{mult}})}\cdot \frac{e^{\boldsymbol \theta_{\boldsymbol Z}^{*'}\boldsymbol Z}}{(1+e^{\boldsymbol \theta_{\boldsymbol Z}^{*'}\boldsymbol Z})^2}\cdot \boldsymbol Z\boldsymbol Z'\right\}
    \right|\right|\xrightarrow{p} \mathbf{0}\cdot \label{eq:st14}
\end{align}
Using equations \eqref{eq:st12}, \eqref{eq:st132}, \eqref{eq:st131}, \eqref{eq:st13} and \eqref{eq:st14} we obtain 
$$\frac{1}{N}\sum_{i=1}^N \left\{\frac{S^{\text{mult}}_i}{\pi(\boldsymbol X^{\text{mult}}_i)}\cdot \frac{e^{\widehat{\boldsymbol \theta_{\boldsymbol Z}}'\boldsymbol Z_i} }{(1+e^{\widehat{\boldsymbol \theta_{\boldsymbol Z}}'\boldsymbol Z_i})^2}\cdot \boldsymbol Z_i \boldsymbol Z_i'\right\}
\xrightarrow{p}\mathbb{E}\left\{\frac{S^{\text{mult}}}{\pi(\boldsymbol X^{\text{mult}})}\cdot \frac{e^{\boldsymbol \theta_{\boldsymbol Z}^{*'}\boldsymbol Z}}{(1+e^{\boldsymbol \theta_{\boldsymbol Z}^{*'}\boldsymbol Z})^2}\cdot \boldsymbol Z\boldsymbol Z'\right\}\cdot$$
Therefore we obtain  
$$-\widehat{G_{\boldsymbol \theta_{\boldsymbol Z}}}\xrightarrow{p} -G_{\boldsymbol \theta_{\boldsymbol Z}^*}\hspace{0.5cm}
    \text{which implies} \hspace{0.5cm} \widehat{G_{\boldsymbol \theta_{\boldsymbol Z}}}\xrightarrow{p} G_{\boldsymbol \theta_{\boldsymbol Z}^*}\cdot$$
Using the exact same set of arguments we obtain 
\begin{align*}
    & \widehat{E} =\frac{1}{N}\sum_{i =1}^N S^{\text{mult}}_i \cdot \left\{\frac{1}{\pi(\boldsymbol X^{\text{mult}}_i)}\right\}^2\left\{D_i-\frac{e^{\widehat{\boldsymbol \theta_{\boldsymbol Z}}'\cdot \boldsymbol Z_i}}{(1+e^{\hat{\theta}' \cdot \boldsymbol Z_i})}\right\}^2\cdot \boldsymbol Z_i \boldsymbol Z_i'\\
     & \xrightarrow{p}\mathbb{E}[\boldsymbol g(\boldsymbol \theta_{\boldsymbol Z}^*)\cdot \boldsymbol g(\boldsymbol \theta_{\boldsymbol Z}^*)']=  \mathbb{E}\left[S \cdot \left\{\frac{1}{\pi(\boldsymbol X^{\text{mult}})}\right\}^2\cdot\left\{D-\frac{e^{\boldsymbol \theta_{\boldsymbol Z}^{*'}\boldsymbol Z}}{(1+e^{\boldsymbol \theta_{\boldsymbol Z}^{*'}\boldsymbol Z})}\right\}^2\cdot \boldsymbol Z\boldsymbol Z'\right]\cdot
\end{align*}
Combining together the consistency of $\widehat{G_{\boldsymbol \theta_{\boldsymbol Z}}}$ and $\widehat{E}$ we obtain
$$\widehat{V}=\widehat{G_{\boldsymbol \theta_{\boldsymbol Z}}}^{-1}\cdot \widehat{E}\cdot(\widehat{G_{\boldsymbol \theta_{\boldsymbol Z}}}^{-1})^{'}\xrightarrow{p}V=(G_{\boldsymbol \theta_{\boldsymbol Z}^*})^{-1}\cdot \mathbb{E}[g(\boldsymbol \theta_{\boldsymbol Z}^*)\cdot \boldsymbol g(\boldsymbol \theta_{\boldsymbol Z}^*)'].(G_{\boldsymbol \theta_{\boldsymbol Z}^*}^{-1})^{'}\cdot$$
From Theorem \ref{thm2}, $\sqrt{N}(\widehat{\boldsymbol \theta_{\boldsymbol Z}}-\boldsymbol \theta_{\boldsymbol Z}^*)\xrightarrow{d}\mathcal{N}\{0,(G_{\boldsymbol \theta_{\boldsymbol Z}^*})^{-1}\cdot\mathbb{E}[g(\boldsymbol \theta_{\boldsymbol Z}^*)\cdot \boldsymbol g(\boldsymbol \theta_{\boldsymbol Z}^*)']\cdot (G_{\boldsymbol \theta_{\boldsymbol Z}^*}^{-1})^{'}\}$. Therefore we obtain
\begin{align*}
    & \text{Var}\{ \sqrt{N}(\widehat{\boldsymbol \theta_{\boldsymbol Z}}-\boldsymbol \theta_{\boldsymbol Z}^*)\}=(G_{\boldsymbol \theta_{\boldsymbol Z}^*})^{-1}\cdot\mathbb{E}[g(\boldsymbol \theta_{\boldsymbol Z}^*)\cdot \boldsymbol g(\boldsymbol \theta_{\boldsymbol Z}^*)']\cdot(G_{\boldsymbol \theta_{\boldsymbol Z}^*}^{-1})^{'} + O_p(1)\\
    & N \cdot\text{Var}(\widehat{\boldsymbol \theta_{\boldsymbol Z}})=(G_{\boldsymbol \theta_{\boldsymbol Z}^*})^{-1}\cdot\mathbb{E}[g(\boldsymbol \theta_{\boldsymbol Z}^*)\cdot \boldsymbol g(\boldsymbol \theta_{\boldsymbol Z}^*)']\cdot(G_{\boldsymbol \theta_{\boldsymbol Z}^*}^{-1})^{'} + O_p(1)\\
     &\text{Var} (\widehat{\boldsymbol \theta_{\boldsymbol Z}})=\frac{1}{N}\cdot(G_{\boldsymbol \theta_{\boldsymbol Z}^*})^{-1}\cdot\mathbb{E}[g(\boldsymbol \theta_{\boldsymbol Z}^*)\cdot \boldsymbol g(\boldsymbol \theta_{\boldsymbol Z}^*)']\cdot(G_{\boldsymbol \theta_{\boldsymbol Z}^*}^{-1})^{'} + O_p\left(\frac{1}{N}\right)\\
     & \text{Var}(\widehat{\boldsymbol \theta_{\boldsymbol Z}})= \frac{1}{N}\cdot\widehat{G_{\boldsymbol \theta_{\boldsymbol Z}}}^{-1}\cdot\widehat{E}\cdot(\widehat{G_{\boldsymbol \theta_{\boldsymbol Z}}}^{-1})^{'}+ o_p\left(1\right)=\frac{\widehat{V}}{N}+ o_p\left(1\right)\cdot
\end{align*} 
Therefore $\frac{\widehat{V}}{N}$ is a consistent estimator of asymptotic variance of $\widehat{\boldsymbol \theta}_{\boldsymbol Z}$.
\end{proof}
\section{Joint Pseudolikelihood-based Estimating equation (JPL)}\label{sec:jpl}
\subsection{Theorem S4}\label{sec:pr4}
\begin{theorem}
     Under condition C1 in the main text and assumptions \textit{A}1 and \textit{A}2 in Supplementary Section \ref{sec:assumptions} and assuming the selection model is correctly specified, that is, $\pi(\boldsymbol X^{\text{mult}},\boldsymbol \alpha^{\text{mult}*})=P(S=1|\boldsymbol X^{\text{mult}})=\pi(\boldsymbol X^{\text{mult}})$ where $\widehat{\boldsymbol \alpha^{\text{mult}}}\xrightarrow{p}\boldsymbol \alpha^{\text{mult}*}$ and $\boldsymbol \alpha^{\text{mult}*}$ is the true value of $\boldsymbol \alpha$, then $\widehat{\boldsymbol \theta}_{\boldsymbol Z}$ estimated using JPL is consistent for $\boldsymbol \theta_{\boldsymbol Z}^*$ as $N\rightarrow \infty$. \label{thm4}
\end{theorem}
\begin{proof}
    In this case, the estimating equation consists of both selection model and disease model parameter estimation. The two step estimating equation is given by
\begin{align*}
     & \delta_n(\boldsymbol \alpha^{\text{mult}})=\begin{pmatrix}
     \delta_{1_n}(\boldsymbol \alpha_1)= \frac{1}{N}\sum_{i=1}^N S_{1i}\boldsymbol X_{1i}- \frac{1}{N}\cdot \sum_{i=1}^ N \left(\frac{S_{\text{ext},i}}{\pi_{\text{ext},i}}\right)\cdot \pi_k(\boldsymbol X_{1i},\boldsymbol \alpha_1)\cdot \boldsymbol X_{1i}\\
     \vdots\\
       \delta_{K_n}(\boldsymbol \alpha_K)=\frac{1}{N}\sum_{i=1}^N S_{Ki}\boldsymbol X_{Ki}- \frac{1}{N}\cdot \sum_{i=1}^ N \left(\frac{S_{\text{ext},i}}{\pi_{\text{ext},i}}\right)\cdot \pi_k(\boldsymbol X_{Ki},\boldsymbol \alpha_K)\cdot \boldsymbol X_{Ki}\\
         \end{pmatrix}\\
     & \phi_n(\boldsymbol \theta_{\boldsymbol Z},\widehat{\boldsymbol \alpha^{\text{mult}}})= \frac{1}{N}\sum_{i=1}^{N}\frac{S^{\text{mult}}_i}{\pi(\boldsymbol X^{\text{mult}}_i,\widehat{\boldsymbol \alpha^{\text{mult}}})}\left\{D_i \boldsymbol Z_i-\frac{e^{\boldsymbol\theta'\boldsymbol Z_i}}{(1+e^{\boldsymbol\theta'\boldsymbol Z_i})}\cdot \boldsymbol Z_i\right\}\cdot
\end{align*}
From \citet{tsiatis2006semiparametric} to show  $\widehat{\boldsymbol \theta_{\boldsymbol Z}}\xrightarrow{p}\boldsymbol \theta_{\boldsymbol Z}^*$ in a two step estimation procedure, we need to prove $\mathbb{E}(\phi_N(\boldsymbol \theta_{\boldsymbol Z}^*,\boldsymbol \alpha^{\text{mult}*}))=\mathbf{0}$. Under correct specification of selection model, $\pi(\boldsymbol X^{\text{mult}},\boldsymbol \alpha^{\text{mult}*})=P(S=1|\boldsymbol X^{\text{mult}})=\pi(\boldsymbol X^{\text{mult}})$ where $\widehat{\boldsymbol \alpha^{\text{mult}}}\xrightarrow{p}\boldsymbol \alpha^{\text{mult}*}$ and $\boldsymbol \alpha^{\text{mult}*}$ is the true value of $\boldsymbol \alpha^{\text{mult}}$. Using this equality, the proof to show that $\mathbb{E}(\phi_N(\boldsymbol \theta_{\boldsymbol Z}^*,\boldsymbol \alpha^{\text{mult}*}))=\mathbf{0}$ is exactly as Theorem S1. Therefore we obtain $\widehat{\boldsymbol \theta}_{\boldsymbol Z}$ is a consistent estimator of $\boldsymbol \theta_{\boldsymbol Z}^*$.
\end{proof}
\subsection{Theorem S5}\label{sec:pr5}
\begin{theorem}
    Under assumptions of Theorem S4 and assumption \textit{A}4 in Supplementary Section \ref{sec:assumptions} and suppose $\widehat{\boldsymbol \theta_{\boldsymbol Z}}\xrightarrow{p}\boldsymbol \theta_{\boldsymbol Z}^*$, the asymptotic distribution of $\widehat{\boldsymbol \theta}_{\boldsymbol Z}$ using JPL is given by
\begin{equation}
\sqrt{N}(\widehat{\boldsymbol \theta_{\boldsymbol Z}}-\boldsymbol\theta^*_{\boldsymbol Z})\xrightarrow{d}\mathcal{N}(\mathbf{0},V).
\end{equation}
where
\begin{align*}
    & V=(G_{\boldsymbol \theta_{\boldsymbol Z}^*})^{-1}.\mathbb{E}[\{\boldsymbol g(\boldsymbol \theta_{\boldsymbol Z}^*,\boldsymbol \alpha^{\text{mult}*})+G_{\boldsymbol \alpha^{\text{mult}*}}.\boldsymbol \Psi(\boldsymbol \alpha^{\text{mult}*})\}\\
    & \hspace{5cm}\{\boldsymbol g(\boldsymbol \theta_{\boldsymbol Z}^*,\boldsymbol \alpha^{\text{mult}*})+G_{\boldsymbol \alpha^{\text{mult}*}}.\boldsymbol \Psi(\boldsymbol \alpha^{\text{mult}*})\}'].(G_{\boldsymbol \theta_{\boldsymbol Z}^*}^{-1})^{'}\\
    & G_{\boldsymbol \theta_{\boldsymbol Z}^*} = \mathbb{E}\left\{-\frac{S^{\text{mult}}}{\pi(\boldsymbol X^{\text{mult}},\boldsymbol \alpha^{\text{mult}*})}\cdot \frac{e^{\boldsymbol \theta_{\boldsymbol Z}^{*'}\boldsymbol Z}}{(1+e^{\boldsymbol \theta_{\boldsymbol Z}^{*'}\boldsymbol Z})^2}\cdot \boldsymbol Z\boldsymbol Z'\right\}\\
    & \boldsymbol g(\boldsymbol \theta_{\boldsymbol Z}^*,\boldsymbol \alpha^{\text{mult}*}) = \frac{S^{\text{mult}}}{\pi(\boldsymbol X^{\text{mult}},\boldsymbol \alpha^{\text{mult}*})}\left\{D \boldsymbol Z-\frac{e^{\boldsymbol \theta_{\boldsymbol Z}^{*'}\boldsymbol Z}}{(1+e^{\boldsymbol \theta_{\boldsymbol Z}^{*'}\boldsymbol Z})}\cdot \boldsymbol Z\right\}
\end{align*}
\setlength\arraycolsep{-20pt}
\begin{align*}
     & G_{\boldsymbol \alpha^{\text{mult}*}}=\mathbb{E}\left[-\frac{S^{\text{mult}}}{\pi(\boldsymbol X^{\text{mult}},\boldsymbol\alpha^{\text{mult}*})^2}.\{1-\pi_1(\boldsymbol X_1,\boldsymbol \alpha^*_1)\}\cdot \cdot \cdot\{1-\pi_K(\boldsymbol X_K,\boldsymbol \alpha^*_K)\}\cdot\right.\\
    & \hspace{2cm} \left.\boldsymbol Z\cdot \{\pi_1(\boldsymbol X_1,\boldsymbol \alpha_1^*)\cdot \boldsymbol X_1',\cdot \cdot \cdot,\pi_K(\boldsymbol X_K,\boldsymbol \alpha^*_K)\cdot \boldsymbol X_K'\}\cdot
   \left\{D-\frac{e^{\boldsymbol \theta_{\boldsymbol Z}^{*'}\boldsymbol Z}}{(1+e^{\boldsymbol \theta_{\boldsymbol Z}^{*'}\boldsymbol Z})}\right\}\right]\\
    & \boldsymbol h(\boldsymbol X^{\text{mult}},\boldsymbol M,\boldsymbol \alpha^{\text{mult}*}) = \begin{pmatrix}
   S_1\boldsymbol X_1-\frac{\pi_1(\boldsymbol X_1,\boldsymbol \alpha_1)}{\pi_{\text{ext}}(\boldsymbol M)} \cdot S_{\text{ext}}\cdot \boldsymbol X_1\\
     \vdots\\
         S_K\boldsymbol X_K-\frac{\pi_1(\boldsymbol X_K,\boldsymbol \alpha_K)}{\pi_{\text{ext}}(\boldsymbol M)} \cdot S_{\text{ext}}\cdot \boldsymbol X_K\\
         \end{pmatrix}\\
         &  H = - \mathbb{E} \begin{pmatrix}
     \frac{\pi_1(\boldsymbol X_1,\boldsymbol \alpha_1)}{\pi_{\text{ext}}(\boldsymbol M)}\{1-\pi_1(\boldsymbol X_1,\boldsymbol \alpha_1)\}S_{\text{ext}}\boldsymbol X_1\boldsymbol X_1' & \hdots & 0 \\
     \vdots & \ddots & \vdots \\
     0 & \hdots &   \frac{\pi_K(\boldsymbol X_K,\boldsymbol \alpha_K)}{\pi_{\text{ext}}(\boldsymbol M)}\{1-\pi_K(\boldsymbol X_K,\boldsymbol \alpha_K)\} S_{\text{ext}} \boldsymbol X_K\boldsymbol X_K'
         \end{pmatrix}\\
    & \boldsymbol \Psi(\boldsymbol \alpha^{\text{mult}*})=-H^{-1}\cdot  \boldsymbol h(\boldsymbol X^{\text{mult}},\boldsymbol M,\boldsymbol \alpha^{\text{mult}*})
\end{align*} \label{thm5}
\end{theorem}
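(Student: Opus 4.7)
The plan is to treat the JPL estimator as a standard two-step Z-estimator and derive the sandwich variance by propagating the influence function of $\widehat{\boldsymbol{\alpha}}$ into the Taylor expansion of the disease-model estimating equation $\phi_N$. Consistency of both $\widehat{\boldsymbol{\alpha}}$ and $\widehat{\boldsymbol{\theta}}_{\boldsymbol{Z}}$ is already in hand from Theorem S4, so the remaining task is a first-order expansion plus a CLT.

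First, I would expand $\boldsymbol{\delta}_N(\widehat{\boldsymbol{\alpha}}) = \mathbf{0}$ around $\boldsymbol{\alpha}^*$ using a mean-value theorem. Uniform convergence of $\partial \boldsymbol{\delta}_N/\partial\boldsymbol{\alpha}$ under assumption A8 (via a ULLN argument as in Theorem S3) together with consistency of $\widehat{\boldsymbol{\alpha}}$ gives $\partial \boldsymbol{\delta}_N/\partial\boldsymbol{\alpha}|_{\tilde{\boldsymbol{\alpha}}} \xrightarrow{p} H$, which is block-diagonal because each $\boldsymbol{\delta}_{k,N}$ depends only on $\boldsymbol{\alpha}_k$ and the $(k,k)$-block is $-\mathbb{E}[\{\pi_{\text{ext}}^{-1}\}\pi_k(1-\pi_k)S_{\text{ext}}\boldsymbol{X}_k\boldsymbol{X}_k']$. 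Rearranging,
\begin{equation*}
\sqrt{N}(\widehat{\boldsymbol{\alpha}} - \boldsymbol{\alpha}^*) = -H^{-1}\sqrt{N}\boldsymbol{\delta}_N(\boldsymbol{\alpha}^*) + o_p(1) = \frac{1}{\sqrt{N}}\sum_{i=1}^N \boldsymbol{\Psi}_i(\boldsymbol{\alpha}^*) + o_p(1),
\end{equation*}
where $\boldsymbol{\Psi}_i(\boldsymbol{\alpha}^*) = -H^{-1}\boldsymbol{h}(\boldsymbol{X}_i,\boldsymbol{M}_i,\boldsymbol{\alpha}^*)$ is the per-observation influence function.

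Next I expand $\phi_N(\widehat{\boldsymbol{\theta}}_{\boldsymbol{Z}},\widehat{\boldsymbol{\alpha}}) = \mathbf{0}$ jointly around $(\boldsymbol{\theta}_{\boldsymbol{Z}}^*, \boldsymbol{\alpha}^*)$:
\begin{equation*}
\mathbf{0} = \sqrt{N}\phi_N(\boldsymbol{\theta}_{\boldsymbol{Z}}^*, \boldsymbol{\alpha}^*) + \widehat{G}_{\boldsymbol{\theta}_{\boldsymbol{Z}}}\sqrt{N}(\widehat{\boldsymbol{\theta}}_{\boldsymbol{Z}} - \boldsymbol{\theta}_{\boldsymbol{Z}}^*) + \widehat{G}_{\boldsymbol{\alpha}}\sqrt{N}(\widehat{\boldsymbol{\alpha}} - \boldsymbol{\alpha}^*) + o_p(1).
\end{equation*}
The limit $G_{\boldsymbol{\theta}_{\boldsymbol{Z}}^*}$ is the IPW-type Hessian already computed in Theorem S2 (with known $\pi$ replaced by $\pi(\boldsymbol{X},\boldsymbol{\alpha}^*)$). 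The nontrivial piece is $G_{\boldsymbol{\alpha}^*}$: since $\pi(\boldsymbol{X},\boldsymbol{\alpha}) = 1 - \prod_j(1-\pi_j(\boldsymbol{X}_j,\boldsymbol{\alpha}_j))$ and $\pi_k$ is logistic, the chain rule yields $\partial\pi/\partial\boldsymbol{\alpha}_k = \pi_k(1-\pi_k)\prod_{j\neq k}(1-\pi_j)\boldsymbol{X}_k = \pi_k\prod_j(1-\pi_j)\boldsymbol{X}_k$, so differentiating $S/\pi(\boldsymbol{X},\boldsymbol{\alpha})$ produces the factor $S\pi^{-2}\prod_j(1-\pi_j)$ and the stacked horizontal block $(\pi_1\boldsymbol{X}_1',\ldots,\pi_K\boldsymbol{X}_K')$ exactly as stated. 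Substituting the influence representation of $\sqrt{N}(\widehat{\boldsymbol{\alpha}} - \boldsymbol{\alpha}^*)$ and solving gives
\begin{equation*}
\sqrt{N}(\widehat{\boldsymbol{\theta}}_{\boldsymbol{Z}} - \boldsymbol{\theta}_{\boldsymbol{Z}}^*) = -G_{\boldsymbol{\theta}_{\boldsymbol{Z}}^*}^{-1}\cdot\frac{1}{\sqrt{N}}\sum_{i=1}^N\bigl\{\boldsymbol{g}_i(\boldsymbol{\theta}_{\boldsymbol{Z}}^*,\boldsymbol{\alpha}^*) + G_{\boldsymbol{\alpha}^*}\boldsymbol{\Psi}_i(\boldsymbol{\alpha}^*)\bigr\} + o_p(1).
\end{equation*}
Each summand is i.i.d.\ with mean zero (by Theorem S4 for $\boldsymbol{g}_i$, and by construction of $\boldsymbol{\Psi}_i$), so a multivariate CLT and Slutsky deliver $\mathcal{N}(\mathbf{0},V)$ with $V$ in the stated sandwich form.

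The main obstacle is the explicit form of $G_{\boldsymbol{\alpha}^*}$: the composite probability couples all $\boldsymbol{\alpha}_k$'s through a product, so one has to carefully track the $\prod_j(1-\pi_j)$ factor and the horizontal stacking of the $K$ cohort blocks. A second subtle point, easy to miss, is that $\mathbb{E}[\boldsymbol{g}_i\boldsymbol{\Psi}_i']$ is generally nonzero, so the ``meat'' of the sandwich is the full $\mathrm{Var}(\boldsymbol{g}_i + G_{\boldsymbol{\alpha}^*}\boldsymbol{\Psi}_i)$ with both quadratic and cross terms retained; naively plugging in the Theorem~S3 variance would be incorrect. The $o_p(1)$ control of the Taylor remainder is routine, following from uniform convergence under the moment and dominating-function conditions in assumption A8 and the already established consistency of $(\widehat{\boldsymbol{\theta}}_{\boldsymbol{Z}},\widehat{\boldsymbol{\alpha}})$.
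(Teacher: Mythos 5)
Your proof is correct and takes essentially the same route as the paper: the paper invokes \citet{tsiatis2006semiparametric}'s two-step Z-estimation arguments to obtain the sandwich form directly and then computes $G_{\boldsymbol \theta_{\boldsymbol Z}^*}$, $G_{\boldsymbol \alpha^*}$, $H$, and $\boldsymbol \Psi(\boldsymbol \alpha^*)$, while you merely unpack that citation into the explicit stacked Taylor expansion before computing the identical terms. In particular, your chain-rule derivation of $\partial\pi/\partial\boldsymbol\alpha_k$ for the composite probability $\pi=1-\prod_j(1-\pi_j)$, the block-diagonal structure of $H$, and your insistence on retaining the cross terms $\mathbb{E}[\boldsymbol g\,\boldsymbol\Psi']$ in the meat of the sandwich all coincide with the paper's calculations (and with the $\widehat{E}_2,\widehat{E}_3$ terms of its Theorem S6 variance estimator).
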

\begin{proof}
By \citet{tsiatis2006semiparametric}'s arguments on a two step Z-estimation problem, under assumptions of Theorem S4  and assumption \textit{A}4 in Supplementary Section \ref{sec:assumptions} and since $\widehat{\boldsymbol \theta_{\boldsymbol Z}}\xrightarrow{p}\boldsymbol \theta_{\boldsymbol Z}^*$, we obtain that 
\begin{align*}
    &\sqrt{N}(\widehat{\boldsymbol \theta_{\boldsymbol Z}}-\boldsymbol\theta_{\boldsymbol Z}^*)\xrightarrow{d}\mathcal{N}(0,(G_{\boldsymbol \theta_{\boldsymbol Z}^*})^{-1}.\mathbb{E}[\{\boldsymbol g(\boldsymbol \theta_{\boldsymbol Z}^*,\boldsymbol \alpha^{\text{mult}*})+G_{\boldsymbol \alpha^{\text{mult}*}}.\boldsymbol \Psi(\boldsymbol \alpha^{\text{mult}*})\}\\
    & \hspace{4cm}\{\boldsymbol g(\boldsymbol \theta_{\boldsymbol Z}^*,\boldsymbol \alpha^{\text{mult}*})+G_{\boldsymbol \alpha^{\text{mult}*}}.\boldsymbol \Psi(\boldsymbol \alpha^{\text{mult}*})\}'].(G_{\boldsymbol \theta_{\boldsymbol Z}^*}^{-1})^{'}\cdot
\end{align*}
We derive the expression of each of the terms in the above expression.
\begin{align*}
    & G_{\boldsymbol \theta_{\boldsymbol Z}^*}=\left.\mathbb{E}\left\{\frac{\partial g(\boldsymbol \theta_{\boldsymbol Z},\boldsymbol \alpha^{\text{mult}*})}{\partial \boldsymbol \theta_{\boldsymbol Z}}\right\}\right|_{\boldsymbol \theta_{\boldsymbol Z}=\boldsymbol \theta_{\boldsymbol Z}^*}\hspace{1cm} \left.G_{\boldsymbol \alpha^{\text{mult}*}}=\mathbb{E}\left\{\frac{\partial g(\boldsymbol \theta_{\boldsymbol Z}^*,\boldsymbol \alpha^{\text{mult}})}{\partial \boldsymbol \alpha^{\text{mult}}}\right\}\right|_{\boldsymbol\alpha^{\text{mult}}=\boldsymbol \alpha^{\text{mult}*}}\\
    & H = \left.\mathbb{E}\left\{\frac{\partial \boldsymbol h(\boldsymbol \alpha^{\text{mult}})}{\partial \boldsymbol \alpha^{\text{mult}}}\right\}\right|_{\boldsymbol\alpha^{\text{mult}}=\boldsymbol \alpha^{\text{mult}*}}\hspace{1.9cm} \boldsymbol \Psi(\boldsymbol \alpha^{\text{mult}*})=-H^{-1}\boldsymbol h(\boldsymbol \alpha^{\text{mult}*}) \cdot
\end{align*}
First we calculate $G_{\boldsymbol \theta_{\boldsymbol Z}^*}$.
\subsubsection*{Calculation for $G_{\boldsymbol \theta_{\boldsymbol Z}^*}$}
\begin{align*}
    & \left.\frac{\partial \boldsymbol g(\boldsymbol \theta_{\boldsymbol Z},\boldsymbol \alpha^{\text{mult}*})}{\partial \boldsymbol \theta_{\boldsymbol Z}}\right|_{\boldsymbol \theta_{\boldsymbol Z}=\boldsymbol \theta_{\boldsymbol Z}^*}= \left.\frac{\partial}{\partial \boldsymbol \theta_{\boldsymbol Z}}\left[\frac{S^{\text{mult}}}{\pi(\boldsymbol X^{\text{mult}},\boldsymbol \alpha^{\text{mult}*})}\left\{D\boldsymbol Z-\frac{e^{\boldsymbol \theta_{\boldsymbol Z}^{*'}\boldsymbol Z}}{(1+e^{\boldsymbol \theta_{\boldsymbol Z}^{*'}\boldsymbol Z})}\cdot \boldsymbol Z\right\}\right]\right|_{\boldsymbol \theta_{\boldsymbol Z}=\boldsymbol \theta_{\boldsymbol Z}^*}\\
    & = -\frac{S^{\text{mult}}}{\pi(\boldsymbol X^{\text{mult}},\boldsymbol \alpha^{\text{mult}*})} \cdot \frac{e^{\boldsymbol \theta_{\boldsymbol Z}^{*'}\boldsymbol Z}}{(1+e^{\boldsymbol \theta_{\boldsymbol Z}^{*'}\boldsymbol Z})^2}\cdot \boldsymbol Z\boldsymbol Z'\cdot
\end{align*}
Therefore we obtain 
\begin{align*}
    G_{\boldsymbol \theta_{\boldsymbol Z}^*} = \mathbb{E}\left[-\frac{S^{\text{mult}}}{\pi(\boldsymbol X^{\text{mult}},\boldsymbol \alpha^{\text{mult}*})} \cdot \frac{e^{\boldsymbol \theta_{\boldsymbol Z}^{*'}\boldsymbol Z}}{(1+e^{\boldsymbol \theta_{\boldsymbol Z}^{*'}\boldsymbol Z})^2}\cdot \boldsymbol Z \boldsymbol Z'\right]\cdot
\end{align*}
Next we calculate $G_{\boldsymbol \alpha^{\text{mult}*}}$.
\subsubsection*{Calculation for $G_{\boldsymbol \alpha^{\text{mult}*}}$}
\begin{align*}
    & \left.\frac{\partial \boldsymbol g(\boldsymbol \theta_{\boldsymbol Z}^*,\boldsymbol \alpha^{\text{mult}})}{\partial \boldsymbol \alpha^{\text{mult}}}\right|_{\boldsymbol\alpha^{\text{mult}}=\boldsymbol \alpha^{\text{mult}*}}=\left[\left.\frac{\partial \boldsymbol g(\boldsymbol \theta_{\boldsymbol Z}^*,\boldsymbol \alpha^{\text{mult}})}{\partial \boldsymbol \alpha_1}\right|_{\boldsymbol\alpha^{\text{mult}}=\boldsymbol \alpha^{\text{mult}*}},...,\left.\frac{\partial \boldsymbol g(\boldsymbol \theta_{\boldsymbol Z}^*,\boldsymbol \alpha^{\text{mult}})}{\partial \boldsymbol \alpha_K}\right|_{\boldsymbol\alpha^{\text{mult}}=\boldsymbol \alpha^{\text{mult}*}}\right]
\end{align*}
Moreover we obtain $\forall$  $k \{1,2,...,K\}$
\begin{align*}
    & \left.\frac{\partial \boldsymbol g(\boldsymbol \theta_{\boldsymbol Z}^*,\boldsymbol \alpha^{\text{mult}})}{\partial \boldsymbol \alpha_k}\right|_{\boldsymbol\alpha^{\text{mult}}=\boldsymbol \alpha^{\text{mult}*}}= \left.\frac{\partial}{\partial \boldsymbol\alpha_k}\left[\frac{S^{\text{mult}}}{\pi(\boldsymbol X^{\text{mult}},\boldsymbol \alpha^{\text{mult}})}\left\{D\boldsymbol Z-\frac{e^{\boldsymbol \theta_{\boldsymbol Z}^{*'}\boldsymbol Z}}{(1+e^{\boldsymbol \theta_{\boldsymbol Z}^{*'}\boldsymbol Z})}\cdot \boldsymbol Z\right\}\right]\right|_{\boldsymbol\alpha^{\text{mult}}=\boldsymbol \alpha^{\text{mult}*}}\\
    & = -\frac{S^{\text{mult}}}{\pi(\boldsymbol X^{\text{mult}},\boldsymbol \alpha^{\text{mult}*})^2}\cdot\left\{D\boldsymbol Z-\frac{e^{\boldsymbol \theta_{\boldsymbol Z}^{*'}\boldsymbol Z}}{(1+e^{\boldsymbol \theta_{\boldsymbol Z}^{*'}\boldsymbol Z})}\cdot \boldsymbol Z\right\}\cdot \left.\frac{\partial \pi(\boldsymbol X^{\text{mult}},\boldsymbol \alpha^{\text{mult}})}{\partial \boldsymbol \alpha_k}\right|_{\boldsymbol\alpha^{\text{mult}}=\boldsymbol \alpha^{\text{mult}*}}\\
    & =-\frac{S^{\text{mult}}}{\pi(\boldsymbol X^{\text{mult}},\boldsymbol\alpha^{\text{mult}*})^2}.\{1-\pi_1(\boldsymbol X_1,\boldsymbol \alpha^*_1)\}\cdot \cdot \cdot\{1-\pi_K(\boldsymbol X_K,\boldsymbol \alpha^*_K)\}\boldsymbol Z\cdot \{\pi_1(\boldsymbol X_1,\boldsymbol \alpha_1^*)\cdot \boldsymbol X_1'\}\cdot\\
    & \hspace{7cm}
   \left\{D-\frac{e^{\boldsymbol \theta_{\boldsymbol Z}^{*'}\boldsymbol Z}}{(1+e^{\boldsymbol \theta_{\boldsymbol Z}^{*'}\boldsymbol Z})}\right\}
\end{align*}
Therefore we obtain 
\begin{align*}
    & G_{\boldsymbol \alpha^{\text{mult}*}}=\mathbb{E}\left[-\frac{S^{\text{mult}}}{\pi(\boldsymbol X^{\text{mult}},\boldsymbol\alpha^{\text{mult}*})^2}.\{1-\pi_1(\boldsymbol X_1,\boldsymbol \alpha^*_1)\}\cdot \cdot \cdot\{1-\pi_K(\boldsymbol X_K,\boldsymbol \alpha^*_K)\}\right.\\
    & \hspace{2cm} \left.\boldsymbol Z\cdot \{\pi_1(\boldsymbol X_1,\boldsymbol \alpha_1^*)\cdot \boldsymbol X_1',\cdot \cdot \cdot,\pi_K(\boldsymbol X_K,\boldsymbol \alpha^*_K)\cdot \boldsymbol X_K'\}\cdot
   \left\{D-\frac{e^{\boldsymbol \theta_{\boldsymbol Z}^{*'}\boldsymbol Z}}{(1+e^{\boldsymbol \theta_{\boldsymbol Z}^{*'}\boldsymbol Z})}\right\}\right]
\end{align*}
\setlength\arraycolsep{-8pt}
\subsubsection*{Calculation for $\boldsymbol\Psi(\boldsymbol \alpha^{\text{mult}*})$}
\begin{align*}
   & \left.\frac{ \partial \boldsymbol h(\boldsymbol \alpha^{\text{mult}})}{\partial \boldsymbol \alpha^{\text{mult}}}\right|_{\boldsymbol \alpha=\boldsymbol \alpha^{\text{mult}*}}=\begin{pmatrix}
     \left.\frac{ \partial \boldsymbol h_1(\boldsymbol \alpha_1)}{\partial \boldsymbol \alpha_1}\right|_{\boldsymbol \alpha_1=\boldsymbol \alpha^*_1} & \hdots & 0 \\
     \vdots & \ddots & \vdots \\
     0 & \hdots &  \left.\frac{ \partial \boldsymbol h_K(\boldsymbol \alpha_K)}{\partial \boldsymbol \alpha_K}\right|_{\boldsymbol \alpha_K=\boldsymbol \alpha^*_K}
         \end{pmatrix}
\end{align*}
Moreover we obtain $\forall$  $k \{1,2,...,K\}$
\begin{align*}
     & \left.\frac{ \partial \boldsymbol h_k(\boldsymbol \alpha_k)}{\partial \boldsymbol \alpha_k}\right|_{\boldsymbol \alpha_k=\boldsymbol \alpha^*_k}= \left.\frac{ \partial}{\partial \boldsymbol \alpha_k}\left\{S_k\boldsymbol X_k-S_{\text{ext}}\cdot \frac{\pi(\boldsymbol X_k,\boldsymbol \alpha_k)}{\pi_{\text{ext}}(\boldsymbol M)}\cdot \boldsymbol X_k \right\}\right|_{\boldsymbol \alpha_k=\boldsymbol \alpha^*_k}\\
     & =-\frac{\left.\frac{ \partial}{\partial \alpha_k}\pi(\boldsymbol X_k,\boldsymbol \alpha_k)\right|_{\boldsymbol \alpha_k=\boldsymbol \alpha^*_k}}{\pi_{\text{ext}}(\boldsymbol M)}\cdot S_{\text{ext}}\cdot \boldsymbol X_k  = -\frac{\pi(\boldsymbol X_k,\boldsymbol \alpha^*_k)}{\pi_{\text{ext}}(\boldsymbol M)}\cdot \{1-\pi(\boldsymbol X_k,\boldsymbol \alpha^*_k)\}\cdot S_{\text{ext}}\cdot \boldsymbol X_k \cdot \boldsymbol X_k'
\end{align*}
\setlength\arraycolsep{-20pt}
This implies
\begin{align*}
   H = - \mathbb{E} \begin{pmatrix}
     \frac{\pi_1(\boldsymbol X_1,\boldsymbol \alpha_1)}{\pi_{\text{ext}}(\boldsymbol M)}\{1-\pi_1(\boldsymbol X_1,\boldsymbol \alpha_1)\}S_{\text{ext}}\boldsymbol X_1\boldsymbol X_1' & \hdots & 0 \\
     \vdots & \ddots & \vdots \\
     0 & \hdots &   \frac{\pi_K(\boldsymbol X_K,\boldsymbol \alpha_K)}{\pi_{\text{ext}}(\boldsymbol M)}\{1-\pi_K(\boldsymbol X_K,\boldsymbol \alpha_K)\} S_{\text{ext}} \boldsymbol X_K\boldsymbol X_K'
         \end{pmatrix}
\end{align*}
This gives the asymptotic distribution of $\widehat{\boldsymbol \theta}_{\boldsymbol Z}$ for JPL.
\end{proof}
\subsection{Theorem S6}\label{sec:pr6}
\setlength\arraycolsep{-20pt}
\begin{theorem}
    Under all the assumptions of Theorems S4 and S5 and \textit{A}5 in Supplementary Section \ref{sec:assumptions}, $\frac{1}{N} \cdot \widehat{G_{\boldsymbol \theta_{\boldsymbol Z}}}^{-1}\cdot \widehat{E}\cdot (\widehat{G_{\boldsymbol \theta_{\boldsymbol Z}}}^{-1})^{'}$ is a consistent estimator of the variance of $\widehat{\boldsymbol \theta}_{\boldsymbol Z}$ for JPL where
\begin{align*}
   & \widehat{G_{\boldsymbol \theta_{\boldsymbol Z}}}=-\frac{1}{N}\sum_{i=1}^N S^{\text{mult}}_i \cdot \frac{1}{\pi(\boldsymbol X^{\text{mult}}_i,\widehat{\boldsymbol \alpha^{\text{mult}}})}\cdot \frac{e^{\widehat{\boldsymbol \theta_{\boldsymbol Z}}'\boldsymbol Z_i}}{(1+e^{\widehat{\boldsymbol \theta_{\boldsymbol Z}}'\boldsymbol Z_i})^2}\cdot\boldsymbol Z_i \boldsymbol Z_i'\\
   & \widehat{H} = -\frac{1}{N}\sum_{i=1}^N S_{\text{ext},i} \begin{pmatrix}
     \frac{\pi_1(\boldsymbol X_{1i},\widehat{\boldsymbol \alpha_1)}}{\pi_{\text{ext,i}}(\boldsymbol M_i)}\{1-\pi_1(\boldsymbol X_{1i},\widehat{\boldsymbol \alpha_1)}\}\boldsymbol X_{1i}\boldsymbol X_{1i}' & \hdots & 0 \\
     \vdots & \ddots & \vdots \\
     0 & \hdots &   \frac{\pi_K(\boldsymbol X_{Ki},\widehat{\boldsymbol \alpha_K)}}{\pi_{\text{ext,i}}(\boldsymbol M_i)}\{1-\pi_K(\boldsymbol X_{Ki},\widehat{\boldsymbol \alpha_K)}\}\boldsymbol X_{Ki}\boldsymbol X_{Ki}'
         \end{pmatrix}\\
   & \widehat{G_{\boldsymbol \alpha^{\text{mult}}}}=-\frac{1}{N}\sum_{i=1}^N \frac{S^{\text{mult}}_i}{\pi(\boldsymbol X^{\text{mult}}_i,\widehat{\boldsymbol \alpha^{\text{mult}}})^2}.\{1-\pi_1(\boldsymbol X_{1i},\widehat{\boldsymbol\alpha_1})\}\cdot \cdot \cdot\{1-\pi_K(\boldsymbol X_{Ki},\widehat{\boldsymbol\alpha_K})\}\cdot\\
    & \hspace{2cm} \boldsymbol Z_i\cdot \{\pi_1(\boldsymbol X_{1i},\widehat{\boldsymbol\alpha_1})\cdot \boldsymbol X_{1i}',\cdot \cdot \cdot,\pi_K(\boldsymbol X_{Ki},\widehat{\boldsymbol\alpha_K})\cdot \boldsymbol X_{Ki}'\}\cdot
   \left\{D_i-\frac{e^{\widehat{\boldsymbol \theta_{\boldsymbol Z}}\boldsymbol Z_i}}{(1+e^{\widehat{\boldsymbol \theta_{\boldsymbol Z}}\boldsymbol Z_i})}\right\}\\
   & \widehat{E_1}=\frac{1}{N}\sum_{i =1}^N S^{\text{mult}}_i \cdot \left\{\frac{1}{\pi(\boldsymbol X^{\text{mult}}_i,\widehat{\boldsymbol \alpha^{\text{mult}}})}\right\}^2\left\{D_i-\frac{e^{\widehat{\boldsymbol \theta_{\boldsymbol Z}}'\boldsymbol Z_i}}{(1+e^{\widehat{\boldsymbol \theta_{\boldsymbol Z}}' \boldsymbol Z_i})}\right\}^2\cdot \boldsymbol Z_i \boldsymbol Z_i'\\
    & \widehat{E}_2 =  \widehat{E}_3'=\frac{1}{N}\sum_{i=1}^N  \widehat{G_{\boldsymbol \alpha^{\text{mult}}}}\cdot \widehat{H}^{-1} \cdot\frac{1}{\pi(\boldsymbol X^{\text{mult}}_i,\widehat{\boldsymbol \alpha^{\text{mult}})}}\begin{pmatrix}
     S_{1i}\boldsymbol X_{1i}\\
     \vdots\\
      S_{Ki}\boldsymbol X_{Ki}\\
         \end{pmatrix}
         \cdot\left\{D_i\boldsymbol Z_i'-\frac{e^{\widehat{\boldsymbol \theta_{\boldsymbol Z}}'\boldsymbol Z_i}}{(1+e^{\widehat{\boldsymbol \theta_{\boldsymbol Z}}' \boldsymbol Z_i})} \cdot \boldsymbol Z_i'\right\}\\
         & -\frac{1}{N}\sum_{i=1}^N  \widehat{G_{\boldsymbol \alpha^{\text{mult}}}}\cdot \widehat{H}^{-1} \cdot \frac{S^{\text{mult}}_i}{\pi(\boldsymbol X^{\text{mult}}_i,\widehat{\boldsymbol \alpha^{\text{mult}}})} \cdot\frac{S_{\text{ext},i}}{\pi_{\text{ext}}(\boldsymbol M_i)} \begin{pmatrix}
       \pi_1(\boldsymbol X_{1i},\widehat{\boldsymbol \alpha_1})\cdot \boldsymbol X_{1i}\\
     \vdots\\
     \pi_K(\boldsymbol X_{Ki},\widehat{\boldsymbol \alpha_K})\cdot\boldsymbol X_{Ki}\\
         \end{pmatrix}\cdot\left\{D_i\boldsymbol Z_i'-\frac{e^{\widehat{\boldsymbol \theta_{\boldsymbol Z}}'\boldsymbol Z_i}}{(1+e^{\widehat{\boldsymbol \theta_{\boldsymbol Z}}' \boldsymbol Z_i})} \cdot \boldsymbol Z_i'\right\}
         \end{align*}
\begin{align*}
    & \widehat{E_4} = \frac{1}{N}\cdot\widehat{G_{\boldsymbol \alpha^{\text{mult}}}}\cdot \widehat{H}^{-1} \sum_{i=1}^N  \left[\begin{pmatrix}
    S_{1i} \boldsymbol X_{1i}\\
     \vdots\\
      S_{Ki} \boldsymbol X_{Ki}\\
         \end{pmatrix}\cdot(S_{1i} \boldsymbol X_{1i}',\cdot \cdot \cdot,S_{Ki} \boldsymbol X_{Ki}')\right.\\
        & \left.-2\cdot S_{\text{ext},i}\begin{pmatrix}
    \frac{\pi_1(\boldsymbol X_{1i},\widehat{\boldsymbol \alpha_1)}}{\pi_{\text{ext}}(\boldsymbol M_i)}\cdot \boldsymbol X_{1i}\\
     \vdots\\
     \frac{\pi_K(\boldsymbol X_{Ki},\widehat{\boldsymbol \alpha_K)}}{\pi_{\text{ext}}(\boldsymbol M_i)}\cdot \boldsymbol X_{Ki}\\
         \end{pmatrix}\cdot (S_{1i}\boldsymbol X_{1i}',\cdot \cdot \cdot,S_{Ki}\boldsymbol X_{Ki}')\right.\\
    & \left.+ S_{\text{ext},i} \cdot \begin{pmatrix}
    \frac{\pi_1(\boldsymbol X_{1i},\widehat{\boldsymbol \alpha_1)}}{\pi_{\text{ext}}(\boldsymbol M_i)}\cdot \boldsymbol X_{1i}\\
     \vdots\\
     \frac{\pi_K(\boldsymbol X_{Ki},\widehat{\boldsymbol \alpha_K)}}{\pi_{\text{ext}}(\boldsymbol M_i)}\cdot \boldsymbol X_{Ki}\\
         \end{pmatrix}\left\{  \frac{\pi_1(\boldsymbol X_{1i},\widehat{\boldsymbol \alpha_1)}}{\pi_{\text{ext}}(\boldsymbol M_i)}\cdot \boldsymbol X_{1i},\cdot\cdot\cdot, \frac{\pi_K(\boldsymbol X_{Ki},\widehat{\boldsymbol \alpha_K)}}{\pi_{\text{ext}}(\boldsymbol M_i)}\cdot \boldsymbol X_{Ki}\right\}\right]\cdot (\widehat{H}^{-1})'\cdot(\widehat{G_{\boldsymbol \alpha^{\text{mult}}}})'
\end{align*}
$\widehat{E}=\widehat{E}_1-\widehat{E}_2-\widehat{E}_3 + \widehat{E}_4$.
\end{theorem}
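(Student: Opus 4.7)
The strategy is a standard plug-in consistency argument for a sandwich variance estimator, applied term-by-term to the building blocks that make up the asymptotic variance in Theorem S\ref{thm5}. The plan is to first establish that each sample average appearing in the definitions of $\widehat{G_{\boldsymbol \theta_{\boldsymbol Z}}}$, $\widehat{G_{\boldsymbol \alpha}}$, $\widehat{H}$, and $\widehat{E}_1,\ldots,\widehat{E}_4$ converges in probability (uniformly in the nuisance arguments) to its corresponding population expectation, and then combine these via the continuous mapping theorem and Slutsky's theorem.

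First, under Assumption \textit{A}8 every integrand appearing in these empirical averages has finite supremum expectation over $(\boldsymbol \theta_{\boldsymbol Z}, \boldsymbol \alpha) \in \Theta \times \Theta_{\boldsymbol \alpha}$, so a Uniform Law of Large Numbers (ULLN) applies to each one. Second, Theorem S\ref{thm4} already gives $\widehat{\boldsymbol \theta}_{\boldsymbol Z} \xrightarrow{p} \boldsymbol \theta_{\boldsymbol Z}^*$, and the consistency $\widehat{\boldsymbol \alpha} \xrightarrow{p} \boldsymbol \alpha^*$ follows from the standard M-estimation argument applied to the pseudolikelihood estimating equation $\delta_n(\boldsymbol \alpha)=\boldsymbol 0$. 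Combining ULLN with these consistency results, and mimicking the triangle-inequality argument from Theorem S\ref{thm3} (splitting $\|\widehat{T}(\widehat{\boldsymbol \theta}_{\boldsymbol Z}, \widehat{\boldsymbol \alpha}) - T(\boldsymbol \theta_{\boldsymbol Z}^*, \boldsymbol \alpha^*)\|$ into a uniform-deviation piece and a continuity piece), we obtain $\widehat{G_{\boldsymbol \theta_{\boldsymbol Z}}} \xrightarrow{p} G_{\boldsymbol \theta_{\boldsymbol Z}^*}$, $\widehat{G_{\boldsymbol \alpha}} \xrightarrow{p} G_{\boldsymbol \alpha^*}$, and $\widehat{H} \xrightarrow{p} H$.

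The crucial identification is that expanding the outer product in Theorem S\ref{thm5},
\[
(\boldsymbol g + G_{\boldsymbol \alpha^*}\boldsymbol \Psi)(\boldsymbol g + G_{\boldsymbol \alpha^*}\boldsymbol \Psi)', \qquad \boldsymbol \Psi(\boldsymbol \alpha^*) = -H^{-1}\boldsymbol h(\boldsymbol X,\boldsymbol M,\boldsymbol \alpha^*),
\]
produces exactly four expectations that line up with the population limits of $\widehat{E}_1$, $\widehat{E}_2$, $\widehat{E}_3$, $\widehat{E}_4$: namely $\widehat{E}_1$ targets $\mathbb{E}[\boldsymbol g \boldsymbol g']$, $\widehat{E}_2$ and $\widehat{E}_3$ target $\mathbb{E}[\boldsymbol g\, (G_{\boldsymbol \alpha^*} H^{-1} \boldsymbol h)']$ and its transpose, while $\widehat{E}_4$ targets $\mathbb{E}[(G_{\boldsymbol \alpha^*}H^{-1}\boldsymbol h)(G_{\boldsymbol \alpha^*}H^{-1}\boldsymbol h)']$. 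Applying ULLN to each of these sample means individually (justified by the four supremum-expectation conditions in \textit{A}8), and using that $\widehat{G_{\boldsymbol \alpha}}\widehat{H}^{-1}$ factors out as a consistent plug-in for $G_{\boldsymbol \alpha^*} H^{-1}$, gives $\widehat{E}_j \xrightarrow{p}$ the correct limit for $j=1,2,3,4$. By Slutsky and the continuous mapping theorem, $\widehat{E} = \widehat{E}_1 - \widehat{E}_2 - \widehat{E}_3 + \widehat{E}_4$ converges in probability to the middle factor of $V$, and sandwiching by $\widehat{G_{\boldsymbol \theta_{\boldsymbol Z}}}^{-1}$ on both sides (valid since $G_{\boldsymbol \theta_{\boldsymbol Z}^*}$ is assumed invertible) yields $\widehat{V} \xrightarrow{p} V$. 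The scaling by $1/N$ then matches the variance of $\widehat{\boldsymbol \theta}_{\boldsymbol Z}$ via the same argument given at the end of Theorem S\ref{thm3}.

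The main obstacle is bookkeeping rather than analytic: verifying that the cross-product estimator $\widehat{E}_2$, which contains two distinct summations (one over the internal cohort indicators $S_{1i},\ldots,S_{Ki}$ and one over the external sample scaled by $S_{\text{ext},i}/\pi_{\text{ext}}(\boldsymbol M_i)$), genuinely plugs in for $\mathbb{E}[\boldsymbol g(\boldsymbol \theta_{\boldsymbol Z}^*,\boldsymbol \alpha^*)\, \boldsymbol h(\boldsymbol X,\boldsymbol M,\boldsymbol \alpha^*)']$. This requires tracking the two pieces of $\boldsymbol h$, namely $S_k \boldsymbol X_k$ and $(S_{\text{ext}}/\pi_{\text{ext}})\pi_k(\boldsymbol X_k,\boldsymbol \alpha_k^*)\boldsymbol X_k$, through the sandwich with $G_{\boldsymbol \alpha^*}$ block-by-block across cohorts, and confirming that no additional cross-terms appear between internal and external sampling indicators (which follows from Assumption \textit{A}7, that $S_{\text{ext}}$ is independent of the internal selection given $\boldsymbol M$, and from independence of cohort selection given $\boldsymbol X$). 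Once this block-matching is in hand, all remaining steps reduce to routine applications of ULLN and continuous mapping.
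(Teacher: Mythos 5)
Your proposal is correct and follows essentially the same route as the paper's proof: a term-by-term plug-in consistency argument via the ULLN (justified by Assumption \textit{A}8), the triangle-inequality/continuity splitting borrowed from Theorem S3 applied to the joint nuisance parameters, the identification of $\widehat{E}_1,\widehat{E}_2,\widehat{E}_3,\widehat{E}_4$ with the four terms in the expansion of $\mathbb{E}[\{\boldsymbol g+G_{\boldsymbol\alpha^*}\boldsymbol\Psi\}\{\boldsymbol g+G_{\boldsymbol\alpha^*}\boldsymbol\Psi\}']$, and the final sandwich plus $1/N$ scaling step from the end of Theorem S3. The paper's proof is simply a terser version of the same argument, so no further comparison is needed.
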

\begin{proof}
    Under all the assumptions of Theorems S4, S5 and \textit{A}5 in Supplementary Section \ref{sec:assumptions} and using ULLN and Continuous Mapping Theorem, the proof of consistency for each of the following sample quantities are exactly same as the approach in Theorem S3. Using the exact same steps on the joint parameters $\boldsymbol \eta$ instead of $\boldsymbol \theta_{\boldsymbol Z}$ (as in Theorem S3) we obtain
\begin{align*}
     & \widehat{G_{\boldsymbol \theta_{\boldsymbol Z}}}=-\frac{1}{N}\sum_{i=1}^N S^{\text{mult}}_i \cdot \frac{1}{\pi(\boldsymbol X^{\text{mult}}_i,\widehat{\boldsymbol \alpha^{\text{mult}}})}\cdot \frac{e^{\widehat{\boldsymbol \theta_{\boldsymbol Z}}'\boldsymbol Z_i}}{(1+e^{\widehat{\boldsymbol \theta_{\boldsymbol Z}}'\boldsymbol Z_i})^2}\cdot\boldsymbol Z_i \boldsymbol Z_i' \\
     & \xrightarrow{p}G_{\boldsymbol \theta_{\boldsymbol Z}^*} = \mathbb{E}\left\{-\frac{S^{\text{mult}}}{\pi(\boldsymbol X^{\text{mult}},\boldsymbol \alpha^{\text{mult}*})}\cdot \frac{e^{\boldsymbol \theta_{\boldsymbol Z}^{*'}\boldsymbol Z}}{(1+e^{\boldsymbol \theta_{\boldsymbol Z}^{*'}\boldsymbol Z})^2}\cdot \boldsymbol Z\boldsymbol Z'\right\}\cdot
\end{align*}
Similarly we obtain 
\begin{align*}
     &  \widehat{G_{\boldsymbol \alpha^{\text{mult}}}}=-\frac{1}{N}\sum_{i=1}^N \frac{S^{\text{mult}}_i}{\pi(\boldsymbol X^{\text{mult}}_i,\widehat{\boldsymbol \alpha^{\text{mult}}})^2}.\{1-\pi_1(\boldsymbol X_{1i},\widehat{\boldsymbol\alpha_1})\}\cdot \cdot \cdot\{1-\pi_K(\boldsymbol X_{Ki},\widehat{\boldsymbol\alpha_K})\}\cdot\\
    & \hspace{2cm} \boldsymbol Z_i\cdot \{\pi_1(\boldsymbol X_{1i},\widehat{\boldsymbol\alpha_1})\cdot \boldsymbol X_{1i}',\cdot \cdot \cdot,\pi_K(\boldsymbol X_{Ki},\widehat{\boldsymbol\alpha_K})\cdot \boldsymbol X_{Ki}'\}\cdot
   \left\{D_i-\frac{e^{\widehat{\boldsymbol \theta_{\boldsymbol Z}}\boldsymbol Z_i}}{(1+e^{\widehat{\boldsymbol \theta_{\boldsymbol Z}}\boldsymbol Z_i})}\right\}\\
     & \xrightarrow{p} G_{\boldsymbol \alpha^{\text{mult}*}}=\mathbb{E}\left[-\frac{S^{\text{mult}}}{\pi(\boldsymbol X^{\text{mult}},\boldsymbol\alpha^{\text{mult}*})^2}.\{1-\pi_1(\boldsymbol X_1,\boldsymbol \alpha^*_1)\}\cdot \cdot \cdot\{1-\pi_K(\boldsymbol X_K,\boldsymbol \alpha^*_K)\}\cdot\right.\\
    & \hspace{2cm} \left.\boldsymbol Z\cdot \{\pi_1(\boldsymbol X_1,\boldsymbol \alpha_1^*)\cdot \boldsymbol X_1',\cdot \cdot \cdot,\pi_K(\boldsymbol X_K,\boldsymbol \alpha^*_K)\cdot \boldsymbol X_K'\}\cdot
   \left\{D-\frac{e^{\boldsymbol \theta_{\boldsymbol Z}^{*'}\boldsymbol Z}}{(1+e^{\boldsymbol \theta_{\boldsymbol Z}^{*'}\boldsymbol Z})}\right\}\right]
\end{align*}
\setlength\arraycolsep{-20pt}
\begin{align*}
     & \widehat{H} = -\frac{1}{N}\sum_{i=1}^N S_{\text{ext},i} \begin{pmatrix}
     \frac{\pi_1(\boldsymbol X_{1i},\widehat{\boldsymbol \alpha_1)}}{\pi_{\text{ext,i}}(\boldsymbol M_i)}\{1-\pi_1(\boldsymbol X_{1i},\widehat{\boldsymbol \alpha_1)}\} \boldsymbol X_{1i}\boldsymbol X_{1i}' & \hdots & 0 \\
     \vdots & \ddots & \vdots \\
     0 & \hdots &   \frac{\pi_K(\boldsymbol X_{Ki},\widehat{\boldsymbol \alpha_K)}}{\pi_{\text{ext,i}}(\boldsymbol M_i)}\{1-\pi_K(\boldsymbol X_{Ki},\widehat{\boldsymbol \alpha_K)}\} \boldsymbol X_{Ki}\boldsymbol X_{Ki}'
         \end{pmatrix}\\
     &\xrightarrow{p}   H = - \mathbb{E} \begin{pmatrix}
     \frac{\pi_1(\boldsymbol X_1,\boldsymbol \alpha^*_1)}{\pi_{\text{ext}}(\boldsymbol M)}\{1-\pi_1(\boldsymbol X_1,\boldsymbol \alpha^*_1)\} S_{\text{ext}} \boldsymbol X_1\boldsymbol X_1' & \hdots & 0 \\
     \vdots & \ddots & \vdots \\
     0 & \hdots &   \frac{\pi_K(\boldsymbol X_K,\boldsymbol \alpha^*_K)}{\pi_{\text{ext}}(\boldsymbol M)}\{1-\pi_K(\boldsymbol X_K,\boldsymbol \alpha^*_K)\} S_{\text{ext}} \boldsymbol X_K\boldsymbol X_K'
         \end{pmatrix}
\end{align*}
Similarly we obtain
\begin{align*}
     & \widehat{E_1}=\frac{1}{N}\sum_{i =1}^N S^{\text{mult}}_i \cdot \left\{\frac{1}{\pi(\boldsymbol X^{\text{mult}}_i,\widehat{\boldsymbol \alpha^{\text{mult}}})}\right\}^2\left\{D_i-\frac{e^{\widehat{\boldsymbol \theta_{\boldsymbol Z}}'\boldsymbol Z_i}}{(1+e^{\widehat{\boldsymbol \theta_{\boldsymbol Z}}' \boldsymbol Z_i})}\right\}^2\cdot \boldsymbol Z_i \boldsymbol Z_i'\\
     & \xrightarrow{p}  \mathbb{E}\left[S \cdot \left\{\frac{1}{\pi(\boldsymbol X^{\text{mult}},\boldsymbol \alpha^{\text{mult}*})}\right\}^2\cdot \left\{D-\frac{e^{\boldsymbol \theta_{\boldsymbol Z}^{*'}\boldsymbol Z_i}}{(1+e^{\boldsymbol \theta_{\boldsymbol Z}^{*'} \boldsymbol Z})}\right\}^2\cdot \boldsymbol Z\boldsymbol Z'\right]\cdot
\end{align*}
\begin{align*}
     &  \widehat{E}_2 =  \widehat{E}_3'=\frac{1}{N}\sum_{i=1}^N  \widehat{G_{\boldsymbol \alpha^{\text{mult}}}}\cdot \widehat{H}^{-1} \cdot\frac{1}{\pi(\boldsymbol X^{\text{mult}}_i,\widehat{\boldsymbol \alpha^{\text{mult}})}}\begin{pmatrix}
     S_{1i}\boldsymbol X_{1i}\\
     \vdots\\
      S_{Ki}\boldsymbol X_{Ki}\\
         \end{pmatrix}
         \cdot\left\{D_i\boldsymbol Z_i'-\frac{e^{\widehat{\boldsymbol \theta_{\boldsymbol Z}}'\boldsymbol Z_i}}{(1+e^{\widehat{\boldsymbol \theta_{\boldsymbol Z}}' \boldsymbol Z_i})} \cdot \boldsymbol Z_i'\right\}\\
         & -\frac{1}{N}\sum_{i=1}^N  \widehat{G_{\boldsymbol \alpha^{\text{mult}}}}\cdot \widehat{H}^{-1} \cdot \frac{S^{\text{mult}}_i}{\pi(\boldsymbol X^{\text{mult}}_i,\widehat{\boldsymbol \alpha^{\text{mult}}})} \cdot\frac{S_{\text{ext},i}}{\pi_{\text{ext}}(\boldsymbol M_i)} \begin{pmatrix}
       \pi_1(\boldsymbol X_{1i},\widehat{\boldsymbol \alpha_1})\cdot \boldsymbol X_{1i}\\
     \vdots\\
     \pi_K(\boldsymbol X_{Ki},\widehat{\boldsymbol \alpha_K})\cdot\boldsymbol X_{Ki}\\
         \end{pmatrix}\cdot\left\{D_i\boldsymbol Z_i'-\frac{e^{\widehat{\boldsymbol \theta_{\boldsymbol Z}}'\boldsymbol Z_i}}{(1+e^{\widehat{\boldsymbol \theta_{\boldsymbol Z}}' \boldsymbol Z_i})} \cdot \boldsymbol Z_i'\right\}\cdot
\end{align*}
\begin{align*}
    & \widehat{E_4} = \frac{1}{N}\cdot\widehat{G_{\boldsymbol \alpha^{\text{mult}}}}\cdot \widehat{H}^{-1} \sum_{i=1}^N  \left[\begin{pmatrix}
    S_{1i} \boldsymbol X_{1i}\\
     \vdots\\
      S_{Ki} \boldsymbol X_{Ki}\\
         \end{pmatrix}\cdot(S_{1i} \boldsymbol X_{1i}',\cdot \cdot \cdot,S_{Ki} \boldsymbol X_{Ki}')\right.\\
        & \left.-2\cdot S_{\text{ext},i}\begin{pmatrix}
    \frac{\pi_1(\boldsymbol X_{1i},\widehat{\boldsymbol \alpha_1)}}{\pi_{\text{ext}}(\boldsymbol M_i)}\cdot \boldsymbol X_{1i}\\
     \vdots\\
     \frac{\pi_K(\boldsymbol X_{Ki},\widehat{\boldsymbol \alpha_K)}}{\pi_{\text{ext}}(\boldsymbol M_i)}\cdot \boldsymbol X_{Ki}\\
         \end{pmatrix}\cdot (S_{1i}\boldsymbol X_{1i}',\cdot \cdot \cdot,S_{Ki}\boldsymbol X_{Ki}')\right.\\
    & \left.+ S_{\text{ext},i} \cdot \begin{pmatrix}
    \frac{\pi_1(\boldsymbol X_{1i},\widehat{\boldsymbol \alpha_1)}}{\pi_{\text{ext}}(\boldsymbol M_i)}\cdot \boldsymbol X_{1i}\\
     \vdots\\
     \frac{\pi_K(\boldsymbol X_{Ki},\widehat{\boldsymbol \alpha_K)}}{\pi_{\text{ext}}(\boldsymbol M_i)}\cdot \boldsymbol X_{Ki}\\
         \end{pmatrix}\left\{  \frac{\pi_1(\boldsymbol X_{1i},\widehat{\boldsymbol \alpha_1)}}{\pi_{\text{ext}}(\boldsymbol M_i)}\cdot \boldsymbol X_{1i},\cdot\cdot\cdot, \frac{\pi_K(\boldsymbol X_{Ki},\widehat{\boldsymbol \alpha_K)}}{\pi_{\text{ext}}(\boldsymbol M_i)}\cdot \boldsymbol X_{Ki}\right\}\right]\\
    & \hspace{7cm}\cdot (\widehat{H}^{-1})'\cdot(\widehat{G_{\boldsymbol \alpha^{\text{mult}}}})'\\
    & \xrightarrow{p} \mathbb{E}\{G_{\boldsymbol \alpha^{\text{mult}*}}H^{-1}\cdot \boldsymbol h(\boldsymbol \alpha^{\text{mult}*})\cdot \boldsymbol h(\boldsymbol \alpha^{\text{mult}*})'\cdot (H^{-1})'\cdot (G_{\boldsymbol\alpha^{\text{mult}*}})'\} 
\end{align*}
Therefore we obtain
\begin{align*}
    &\widehat{E}=\widehat{E}_1-\widehat{E}_2-\widehat{E}_3 + \widehat{E}_4\xrightarrow{p}\mathbb{E}[\{\boldsymbol g(\boldsymbol \theta_{\boldsymbol Z}^*,\boldsymbol \alpha^{\text{mult}*})+G_{\boldsymbol \alpha^{\text{mult}*}}\cdot \boldsymbol \Psi(\boldsymbol \alpha^{\text{mult}*})\}
    \{\boldsymbol g(\boldsymbol \theta_{\boldsymbol Z}^*,\boldsymbol \alpha^{\text{mult}*})+G_{\boldsymbol \alpha^{\text{mult}*}}.\boldsymbol \Psi(\boldsymbol \alpha^{\text{mult}*})\}']\cdot
\end{align*}
Using all the above results we obtain $\widehat{G_{\boldsymbol \theta_{\boldsymbol Z}}}^{-1}\cdot\widehat{E}\cdot(\widehat{G_{\boldsymbol \theta_{\boldsymbol Z}}}^{-1})^{'}$
\begin{align*}
    &\xrightarrow{p}(G_{\boldsymbol \theta_{\boldsymbol Z}^*})^{-1}\cdot \mathbb{E}[\{\boldsymbol g(\boldsymbol \theta_{\boldsymbol Z}^*,\boldsymbol \alpha^{\text{mult}*})+G_{\boldsymbol \alpha^{\text{mult}*}}\cdot \boldsymbol \Psi(\boldsymbol \alpha^{\text{mult}*})\}\{\boldsymbol g(\boldsymbol \theta_{\boldsymbol Z}^*,\boldsymbol \alpha^{\text{mult}*})+G_{\boldsymbol \alpha^{\text{mult}*}}.\boldsymbol \Psi(\boldsymbol \alpha^{\text{mult}*})\}'].(G_{\boldsymbol \theta_{\boldsymbol Z}^*}^{-1})^{'}\cdot
\end{align*}
From Theorem S5 using the same approach used in the last step of Theorem S3, we obtain that $\frac{1}{N}\cdot\widehat{G_{\boldsymbol \theta_{\boldsymbol Z}}}^{-1}\cdot\hat{E}\cdot(\widehat{G_{\boldsymbol \theta_{\boldsymbol Z}}}^{-1})^{'}$ is a consistent estimator of the asymptotic variance of $\widehat{\boldsymbol \theta}_{\boldsymbol Z}$.
\end{proof}
\section{Joint Simplex Regression (JSR)}\label{sec:jsr}
We extend the method of Simplex Regression (SR) \citep{kundu2024framework,beesley2022statistical,barndorff1991some} to multiple cohorts using the joint approach. The main idea underlying this method is based on the identity
\begin{equation}
       \pi_k(\boldsymbol X_k)=P(S_k=1|\boldsymbol X_k)=P(S_{\text{ext}}=1|\boldsymbol X_k)\cdot \left(\frac{p_{11k}(\boldsymbol X_k)+p_{10k}(\boldsymbol X_k)}{p_{11k}(\boldsymbol X_k)+p_{01k}(\boldsymbol X_k)}\right).\label{eq:eqs6}
 \end{equation}
where, $p_{ijk}(\boldsymbol X_k)=P(S_k=i,S_{\text{ext}}=j|\boldsymbol X_k,S_k=1\text{  or   }S_{\text{ext}}=1)$. In equation \eqref{eq:eqs6},
we estimate $P(S_{\text{ext}}=1|\boldsymbol X_k)$ and $p_{11k}(\boldsymbol X_k), p_{10k}(\boldsymbol X_k), p_{01k}(\boldsymbol X_k)$ using Simplex Regression and Multinomial Regression respectively. The Simple Regression step models the dependency of $P(S_{\text{ext}}=1|\boldsymbol X_k)$ on $\boldsymbol X_k$. It is fitted on the external probability sample where $\pi_{\text{ext}}$ serve as the response variable. We predict $P(S_{\text{ext}}=1|\boldsymbol X_k)$ for all the individuals in the $k^{\text{th}}$ cohort using the estimated parameters in the previous step. Therefore this step is based on a stringent assumption that for all individuals in the external probability, $\pi_{\text{ext}}=P(S_{\text{ext}}=1|\boldsymbol X_k)$ $\forall k \in \{1,2,\cdot\cdot,K\}$. On the other hand, $p_{ijk}(\boldsymbol X_k)$ is estimated based on the sample combining external probability and $k^{\text{th}}$ internal non-probability sample. We define a nominal categorical variable with three levels corresponding to different values of $(i,j)$ pairs ($(i,j)=(1,1),(1,0),(0,1)$). An individual with level (1,1) is a member of both samples; (0,1) indicates a member of the exterior sample only, whereas (1,0) corresponds to the internal sample only. The multicategory response is again regressed on the internal selection model variables, $\boldsymbol X_k$ using a multinomial regression model and we obtain estimates of $p_{ijk}(\boldsymbol X_k)$.
Using these estimates, the selection probabilities for the internal sample $k$,  $P(S_k=1|\boldsymbol X_k)$, were estimated from equation \eqref{eq:eqs6}. Consequently, using equation (3.3) of the main text, we calculate the joint selection probability $\widehat{\pi}(\boldsymbol X^{\text{mult}})$ which serves as $\pi(\boldsymbol X^{\text{mult}})$ in the IPW equation (3.4) of the main text.\\

\noindent
For JSR, due to composite nature of the selection model, we use an approximation of the variance ignoring uncertainty in the estimates of the selection model parameters. Hence we used the asymptotic distribution and the corresponding variance estimator provided in Theorems S1 and S2. 
\section{Joint Post Stratification (JPS)}\label{sec:jps}
We extend the method of Post Stratification (PS) \citep{kundu2024framework,beesley2022statistical,holt1979post} to multiple cohorts using the joint approach. 
We assume the joint distribution of the selection variables in the target population, namely $P(\boldsymbol X_k)$ are available to us. In case of continuous selection variables, we can at best expect to have access to joint probabilities of discretized versions of those variables.  Beyond this coarsening, obtaining joint probabilities of a large multivariate set of predictors become extremely challenging. In such cases, several conditional independence assumptions will be needed to specify a joint distribution from sub-conditionals.\\

\noindent
We consider the scenario where both $\boldsymbol Z_{2k}$ and $\boldsymbol W_k$ are continuous variables. Let $\boldsymbol Z_{2k}'$ and $\boldsymbol W_k'$ be the discretized versions of $\boldsymbol Z_{2k}$ and $\boldsymbol W_k$ respectively. The post stratification method estimates the selection probabilities into the internal sample $k$ for each individual by,
$$\pi_k(\boldsymbol X_k)= \frac{P(\boldsymbol X_{k}'|S_k=1)\cdot P(S_k=1)}{P(\boldsymbol X_{k}')}\cdot$$
$P(\boldsymbol X_{k}'|S_k=1)$ is empirically estimated from the internal sample $k$. $P(\boldsymbol X_{k}')$ is the known population level joint distribution for the discretized selection variables $\boldsymbol X_i^{c'}$  obtained from external sources. Finally, $P(S_k=1)$ is empirically estimated by $\frac{n_k}{N}$, where $n_k$ is the size of internal sample $k$. 
Consequently, using equation (3.3) of the main text, we calculate the joint selection probability $\widehat{\pi}(\boldsymbol X^{\text{mult}})$ which serves as $\pi(\boldsymbol X^{\text{mult}})$ in the IPW equation (3.4) of the main text. For JPS, the selection weights are derived from the summary statistics of the target population. Consequently, there is no need to estimate selection model parameters within this framework. Therefore, we utilize the asymptotic distribution and the corresponding consistent variance estimator as laid out in Theorems S1 and S2, which operate under the premise of known selection weights.
\section{Joint Calibration (JCL)}\label{sec:jcl}
\subsection{Method}
We extend the method of Calibration (CL) \citep{kundu2024framework,wu2003optimal} to multiple cohorts using the joint approach. Similar to JPL, we estimate internal selection probabilities $\pi_k(\boldsymbol X_k)$ by a model, $\pi_k(\boldsymbol X_k,\boldsymbol \alpha_k)$ indexed by parameters $\boldsymbol\alpha_k$, when marginal population means of the selection variables $\boldsymbol X_k$ are available from external sources.  We obtain the estimate of $\boldsymbol \alpha_k$ by solving the following calibration equation,
\begin{equation}
    \sum_{i=1}^N\frac{S_{ki}\boldsymbol X_{ki}}{\pi_k(\boldsymbol X_{ki},\boldsymbol \alpha_k)}=\sum_{i=1}^ N\boldsymbol X_{ki}.  \label{eq:eqs7}
\end{equation}
\noindent
Newton-Raphson method is used to solve equation \eqref{eq:eqs7} to estimate $\boldsymbol \alpha_k$ and henceforth obtain $\widehat{\pi_k}(\boldsymbol X_{ki},\boldsymbol \alpha_k)$. We used a logistic specification of $\pi(\boldsymbol X_k,\boldsymbol\alpha_k)$ in our work. Consequently, using equation (3.3) of the main text, we calculate the joint selection probability $\widehat{\pi}(\boldsymbol X^{\text{mult}})$ which serves as $\pi(\boldsymbol X^{\text{mult}})$ in the IPW equation (3.4) of the main text. 
\subsection{Theorem S7}\label{sec:pr7}
\begin{theorem}
     Under condition C1 in the main text and assumptions \textit{A}1 and \textit{A}2 in Supplementary Section \ref{sec:assumptions} and assuming the selection model is correctly specified, that is, $\pi(\boldsymbol X^{\text{mult}},\boldsymbol \alpha^{\text{mult}*})=P(S=1|\boldsymbol X^{\text{mult}})=\pi(\boldsymbol X^{\text{mult}})$ where $\widehat{\boldsymbol \alpha^{\text{mult}}}\xrightarrow{p}\boldsymbol \alpha^{\text{mult}*}$ and $\boldsymbol \alpha^{\text{mult}*}$ is the true value of $\boldsymbol \alpha$, then $\widehat{\boldsymbol \theta}_{\boldsymbol Z}$ estimated using JCL is consistent for $\boldsymbol \theta^*_{\boldsymbol Z}$ as $N\rightarrow \infty$. \label{thm7}
\end{theorem}
\begin{proof}
    In this case, the estimating equation consists of both selection model and disease model parameter estimation. The two step estimating equation is given by
\begin{align*}
     & \delta_n(\boldsymbol \alpha^{\text{mult}})=\begin{pmatrix}
     \delta_{1_n}(\boldsymbol \alpha_1)= \sum_{i=1}^N\frac{S_{1i}\boldsymbol X_{1i}}{\pi_1(\boldsymbol X_{1i},\boldsymbol \alpha_1)}-\sum_{i=1}^ N\boldsymbol X_{1i}\\
     \vdots\\
       \delta_{K_n}(\boldsymbol \alpha_K)=\sum_{i=1}^N\frac{S_{Ki}\boldsymbol X_{Ki}}{\pi_K(\boldsymbol X_{Ki},\boldsymbol \alpha_K)}-\sum_{i=1}^ N\boldsymbol X_{Ki}\\
         \end{pmatrix}\\
     & \phi_n(\boldsymbol \theta_{\boldsymbol Z},\widehat{\boldsymbol \alpha^{\text{mult}}})= \frac{1}{N}\sum_{i=1}^{N}\frac{S^{\text{mult}}_i}{\pi(\boldsymbol X^{\text{mult}}_i,\widehat{\boldsymbol \alpha^{\text{mult}}})}\left\{D_i \boldsymbol Z_i-\frac{e^{\boldsymbol\theta'\boldsymbol Z_i}}{(1+e^{\boldsymbol\theta'\boldsymbol Z_i})}\cdot \boldsymbol Z_i\right\}\cdot
\end{align*}
From \citet{tsiatis2006semiparametric} to show  $\widehat{\boldsymbol \theta_{\boldsymbol Z}}\xrightarrow{p}\boldsymbol \theta_{\boldsymbol Z}^*$ in a two step estimation procedure, we need to prove $\mathbb{E}(\phi_N(\boldsymbol \theta_{\boldsymbol Z}^*,\boldsymbol \alpha^{\text{mult}*}))=\mathbf{0}$. Under correct specification of selection model, $\pi(\boldsymbol X^{\text{mult}},\boldsymbol \alpha^{\text{mult}*})=P(S=1|\boldsymbol X^{\text{mult}})=\pi(\boldsymbol X^{\text{mult}})$ where $\widehat{\boldsymbol \alpha^{\text{mult}}}\xrightarrow{p}\boldsymbol \alpha^{\text{mult}*}$ and $\boldsymbol \alpha^{\text{mult}*}$ is the true value of $\boldsymbol \alpha^{\text{mult}}$. Using this equality, the proof to show that $\mathbb{E}(\phi_N(\boldsymbol \theta_{\boldsymbol Z}^*,\boldsymbol \alpha^{\text{mult}*}))=\mathbf{0}$ is exactly as Theorem S1. Therefore we obtain $\widehat{\boldsymbol \theta}_{\boldsymbol Z}$ is a consistent estimator of $\boldsymbol \theta_{\boldsymbol Z}^*$.
\end{proof}
\subsection{Theorem S8}\label{sec:pr8}
\begin{theorem}
    Under assumptions of Theorem S7 and assumption \textit{A}4 in Supplementary Section \ref{sec:assumptions}, suppose $\widehat{\boldsymbol \theta_{\boldsymbol Z}}\xrightarrow{p}\boldsymbol \theta_{\boldsymbol Z}^*$, the asymptotic distribution of $\widehat{\boldsymbol \theta}_{\boldsymbol Z}$ using JCL is given by
\begin{equation}
\sqrt{N}(\widehat{\boldsymbol \theta_{\boldsymbol Z}}-\boldsymbol\theta^*_{\boldsymbol Z})\xrightarrow{d}\mathcal{N}(\mathbf{0},V).
\end{equation}
where
\setlength\arraycolsep{-8pt}
\begin{align*}
    & V=(G_{\boldsymbol \theta_{\boldsymbol Z}^*})^{-1}.\mathbb{E}[\{\boldsymbol g(\boldsymbol \theta_{\boldsymbol Z}^*,\boldsymbol \alpha^{\text{mult}*})+G_{\boldsymbol \alpha^{\text{mult}*}}.\boldsymbol \Psi(\boldsymbol \alpha^{\text{mult}*})\}\{\boldsymbol g(\boldsymbol \theta_{\boldsymbol Z}^*,\boldsymbol \alpha^{\text{mult}*})+G_{\boldsymbol \alpha^{\text{mult}*}}\boldsymbol \Psi(\boldsymbol \alpha^{\text{mult}*})\}'](G_{\boldsymbol \theta_{\boldsymbol Z}^*}^{-1})^{'}\\
    & G_{\boldsymbol \theta_{\boldsymbol Z}^*} = \mathbb{E}\left\{-\frac{S^{\text{mult}}}{\pi(\boldsymbol X^{\text{mult}},\boldsymbol \alpha^{\text{mult}*})}\cdot \frac{e^{\boldsymbol \theta_{\boldsymbol Z}^{*'}\boldsymbol Z}}{(1+e^{\boldsymbol \theta_{\boldsymbol Z}^{*'}\boldsymbol Z})^2}\cdot \boldsymbol Z\boldsymbol Z'\right\}, \quad \boldsymbol g(\boldsymbol \theta_{\boldsymbol Z}^*,\boldsymbol \alpha^{\text{mult}*}) = \frac{S^{\text{mult}`}}{\pi(\boldsymbol X^{\text{mult}},\boldsymbol \alpha^{\text{mult}*})}\left\{D \boldsymbol Z-\frac{e^{\boldsymbol \theta_{\boldsymbol Z}^{*'}\boldsymbol Z}}{(1+e^{\boldsymbol \theta_{\boldsymbol Z}^{*'}\boldsymbol Z})}\cdot \boldsymbol Z\right\}
\end{align*}
\begin{align*}
    & G_{\boldsymbol \alpha^{\text{mult}*}}=\mathbb{E}\left[-\frac{S^{\text{mult}}}{\pi(\boldsymbol X^{\text{mult}},\boldsymbol\alpha^{\text{mult}*})^2}.\{1-\pi_1(\boldsymbol X_1,\boldsymbol \alpha^*_1)\}\cdot \cdot \cdot\{1-\pi_K(\boldsymbol X_K,\boldsymbol \alpha^*_K)\}\cdot\right.\\
    & \hspace{2cm} \left.\boldsymbol Z\cdot \{\pi_1(\boldsymbol X_1,\boldsymbol \alpha_1^*)\cdot \boldsymbol X_1',\cdot \cdot \cdot,\pi_K(\boldsymbol X_K,\boldsymbol \alpha^*_K)\cdot \boldsymbol X_K'\}\cdot
   \left\{D-\frac{e^{\boldsymbol \theta_{\boldsymbol Z}^{*'}\boldsymbol Z}}{(1+e^{\boldsymbol \theta_{\boldsymbol Z}^{*'}\boldsymbol Z})}\right\}\right]\\
    & \boldsymbol h(\boldsymbol X^{\text{mult}},\boldsymbol \alpha^{\text{mult}*}) = \begin{pmatrix}
   \frac{S_1\boldsymbol X_1}{\pi_1(\boldsymbol X_1,\boldsymbol \alpha_1)}-\boldsymbol X_1\\
     \vdots\\
         \frac{S_K\boldsymbol X_K}{\pi_K(\boldsymbol X_K,\boldsymbol \alpha_K)}-\boldsymbol X_K\\
         \end{pmatrix}\\
         &  H = - \mathbb{E} \begin{pmatrix}
     \frac{S_1\boldsymbol X_1\boldsymbol X_1'}{\pi_1(\boldsymbol X_1,\boldsymbol \alpha_1^*)}\cdot \{1- \pi_1(\boldsymbol X_1,\boldsymbol \alpha_1)\} & \hdots & 0 \\
     \vdots & \ddots & \vdots \\
     0 & \hdots &   \frac{S_K\boldsymbol X_K\boldsymbol X_K'}{\pi_K(\boldsymbol X_K,\boldsymbol \alpha_K^*)}\cdot \{1- \pi_1(\boldsymbol X_1,\boldsymbol \alpha_K)\}
         \end{pmatrix}\\
    & \boldsymbol \Psi(\boldsymbol \alpha^{\text{mult}*})=-H^{-1}\cdot  \boldsymbol h(\boldsymbol X^{\text{mult}},\boldsymbol \alpha^{\text{mult}*})
\end{align*} \label{thm8}
\end{theorem}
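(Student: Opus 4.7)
The plan is to treat $(\widehat{\boldsymbol\alpha},\widehat{\boldsymbol\theta}_{\boldsymbol Z})$ as a stacked two-step Z-estimator in the style of \citet{tsiatis2006semiparametric} and mimic the argument used for JPL in Theorem S5, the only substantive change being that the calibration estimating function $\boldsymbol h$ and its Jacobian $H$ are replaced by the JCL versions. Specifically, I would stack the selection equation $\boldsymbol h_N(\boldsymbol\alpha)=N^{-1}\sum_i\boldsymbol h(\boldsymbol X_i,\boldsymbol\alpha)$ with the weighted disease score $\phi_N(\boldsymbol\theta_{\boldsymbol Z},\boldsymbol\alpha)$ and appeal to Theorem S7 for the joint consistency of $(\widehat{\boldsymbol\alpha},\widehat{\boldsymbol\theta}_{\boldsymbol Z})$ at $(\boldsymbol\alpha^*,\boldsymbol\theta_{\boldsymbol Z}^*)$.

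First I would derive the influence function for $\widehat{\boldsymbol\alpha}$. A standard mean-value expansion of $\mathbf{0}=\boldsymbol h_N(\widehat{\boldsymbol\alpha})$ around $\boldsymbol\alpha^*$ gives $\sqrt{N}(\widehat{\boldsymbol\alpha}-\boldsymbol\alpha^*)=-H^{-1}\sqrt{N}\,\boldsymbol h_N(\boldsymbol\alpha^*)+o_p(1)$, so $\widehat{\boldsymbol\alpha}-\boldsymbol\alpha^*$ is asymptotically linear with influence function $\boldsymbol\Psi(\boldsymbol\alpha^*)=-H^{-1}\boldsymbol h(\boldsymbol X,\boldsymbol\alpha^*)$. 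Here $H$ is block diagonal because the calibration equations decouple across cohorts; its $k$-th block is obtained by differentiating $S_k\boldsymbol X_k/\pi_k(\boldsymbol X_k,\boldsymbol\alpha_k)-\boldsymbol X_k$ in $\boldsymbol\alpha_k$, which for the logistic specification gives $-S_k\boldsymbol X_k\boldsymbol X_k'(1-\pi_k)/\pi_k$, matching the stated $H$. I would verify that $\mathbb{E}\{\boldsymbol h(\boldsymbol X,\boldsymbol\alpha^*)\}=\mathbf{0}$ under correct specification (since $\mathbb{E}[S_k\boldsymbol X_k/\pi_k(\boldsymbol X_k)\mid\boldsymbol X_k]=\boldsymbol X_k$), which is what allows $\sqrt{N}\boldsymbol h_N(\boldsymbol\alpha^*)$ to obey a CLT.

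Next I would Taylor-expand the disease score equation. From $\mathbf{0}=\phi_N(\widehat{\boldsymbol\theta}_{\boldsymbol Z},\widehat{\boldsymbol\alpha})$ expanded around $(\boldsymbol\theta_{\boldsymbol Z}^*,\boldsymbol\alpha^*)$, one obtains
\[
\sqrt{N}(\widehat{\boldsymbol\theta}_{\boldsymbol Z}-\boldsymbol\theta_{\boldsymbol Z}^*)=-(G_{\boldsymbol\theta_{\boldsymbol Z}^*})^{-1}\bigl\{\sqrt{N}\,\phi_N(\boldsymbol\theta_{\boldsymbol Z}^*,\boldsymbol\alpha^*)+G_{\boldsymbol\alpha^*}\sqrt{N}(\widehat{\boldsymbol\alpha}-\boldsymbol\alpha^*)\bigr\}+o_p(1).
\]
Substituting the influence function of $\widehat{\boldsymbol\alpha}$ yields the asymptotically linear representation
\[
\sqrt{N}(\widehat{\boldsymbol\theta}_{\boldsymbol Z}-\boldsymbol\theta_{\boldsymbol Z}^*)=-(G_{\boldsymbol\theta_{\boldsymbol Z}^*})^{-1}\cdot\frac{1}{\sqrt{N}}\sum_{i=1}^{N}\bigl\{\boldsymbol g(\boldsymbol\theta_{\boldsymbol Z}^*,\boldsymbol\alpha^*)+G_{\boldsymbol\alpha^*}\boldsymbol\Psi(\boldsymbol\alpha^*)\bigr\}_i+o_p(1),
\]
and the Lindeberg--L\'evy CLT together with Slutsky's theorem delivers the sandwich form $V$ stated in the theorem.

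The remaining step is to compute the derivative matrices explicitly. The expression for $G_{\boldsymbol\theta_{\boldsymbol Z}^*}$ follows exactly as in Theorem S2 by differentiating the logistic score in $\boldsymbol\theta_{\boldsymbol Z}$. The expression for $G_{\boldsymbol\alpha^*}$ is identical to the one derived in Theorem S5 for JPL, because it depends only on how $\pi(\boldsymbol X,\boldsymbol\alpha)$ enters the weighted disease score, and this is unchanged across IPW methods; applying the chain rule to $\pi(\boldsymbol X,\boldsymbol\alpha)=1-\prod_{k=1}^{K}\{1-\pi_k(\boldsymbol X_k,\boldsymbol\alpha_k)\}$ with the logistic derivative $\partial\pi_k/\partial\boldsymbol\alpha_k=\pi_k(1-\pi_k)\boldsymbol X_k$ reproduces the stated formula. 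The main obstacle I anticipate is a careful bookkeeping argument: verifying the uniform regularity conditions (compactness of $\Theta\times\Theta_{\boldsymbol\alpha}$, ULLN for the Jacobian components, envelope integrability analogous to the moment conditions used in Theorems S4--S6) so that the remainder terms in the Taylor expansion are genuinely $o_p(1)$, especially because both $S_k/\pi_k$ and $\pi_k(1-\pi_k)/\pi_k^2$ can be poorly behaved near the boundaries of the parameter space. These conditions are analogues of assumptions \textit{A}8 for JPL and would be invoked here to justify interchanging limits and derivatives and to validate the CLT for $\{\boldsymbol g+G_{\boldsymbol\alpha^*}\boldsymbol\Psi\}_i$.
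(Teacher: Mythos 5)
Your proposal is correct and follows essentially the same route as the paper: the paper's proof also treats $(\widehat{\boldsymbol\alpha},\widehat{\boldsymbol\theta}_{\boldsymbol Z})$ as a two-step Z-estimator in the sense of \citet{tsiatis2006semiparametric}, obtains the sandwich form with influence function $\boldsymbol\Psi(\boldsymbol\alpha^*)=-H^{-1}\boldsymbol h(\boldsymbol X,\boldsymbol\alpha^*)$, and then computes $G_{\boldsymbol\theta_{\boldsymbol Z}^*}$, $G_{\boldsymbol\alpha^*}$ (identical to the JPL case, exactly as you argue), and the block-diagonal calibration Jacobian $H$ by the same differentiations you describe. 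The only difference is presentational: the paper invokes the Tsiatis result as a black box where you unfold the mean-value and Taylor expansions explicitly, which changes nothing of substance.
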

\begin{proof}
By \citet{tsiatis2006semiparametric}'s arguments on a two step Z-estimation problem, under assumptions of Theorem \ref{thm7} and since $\widehat{\boldsymbol \theta_{\boldsymbol Z}}\xrightarrow{p}\boldsymbol \theta_{\boldsymbol Z}^*$, we obtain that 
\begin{align*}
    &\sqrt{N}(\widehat{\boldsymbol \theta_{\boldsymbol Z}}-\boldsymbol\theta_{\boldsymbol Z}^*)\xrightarrow{d}\mathcal{N}(0,(G_{\boldsymbol \theta_{\boldsymbol Z}^*})^{-1}.\mathbb{E}[\{\boldsymbol g(\boldsymbol \theta_{\boldsymbol Z}^*,\boldsymbol \alpha^{\text{mult}*})+G_{\boldsymbol \alpha^{\text{mult}*}}.\boldsymbol \Psi(\boldsymbol \alpha^{\text{mult}*})\}\{\boldsymbol g(\boldsymbol \theta_{\boldsymbol Z}^*,\boldsymbol \alpha^{\text{mult}*})+G_{\boldsymbol \alpha^{\text{mult}*}}.\boldsymbol \Psi(\boldsymbol \alpha^{\text{mult}*})\}'].(G_{\boldsymbol \theta_{\boldsymbol Z}^*}^{-1})^{'}\cdot
\end{align*}

We derive the expression of each of the terms in the above expression.
\begin{align*}
    & G_{\boldsymbol \theta_{\boldsymbol Z}^*}=\left.\mathbb{E}\left\{\frac{\partial g(\boldsymbol \theta_{\boldsymbol Z},\boldsymbol \alpha^{\text{mult}*})}{\partial \boldsymbol \theta_{\boldsymbol Z}}\right\}\right|_{\boldsymbol \theta_{\boldsymbol Z}=\boldsymbol \theta_{\boldsymbol Z}^*}\hspace{1cm} \left.G_{\boldsymbol \alpha^{\text{mult}*}}=\mathbb{E}\left\{\frac{\partial g(\boldsymbol \theta_{\boldsymbol Z}^*,\boldsymbol \alpha^{\text{mult}})}{\partial \boldsymbol \alpha^{\text{mult}}}\right\}\right|_{\boldsymbol\alpha^{\text{mult}}=\boldsymbol \alpha^{\text{mult}*}}\\
    & H = \left.\mathbb{E}\left\{\frac{\partial \boldsymbol h(\boldsymbol \alpha^{\text{mult}})}{\partial \boldsymbol \alpha^{\text{mult}}}\right\}\right|_{\boldsymbol\alpha^{\text{mult}}=\boldsymbol \alpha^{\text{mult}*}}\hspace{1.9cm} \boldsymbol \Psi(\boldsymbol \alpha^{\text{mult}*})=-H^{-1}\boldsymbol h(\boldsymbol \alpha^{\text{mult}*}) \cdot
\end{align*}
First we calculate $G_{\boldsymbol \theta_{\boldsymbol Z}^*}$.
\subsubsection*{Calculation for $G_{\boldsymbol \theta_{\boldsymbol Z}^*}$}
\begin{align*}
    & \left.\frac{\partial \boldsymbol g(\boldsymbol \theta_{\boldsymbol Z},\boldsymbol \alpha^{\text{mult}*})}{\partial \boldsymbol \theta_{\boldsymbol Z}}\right|_{\boldsymbol \theta_{\boldsymbol Z}=\boldsymbol \theta_{\boldsymbol Z}^*}= \left.\frac{\partial}{\partial \boldsymbol \theta_{\boldsymbol Z}}\left[\frac{S^{\text{mult}}}{\pi(\boldsymbol X^{\text{mult}},\boldsymbol \alpha^{\text{mult}*})}\left\{D\boldsymbol Z-\frac{e^{\boldsymbol \theta_{\boldsymbol Z}^{*'}\boldsymbol Z}}{(1+e^{\boldsymbol \theta_{\boldsymbol Z}^{*'}\boldsymbol Z})}\cdot \boldsymbol Z\right\}\right]\right|_{\boldsymbol \theta_{\boldsymbol Z}=\boldsymbol \theta_{\boldsymbol Z}^*}\\
    & = -\frac{S^{\text{mult}}}{\pi(\boldsymbol X^{\text{mult}},\boldsymbol \alpha^{\text{mult}*})} \cdot \frac{e^{\boldsymbol \theta_{\boldsymbol Z}^{*'}\boldsymbol Z}}{(1+e^{\boldsymbol \theta_{\boldsymbol Z}^{*'}\boldsymbol Z})^2}\cdot \boldsymbol Z\boldsymbol Z'\cdot
\end{align*}
Therefore we obtain 
\begin{align*}
    G_{\boldsymbol \theta_{\boldsymbol Z}^*} = \mathbb{E}\left[-\frac{S^{\text{mult}}}{\pi(\boldsymbol X^{\text{mult}},\boldsymbol \alpha^{\text{mult}*})} \cdot \frac{e^{\boldsymbol \theta_{\boldsymbol Z}^{*'}\boldsymbol Z}}{(1+e^{\boldsymbol \theta_{\boldsymbol Z}^{*'}\boldsymbol Z})^2}\cdot \boldsymbol Z \boldsymbol Z'\right]\cdot
\end{align*}
Next we calculate $G_{\boldsymbol \alpha^{\text{mult}*}}$.
\subsubsection*{Calculation for $G_{\boldsymbol \alpha^{\text{mult}*}}$}
\begin{align*}
    & \left.\frac{\partial \boldsymbol g(\boldsymbol \theta_{\boldsymbol Z}^*,\boldsymbol \alpha^{\text{mult}})}{\partial \boldsymbol \alpha^{\text{mult}}}\right|_{\boldsymbol\alpha^{\text{mult}}=\boldsymbol \alpha^{\text{mult}*}}=\left[\left.\frac{\partial \boldsymbol g(\boldsymbol \theta_{\boldsymbol Z}^*,\boldsymbol \alpha^{\text{mult}})}{\partial \boldsymbol \alpha_1}\right|_{\boldsymbol\alpha^{\text{mult}}=\boldsymbol \alpha^{\text{mult}*}},...,\left.\frac{\partial \boldsymbol g(\boldsymbol \theta_{\boldsymbol Z}^*,\boldsymbol \alpha^{\text{mult}})}{\partial \boldsymbol \alpha_K}\right|_{\boldsymbol\alpha^{\text{mult}}=\boldsymbol \alpha^{\text{mult}*}}\right]
\end{align*}
Moreover we obtain $\forall$  $k \{1,2,...,K\}$
\begin{align*}
    & \left.\frac{\partial \boldsymbol g(\boldsymbol \theta_{\boldsymbol Z}^*,\boldsymbol \alpha^{\text{mult}})}{\partial \boldsymbol \alpha_k}\right|_{\boldsymbol\alpha^{\text{mult}}=\boldsymbol \alpha^{\text{mult}*}}\\
    & = \left.\frac{\partial}{\partial \boldsymbol\alpha_k}\left[\frac{S^{\text{mult}}}{\pi(\boldsymbol X^{\text{mult}},\boldsymbol \alpha^{\text{mult}})}\left\{D\boldsymbol Z-\frac{e^{\boldsymbol \theta_{\boldsymbol Z}^{*'}\boldsymbol Z}}{(1+e^{\boldsymbol \theta_{\boldsymbol Z}^{*'}\boldsymbol Z})}\cdot \boldsymbol Z\right\}\right]\right|_{\boldsymbol\alpha^{\text{mult}}=\boldsymbol \alpha^{\text{mult}*}}\\
    & = -\frac{S^{\text{mult}}}{\pi(\boldsymbol X^{\text{mult}},\boldsymbol \alpha^{\text{mult}*})^2}\cdot\left\{D\boldsymbol Z-\frac{e^{\boldsymbol \theta_{\boldsymbol Z}^{*'}\boldsymbol Z}}{(1+e^{\boldsymbol \theta_{\boldsymbol Z}^{*'}\boldsymbol Z})}\cdot \boldsymbol Z\right\}\cdot \left.\frac{\partial \pi(\boldsymbol X^{\text{mult}},\boldsymbol \alpha^{\text{mult}})}{\partial \boldsymbol \alpha_k}\right|_{\boldsymbol\alpha^{\text{mult}}=\boldsymbol \alpha^{\text{mult}*}}\\
    & =-\frac{S^{\text{mult}}}{\pi(\boldsymbol X^{\text{mult}},\boldsymbol\alpha^{\text{mult}*})^2}.\{1-\pi_1(\boldsymbol X_1,\boldsymbol \alpha^*_1)\}\cdot \cdot \cdot\{1-\pi_K(\boldsymbol X_K,\boldsymbol \alpha^*_K)\}\boldsymbol Z\cdot \{\pi_k(\boldsymbol X_k,\boldsymbol \alpha^*_k)\cdot \boldsymbol X_k'\}\cdot
   \left\{D-\frac{e^{\boldsymbol \theta_{\boldsymbol Z}^{*'}\boldsymbol Z}}{(1+e^{\boldsymbol \theta_{\boldsymbol Z}^{*'}\boldsymbol Z})}\right\}
\end{align*}
Therefore we obtain 
\begin{align*}
    & G_{\boldsymbol \alpha^{\text{mult}*}}=\mathbb{E}\left[-\frac{S^{\text{mult}}}{\pi(\boldsymbol X^{\text{mult}},\boldsymbol\alpha^{\text{mult}*})^2}.\{1-\pi_1(\boldsymbol X_1,\boldsymbol \alpha^*_1)\}\cdot \cdot \cdot\{1-\pi_K(\boldsymbol X_K,\boldsymbol \alpha^*_K)\}\right.\\
    & \hspace{2cm} \left.\boldsymbol Z\cdot \{\pi_1(\boldsymbol X_1,\boldsymbol \alpha^*_1)\cdot \boldsymbol X_1',\cdot \cdot \cdot,\pi_K(\boldsymbol X_K,\boldsymbol \alpha^*_K)\cdot \boldsymbol X_K'\}\cdot
   \left\{D-\frac{e^{\boldsymbol \theta_{\boldsymbol Z}^{*'}\boldsymbol Z}}{(1+e^{\boldsymbol \theta_{\boldsymbol Z}^{*'}\boldsymbol Z})}\right\}\right]
\end{align*}
Next we calculate $\Psi(\boldsymbol \alpha^{\text{mult}*})$.
\setlength\arraycolsep{-8pt}
\subsubsection*{Calculation for $\boldsymbol\Psi(\boldsymbol \alpha^{\text{mult}*})$}
\begin{align*}
   & \left.\frac{ \partial \boldsymbol h(\boldsymbol \alpha^{\text{mult}})}{\partial \boldsymbol \alpha^{\text{mult}}}\right|_{\boldsymbol \alpha=\boldsymbol \alpha^{\text{mult}*}}=\begin{pmatrix}
     \left.\frac{ \partial \boldsymbol h_1(\boldsymbol \alpha_1)}{\partial \boldsymbol \alpha_1}\right|_{\boldsymbol \alpha_1=\boldsymbol \alpha^*_1} & \hdots & 0 \\
     \vdots & \ddots & \vdots \\
     0 & \hdots &  \left.\frac{ \partial \boldsymbol h_K(\boldsymbol \alpha_K)}{\partial \boldsymbol \alpha_K}\right|_{\boldsymbol \alpha_K=\boldsymbol \alpha^*_K}
         \end{pmatrix}
\end{align*}
Moreover we obtain $\forall$  $k \{1,2,...,K\}$
\begin{align*}
    & \left.\frac{ \partial \boldsymbol h_k(\boldsymbol \alpha_k)}{\partial \boldsymbol \alpha_k}\right|_{\boldsymbol \alpha_k=\boldsymbol \alpha^*_k}= \left.\frac{ \partial}{\partial \boldsymbol \alpha_k}\left\{\frac{S_k\boldsymbol X_k}{\pi_k(\boldsymbol X_k,\boldsymbol \alpha_k)}-\boldsymbol X_k\right\}\right|_{\boldsymbol \alpha_k=\boldsymbol \alpha^*_k}=-\frac{S_k\boldsymbol X_k}{\pi_k(\boldsymbol X_k,\boldsymbol \alpha_k)^2}\left.\frac{ \partial}{\partial \boldsymbol \alpha_k}\pi_k(\boldsymbol X_k,\boldsymbol \alpha_k)\right|_{\boldsymbol \alpha_k=\boldsymbol \alpha^*_k}\\
    & = -\frac{S_k\boldsymbol X_k\boldsymbol X_k'}{\pi_k(\boldsymbol X_k,\boldsymbol \alpha^*_k)^2} \pi_k(\boldsymbol X_k,\boldsymbol \alpha^*_k)\{1- \pi_k(\boldsymbol X_k,\boldsymbol \alpha^*_k)\}\cdot
\end{align*}
\setlength\arraycolsep{-20pt}
This implies
\begin{align*}
    H = - \mathbb{E} \begin{pmatrix}
    \frac{S_1\boldsymbol X_1\boldsymbol X_1'}{\pi_1(\boldsymbol X_1,\boldsymbol \alpha^*_1)^2} \pi_1(\boldsymbol X_1,\boldsymbol \alpha^*_1)\{1- \pi_1(\boldsymbol X_1,\boldsymbol \alpha^*_1)\} & \hdots & 0 \\
     \vdots & \ddots & \vdots \\
     0 & \hdots &   \frac{S_K\boldsymbol X_K\boldsymbol X_K'}{\pi_K(\boldsymbol X_K,\boldsymbol \alpha^*_K)^2}\pi_K(\boldsymbol X_K,\boldsymbol \alpha^*_K)\{1- \pi_K(\boldsymbol X_K,\boldsymbol \alpha^*_K)\}
         \end{pmatrix}
\end{align*}
This gives the asymptotic distribution of $\widehat{\boldsymbol \theta}_{\boldsymbol Z}$ for CL.
\end{proof}
\subsection{Theorem S9}\label{sec:pr9}
\begin{theorem}
    Under all the assumptions of Theorems S7, S8 and \textit{A}6 in Supplementary Section \ref{sec:assumptions}, $\frac{1}{N} \cdot \widehat{G_{\boldsymbol \theta_{\boldsymbol Z}}}^{-1}\cdot \widehat{E}\cdot (\widehat{G_{\boldsymbol \theta_{\boldsymbol Z}}}^{-1})^{'}$ is a consistent estimator of the variance of $\widehat{\boldsymbol \theta}_{\boldsymbol Z}$ for JCL where
\begin{align*}
   & \widehat{G_{\boldsymbol \theta_{\boldsymbol Z}}}=-\frac{1}{N}\sum_{i=1}^N S^{\text{mult}}_i \cdot \frac{1}{\pi(\boldsymbol X^{\text{mult}}_i,\widehat{\boldsymbol \alpha^{\text{mult}}})}\cdot \frac{e^{\widehat{\boldsymbol \theta_{\boldsymbol Z}}'\boldsymbol Z_i}}{(1+e^{\widehat{\boldsymbol \theta_{\boldsymbol Z}}'\boldsymbol Z_i})^2}\cdot\boldsymbol Z_i \boldsymbol Z_i'\\
   & \widehat{H} = -\frac{1}{N}\sum_{i=1}^N \begin{pmatrix}
     \frac{S_{1i}\boldsymbol X_{1i}\boldsymbol X_{1i}'}{\pi_1(\boldsymbol X_{1i},\widehat{\boldsymbol\alpha_1})}\cdot \{1- \pi_1(\boldsymbol X_{1i},\widehat{\boldsymbol\alpha_1})\} & \hdots & 0 \\
     \vdots & \ddots & \vdots \\
     0 & \hdots &   \frac{S_{Ki}\boldsymbol X_{Ki}\boldsymbol X_{Ki}'}{\pi_K(\boldsymbol X_{Ki},\widehat{\boldsymbol\alpha_K})}\cdot \{1- \pi_K(\boldsymbol X_{Ki},\widehat{\boldsymbol\alpha_K})\}
         \end{pmatrix}\\
   & \widehat{G_{\boldsymbol \alpha^{\text{mult}}}}=-\frac{1}{N}\sum_{i=1}^N \frac{S^{\text{mult}}_i}{\pi(\boldsymbol X^{\text{mult}}_i,\widehat{\boldsymbol \alpha^{\text{mult}}})^2}.\{1-\pi_1(\boldsymbol X_{1i},\widehat{\boldsymbol\alpha_1})\}\cdot \cdot \cdot\{1-\pi_K(\boldsymbol X_{Ki},\widehat{\boldsymbol\alpha_K})\}\cdot\\
    & \hspace{2cm} \boldsymbol Z_i\cdot \{\pi_1(\boldsymbol X_{1i},\widehat{\boldsymbol\alpha_1})\cdot \boldsymbol X_{1i}',\cdot \cdot \cdot,\pi_K(\boldsymbol X_{Ki},\widehat{\boldsymbol\alpha_K})\cdot \boldsymbol X_{Ki}'\}\cdot
   \left\{D_i-\frac{e^{\widehat{\boldsymbol \theta_{\boldsymbol Z}}\boldsymbol Z_i}}{(1+e^{\widehat{\boldsymbol \theta_{\boldsymbol Z}}\boldsymbol Z_i})}\right\}
   \end{align*}
\begin{align*}
   & \widehat{E_1}=\frac{1}{N}\sum_{i =1}^N S^{\text{mult}}_i \cdot \left\{\frac{1}{\pi(\boldsymbol X^{\text{mult}}_i,\widehat{\boldsymbol \alpha^{\text{mult}}})}\right\}^2\left\{D_i-\frac{e^{\widehat{\boldsymbol \theta_{\boldsymbol Z}}'\boldsymbol Z_i}}{(1+e^{\widehat{\boldsymbol \theta_{\boldsymbol Z}}' \boldsymbol Z_i})}\right\}^2\cdot \boldsymbol Z_i \boldsymbol Z_i'\\
    & \widehat{E}_2 =  \widehat{E}_3'=\frac{1}{N}\sum_{i=1}^N  \widehat{G_{\boldsymbol \alpha^{\text{mult}}}} \widehat{H}^{-1} \frac{1}{\pi(\boldsymbol X^{\text{mult}}_i,\widehat{\boldsymbol \alpha^{\text{mult}})}}\begin{pmatrix}
     \frac{S_{1i}\boldsymbol X_{1i}}{\pi_1(\boldsymbol X_{1i},\widehat{\boldsymbol \alpha_1)}}\\
     \vdots\\
       \frac{S_{Ki}\boldsymbol X_{Ki}}{\pi_K(\boldsymbol X_{Ki},\widehat{\boldsymbol \alpha_K)}}\\
         \end{pmatrix} \left\{D_i\boldsymbol Z_i'-\frac{e^{\widehat{\boldsymbol \theta_{\boldsymbol Z}}'\boldsymbol Z_i}}{(1+e^{\widehat{\boldsymbol \theta_{\boldsymbol Z}}' \boldsymbol Z_i})}  \boldsymbol Z_i'\right\}\\
         & -\frac{1}{N}\sum_{i=1}^N  \widehat{G_{\boldsymbol \alpha^{\text{mult}}}} \widehat{H}^{-1}  \frac{S^{\text{mult}}_i}{\pi(\boldsymbol X^{\text{mult}}_i,\widehat{\boldsymbol \alpha^{\text{mult}})}} \cdot \begin{pmatrix}
      \boldsymbol X_{1i}\\
     \vdots\\
    \boldsymbol X_{Ki}
         \end{pmatrix}\cdot  \left\{D_i\boldsymbol Z_i'-\frac{e^{\widehat{\boldsymbol \theta_{\boldsymbol Z}}'\boldsymbol Z_i}}{(1+e^{\widehat{\boldsymbol \theta_{\boldsymbol Z}}' \boldsymbol Z_i})} \cdot \boldsymbol Z_i'\right\}
\end{align*}
\begin{align*}
    & \widehat{E_4} = \frac{1}{N}\cdot\widehat{G_{\boldsymbol \alpha^{\text{mult}}}}\cdot \widehat{H}^{-1} \sum_{i=1}^N  \left[\begin{pmatrix}
     \frac{S_{1i}\boldsymbol X_{1i}}{\pi_1(\boldsymbol X_{1i},\widehat{\boldsymbol \alpha_1})}\\
     \vdots\\
       \frac{S_{Ki}\boldsymbol X_{Ki}}{\pi_K(\boldsymbol X_{Ki},\widehat{\boldsymbol \alpha_K})}\\
         \end{pmatrix}\cdot\left(\frac{S_{1i}\boldsymbol X_{1i}'}{\pi_1(\boldsymbol X_{1i},\widehat{\boldsymbol \alpha_1})},\cdot \cdot \cdot, \frac{S_{Ki}\boldsymbol X_{Ki}'}{\pi_K(\boldsymbol X_{Ki},\widehat{\boldsymbol \alpha_K})}\right)\right.\\
        & \left.-2\cdot \begin{pmatrix}
  \boldsymbol X_{1i}\\
  \vdots\\
    \boldsymbol X_{Ki}\\
         \end{pmatrix}\cdot \left(\frac{S_{1i}\boldsymbol X_{1i}'}{\pi_1(\boldsymbol X_{1i},\widehat{\boldsymbol \alpha_1})},\cdot \cdot \cdot, \frac{S_{Ki}\boldsymbol X_{Ki}'}{\pi_K(\boldsymbol X_{Ki},\widehat{\boldsymbol \alpha_K})}\right)+  \frac{S_{i}}{\pi(\boldsymbol X^{\text{mult}}_i,\widehat{\boldsymbol \alpha^{\text{mult}})}} \cdot \begin{pmatrix}
    \boldsymbol X_{1i}\\
     \vdots\\
     \boldsymbol X_{Ki}\\
         \end{pmatrix}(\boldsymbol X_{1i},\cdot \cdot \cdot,\boldsymbol X_{Ki})\right](\widehat{H}^{-1})'\cdot(\widehat{G_{\boldsymbol \alpha^{\text{mult}}}})'\\
          &  \widehat{E}=\widehat{E}_1-\widehat{E}_2-\widehat{E}_3 + \widehat{E}_4
\end{align*}
\end{theorem}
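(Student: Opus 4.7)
The plan is to apply the standard two-step M-estimator consistency recipe from \citet{tsiatis2006semiparametric} (used throughout the earlier supplementary proofs). Compactness of $\Theta$, $\Theta_{\boldsymbol\alpha}$, $\Theta_{\boldsymbol\gamma}$ (A1, A5, A10) together with the randomness and sample-size conditions A2, A3, A6 guarantee that $\widehat{\boldsymbol\alpha}\xrightarrow{p}\boldsymbol\alpha^*$ and $\widehat{\boldsymbol\gamma}\xrightarrow{p}\boldsymbol\gamma^*$ (these being pseudo-true values when the corresponding model is misspecified); a uniform law of large numbers then implies $\phi_N(\boldsymbol\theta_{\boldsymbol Z}, \widehat{\boldsymbol\alpha}, \widehat{\boldsymbol\gamma}) \xrightarrow{p} \mathbb{E}[\phi(\boldsymbol\theta_{\boldsymbol Z}, \boldsymbol\alpha^*, \boldsymbol\gamma^*)]$ uniformly on $\Theta$, where $\phi_N$ denotes the sample DR estimating function in equation~(13) of the main text. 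By the usual Z-estimator argument, consistency of $\widehat{\boldsymbol\theta}_{\boldsymbol Z}$ then reduces to verifying the double-robust population moment identity $\mathbb{E}[\phi(\boldsymbol\theta^*_{\boldsymbol Z}, \boldsymbol\alpha^*, \boldsymbol\gamma^*)] = \mathbf 0$ under either (a) correct selection propensity or (b) correct auxiliary score.

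Decompose $\mathbb{E}[\phi(\boldsymbol\theta^*_{\boldsymbol Z}, \boldsymbol\alpha^*, \boldsymbol\gamma^*)] = T_1 + T_2 + T_3$, where
\begin{align*}
    T_1 &= \mathbb{E}\!\left[\tfrac{S}{\pi(\boldsymbol X, \boldsymbol\alpha^*)}\,\mathcal U(\boldsymbol\theta^*_{\boldsymbol Z})\right],\\
    T_2 &= -\,\mathbb{E}\!\left[\tfrac{S}{\pi(\boldsymbol X, \boldsymbol\alpha^*)}\,\boldsymbol f(\boldsymbol X, \boldsymbol\gamma^*, \boldsymbol\theta^*_{\boldsymbol Z})\right],\\
    T_3 &= \mathbb{E}\!\left[\tfrac{S_{\text{ext}}}{\pi_{\text{ext}}(\boldsymbol M)}\,\boldsymbol f(\boldsymbol X, \boldsymbol\gamma^*, \boldsymbol\theta^*_{\boldsymbol Z})\right] = \mathbb{E}[\boldsymbol f(\boldsymbol X, \boldsymbol\gamma^*, \boldsymbol\theta^*_{\boldsymbol Z})],
\end{align*}
where the final equality in $T_3$ is the Horvitz--Thompson identity for the external probability sample (since $\boldsymbol f$ depends only on $\boldsymbol X$). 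In Case (a), $\pi(\boldsymbol X, \boldsymbol\alpha^*) = P(S=1\mid\boldsymbol X)$; conditioning $T_1$ and $T_2$ on $(\boldsymbol X, \boldsymbol Z_{1\cap})$ and invoking C2 in the form $\mathbb{E}[S\mid\boldsymbol X,\boldsymbol Z_{1\cap}]=\mathbb{E}[S\mid\boldsymbol X]=\pi(\boldsymbol X,\boldsymbol\alpha^*)$ cancels the weights, giving $T_1 = \mathbb{E}[\mathcal U(\boldsymbol\theta^*_{\boldsymbol Z})] = \mathbf 0$ (the logistic score identity at the truth, equation~(1) of the main text) and $T_2 = -\mathbb{E}[\boldsymbol f] = -T_3$, so the three terms sum to zero. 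In Case (b), $\boldsymbol f(\boldsymbol X,\boldsymbol\gamma^*,\boldsymbol\theta^*_{\boldsymbol Z})=\mathbb{E}[\mathcal U(\boldsymbol\theta^*_{\boldsymbol Z})\mid\boldsymbol X]$; combining $T_1+T_2$ and conditioning on $(\boldsymbol X, S)$, the only component of $\mathcal U$ not contained in $\boldsymbol X=(D,\boldsymbol Z_{-1\cap},\boldsymbol W)$ is $\boldsymbol Z_{1\cap}$, and C2 (in the direction $\boldsymbol Z_{1\cap}\independent S\mid\boldsymbol X$) yields $\mathbb{E}[\mathcal U(\boldsymbol\theta^*_{\boldsymbol Z})\mid\boldsymbol X,S]=\mathbb{E}[\mathcal U(\boldsymbol\theta^*_{\boldsymbol Z})\mid\boldsymbol X]=\boldsymbol f(\boldsymbol X,\boldsymbol\gamma^*,\boldsymbol\theta^*_{\boldsymbol Z})$, so the integrand of $T_1+T_2$ vanishes pointwise; finally $T_3=\mathbb{E}[\mathbb{E}[\mathcal U\mid\boldsymbol X]]=\mathbf 0$. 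Both cases give $\mathbb{E}[\phi]=\mathbf 0$.

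The main obstacle is the dual invocation of condition C2: in Case (a) it collapses the inner probability $\mathbb{E}[S\mid\boldsymbol X,\boldsymbol Z_{1\cap}]$ down to $\pi(\boldsymbol X)$, which is what allows the unweighted expectation of $\mathcal U$ to appear after cancellation; in Case (b) it collapses $\mathbb{E}[\mathcal U\mid\boldsymbol X,S]$ to $\mathbb{E}[\mathcal U\mid\boldsymbol X]$, which is what allows the $\boldsymbol f$-subtraction to exactly annihilate $\mathcal U$ after iterated expectation. Either direction would fail without the conditional independence $S\independent\boldsymbol Z_{1\cap}\mid\boldsymbol X$, making C2 the true hinge of the double-robustness here rather than the classical propensity--outcome cancellation seen in MAR-type DR estimators. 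Once the moment identity is established, the Z-estimator argument concludes $\widehat{\boldsymbol\theta}_{\boldsymbol Z}\xrightarrow{p}\boldsymbol\theta^*_{\boldsymbol Z}$ from standard arguments.
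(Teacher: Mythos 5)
Your proposal does not address the statement at hand. The theorem you were asked to prove (Theorem S9 in the paper) asserts that the plug-in sandwich quantity $\frac{1}{N}\,\widehat{G_{\boldsymbol \theta_{\boldsymbol Z}}}^{-1}\cdot \widehat{E}\cdot (\widehat{G_{\boldsymbol \theta_{\boldsymbol Z}}}^{-1})'$ is a \emph{consistent estimator of the asymptotic variance} of $\widehat{\boldsymbol\theta}_{\boldsymbol Z}$ obtained from the Joint Calibration (JCL) method --- an IPW method whose nuisance parameters $\boldsymbol\alpha_k$ are estimated from the calibration equations $\sum_i S_{ki}\boldsymbol X_{ki}/\pi_k(\boldsymbol X_{ki},\boldsymbol\alpha_k)=\sum_i \boldsymbol X_{ki}$. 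What you have written instead is a proof sketch of the double-robustness consistency of the JAIPW \emph{point estimator} (Theorem 1 of the main text): your argument revolves around the DR estimating equation with an auxiliary score model $\boldsymbol f(\boldsymbol X,\boldsymbol\gamma,\boldsymbol\theta_{\boldsymbol Z})$, the nuisance parameter $\boldsymbol\gamma$, condition C2, and the two cases ``propensity correct'' versus ``auxiliary score correct.'' None of these objects appear in Theorem S9: JCL has no auxiliary score model, no $\boldsymbol\gamma$, no condition C2, and the theorem presupposes (via the assumptions of Theorems S7 and S8) that the selection model is \emph{correctly} specified, so there is no double-robustness dichotomy to establish. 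Moreover, even if one granted your moment identity, it would only yield consistency of a point estimator, not consistency of a variance estimator, which is a different claim requiring a different argument.

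The argument the statement actually requires is the following, which is what the paper does. Starting from Theorem S8, which gives $\sqrt{N}(\widehat{\boldsymbol\theta}_{\boldsymbol Z}-\boldsymbol\theta^*_{\boldsymbol Z})\xrightarrow{d}\mathcal{N}(\boldsymbol 0,V)$ with $V=(G_{\boldsymbol\theta^*_{\boldsymbol Z}})^{-1}\,\mathbb{E}[\{\boldsymbol g+G_{\boldsymbol\alpha^*}\boldsymbol\Psi\}\{\boldsymbol g+G_{\boldsymbol\alpha^*}\boldsymbol\Psi\}']\,(G_{\boldsymbol\theta^*_{\boldsymbol Z}}^{-1})'$, one must show that each plug-in block --- $\widehat{G_{\boldsymbol\theta_{\boldsymbol Z}}}$, $\widehat{G_{\boldsymbol\alpha}}$, $\widehat{H}$, and $\widehat{E}_1,\widehat{E}_2,\widehat{E}_3,\widehat{E}_4$ --- converges in probability to its population counterpart in $V$. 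This is done by a uniform law of large numbers over the compact joint parameter space $\Theta\times\Theta_{\boldsymbol\alpha}$ (the envelope/moment conditions needed for the ULLN are exactly what assumption \textit{A}9 supplies; note \textit{A}9 differs from \textit{A}8 precisely because the calibration score $\boldsymbol h$ involves $S_k\boldsymbol X_k/\pi_k$ rather than $S_k\boldsymbol X_k - S_{\text{ext}}\pi_k\boldsymbol X_k/\pi_{\text{ext}}$), combined with the consistency $\widehat{\boldsymbol\theta}_{\boldsymbol Z}\xrightarrow{p}\boldsymbol\theta^*_{\boldsymbol Z}$, $\widehat{\boldsymbol\alpha}\xrightarrow{p}\boldsymbol\alpha^*$ from Theorem S7, a triangle-inequality/continuous-mapping step as in Theorem S3, and finally the identity $\widehat{E}=\widehat{E}_1-\widehat{E}_2-\widehat{E}_3+\widehat{E}_4\xrightarrow{p}\mathbb{E}[\{\boldsymbol g+G_{\boldsymbol\alpha^*}\boldsymbol\Psi\}\{\boldsymbol g+G_{\boldsymbol\alpha^*}\boldsymbol\Psi\}']$, which exploits $\boldsymbol\Psi(\boldsymbol\alpha^*)=-H^{-1}\boldsymbol h(\boldsymbol\alpha^*)$ so that the cross terms $\widehat{E}_2,\widehat{E}_3$ and the quadratic term $\widehat{E}_4$ match the expansion of the outer product. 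The last step converts the asymptotic-distribution statement into a statement about $\mathrm{Var}(\widehat{\boldsymbol\theta}_{\boldsymbol Z})$ scaled by $N$, exactly as in the closing lines of Theorem S3. None of this machinery is present in your proposal, so the gap is not a fixable step but a wholesale substitution of a different theorem's proof.
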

\begin{proof}
    Under all the assumptions of Theorems S7, S8 and \textit{A}6 in Supplementary Section \ref{sec:assumptions} and using ULLN and Continuous Mapping Theorem, the proof of consistency for each of the following sample quantities are exactly same as the approach in Theorem S3. Using the exact same steps on the joint parameters $\boldsymbol \eta$ instead of $\boldsymbol \theta_{\boldsymbol Z}$ (as in Theorem S3) we obtain
\begin{align*}
     & \widehat{G_{\boldsymbol \theta_{\boldsymbol Z}}}=-\frac{1}{N}\sum_{i=1}^N S^{\text{mult}}_i \cdot \frac{1}{\pi(\boldsymbol X^{\text{mult}}_i,\widehat{\boldsymbol \alpha^{\text{mult}}})}\cdot \frac{e^{\widehat{\boldsymbol \theta_{\boldsymbol Z}}'\boldsymbol Z_i}}{(1+e^{\widehat{\boldsymbol \theta_{\boldsymbol Z}}'\boldsymbol Z_i})^2}\cdot\boldsymbol Z_i \boldsymbol Z_i' \\
     & \xrightarrow{p}G_{\boldsymbol \theta_{\boldsymbol Z}^*} = \mathbb{E}\left\{-\frac{S^{\text{mult}}}{\pi(\boldsymbol X^{\text{mult}},\boldsymbol \alpha^{\text{mult}*})}\cdot \frac{e^{\boldsymbol \theta_{\boldsymbol Z}^{*'}\boldsymbol Z}}{(1+e^{\boldsymbol \theta_{\boldsymbol Z}^{*'}\boldsymbol Z})^2}\cdot \boldsymbol Z\boldsymbol Z'\right\}\cdot
\end{align*}
Similarly we obtain 
\begin{align*}
     &  \widehat{G_{\boldsymbol \alpha^{\text{mult}}}}=-\frac{1}{N}\sum_{i=1}^N \frac{S^{\text{mult}}_i}{\pi(\boldsymbol X^{\text{mult}}_i,\widehat{\boldsymbol \alpha^{\text{mult}}})^2}.\{1-\pi_1(\boldsymbol X_{1i},\widehat{\boldsymbol\alpha_1})\}\cdot \cdot \cdot\{1-\pi_K(\boldsymbol X_{Ki},\widehat{\boldsymbol\alpha_K})\}\cdot\\
    & \hspace{2cm} \boldsymbol Z_i\cdot \{\pi_1(\boldsymbol X_{1i},\widehat{\boldsymbol\alpha_1})\cdot \boldsymbol X_{1i}',\cdot \cdot \cdot,\pi_K(\boldsymbol X_{Ki},\widehat{\boldsymbol\alpha_K})\cdot \boldsymbol X_{Ki}'\}\cdot
   \left\{D_i-\frac{e^{\widehat{\boldsymbol \theta_{\boldsymbol Z}}\boldsymbol Z_i}}{(1+e^{\widehat{\boldsymbol \theta_{\boldsymbol Z}}\boldsymbol Z_i})}\right\}\\
     & \xrightarrow{p} G_{\boldsymbol \alpha^{\text{mult}*}}=\mathbb{E}\left[-\frac{S^{\text{mult}}}{\pi(\boldsymbol X^{\text{mult}},\boldsymbol\alpha^{\text{mult}*})^2}.\{1-\pi_1(\boldsymbol X_1,\boldsymbol \alpha^*_1)\}\cdot \cdot \cdot\{1-\pi_K(\boldsymbol X_K,\boldsymbol \alpha^*_K)\}\cdot\right.\\
    & \hspace{2cm} \left.\boldsymbol Z\cdot \{\pi_1(\boldsymbol X_1,\boldsymbol \alpha_1^*)\cdot \boldsymbol X_1',\cdot \cdot \cdot,\pi_K(\boldsymbol X_K,\boldsymbol \alpha^*_K)\cdot \boldsymbol X_K'\}\cdot
   \left\{D-\frac{e^{\boldsymbol \theta_{\boldsymbol Z}^{*'}\boldsymbol Z}}{(1+e^{\boldsymbol \theta_{\boldsymbol Z}^{*'}\boldsymbol Z})}\right\}\right]
\end{align*}
\begin{align*}
     & \widehat{H}= -\frac{1}{N}\sum_{i=1}^N \begin{pmatrix}
     \frac{S_{1i}\boldsymbol X_{1i}\boldsymbol X_{1i}'}{\pi_1(\boldsymbol X_{1i},\widehat{\boldsymbol\alpha_1})}\cdot \{1- \pi_1(\boldsymbol X_{1i},\widehat{\boldsymbol\alpha_1})\} & \hdots & 0 \\
     \vdots & \ddots & \vdots \\
     0 & \hdots &   \frac{S_{Ki}\boldsymbol X_{Ki}\boldsymbol X_{Ki}'}{\pi_K(\boldsymbol X_{Ki},\widehat{\boldsymbol\alpha_K})}\cdot \{1- \pi_K(\boldsymbol X_{Ki},\widehat{\boldsymbol\alpha_K})\}
         \end{pmatrix}\\
     &\xrightarrow{p} H = - \mathbb{E} \begin{pmatrix}
     \frac{S_1\boldsymbol X_1\boldsymbol X_1'}{\pi_1(\boldsymbol X_1,\boldsymbol \alpha_1^*)}\cdot \{1- \pi_1(\boldsymbol X_1,\boldsymbol \alpha^*_1)\} & \hdots & 0 \\
     \vdots & \ddots & \vdots \\
     0 & \hdots &   \frac{S_K\boldsymbol X_K\boldsymbol X_K'}{\pi_K(\boldsymbol X_K,\boldsymbol \alpha_K^*)}\cdot \{1- \pi_1(\boldsymbol X_K,\boldsymbol \alpha^*_K)\}
         \end{pmatrix}
\end{align*}
Similarly we obtain
\begin{align*}
     & \widehat{E_1}=\frac{1}{N}\sum_{i =1}^N S^{\text{mult}}_i \cdot \left\{\frac{1}{\pi(\boldsymbol X^{\text{mult}}_i,\widehat{\boldsymbol \alpha^{\text{mult}}})}\right\}^2\left\{D_i-\frac{e^{\widehat{\boldsymbol \theta_{\boldsymbol Z}}'\boldsymbol Z_i}}{(1+e^{\widehat{\boldsymbol \theta_{\boldsymbol Z}}' \boldsymbol Z_i})}\right\}^2\cdot \boldsymbol Z_i \boldsymbol Z_i'\\
     & \xrightarrow{p}  \mathbb{E}\left[S \cdot \left\{\frac{1}{\pi(\boldsymbol X^{\text{mult}},\boldsymbol \alpha^{\text{mult}*})}\right\}^2\cdot \left\{D-\frac{e^{\boldsymbol \theta_{\boldsymbol Z}^{*'}\boldsymbol Z_i}}{(1+e^{\boldsymbol \theta_{\boldsymbol Z}^{*'} \boldsymbol Z})}\right\}^2\cdot \boldsymbol Z\boldsymbol Z'\right]\cdot
\end{align*}
\begin{align*}
     &  \widehat{E}_2 =  \widehat{E}_3'=\frac{1}{N}\sum_{i=1}^N  \widehat{G_{\boldsymbol \alpha^{\text{mult}}}} \widehat{H}^{-1} \cdot\frac{1}{\pi(\boldsymbol X^{\text{mult}}_i,\widehat{\boldsymbol \alpha^{\text{mult}})}}\begin{pmatrix}
     \frac{S_{1i}\boldsymbol X_{1i}}{\pi_1(\boldsymbol X_{1i},\widehat{\boldsymbol \alpha_1)}}\\
     \vdots\\
       \frac{S_{Ki}\boldsymbol X_{Ki}}{\pi_K(\boldsymbol X_{Ki},\widehat{\boldsymbol \alpha_K)}}\\
         \end{pmatrix}
         \left\{D_i\boldsymbol Z_i'-\frac{e^{\widehat{\boldsymbol \theta_{\boldsymbol Z}}'\boldsymbol Z_i}}{(1+e^{\widehat{\boldsymbol \theta_{\boldsymbol Z}}' \boldsymbol Z_i})}  \boldsymbol Z_i'\right\}\\
         & -\frac{1}{N}\sum_{i=1}^N  \widehat{G_{\boldsymbol \alpha^{\text{mult}}}}\frac{S^{\text{mult}}_i}{\pi(\boldsymbol X^{\text{mult}}_i,\widehat{\boldsymbol \alpha^{\text{mult}})}}\cdot  \widehat{H}^{-1} \cdot \begin{pmatrix}
      \boldsymbol X_{1i}\\
     \vdots\\
    \boldsymbol X_{Ki}
         \end{pmatrix}\left\{D_i\boldsymbol Z_i'-\frac{e^{\widehat{\boldsymbol \theta_{\boldsymbol Z}}'\boldsymbol Z_i}}{(1+e^{\widehat{\boldsymbol \theta_{\boldsymbol Z}}' \boldsymbol Z_i})}  \boldsymbol Z_i'\right\}\xrightarrow{p} \mathbb{E}\{G_{\boldsymbol \alpha^{\text{mult}*}} H^{-1}\boldsymbol h(\boldsymbol \alpha^{\text{mult}*})  \boldsymbol g(\boldsymbol \theta_{\boldsymbol Z}^*,\boldsymbol \alpha^{\text{mult}*})'\}\cdot
\end{align*}
\begin{align*}
    &\widehat{E_4} = \frac{1}{N}\cdot\widehat{G_{\boldsymbol \alpha^{\text{mult}}}}\cdot \widehat{H}^{-1} \sum_{i=1}^N  \left[\begin{pmatrix}
     \frac{S_{1i}\boldsymbol X_{1i}}{\pi_1(\boldsymbol X_{1i},\widehat{\boldsymbol \alpha_1})}\\
     \vdots\\
       \frac{S_{Ki}\boldsymbol X_{Ki}}{\pi_K(\boldsymbol X_{Ki},\widehat{\boldsymbol \alpha_K})}\\
         \end{pmatrix}\cdot\left(\frac{S_{1i}\boldsymbol X_{1i}'}{\pi_1(\boldsymbol X_{1i},\widehat{\boldsymbol \alpha_1})},\cdot \cdot \cdot, \frac{S_{Ki}\boldsymbol X_{Ki}'}{\pi_K(\boldsymbol X_{Ki},\widehat{\boldsymbol \alpha_K})}\right)\right.\\
        & \left.-2\cdot \begin{pmatrix}
  \boldsymbol X_{1i}\\
  \vdots\\
    \boldsymbol X_{Ki}\\
         \end{pmatrix}\cdot \left(\frac{S_{1i}\boldsymbol X_{1i}'}{\pi_1(\boldsymbol X_{1i},\widehat{\boldsymbol \alpha_1})},\cdot \cdot \cdot, \frac{S_{Ki}\boldsymbol X_{Ki}'}{\pi_K(\boldsymbol X_{Ki},\widehat{\boldsymbol \alpha_K})}\right)\right.\\
         & \left.+  \frac{S_{i}}{\pi(\boldsymbol X^{\text{mult}}_i,\widehat{\boldsymbol \alpha^{\text{mult}})}} \cdot \begin{pmatrix}
    \boldsymbol X_{1i}\\
     \vdots\\
     \boldsymbol X_{Ki}\\
         \end{pmatrix}(\boldsymbol X_{1i},\cdot \cdot \cdot,\boldsymbol X_{Ki})\right](\widehat{H}^{-1})'\cdot(\widehat{G_{\boldsymbol \alpha^{\text{mult}}}})'\\
    & \xrightarrow{p} \mathbb{E}\{G_{\boldsymbol \alpha^{\text{mult}*}}H^{-1}\cdot \boldsymbol h(\boldsymbol \alpha^{\text{mult}*})\cdot \boldsymbol h(\boldsymbol \alpha^{\text{mult}*})'\cdot (H^{-1})'\cdot (G_{\boldsymbol\alpha^{\text{mult}*}})'\} 
\end{align*}
Therefore we obtain
\begin{align*}
    &\widehat{E}=\widehat{E}_1-\widehat{E}_2-\widehat{E}_3 + \widehat{E}_4\xrightarrow{p}\mathbb{E}[\{\boldsymbol g(\boldsymbol \theta_{\boldsymbol Z}^*,\boldsymbol \alpha^{\text{mult}*})+G_{\boldsymbol \alpha^{\text{mult}*}}\cdot \boldsymbol \Psi(\boldsymbol \alpha^{\text{mult}*})\}\{\boldsymbol g(\boldsymbol \theta_{\boldsymbol Z}^*,\boldsymbol \alpha^{\text{mult}*})+G_{\boldsymbol \alpha^{\text{mult}*}}.\boldsymbol \Psi(\boldsymbol \alpha^{\text{mult}*})\}']\cdot
\end{align*}
Using all the above results we obtain $\widehat{G_{\boldsymbol \theta_{\boldsymbol Z}}}^{-1}\cdot\widehat{E}\cdot(\widehat{G_{\boldsymbol \theta_{\boldsymbol Z}}}^{-1})^{'}$
\begin{align*}
    &\xrightarrow{p}(G_{\boldsymbol \theta_{\boldsymbol Z}^*})^{-1} \mathbb{E}[\{\boldsymbol g(\boldsymbol \theta_{\boldsymbol Z}^*,\boldsymbol \alpha^{\text{mult}*})+G_{\boldsymbol \alpha^{\text{mult}*}}\boldsymbol \Psi(\boldsymbol \alpha^{\text{mult}*})\}\\
    & \hspace{4cm}\{\boldsymbol g(\boldsymbol \theta_{\boldsymbol Z}^*,\boldsymbol \alpha^{\text{mult}*})+G_{\boldsymbol \alpha^{\text{mult}*}}\boldsymbol \Psi(\boldsymbol \alpha^{\text{mult}*})\}'](G_{\boldsymbol \theta_{\boldsymbol Z}^*}^{-1})^{'}\cdot
\end{align*}
From Theorem S8 using the same approach used in the last step of Theorem S3, we obtain that $\frac{1}{N}\cdot\widehat{G_{\boldsymbol \theta_{\boldsymbol Z}}}^{-1}\cdot\hat{E}\cdot(\widehat{G_{\boldsymbol \theta_{\boldsymbol Z}}}^{-1})^{'}$ is a consistent estimator of the asymptotic variance of $\widehat{\boldsymbol \theta}_{\boldsymbol Z}$.
\end{proof}
\section{JAIPW proofs}\label{sec:jaipwsupp}
\subsection{Proof of Theorem 3.1 : Double Robustness Single Cohort}\label{sec:pr2_10}
\begin{proof}
In this case, the estimating equation consists of the propensity score model, auxiliary score model and disease model parameter estimation. 
We combined the nuisance parameters estimation together, namely $\boldsymbol \alpha$ and $\boldsymbol f(.)$ where $\boldsymbol f(.)$. This is again a two step estimation framework. We need to prove $\mathbb{E}(\phi_N(\boldsymbol \theta_{\boldsymbol Z}^*,\boldsymbol \alpha^{*},\boldsymbol f^*))=\mathbf{0}$ where either $\pi(\boldsymbol{X}_i, \boldsymbol{\alpha})$ is correctly specified or $\boldsymbol f(\boldsymbol{X}_i, \boldsymbol{\theta}_{\boldsymbol{Z}})$ is correctly specified or both and 
$$\phi_N(\boldsymbol \theta_{\boldsymbol Z},\boldsymbol \alpha,\boldsymbol f)=\frac{1}{N} \sum_{i=1}^N \frac{S_i}{\pi(\boldsymbol{X}_i, \boldsymbol{\alpha})}  \left(\mathcal{U}_i(\boldsymbol{\theta}_{\boldsymbol{Z}}) - \boldsymbol f(\boldsymbol{X}_i, \boldsymbol{\theta}_{\boldsymbol{Z}})\right) + \frac{1}{N} \sum_{i=1}^N \frac{S_{\text{ext},i}}{\pi_{\text{ext},i}}\cdot \boldsymbol f(\boldsymbol{X}_i, \boldsymbol{\theta}_{\boldsymbol{Z}})\cdot$$
\noindent
\textbf{Propensity Score Model Correct : } In this case, $P(S=1|\boldsymbol X)=\pi(\boldsymbol X,\boldsymbol \alpha^{*})$ where $\widehat{\boldsymbol \alpha}$ estimated from equation (3.7) of the main text converges in probability to $\boldsymbol \alpha^{*}$. In this case, the auxiliary score model might be incorrectly specified. We define $\boldsymbol f^{*}$ as the value where estimate of equation (3.9) or (3.11) of the main text, namely $\widehat{\boldsymbol f}$ converges in $L_2$ norm. However, one should note $\mathbb{E}(\boldsymbol U(\boldsymbol{\theta}_{\boldsymbol{Z}})|\boldsymbol X)$ might not be equal to $f^*(\boldsymbol X,\boldsymbol \theta_{\boldsymbol Z})$.
\begin{align*}
     \mathbb{E}(\phi_N(\boldsymbol \theta_{\boldsymbol Z}^*,\boldsymbol \alpha^{*},\boldsymbol f^*)) & = \mathbb{E}\left[\frac{1}{N} \sum_{i=1}^N \frac{S_i}{\pi(\boldsymbol X_i,\boldsymbol \alpha^{*})}\cdot(\mathcal{U}_i(\boldsymbol \theta_{\boldsymbol Z}^*)-f^*(\boldsymbol X_i,\boldsymbol \theta_{\boldsymbol Z}^*))+\frac{1}{N}\sum_{i=1}^N \frac{S_{\text{ext},i}}{\pi_{\text{ext},i}(\boldsymbol M_i)}\cdot f^*(\boldsymbol X_i, \boldsymbol \theta_{\boldsymbol Z}^*)\right]
\end{align*}
The above expectation can be rewritten as
\begin{align*}
    \mathbb{E}\left[\frac{1}{N} \sum_{i=1}^N  \frac{S_i}{\pi(\boldsymbol X_i,\boldsymbol \alpha^{*})}\cdot \mathcal{U}_i(\boldsymbol \theta_{\boldsymbol Z}^*)+ \frac{1}{N} \sum_{i=1}^N f^*(\boldsymbol X_i, \boldsymbol \theta_{\boldsymbol Z}^*)\cdot \left(\frac{S_{\text{ext},i}}{\pi_{\text{ext},i}(\boldsymbol M_i)}-\frac{S_i}{\pi(\boldsymbol X_i,\boldsymbol \alpha^{*})}\right)\right]
\end{align*}
The first term of the expectation can be simplified as
\begin{align*}
    \mathbb{E}\left[\frac{1}{N} \sum_{i=1}^N  \frac{S_i}{\pi(\boldsymbol X_i,\boldsymbol \alpha^{*})}\cdot \mathcal{U}_i(\boldsymbol \theta_{\boldsymbol Z}^*)\right]= \frac{1}{N} \sum_{i=1}^N  \mathbb{E}\left[\frac{S_i}{\pi(\boldsymbol X_i,\boldsymbol \alpha^{*})}\cdot \mathcal{U}_i(\boldsymbol \theta_{\boldsymbol Z}^*)\right]
\end{align*}
Since $\pi(\boldsymbol X_i,\boldsymbol \alpha^{*})=\pi(\boldsymbol X_i)$, the above expectation is
\begin{align*}
    \frac{1}{N} \sum_{i=1}^N  \mathbb{E}\left[\frac{S_i}{\pi(\boldsymbol X_i)}\cdot \mathcal{U}_i(\boldsymbol \theta_{\boldsymbol Z}^*)\right]
\end{align*}
Using the same set of arguments as in the proof of Theorem S1 we obtain
\begin{align*}
    \frac{1}{N} \sum_{i=1}^N  \mathbb{E}\left[\frac{S_i}{\pi(\boldsymbol X_i)}\cdot \mathcal{U}_i(\boldsymbol \theta_{\boldsymbol Z}^*)\right]= \frac{1}{N} \sum_{i=1}^N  \mathbb{E}(\mathcal{U}_i(\boldsymbol \theta_{\boldsymbol Z}^*))=0\cdot
\end{align*}
On the other hand the second term of the expectation is
\begin{align*}
    &\mathbb{E}\left[\frac{1}{N} \sum_{i=1}^N f^*(\boldsymbol X_i, \boldsymbol \theta_{\boldsymbol Z}^*)\cdot \left(\frac{S_{\text{ext},i}}{\pi_{\text{ext},i}(\boldsymbol M_i)}-\frac{S_i}{\pi(\boldsymbol X_i,\boldsymbol \alpha^{*})}\right)\right]\\
    = &\frac{1}{N} \sum_{i=1}^N\mathbb{E}\left[f^*(\boldsymbol X_i, \boldsymbol \theta_{\boldsymbol Z}^*)\cdot \left(\frac{S_{\text{ext},i}}{\pi_{\text{ext},i}(M_i)}-\frac{S_i}{\pi(\boldsymbol X_i,\boldsymbol \alpha^{*})}\right)\right]\\
    =& \frac{1}{N} \sum_{i=1}^N \mathbb{E}_{\boldsymbol X_i,\boldsymbol M_i}\left[\mathbb{E}\left\{\left.f^*(\boldsymbol X_i, \boldsymbol \theta_{\boldsymbol Z}^*)\cdot \left(\frac{S_{\text{ext},i}}{\pi_{\text{ext},i}(\boldsymbol M_i)}-\frac{S_i}{\pi(\boldsymbol X_i,\boldsymbol \alpha^{*})}\right)\right|\boldsymbol X_i,\boldsymbol M_i\right\}\right]\\
     = & \frac{1}{N} \sum_{i=1}^N \mathbb{E}_{\boldsymbol X_i,\boldsymbol M_i}\left[f^*(\boldsymbol X_i, \boldsymbol \theta_{\boldsymbol Z}^*) \cdot \left\{\mathbb{E}\left(\left.\frac{S_{\text{ext},i}}{\pi_{\text{ext},i}(\boldsymbol M_i)}\right|\boldsymbol X_i,\boldsymbol M_i\right)-\mathbb{E}\left(\left.\frac{S_i}{\pi(\boldsymbol X_i,\boldsymbol \alpha^{*})}\right|\boldsymbol X_i,\boldsymbol M_i\right)\right\}\right]\\
     = & \frac{1}{N} \sum_{i=1}^N \mathbb{E}_{\boldsymbol X_i,\boldsymbol M_i}\left[f^*(\boldsymbol X_i, \boldsymbol \theta_{\boldsymbol Z}^*) \cdot \left\{\mathbb{E}\left(\left.\frac{S_{\text{ext},i}}{\pi_{\text{ext},i}(\boldsymbol M_i)}\right|\boldsymbol M_i\right)-\mathbb{E}\left(\left.\frac{S_i}{\pi(\boldsymbol X_i,\boldsymbol \alpha^{*})}\right|\boldsymbol X_i\right)\right\}\right]
\end{align*}
Since both $\mathbb{E}\left(\left.\frac{S_{\text{ext},i}}{\pi_{\text{ext},i}(\boldsymbol M_i)}\right|\boldsymbol M_i\right)=1$ and $\mathbb{E}\left(\left.\frac{S_i}{\pi(\boldsymbol X_i,\boldsymbol \alpha^{*})}\right|\boldsymbol X_i\right)=1$, the second term also equals to 0. Hence under correct specification of the selection model, we obtain $\mathbb{E}(\phi_N(\boldsymbol \theta_{\boldsymbol Z}^*,\boldsymbol \alpha^{*},\boldsymbol f^*))=\mathbf{0}$.\\

\noindent
\textbf{Auxiliary Score Model Correct : } Under Condition C2, in this case  $\mathbb{E}(\mathcal{U}(\boldsymbol \theta_{\boldsymbol Z})|\boldsymbol X,S=1)=\mathbb{E}(\mathcal{U}(\boldsymbol \theta_{\boldsymbol Z})|\boldsymbol X)=\boldsymbol f^*(\boldsymbol X,\theta_{\boldsymbol Z})$ where $\widehat{\boldsymbol f}$ estimated from equation (3.9) or (3.11) of the main text converges in $L_2$ norm to $\boldsymbol f^*$. We define $\boldsymbol \alpha^{*}$ as the value where estimate of equation (3.7) of the main text, namely $\widehat{\boldsymbol \alpha}$ converges in probability. However, one should note $\boldsymbol P(S=1|\boldsymbol X)$ might not be equal to $\pi(\boldsymbol X,\boldsymbol \alpha^{*})$.
\begin{align*}
     \mathbb{E}(\phi_N(\boldsymbol \theta_{\boldsymbol Z}^*,\boldsymbol \alpha^{*},\boldsymbol f^*)) & = \mathbb{E}\left[\frac{1}{N} \sum_{i=1}^N \frac{S_i}{\pi(\boldsymbol X_i,\boldsymbol \alpha^{*})}\cdot(\mathcal{U}_i(\boldsymbol \theta_{\boldsymbol Z}^*)-f^*(\boldsymbol X_i, \boldsymbol \theta_{\boldsymbol Z}^*))+\frac{1}{N}\sum_{i=1}^N \frac{S_{\text{ext},i}}{\pi_{\text{ext},i}(\boldsymbol M_i)}\cdot f^*(\boldsymbol X_i, \boldsymbol \theta_{\boldsymbol Z}^*)\right]
\end{align*}
The first term of the expectation can be simplified as
\begin{align*}
     & \mathbb{E}\left[\frac{1}{N} \sum_{i=1}^N \frac{S_i}{\pi(\boldsymbol X_i,\boldsymbol \alpha^{*})}\cdot(\mathcal{U}_i(\boldsymbol \theta_{\boldsymbol Z}^*)-f^*(\boldsymbol X_i, \boldsymbol \theta_{\boldsymbol Z}^*))\right]\\
     & =\frac{1}{N} \sum_{i=1}^N \mathbb{E}\left[\frac{S_i}{\pi(\boldsymbol X_i,\boldsymbol \alpha^{*})}\cdot(\mathcal{U}_i(\boldsymbol \theta_{\boldsymbol Z}^*)-f^*(\boldsymbol X_i, \boldsymbol \theta_{\boldsymbol Z}^*))\right]\\
     &=\frac{1}{N} \sum_{i=1}^N \mathbb{E}_{\boldsymbol X_i}\left[\mathbb{E}\left\{\left.\frac{S_i}{\pi(\boldsymbol X_i,\boldsymbol \alpha^{*})}\cdot(\mathcal{U}_i(\boldsymbol \theta_{\boldsymbol Z}^*)-f^*(\boldsymbol X_i, \boldsymbol \theta_{\boldsymbol Z}^*))\right|\boldsymbol X_i\right\}\right]
\end{align*}
Since $S_i\independent (\mathcal{U}_i(\boldsymbol \theta_{\boldsymbol Z}^*)-f^*(\boldsymbol X_i, \boldsymbol \theta_{\boldsymbol Z}^*))|\boldsymbol X_i $, the above term can be written as
\begin{align*}
    &  \frac{1}{N} \sum_{i=1}^N \mathbb{E}_{\boldsymbol X_i}\left[\mathbb{E}\left\{\left.\frac{S_i}{\pi(\boldsymbol X_i,\boldsymbol \alpha^{*})}\right|\boldsymbol X_i\right\}\cdot \mathbb{E}[(\mathcal{U}_i(\boldsymbol \theta_{\boldsymbol Z}^*)-f^*(\boldsymbol X_i, \boldsymbol \theta_{\boldsymbol Z}^*))|\boldsymbol X_i]\right]\\
    & = \frac{1}{N} \sum_{i=1}^N \mathbb{E}_{\boldsymbol X_i}\left[\frac{\pi(\boldsymbol X_i)}{\pi(\boldsymbol X_i,\boldsymbol \alpha^{*})}\cdot \{ \mathbb{E}[\mathcal{U}_i(\boldsymbol \theta_{\boldsymbol Z}^*)|\boldsymbol X_i]-f^*(\boldsymbol X_i, \boldsymbol \theta_{\boldsymbol Z}^*)\}\right]\\
    & = \frac{1}{N} \sum_{i=1}^N \mathbb{E}_{\boldsymbol X_i}\left[\frac{\pi(\boldsymbol X_i)}{\pi(\boldsymbol X_i,\boldsymbol \alpha^{*})}\cdot \{f^*(\boldsymbol X_i, \boldsymbol \theta_{\boldsymbol Z}^*)-f^*(\boldsymbol X_i, \boldsymbol \theta_{\boldsymbol Z}^*)\}\right]=0\cdot
\end{align*}
This completes the proof for the double robustness property of the proposed estimator.
\end{proof}

\subsection{Proof of Theorem 3.2 : Double Robustness Multi-Cohort}\label{sec:pr2_12}
In this case, under correct specification of either $\pi(\boldsymbol{X}^{\text{mult}}_i, \boldsymbol{\alpha}^{\text{mult}})$ or $\boldsymbol f(\boldsymbol{X}^{\text{mult}}_i, \boldsymbol{\theta}_{\boldsymbol{Z}})$ or both,  $\mathbb{E}(\phi_N(\boldsymbol \theta_{\boldsymbol Z}^*,\boldsymbol \alpha^{\text{mult}*},\boldsymbol f^*))=\mathbf{0}$, where 
$$\phi_N(\boldsymbol \theta_{\boldsymbol Z},\boldsymbol \alpha^{\text{mult}},\boldsymbol f)=\frac{1}{N} \sum_{i=1}^N \frac{S^{\text{mult}}_i}{\pi(\boldsymbol{X}^{\text{mult}}_i,\boldsymbol \alpha^{\text{mult}})}  (\mathcal{U}_i(\boldsymbol{\theta}_{\boldsymbol{Z}}) - \boldsymbol f(\boldsymbol{X}^{\text{mult}}_i, \boldsymbol{\theta}_{\boldsymbol{Z}})) + \frac{1}{N} \sum_{i=1}^N \frac{S_{\text{ext},i}}{\pi_{\text{ext},i}} \boldsymbol f(\boldsymbol{X}^{\text{mult}}_i, \boldsymbol{\theta}_{\boldsymbol{Z}})$$
\noindent
The proof of this theorem is exactly similar to that of Theorem 10, where we just replace the single cohort notations with $S^{\text{mult}},\boldsymbol \alpha^{\text{mult}},\boldsymbol{X}^{\text{mult}}$.

\subsection{Parametric JAIPW: Proof of Theorem 3.3}\label{sec:pr11}

\begin{proof}
Under the assumptions of Theorem 3.2 and assumptions \textit{A}7 in Section S1 and assuming either the propensity score model or the auxiliary score model or both is correctly specified we obtain from Theorem 12, $\widehat{\boldsymbol \theta}_{\boldsymbol Z}$ estimated using equation (3.13) of the main text is consistent for $\boldsymbol \theta^*_{\boldsymbol Z}$ as $N\rightarrow \infty$.
By \citet{tsiatis2006semiparametric}'s arguments on a two step Z-estimation problem, we obtain that 
$$\sqrt{N}(\widehat{\boldsymbol \theta_{\boldsymbol Z}}-\boldsymbol\theta_{\boldsymbol Z}^*)\xrightarrow{d}\mathcal{N}(\mathbf{0},V)\cdot $$
where
\begin{align*}
    & V=(G_{\boldsymbol \theta_{\boldsymbol Z}^*})^{-1}.\mathbb{E}[\{\boldsymbol g(\boldsymbol \theta_{\boldsymbol Z}^*,\boldsymbol \alpha^{\text{mult}*},\boldsymbol \gamma^{\text{mult}*})+G_{\boldsymbol \alpha^{\text{mult}*},\boldsymbol \gamma^{\text{mult}*}}\cdot \boldsymbol \Psi(\boldsymbol \alpha^{\text{mult}*},\boldsymbol \gamma^{\text{mult}*})\}\\
    & \hspace{4cm}\{\boldsymbol g(\boldsymbol \theta_{\boldsymbol Z}^*,\boldsymbol \alpha^{\text{mult}*},\boldsymbol \gamma^{\text{mult}*})+G_{\boldsymbol \alpha^{\text{mult}*},\boldsymbol \gamma^{\text{mult}*}}.\boldsymbol \Psi(\boldsymbol \alpha^{\text{mult}*},\boldsymbol \gamma^{\text{mult}*})\}'].(G_{\boldsymbol \theta_{\boldsymbol Z}^*}^{-1})^{'}
\end{align*}
Next we derive the expression of each of the terms in the above expression.
\begin{align*}
    & G_{\boldsymbol \theta_{\boldsymbol Z}^*}=\left.\mathbb{E}\left\{\frac{\partial g(\boldsymbol \theta_{\boldsymbol Z},\boldsymbol \alpha^{\text{mult}*},\boldsymbol \gamma^{\text{mult}*})}{\partial \boldsymbol \theta_{\boldsymbol Z}}\right\}\right|_{\boldsymbol \theta_{\boldsymbol Z}=\boldsymbol \theta_{\boldsymbol Z}^*}\\
    & G_{\boldsymbol \alpha^{\text{mult}*},\boldsymbol \gamma^{\text{mult}*}}=\left[\left.\mathbb{E}\left\{\frac{\partial g(\boldsymbol \theta_{\boldsymbol Z}^*,\boldsymbol \alpha^{\text{mult}},\boldsymbol \gamma^{\text{mult}*})}{\partial \boldsymbol \alpha^{\text{mult}}}\right|_{\boldsymbol\alpha^{\text{mult}}=\boldsymbol \alpha^{\text{mult}*}}\right\},\left.\mathbb{E}\left\{\frac{\partial g(\boldsymbol \theta_{\boldsymbol Z}^*,\boldsymbol \alpha^{\text{mult}*},\boldsymbol \gamma^{\text{mult}})}{\partial \boldsymbol \alpha^{\text{mult}}}\right|_{\boldsymbol\gamma=\boldsymbol \gamma^{\text{mult}*}}\right\}\right]\\
    & H = \left[\left.\mathbb{E}\left\{\frac{\partial \boldsymbol h(\boldsymbol \alpha^{\text{mult}},\boldsymbol \gamma^{\text{mult}*})}{\partial \boldsymbol \alpha^{\text{mult}}}\right\}\right|_{\boldsymbol\alpha^{\text{mult}}=\boldsymbol \alpha^{\text{mult}*}},\left.\mathbb{E}\left\{\frac{\partial \boldsymbol h(\boldsymbol \alpha^{\text{mult}*},\boldsymbol \gamma^{\text{mult}})}{\partial \boldsymbol \gamma^{\text{mult}}}\right\}\right|_{\boldsymbol\gamma^{\text{mult}}=\boldsymbol\gamma^{\text{mult}*}}\right]\\
    &\boldsymbol \Psi(\boldsymbol \alpha^{\text{mult}*})=-H^{-1}\boldsymbol h(\boldsymbol \alpha^{\text{mult}*},\boldsymbol \gamma^{\text{mult}*}) \cdot
\end{align*}
First we calculate $G_{\boldsymbol \theta_{\boldsymbol Z}^*}$.
\subsubsection*{Calculation for $G_{\boldsymbol \theta_{\boldsymbol Z}^*}$}
\begin{align*}
    & \left.\frac{\partial g(\boldsymbol \theta_{\boldsymbol Z},\boldsymbol \alpha^{\text{mult}*},\boldsymbol \gamma^{\text{mult}*})}{\partial \boldsymbol \theta_{\boldsymbol Z}}\right|_{\boldsymbol \theta_{\boldsymbol Z}=\boldsymbol \theta_{\boldsymbol Z}^*} \\
    & = \frac{\partial}{\partial \boldsymbol \theta_{\boldsymbol Z}}\left[\frac{S^{\text{mult}}}{\pi(\boldsymbol X^{\text{mult}},\boldsymbol \alpha^{\text{mult}*})}\left\{D \boldsymbol Z-\frac{e^{\boldsymbol \theta_{\boldsymbol Z}^{'}\boldsymbol Z}}{(1+e^{\boldsymbol \theta_{\boldsymbol Z}^{'}\boldsymbol Z})}\cdot \boldsymbol Z-\boldsymbol f({\boldsymbol X^{\text{mult}}},\boldsymbol \gamma^{\text{mult}*},\boldsymbol \theta_{\boldsymbol Z})\right\}\left.+\frac{S_{\text{ext}}}{\pi_{\text{ext}}}\cdot f({\boldsymbol X^{\text{mult}}},\boldsymbol \gamma^{\text{mult}*},\boldsymbol \theta_{\boldsymbol Z})\right]\right|_{\boldsymbol \theta_{\boldsymbol Z}=\boldsymbol \theta_{\boldsymbol Z}^*}\\
    & = \frac{S^{\text{mult}}}{\pi(\boldsymbol X^{\text{mult}},\boldsymbol \alpha^{\text{mult}*})}\cdot \left(-\frac{e^{\boldsymbol \theta_{\boldsymbol Z}^{*'}\boldsymbol Z}}{(1+e^{\boldsymbol \theta_{\boldsymbol Z}^{*'}\boldsymbol Z})^2}\cdot \boldsymbol Z\boldsymbol Z'-\left.\frac{\partial \boldsymbol f({\boldsymbol X^{\text{mult}}},\boldsymbol \gamma^{\text{mult}*},\boldsymbol \theta_{\boldsymbol Z})}{\partial \boldsymbol \theta_{\boldsymbol Z}}\right|_{\boldsymbol \theta_{\boldsymbol Z}=\boldsymbol \theta_{\boldsymbol Z}^*}\right)
    +\frac{S_{\text{ext}}}{\pi_{\text{ext}}}\cdot \left.\frac{\partial \boldsymbol f({\boldsymbol X^{\text{mult}}},\boldsymbol \gamma^{\text{mult}*},\boldsymbol \theta_{\boldsymbol Z})}{\partial \boldsymbol \theta_{\boldsymbol Z}}\right|_{\boldsymbol \theta_{\boldsymbol Z}=\boldsymbol \theta_{\boldsymbol Z}^*}
\end{align*}
Therefore we obtain 
\begin{align*}
    &G_{\boldsymbol \theta_{\boldsymbol Z}^*} = \mathbb{E}\left\{\frac{S^{\text{mult}}}{\pi(\boldsymbol X^{\text{mult}},\boldsymbol \alpha^{\text{mult}*})}\cdot \left(-\frac{e^{\boldsymbol \theta_{\boldsymbol Z}^{*'}\boldsymbol Z}}{(1+e^{\boldsymbol \theta_{\boldsymbol Z}^{*'}\boldsymbol Z})^2}\cdot \boldsymbol Z\boldsymbol Z'-\left.\frac{\partial \boldsymbol f({\boldsymbol X^{\text{mult}}},\boldsymbol \gamma^{\text{mult}*},\boldsymbol \theta_{\boldsymbol Z})}{\partial \boldsymbol \theta_{\boldsymbol Z}}\right|_{\boldsymbol \theta_{\boldsymbol Z}=\boldsymbol \theta_{\boldsymbol Z}^*}\right)+\frac{S_{\text{ext}}}{\pi_{\text{ext}}} \left.\frac{\partial \boldsymbol f({\boldsymbol X^{\text{mult}}},\boldsymbol \gamma^{\text{mult}*},\boldsymbol \theta_{\boldsymbol Z})}{\partial \boldsymbol \theta_{\boldsymbol Z}}\right|_{\boldsymbol \theta_{\boldsymbol Z}=\boldsymbol \theta_{\boldsymbol Z}^*}\right\}
\end{align*}
Next we calculate $G_{\boldsymbol \alpha^{\text{mult}*},\boldsymbol \gamma^{\text{mult}*}}$.
\subsubsection*{Calculation for $G_{\boldsymbol \alpha^{\text{mult}*},\boldsymbol \gamma^{\text{mult}*}}$}
\begin{align*}
    & \left.\frac{\partial g(\boldsymbol \theta_{\boldsymbol Z}^*,\boldsymbol \alpha^{\text{mult}},\boldsymbol \gamma^{\text{mult}})}{\partial( \boldsymbol \alpha^{\text{mult}},\boldsymbol \gamma^{\text{mult}})}\right|_{( \boldsymbol \alpha^{\text{mult}},\boldsymbol \gamma^{\text{mult}})(\boldsymbol \alpha^{\text{mult}*},\boldsymbol \gamma^{\text{mult}*})}=\left[\left.\frac{\partial g(\boldsymbol \theta_{\boldsymbol Z}^*,\boldsymbol \alpha^{\text{mult}},\boldsymbol \gamma^{\text{mult}*})}{\partial\boldsymbol \alpha^{\text{mult}}}\right|_{ \boldsymbol \alpha=\boldsymbol \alpha^{\text{mult}*}},\left.\frac{\partial g(\boldsymbol \theta_{\boldsymbol Z}^*,\boldsymbol \alpha^{\text{mult}*},\boldsymbol \gamma^{\text{mult}})}{\partial\boldsymbol \gamma^{\text{mult}}}\right|_{\boldsymbol \gamma^{\text{mult}}=\boldsymbol \gamma^{\text{mult}*}}\right]\\
    & \left.\frac{\partial g(\boldsymbol \theta_{\boldsymbol Z}^*,\boldsymbol \alpha^{\text{mult}},\boldsymbol \gamma^{\text{mult}*})}{\partial\boldsymbol \alpha^{\text{mult}}}\right|_{ \boldsymbol \alpha=\boldsymbol \alpha^{\text{mult}*}}=\left.\frac{\partial}{\partial\boldsymbol \alpha^{\text{mult}}}\left[\frac{S^{\text{mult}}}{\pi(\boldsymbol X^{\text{mult}},\boldsymbol \alpha^{\text{mult}})}\left\{D \boldsymbol Z-\frac{e^{\boldsymbol \theta_{\boldsymbol Z}^{*'}\boldsymbol Z}}{(1+e^{\boldsymbol \theta_{\boldsymbol Z}^{*'}\boldsymbol Z})}\cdot \boldsymbol Z-\boldsymbol f({\boldsymbol X^{\text{mult}}},\boldsymbol \gamma^{\text{mult}*},\boldsymbol \theta_{\boldsymbol Z}^*)\right\}\right.\right.\\
    &\hspace{5cm}\left.\left.+\frac{S_{\text{ext}}}{\pi_{\text{ext}}}\cdot f({\boldsymbol X^{\text{mult}}},\boldsymbol \gamma^{\text{mult}*},\boldsymbol \theta_{\boldsymbol Z}^*)\right]\right|_{\boldsymbol \alpha=\boldsymbol \alpha^{\text{mult}*}}
    \end{align*}
\begin{align*}
    & =-\frac{S^{\text{mult}}}{\pi(\boldsymbol X^{\text{mult}},\boldsymbol \alpha^{\text{mult}*})^2}\left\{D \boldsymbol Z-\frac{e^{\boldsymbol \theta_{\boldsymbol Z}^{*'}\boldsymbol Z}}{(1+e^{\boldsymbol \theta_{\boldsymbol Z}^{*'}\boldsymbol Z})}\cdot \boldsymbol Z-\boldsymbol f({\boldsymbol X^{\text{mult}}},\boldsymbol \gamma^{\text{mult}*},\boldsymbol \theta_{\boldsymbol Z}^*)\right\}\left.\frac{\partial \boldsymbol \pi(\boldsymbol X^{\text{mult}},\boldsymbol \alpha^{\text{mult}*})}{\partial \boldsymbol \alpha^{\text{mult}}}\right|_{\boldsymbol \alpha=\boldsymbol \alpha^{\text{mult}*}}
\end{align*}
\begin{align*}
    & \left.\frac{\partial g(\boldsymbol \theta_{\boldsymbol Z}^*,\boldsymbol \alpha^{\text{mult}*},\boldsymbol \gamma^{\text{mult}})}{\partial\boldsymbol \gamma^{\text{mult}}}\right|_{\boldsymbol \gamma^{\text{mult}}=\boldsymbol \gamma^{\text{mult}*}}\\
    & =\left.\frac{\partial}{\partial\boldsymbol \gamma^{\text{mult}}}\left[\frac{S^{\text{mult}}}{\pi(\boldsymbol X^{\text{mult}},\boldsymbol \alpha^{\text{mult}*})}\left\{D \boldsymbol Z-\frac{e^{\boldsymbol \theta_{\boldsymbol Z}^{*'}\boldsymbol Z}}{(1+e^{\boldsymbol \theta_{\boldsymbol Z}^{*'}\boldsymbol Z})}\cdot \boldsymbol Z-\boldsymbol f({\boldsymbol X^{\text{mult}}},\boldsymbol \gamma^{\text{mult}},\boldsymbol \theta_{\boldsymbol Z}^*)\right\}+\frac{S_{\text{ext}}}{\pi_{\text{ext}}}\cdot f({\boldsymbol X^{\text{mult}}},\boldsymbol \gamma^{\text{mult}},\boldsymbol \theta_{\boldsymbol Z}^*)\right]\right|_{\boldsymbol \gamma^{\text{mult}}=\boldsymbol \gamma^{\text{mult}*}}\\
    &=\left(\frac{S_{\text{ext}}}{\pi_{\text{ext}}}- \frac{S^{\text{mult}}}{\pi(\boldsymbol X^{\text{mult}},\boldsymbol \alpha^{\text{mult}*})}\right)\cdot \left.\frac{\partial \boldsymbol f({\boldsymbol X^{\text{mult}}},\boldsymbol \gamma^{\text{mult}},\boldsymbol \theta_{\boldsymbol Z}^*)}{\partial \boldsymbol \gamma^{\text{mult}}}\right|_{\boldsymbol \gamma^{\text{mult}}=\boldsymbol \gamma^{\text{mult}*}}
\end{align*}
Therefore we obtain 
\begin{align*}
    & G_{\boldsymbol \alpha^{\text{mult}*},\boldsymbol \gamma^{\text{mult}*}}=\mathbb{E}\left[-\frac{S^{\text{mult}}}{\pi(\boldsymbol X^{\text{mult}},\boldsymbol \alpha^{\text{mult}*})^2}\left\{D \boldsymbol Z-\frac{e^{\boldsymbol \theta_{\boldsymbol Z}^{*'}\boldsymbol Z}}{(1+e^{\boldsymbol \theta_{\boldsymbol Z}^{*'}\boldsymbol Z})}\cdot \boldsymbol Z-\boldsymbol f({\boldsymbol X^{\text{mult}}},\boldsymbol \gamma^{\text{mult}*},\boldsymbol \theta_{\boldsymbol Z}^*)\right\}\frac{\partial \boldsymbol \pi(\boldsymbol X^{\text{mult}},\boldsymbol \alpha^{\text{mult}*})}{\partial \boldsymbol \alpha^{\text{mult}}}\right|_{\boldsymbol \alpha^{\text{mult}}=\boldsymbol \alpha^{\text{mult}*}},\\
    & \left.\left(\frac{S_{\text{ext}}}{\pi_{\text{ext}}}- \frac{S^{\text{mult}}}{\pi(\boldsymbol X^{\text{mult}},\boldsymbol \alpha^{\text{mult}*})}\right)\cdot \left.\frac{\partial \boldsymbol f({\boldsymbol X^{\text{mult}}},\boldsymbol \gamma^{\text{mult}},\boldsymbol \theta_{\boldsymbol Z}^*)}{\partial \boldsymbol \gamma^{\text{mult}}}\right|_{\boldsymbol \gamma^{\text{mult}}=\boldsymbol \gamma^{\text{mult}*}}\right]
\end{align*}
Next we calculate $\Psi(\boldsymbol \alpha^{\text{mult}*},\boldsymbol \gamma^{\text{mult}*})$.
\setlength\arraycolsep{-8pt}
\subsubsection*{Calculation for $\boldsymbol\Psi(\boldsymbol \alpha^{\text{mult}*},\boldsymbol \gamma^{\text{mult}*})$}
\begin{align*}
   &  \left.\frac{\partial h(\boldsymbol \alpha^{\text{mult}},\boldsymbol \gamma^{\text{mult}})}{\partial( \boldsymbol \alpha^{\text{mult}},\boldsymbol \gamma^{\text{mult}})}\right|_{( \boldsymbol \alpha^{\text{mult}},\boldsymbol \gamma^{\text{mult}})=(\boldsymbol \alpha^{\text{mult}*},\boldsymbol \gamma^{\text{mult}*})}=\begin{pmatrix}
    \left.\frac{\partial \boldsymbol q(\boldsymbol \alpha^{\text{mult}},\boldsymbol X^{\text{mult}},S,S_{\text{ext}},\pi_{\text{ext}})}{\partial \boldsymbol \alpha^{\text{mult}}}\right|_{\boldsymbol \alpha=\boldsymbol \alpha^{\text{mult}*}} & 0 \\
     0 & \left.\frac{\partial \boldsymbol p(\boldsymbol\gamma^{\text{mult}},\boldsymbol X^{\text{mult}},S,\boldsymbol Z_{1\cap})}{\partial \boldsymbol \alpha^{\text{mult}}}\right|_{\boldsymbol \gamma^{\text{mult}}=\boldsymbol \gamma^{\text{mult}*}}
         \end{pmatrix}
\end{align*}
This implies
\begin{align*}
    H = \mathbb{E} \begin{pmatrix}
    \left.\frac{\partial \boldsymbol q(\boldsymbol \alpha^{\text{mult}},\boldsymbol X^{\text{mult}},S,S_{\text{ext}},\pi_{\text{ext}})}{\partial \boldsymbol \alpha^{\text{mult}}}\right|_{\boldsymbol \alpha=\boldsymbol \alpha^{\text{mult}*}} & 0 \\
     0 & \left.\frac{\partial \boldsymbol p(\boldsymbol\gamma,\boldsymbol X^{\text{mult}},S,\boldsymbol Z_{1\cap})}{\partial \boldsymbol \alpha^{\text{mult}}}\right|_{\boldsymbol \gamma^{\text{mult}}=\boldsymbol \gamma^{\text{mult}*}}
         \end{pmatrix}
\end{align*}
\end{proof}
\subsection{Theorem S10}\label{sec:pr12}
\begin{theorem}
    Under all the assumptions of Theorems 3.2, 3.3 and assumption \textit{A}8 in Supplementary Section \ref{sec:assumptions}, $\frac{1}{N} \cdot \widehat{G_{\boldsymbol \theta_{\boldsymbol Z}}}^{-1}\cdot \widehat{E}\cdot (\widehat{G_{\boldsymbol \theta_{\boldsymbol Z}}}^{-1})^{'}$ is a consistent estimator of the variance of $\widehat{\boldsymbol \theta}_{\boldsymbol Z}$ for JAIPW where
\begin{align*}
   & \widehat{G_{\boldsymbol \theta_{\boldsymbol Z}}}=-\frac{1}{N}\sum_{i=1}^N S^{\text{mult}}_i \cdot \frac{1}{\pi(\boldsymbol X^{\text{mult}}_i,\widehat{\boldsymbol \alpha^{\text{mult}}})}\cdot \left\{\frac{e^{\widehat{\boldsymbol \theta_{\boldsymbol Z}}'\boldsymbol Z_i}}{(1+e^{\widehat{\boldsymbol \theta_{\boldsymbol Z}}'\boldsymbol Z_i})^2}\cdot\boldsymbol Z_i \boldsymbol Z_i' + \left.\frac{\partial \widehat{\boldsymbol f}({\boldsymbol X^{\text{mult}}_i},\boldsymbol \theta_{\boldsymbol Z})}{\partial \boldsymbol \theta_{\boldsymbol Z}}\right|_{\boldsymbol \theta_{\boldsymbol Z}=\widehat{\boldsymbol \theta_{\boldsymbol Z}}}\right\}\\
   & \hspace{3cm} + \frac{1}{N}\sum_{i=1}^N \frac{S_{\text{ext},i}}{\pi_{\text{ext},i}}\cdot  \left.\frac{\partial \widehat{\boldsymbol f}({\boldsymbol X^{\text{mult}}_i},\boldsymbol \theta_{\boldsymbol Z})}{\partial \boldsymbol \theta_{\boldsymbol Z}}\right|_{\boldsymbol \theta_{\boldsymbol Z}=\widehat{\boldsymbol \theta_{\boldsymbol Z}}} \cdot \\
   & \widehat{H} = \frac{1}{N}\sum_{i=1}^N \ \begin{pmatrix}
    \left.\frac{\partial \boldsymbol q(\boldsymbol \alpha^{\text{mult}},\boldsymbol X^{\text{mult}}_i,S^{\text{mult}}_i,S_{\text{ext},i},\pi_{\text{ext}},i)}{\partial \boldsymbol \alpha^{\text{mult}}}\right|_{\boldsymbol \alpha=\widehat{\boldsymbol \alpha^{\text{mult}}}} & 0 \\
     0 & \left.\frac{\partial \boldsymbol p(\boldsymbol\gamma,\boldsymbol X^{\text{mult}}_i,S,\boldsymbol Z_{1\cap,i})}{\partial \boldsymbol \gamma^{\text{mult}}}\right|_{\boldsymbol \gamma^{\text{mult}}=\widehat{\boldsymbol \gamma^{\text{mult}}}}
         \end{pmatrix} \end{align*}
\begin{align*}
   & \widehat{G_{\boldsymbol \alpha^{\text{mult}},\boldsymbol \gamma^{\text{mult}}}}=\left(-\frac{1}{N}\sum_{i=1}^N \frac{S^{\text{mult}}_i}{\pi(\boldsymbol X^{\text{mult}}_i,\widehat{\boldsymbol \alpha^{\text{mult}}})^2}\left\{D_i \boldsymbol Z_i-\frac{e^{\widehat{\boldsymbol \theta}_{\boldsymbol Z}^{'}\boldsymbol Z}}{(1+e^{\widehat{\boldsymbol \theta}_{\boldsymbol Z}^{'}\boldsymbol Z_i})}\boldsymbol Z_i-\boldsymbol f({\boldsymbol X^{\text{mult}}_i},\widehat{\boldsymbol \gamma^{\text{mult}}},\widehat{\boldsymbol \theta_{\boldsymbol Z}})\right\}\cdot \left.\frac{\partial \boldsymbol \pi(\boldsymbol X^{\text{mult}}_i,\boldsymbol \alpha^{\text{mult}*})}{\partial \boldsymbol \alpha^{\text{mult}}}\right|_{\boldsymbol \alpha=\widehat{\boldsymbol \alpha^{\text{mult}}}},\right.\\
   & \left.\hspace{3cm}\frac{1}{N}\sum_{i=1}^N \left(\frac{S_{\text{ext},i}}{\pi_{\text{ext},i}}- \frac{S^{\text{mult}}_i}{\pi(\boldsymbol X^{\text{mult}}_i,\widehat{\boldsymbol \alpha^{\text{mult}}})}\right)\cdot \left.\frac{\partial \boldsymbol f({\boldsymbol X^{\text{mult}}_i},\widehat{\boldsymbol \gamma^{\text{mult}}},\widehat{\boldsymbol \theta_{\boldsymbol Z}})}{\partial \boldsymbol \gamma^{\text{mult}}}\right|_{\boldsymbol \gamma^{\text{mult}}=\widehat{\boldsymbol \gamma^{\text{mult}}}}\right)\\
   & \widehat{E_1}=\frac{1}{N}\sum_{i =1}^N \frac{S^{\text{mult}}_i}{\pi(\boldsymbol X^{\text{mult}}_i,\widehat{\boldsymbol \alpha^{\text{mult}}})^2}\left[D_i\boldsymbol Z_i-\frac{e^{\widehat{\boldsymbol \theta_{\boldsymbol Z}}'\boldsymbol Z_i}}{(1+e^{\widehat{\boldsymbol \theta_{\boldsymbol Z}}' \boldsymbol Z_i})}\cdot \boldsymbol Z_i-f({\boldsymbol X^{\text{mult}}_i},\widehat{\boldsymbol \gamma^{\text{mult}}},\widehat{\boldsymbol \theta_{\boldsymbol Z}})\right]\cdot\\
    & \hspace{3cm}\left[D_i\boldsymbol Z_i-\frac{e^{\widehat{\boldsymbol \theta_{\boldsymbol Z}}'\boldsymbol Z_i}}{(1+e^{\widehat{\boldsymbol \theta_{\boldsymbol Z}}' \boldsymbol Z_i})}\cdot \boldsymbol Z_i-f({\boldsymbol X^{\text{mult}}_i},\widehat{\boldsymbol \gamma^{\text{mult}}},\widehat{\boldsymbol \theta_{\boldsymbol Z}})\right]'\\
    & \hspace{3cm}+ \frac{1}{N}\sum_{i =1}^N \frac{S_{\text{ext},i}}{\pi_{\text{ext},i}^2}\cdot f({\boldsymbol X^{\text{mult}}_i},\widehat{\boldsymbol \gamma^{\text{mult}}},\widehat{\boldsymbol \theta_{\boldsymbol Z}})\cdot f({\boldsymbol X^{\text{mult}}_i},\widehat{\boldsymbol \gamma^{\text{mult}}},\widehat{\boldsymbol \theta_{\boldsymbol Z}})'\\
    & + \frac{1}{N}\sum_{i =1}^N \frac{S^{\text{mult}}_iS_{\text{ext},i}}{\pi(\boldsymbol X^{\text{mult}}_i,\widehat{\boldsymbol \alpha^{\text{mult}}})\pi_{\text{ext},i}}\left[D_i\boldsymbol Z_i-\frac{e^{\widehat{\boldsymbol \theta_{\boldsymbol Z}}'\boldsymbol Z_i}}{(1+e^{\widehat{\boldsymbol \theta_{\boldsymbol Z}}' \boldsymbol Z_i})}\cdot \boldsymbol Z_i-f({\boldsymbol X^{\text{mult}}_i},\widehat{\boldsymbol \gamma^{\text{mult}}},\widehat{\boldsymbol \theta_{\boldsymbol Z}})\right]\\
    &\hspace{1cm}\cdot f({\boldsymbol X^{\text{mult}}_i},\widehat{\boldsymbol \gamma^{\text{mult}}},\widehat{\boldsymbol \theta_{\boldsymbol Z}})'+ \frac{1}{N}\sum_{i =1}^N \frac{S^{\text{mult}}_iS_{\text{ext},i}}{\pi(\boldsymbol X^{\text{mult}}_i,\widehat{\boldsymbol \alpha^{\text{mult}}})\pi_{\text{ext},i}}\cdot f({\boldsymbol X^{\text{mult}}_i},\widehat{\boldsymbol \gamma^{\text{mult}}},\widehat{\boldsymbol \theta_{\boldsymbol Z}})\\
    & \hspace{2cm}\cdot \left[D_i\boldsymbol Z_i-\frac{e^{\widehat{\boldsymbol \theta_{\boldsymbol Z}}'\boldsymbol Z_i}}{(1+e^{\widehat{\boldsymbol \theta_{\boldsymbol Z}}' \boldsymbol Z_i})}\cdot \boldsymbol Z_i-f({\boldsymbol X^{\text{mult}}_i},\widehat{\boldsymbol \gamma^{\text{mult}}},\widehat{\boldsymbol \theta_{\boldsymbol Z}})\right]'\\
    \end{align*}
\begin{align*}
    & \widehat{E}_2 =  \widehat{E}_3'=\frac{1}{N}\sum_{i=1}^N  \widehat{G_{\boldsymbol \alpha^{\text{mult}},\boldsymbol \gamma^{\text{mult}}}}\cdot (-\widehat{H})^{-1} \cdot\frac{1}{\pi(\boldsymbol X^{\text{mult}}_i,\widehat{\boldsymbol \alpha^{\text{mult}})}}\cdot\\
   & \begin{pmatrix}
   q(\widehat{\boldsymbol \alpha^{\text{mult}}},\boldsymbol X^{\text{mult}}_i,S^{\text{mult}}_i,S_{\text{ext},i},\pi_{\text{ext},i})\\
    p(\widehat{\boldsymbol\gamma},\boldsymbol X^{\text{mult}}_i,S^{\text{mult}}_i,\boldsymbol Z_{1\cap,i})
         \end{pmatrix}\left[D_i\boldsymbol Z_i-\frac{e^{\widehat{\boldsymbol \theta_{\boldsymbol Z}}'\boldsymbol Z_i}}{(1+e^{\widehat{\boldsymbol \theta_{\boldsymbol Z}}' \boldsymbol Z_i})}\boldsymbol Z_i-f({\boldsymbol X^{\text{mult}}_i},\widehat{\boldsymbol \gamma^{\text{mult}}},\widehat{\boldsymbol \theta_{\boldsymbol Z}})\right]'\\
         &\hspace{2cm}  +\frac{1}{N}\sum_{i =1}^N \begin{pmatrix}
   q(\widehat{\boldsymbol \alpha^{\text{mult}}},\boldsymbol X^{\text{mult}}_i,S^{\text{mult}}_i,S_{\text{ext},i},\pi_{\text{ext},i})\\
    p(\widehat{\boldsymbol\gamma},\boldsymbol X^{\text{mult}}_i,S^{\text{mult}}_i,\boldsymbol Z_{1\cap,i})
         \end{pmatrix}\frac{S_{\text{ext},i}}{\pi_{\text{ext},i}}\cdot  f({\boldsymbol X^{\text{mult}}_i},\widehat{\boldsymbol \gamma^{\text{mult}}},\widehat{\boldsymbol \theta_{\boldsymbol Z}})'
\end{align*}
\begin{align*}
    & \widehat{E_4} = \frac{1}{N}\cdot\widehat{G_{\boldsymbol \alpha^{\text{mult}}}}\cdot \widehat{H}^{-1} \sum_{i=1}^N  \left[\begin{pmatrix}
   q(\widehat{\boldsymbol \alpha^{\text{mult}}},\boldsymbol X^{\text{mult}}_i,S^{\text{mult}}_i,S_{\text{ext},i},\pi_{\text{ext},i})\\
    p(\widehat{\boldsymbol\gamma},\boldsymbol X^{\text{mult}}_i,S^{\text{mult}}_i,\boldsymbol Z_{1\cap,i})
         \end{pmatrix}\right.\\
         & \left.\cdot (q(\widehat{\boldsymbol \alpha^{\text{mult}}},\boldsymbol X^{\text{mult}}_i,S^{\text{mult}}_i,S_{\text{ext},i},\pi_{\text{ext},i})',p(\widehat{\boldsymbol\gamma},\boldsymbol X^{\text{mult}}_i,S^{\text{mult}}_i,\boldsymbol Z_{1\cap,i}))'\right](\widehat{H}^{-1})'\cdot(\widehat{G_{\boldsymbol \alpha^{\text{mult}}}})'\\
    &  \widehat{E}=\widehat{E}_1-\widehat{E}_2-\widehat{E}_3 + \widehat{E}_4
\end{align*}
\end{theorem}
\begin{proof}
    Under all the assumptions of Theorems 11, 12 and assumption \textit{A}8 in Supplementary Section \ref{sec:assumptions} and using ULLN and Continuous Mapping Theorem, the proof of consistency for each of the following sample quantities are exactly same as the approach in Theorem S3. Using the exact same steps on the joint parameters $\boldsymbol \eta$ instead of $\boldsymbol \theta_{\boldsymbol Z}$ (as in Theorem S3) we obtain
\begin{align*}
    & \widehat{G_{\boldsymbol \theta_{\boldsymbol Z}}}=-\frac{1}{N}\sum_{i=1}^N S^{\text{mult}}_i \frac{1}{\pi(\boldsymbol X^{\text{mult}}_i,\widehat{\boldsymbol \alpha^{\text{mult}}})}\cdot \left\{\frac{e^{\widehat{\boldsymbol \theta_{\boldsymbol Z}}'\boldsymbol Z_i}}{(1+e^{\widehat{\boldsymbol \theta_{\boldsymbol Z}}'\boldsymbol Z_i})^2}\cdot\boldsymbol Z_i \boldsymbol Z_i' \right.\\
    &+\left. \left.\frac{\partial \boldsymbol f({\boldsymbol X^{\text{mult}}_i},\widehat{\boldsymbol \gamma^{\text{mult}}},\boldsymbol \theta_{\boldsymbol Z})}{\partial \boldsymbol \theta_{\boldsymbol Z}}\right|_{\boldsymbol \theta_{\boldsymbol Z}=\widehat{\boldsymbol \theta_{\boldsymbol Z}}}\right\}+ \frac{1}{N}\sum_{i=1}^N \frac{S_{\text{ext},i}}{\pi_{\text{ext},i}}\cdot  \left.\frac{\partial \boldsymbol f({\boldsymbol X^{\text{mult}}_i},\widehat{\boldsymbol \gamma^{\text{mult}}},\boldsymbol \theta_{\boldsymbol Z})}{\partial \boldsymbol \theta_{\boldsymbol Z}}\right|_{\boldsymbol \theta_{\boldsymbol Z}=\widehat{\boldsymbol \theta_{\boldsymbol Z}}} \\
     & \xrightarrow{p}G_{\boldsymbol \theta_{\boldsymbol Z}^*} = \mathbb{E}\left\{\frac{S^{\text{mult}}}{\pi(\boldsymbol X^{\text{mult}},\boldsymbol \alpha^{\text{mult}*})}\cdot \left(-\frac{e^{\boldsymbol \theta_{\boldsymbol Z}^{*'}\boldsymbol Z}}{(1+e^{\boldsymbol \theta_{\boldsymbol Z}^{*'}\boldsymbol Z})^2} \boldsymbol Z\boldsymbol Z'-\left.\frac{\partial \boldsymbol f({\boldsymbol X^{\text{mult}}},\boldsymbol \gamma^{\text{mult}*},\boldsymbol \theta_{\boldsymbol Z})}{\partial \boldsymbol \theta_{\boldsymbol Z}}\right|_{\boldsymbol \theta_{\boldsymbol Z}=\boldsymbol \theta_{\boldsymbol Z}^*}\right)\right.\\
     & \hspace{3cm} \left. +\frac{S_{\text{ext}}}{\pi_{\text{ext}}}\left.\frac{\partial \boldsymbol f({\boldsymbol X^{\text{mult}}},\boldsymbol \gamma^{\text{mult}*},\boldsymbol \theta_{\boldsymbol Z})}{\partial \boldsymbol \theta_{\boldsymbol Z}}\right|_{\boldsymbol \theta_{\boldsymbol Z}=\boldsymbol \theta_{\boldsymbol Z}^*}\right\} \cdot
\end{align*}
Similarly we obtain 
\begin{align*}
      & \widehat{G_{\boldsymbol \alpha^{\text{mult}},\boldsymbol \gamma^{\text{mult}}}}=\left(-\frac{1}{N}\sum_{i=1}^N \frac{S^{\text{mult}}_i}{\pi(\boldsymbol X^{\text{mult}}_i,\widehat{\boldsymbol \alpha^{\text{mult}}})^2}\left\{D_i \boldsymbol Z_i-\frac{e^{\widehat{\boldsymbol \theta}_{\boldsymbol Z}^{'}\boldsymbol Z}}{(1+e^{\widehat{\boldsymbol \theta}_{\boldsymbol Z}^{'}\boldsymbol Z_i})}\boldsymbol Z_i-\boldsymbol f({\boldsymbol X^{\text{mult}}_i},\widehat{\boldsymbol \gamma^{\text{mult}}},\widehat{\boldsymbol \theta_{\boldsymbol Z}})\right\}\right.\\
      & \hspace{4cm}\left.\cdot \left.\frac{\partial \boldsymbol \pi(\boldsymbol X^{\text{mult}}_i,\boldsymbol \alpha^{\text{mult}*})}{\partial \boldsymbol \alpha^{\text{mult}}}\right|_{\boldsymbol \alpha=\widehat{\boldsymbol \alpha^{\text{mult}}}}\right.\\
    &\hspace{1cm} \left.,\frac{1}{N}\sum_{i=1}^N \left(\frac{S_{\text{ext},i}}{\pi_{\text{ext},i}}- \frac{S^{\text{mult}}_i}{\pi(\boldsymbol X^{\text{mult}}_i,\widehat{\boldsymbol \alpha^{\text{mult}}})}\right)\cdot \left.\frac{\partial \boldsymbol f({\boldsymbol X^{\text{mult}}_i},\widehat{\boldsymbol \gamma^{\text{mult}}},\widehat{\boldsymbol \theta_{\boldsymbol Z}})}{\partial \boldsymbol \gamma^{\text{mult}}}\right|_{\boldsymbol \gamma^{\text{mult}}=\widehat{\boldsymbol \gamma^{\text{mult}}}}\right)\\
     & \xrightarrow{p}  G_{\boldsymbol \alpha^{\text{mult}*},\boldsymbol \gamma^{\text{mult}*}}=\mathbb{E}\left[-\frac{S^{\text{mult}}}{\pi(\boldsymbol X^{\text{mult}},\boldsymbol \alpha^{\text{mult}*})^2}\left\{D \boldsymbol Z-\frac{e^{\boldsymbol \theta_{\boldsymbol Z}^{*'}\boldsymbol Z}}{(1+e^{\boldsymbol \theta_{\boldsymbol Z}^{*'}\boldsymbol Z})} \boldsymbol Z-\boldsymbol f({\boldsymbol X^{\text{mult}}},\boldsymbol \gamma^{\text{mult}*},\boldsymbol \theta_{\boldsymbol Z}^*)\right\}\right.\\
      & \left.\cdot \left.\frac{\partial \boldsymbol \pi(\boldsymbol X^{\text{mult}},\boldsymbol \alpha^{\text{mult}*})}{\partial \boldsymbol \alpha^{\text{mult}}}\right|_{\boldsymbol \alpha=\boldsymbol \alpha^{\text{mult}*}},\right.\\
      & \left.\left(\frac{S_{\text{ext}}}{\pi_{\text{ext}}}- \frac{S^{\text{mult}}}{\pi(\boldsymbol X^{\text{mult}},\boldsymbol \alpha^{\text{mult}*})}\right) \left.\frac{\partial \boldsymbol f({\boldsymbol X^{\text{mult}}},\boldsymbol \gamma^{\text{mult}},\boldsymbol \theta_{\boldsymbol Z}^*)}{\partial \boldsymbol \gamma^{\text{mult}}}\right|_{\boldsymbol \gamma^{\text{mult}}=\boldsymbol \gamma^{\text{mult}*}}\right]
\end{align*}
\begin{align*}
  & \widehat{H} = \frac{1}{N}\sum_{i=1}^N \ \begin{pmatrix}
    \left.\frac{\partial \boldsymbol q(\boldsymbol \alpha^{\text{mult}},\boldsymbol X^{\text{mult}}_i,S^{\text{mult}}_i,S_{\text{ext},i},\pi_{\text{ext}},i)}{\partial \boldsymbol \alpha^{\text{mult}}}\right|_{\boldsymbol \alpha=\widehat{\boldsymbol \alpha^{\text{mult}}}} & 0 \\
     0 & \left.\frac{\partial \boldsymbol p(\boldsymbol\gamma,\boldsymbol X^{\text{mult}}_i,S,\boldsymbol Z_{1\cap,i})}{\partial \boldsymbol \gamma^{\text{mult}}}\right|_{\boldsymbol \gamma^{\text{mult}}=\widehat{\boldsymbol \gamma^{\text{mult}}}}
         \end{pmatrix}\\
     &\xrightarrow{p} H = \mathbb{E} \begin{pmatrix}
    \left.\frac{\partial \boldsymbol q(\boldsymbol \alpha^{\text{mult}},\boldsymbol X^{\text{mult}},S,S_{\text{ext}},\pi_{\text{ext}})}{\partial \boldsymbol \alpha^{\text{mult}}}\right|_{\boldsymbol \alpha=\boldsymbol \alpha^{\text{mult}*}} & 0 \\
     0 & \left.\frac{\partial \boldsymbol p(\boldsymbol\gamma,\boldsymbol X^{\text{mult}},S,\boldsymbol Z_{1\cap})}{\partial \boldsymbol \gamma^{\text{mult}}}\right|_{\boldsymbol \gamma^{\text{mult}}=\boldsymbol \gamma^{\text{mult}*}}
         \end{pmatrix}
\end{align*}
Similarly we obtain
\begin{align*}
     & \widehat{E_1}=\frac{1}{N}\sum_{i =1}^N \frac{S^{\text{mult}}_i}{\pi(\boldsymbol X^{\text{mult}}_i,\widehat{\boldsymbol \alpha^{\text{mult}}})^2}\left[D_i\boldsymbol Z_i-\frac{e^{\widehat{\boldsymbol \theta_{\boldsymbol Z}}'\boldsymbol Z_i}}{(1+e^{\widehat{\boldsymbol \theta_{\boldsymbol Z}}' \boldsymbol Z_i})}\cdot \boldsymbol Z_i-f({\boldsymbol X^{\text{mult}}_i},\widehat{\boldsymbol \gamma^{\text{mult}}},\widehat{\boldsymbol \theta_{\boldsymbol Z}})\right]\cdot\\
    & \hspace{3cm}\left[D_i\boldsymbol Z_i-\frac{e^{\widehat{\boldsymbol \theta_{\boldsymbol Z}}'\boldsymbol Z_i}}{(1+e^{\widehat{\boldsymbol \theta_{\boldsymbol Z}}' \boldsymbol Z_i})}\cdot \boldsymbol Z_i-f({\boldsymbol X^{\text{mult}}_i},\widehat{\boldsymbol \gamma^{\text{mult}}},\widehat{\boldsymbol \theta_{\boldsymbol Z}})\right]'\\
    & \hspace{3cm}+ \frac{1}{N}\sum_{i =1}^N \frac{S_{\text{ext},i}}{\pi_{\text{ext},i}^2}\cdot f({\boldsymbol X^{\text{mult}}_i},\widehat{\boldsymbol \gamma^{\text{mult}}},\widehat{\boldsymbol \theta_{\boldsymbol Z}})\cdot f({\boldsymbol X^{\text{mult}}_i},\widehat{\boldsymbol \gamma^{\text{mult}}},\widehat{\boldsymbol \theta_{\boldsymbol Z}})'\\
    & + \frac{1}{N}\sum_{i =1}^N \frac{S^{\text{mult}}_iS_{\text{ext},i}}{\pi(\boldsymbol X^{\text{mult}}_i,\widehat{\boldsymbol \alpha^{\text{mult}}})\pi_{\text{ext},i}}\left[D_i\boldsymbol Z_i-\frac{e^{\widehat{\boldsymbol \theta_{\boldsymbol Z}}'\boldsymbol Z_i}}{(1+e^{\widehat{\boldsymbol \theta_{\boldsymbol Z}}' \boldsymbol Z_i})}\cdot \boldsymbol Z_i-f({\boldsymbol X^{\text{mult}}_i},\widehat{\boldsymbol \gamma^{\text{mult}}},\widehat{\boldsymbol \theta_{\boldsymbol Z}})\right]\\
    &\hspace{1cm}\cdot f({\boldsymbol X^{\text{mult}}_i},\widehat{\boldsymbol \gamma^{\text{mult}}},\widehat{\boldsymbol \theta_{\boldsymbol Z}})'+ \frac{1}{N}\sum_{i =1}^N \frac{S^{\text{mult}}_iS_{\text{ext},i}}{\pi(\boldsymbol X^{\text{mult}}_i,\widehat{\boldsymbol \alpha^{\text{mult}}})\pi_{\text{ext},i}}\cdot f({\boldsymbol X^{\text{mult}}_i},\widehat{\boldsymbol \gamma^{\text{mult}}},\widehat{\boldsymbol \theta_{\boldsymbol Z}})\\
    & \hspace{4cm}\cdot \left[D_i\boldsymbol Z_i-\frac{e^{\widehat{\boldsymbol \theta_{\boldsymbol Z}}'\boldsymbol Z_i}}{(1+e^{\widehat{\boldsymbol \theta_{\boldsymbol Z}}' \boldsymbol Z_i})}\cdot \boldsymbol Z_i-f({\boldsymbol X^{\text{mult}}_i},\widehat{\boldsymbol \gamma^{\text{mult}}},\widehat{\boldsymbol \theta_{\boldsymbol Z}})\right]'\\
    \end{align*}
\begin{align*}
      & \widehat{E}_2 =  \widehat{E}_3'=\frac{1}{N}\sum_{i=1}^N  \widehat{G_{\boldsymbol \alpha^{\text{mult}},\boldsymbol \gamma^{\text{mult}}}}\cdot (-\widehat{H})^{-1}\frac{1}{\pi(\boldsymbol X^{\text{mult}}_i,\widehat{\boldsymbol \alpha^{\text{mult}})}}\cdot\\
      &\begin{pmatrix}
   q(\widehat{\boldsymbol \alpha^{\text{mult}}},\boldsymbol X^{\text{mult}}_i,S^{\text{mult}}_i,S_{\text{ext},i},\pi_{\text{ext},i})\\
    p(\widehat{\boldsymbol\gamma},\boldsymbol X^{\text{mult}}_i,S^{\text{mult}}_i,\boldsymbol Z_{1\cap,i})
         \end{pmatrix}\left[D_i\boldsymbol Z_i-\frac{e^{\widehat{\boldsymbol \theta_{\boldsymbol Z}}'\boldsymbol Z_i}}{(1+e^{\widehat{\boldsymbol \theta_{\boldsymbol Z}}' \boldsymbol Z_i})} \boldsymbol Z_i-f({\boldsymbol X^{\text{mult}}_i},\widehat{\boldsymbol \gamma^{\text{mult}}},\widehat{\boldsymbol \theta_{\boldsymbol Z}})\right]'\\
         &\hspace{1cm}  +\frac{1}{N}\sum_{i =1}^N \begin{pmatrix}
   q(\widehat{\boldsymbol \alpha^{\text{mult}}},\boldsymbol X^{\text{mult}}_i,S^{\text{mult}}_i,S_{\text{ext},i},\pi_{\text{ext},i})\\
    p(\widehat{\boldsymbol\gamma},\boldsymbol X^{\text{mult}}_i,S^{\text{mult}}_i,\boldsymbol Z_{1\cap,i})
         \end{pmatrix}\frac{S_{\text{ext},i}}{\pi_{\text{ext},i}}\cdot  f({\boldsymbol X^{\text{mult}}_i},\widehat{\boldsymbol \gamma^{\text{mult}}},\widehat{\boldsymbol \theta_{\boldsymbol Z}})'\\
    &\xrightarrow{p} \mathbb{E}\{G_{\boldsymbol \alpha^{\text{mult}*},\boldsymbol \gamma^{\text{mult}*}} H^{-1}\cdot \boldsymbol h(\boldsymbol \alpha^{\text{mult}*},\boldsymbol \gamma^{\text{mult}*})\cdot  \boldsymbol g(\boldsymbol \theta_{\boldsymbol Z}^*,\boldsymbol \alpha^{\text{mult}*},\boldsymbol \gamma^{\text{mult}*})'\}\cdot
\end{align*}
\begin{align*}
     & \widehat{E_4} = \frac{1}{N}\cdot\widehat{G_{\boldsymbol \alpha^{\text{mult}}}}\cdot \widehat{H}^{-1} \sum_{i=1}^N  \left[\begin{pmatrix}
   q(\widehat{\boldsymbol \alpha^{\text{mult}}},\boldsymbol X^{\text{mult}}_i,S^{\text{mult}}_i,S_{\text{ext},i},\pi_{\text{ext},i})\\
    p(\widehat{\boldsymbol\gamma},\boldsymbol X^{\text{mult}}_i,S^{\text{mult}}_i,\boldsymbol Z_{1\cap,i})
         \end{pmatrix}\right.\\
         &\left.\cdot (q(\widehat{\boldsymbol \alpha^{\text{mult}}},\boldsymbol X^{\text{mult}}_i,S^{\text{mult}}_i,S_{\text{ext},i},\pi_{\text{ext},i})',p(\widehat{\boldsymbol\gamma},\boldsymbol X^{\text{mult}}_i,S^{\text{mult}}_i,\boldsymbol Z_{1\cap,i}))'\right](\widehat{H}^{-1})'\cdot(\widehat{G_{\boldsymbol \alpha^{\text{mult}}}})'\\
    & \xrightarrow{p} \mathbb{E}\{G_{\boldsymbol \alpha^{\text{mult}*},\boldsymbol \gamma^{\text{mult}*}}H^{-1}\cdot \boldsymbol h(\boldsymbol \alpha^{\text{mult}*},\boldsymbol \gamma^{\text{mult}*})\cdot \boldsymbol h(\boldsymbol \alpha^{\text{mult}*},\boldsymbol \gamma^{\text{mult}*})'\cdot (H^{-1})'\cdot (G_{\boldsymbol\alpha^{\text{mult}*},\boldsymbol \gamma^{\text{mult}*}})'\} 
\end{align*}
Therefore we obtain
\begin{align*}
     & \widehat{E}=\widehat{E}_1-\widehat{E}_2-\widehat{E}_3 + \widehat{E}_4\xrightarrow{p}\\
    & \hspace{2cm} \mathbb{E}[\{\boldsymbol g(\boldsymbol \theta_{\boldsymbol Z}^*,\boldsymbol \alpha^{\text{mult}*},\boldsymbol \gamma^{\text{mult}*})+G_{\boldsymbol \alpha^{\text{mult}*},\boldsymbol \gamma^{\text{mult}*}}\cdot \boldsymbol \Psi(\boldsymbol \alpha^{\text{mult}*},\boldsymbol \gamma^{\text{mult}*})\}\\
    &  \hspace{4cm}\{\boldsymbol g(\boldsymbol \theta_{\boldsymbol Z}^*,\boldsymbol \alpha^{\text{mult}*},\boldsymbol \gamma^{\text{mult}*})+G_{\boldsymbol \alpha^{\text{mult}*},\boldsymbol \gamma^{\text{mult}*}}.\boldsymbol \Psi(\boldsymbol \alpha^{\text{mult}*},\boldsymbol \gamma^{\text{mult}*})\}']
\end{align*}
Using all the above results we obtain $\widehat{G_{\boldsymbol \theta_{\boldsymbol Z}}}^{-1}\cdot\widehat{E}\cdot(\widehat{G_{\boldsymbol \theta_{\boldsymbol Z}}}^{-1})^{'}$
\begin{align*}
    &\xrightarrow{p}(G_{\boldsymbol \theta_{\boldsymbol Z}^*})^{-1}.\mathbb{E}[\{\boldsymbol g(\boldsymbol \theta_{\boldsymbol Z}^*,\boldsymbol \alpha^{\text{mult}*},\boldsymbol \gamma^{\text{mult}*})+G_{\boldsymbol \alpha^{\text{mult}*},\boldsymbol \gamma^{\text{mult}*}}\cdot \boldsymbol \Psi(\boldsymbol \alpha^{\text{mult}*},\boldsymbol \gamma^{\text{mult}*})\}\\
    & \hspace{2cm}\{\boldsymbol g(\boldsymbol \theta_{\boldsymbol Z}^*,\boldsymbol \alpha^{\text{mult}*},\boldsymbol \gamma^{\text{mult}*})+G_{\boldsymbol \alpha^{\text{mult}*},\boldsymbol \gamma^{\text{mult}*}}.\boldsymbol \Psi(\boldsymbol \alpha^{\text{mult}*},\boldsymbol \gamma^{\text{mult}*})\}'].(G_{\boldsymbol \theta_{\boldsymbol Z}^*}^{-1})^{'}
\end{align*}
Using the same approach used in the last step of Theorem S3, we obtain that \\$\frac{1}{N}\cdot\widehat{G_{\boldsymbol \theta_{\boldsymbol Z}}}^{-1}\cdot\hat{E}\cdot(\widehat{G_{\boldsymbol \theta_{\boldsymbol Z}}}^{-1})^{'}$ is a consistent estimator of the asymptotic variance of $\widehat{\boldsymbol \theta}_{\boldsymbol Z}$.
\end{proof}

\subsection{Proof of Theorem 3.4}
\begin{proof}
We establish the asymptotic distribution of $\widehat{\boldsymbol{\theta}}_{\boldsymbol{Z}}$ from JAIPW-NP using Theorem 3.3 from \citet{chernozhukov2018double}, which pertains to non-linear moment functions. We have verified the unbiasedness of the estimating equation (3.13) of the main text and assumed conditions C1 and C3 in the main text and assumptions \textit{A1}, \textit{A2}, \textit{A9} and \textit{A10} in Supplementary Section \ref{sec:assumptions}. To apply Theorem 3.3 from \citet{chernozhukov2018double}, it remains to verify the Neyman orthogonality of the estimating function. The proposed estimating function for JAIPW is given by
\begin{align*}
    \boldsymbol{g}(\boldsymbol{\theta}_{\boldsymbol{Z}}, \pi, \boldsymbol{f}) 
    = \frac{S^{\text{mult}}}{\pi(\boldsymbol{X}^{\text{mult}})} \left( \mathcal{U}(\boldsymbol{\theta}_{\boldsymbol{Z}}) - \boldsymbol{f}(\boldsymbol{X}^{\text{mult}}, \boldsymbol{\theta}_{\boldsymbol{Z}}) \right)
    + \frac{S_{\text{ext}}}{\pi_{\text{ext}}} \boldsymbol{f}(\boldsymbol{X}^{\text{mult}}, \boldsymbol{\theta}_{\boldsymbol{Z}}).
\end{align*}
\noindent
Let \(\boldsymbol{\eta} = (\pi, \boldsymbol{f})\) denote the combined nuisance parameter function. To establish Neyman orthogonality of \(\boldsymbol{g}\), we must show that the expectation of the Gâteaux derivative of \(\boldsymbol{g}\) with respect to \(\boldsymbol{\eta}\), evaluated at the true nuisance parameters \(\boldsymbol{\eta}^* = (\pi^*, \boldsymbol{f}^*)\), is zero. Formally, we aim to prove
\[
\mathbb{E} \left[ \left. \frac{d}{d\epsilon} \, \boldsymbol{g}(\boldsymbol{\theta}_{\boldsymbol{Z}}, \pi_\epsilon, \boldsymbol{f}_\epsilon) \right|_{\epsilon=0} \right] = 0,
\]
where the perturbed nuisance functions are defined as
\begin{align*}
    \pi_\epsilon(\boldsymbol{X}^{\text{mult}}) &= \pi^*(\boldsymbol{X}^{\text{mult}}) + \epsilon \, l_{\pi}(\boldsymbol{X}^{\text{mult}}), \\
    \boldsymbol{f}_\epsilon(\boldsymbol{X}^{\text{mult}}, \boldsymbol{\theta}_{\boldsymbol{Z}}) &= \boldsymbol{f}^*(\boldsymbol{X}^{\text{mult}}, \boldsymbol{\theta}_{\boldsymbol{Z}}) + \epsilon \, l_f(\boldsymbol{X}^{\text{mult}}, \boldsymbol{\theta}_{\boldsymbol{Z}}),
\end{align*}
with \( l_{\pi} \) and \( l_f \) being arbitrary, smooth perturbation functions, and \(\epsilon\) a small scalar.\\

\noindent
We now compute the derivative:
\begin{align*}
    \left. \frac{d}{d\epsilon} \, \boldsymbol{g}(\boldsymbol{\theta}_{\boldsymbol{Z}}, \pi_\epsilon, \boldsymbol{f}_\epsilon) \right|_{\epsilon=0}
    &= -\frac{S^{\text{mult}}}{\pi^*(\boldsymbol{X}^{\text{mult}})^2} \, l_{\pi}(\boldsymbol{X}^{\text{mult}}) \left( \mathcal{U}(\boldsymbol{\theta}_{\boldsymbol{Z}}) - \boldsymbol{f}^*(\boldsymbol{X}^{\text{mult}}, \boldsymbol{\theta}_{\boldsymbol{Z}}) \right) \\
    &\quad + \left( \frac{S_{\text{ext}}}{\pi_{\text{ext}}} - \frac{S^{\text{mult}}}{\pi^*(\boldsymbol{X}^{\text{mult}})} \right) l_f(\boldsymbol{X}^{\text{mult}}, \boldsymbol{\theta}_{\boldsymbol{Z}}).
\end{align*}
\noindent
Taking expectations under the assumption that the nuisance functions are correctly specified, we have
\begin{align*}
    \mathbb{E}\left[ \mathcal{U}(\boldsymbol{\theta}_{\boldsymbol{Z}}) - \boldsymbol{f}^*(\boldsymbol{X}^{\text{mult}}, \boldsymbol{\theta}_{\boldsymbol{Z}}) \right] &= 0, \\
    \mathbb{E}\left[ \frac{S_{\text{ext}}}{\pi_{\text{ext}}} - \frac{S^{\text{mult}}}{\pi^*(\boldsymbol{X}^{\text{mult}})} \right] &= 0.
\end{align*}
\noindent
Therefore,
\[
\mathbb{E} \left[ \left. \frac{d}{d\epsilon} \, \boldsymbol{g}(\boldsymbol{\theta}_{\boldsymbol{Z}}, \pi_\epsilon, \boldsymbol{f}_\epsilon) \right|_{\epsilon=0} \right] = 0,
\]
which proves the Neyman orthogonality of \(\boldsymbol{g}\) with respect to the nuisance parameter function \(\boldsymbol{\eta}\).
\end{proof}

\subsection{Variance Estimation of JAIPW-NP}
We empirically estimate the asymptotic variance of the JAIPW-NP estimator using the result from Theorem 3.4. The estimated variance is given by:
\[
\frac{1}{N} \cdot \widehat{G}_{\boldsymbol{\theta}_{\boldsymbol{Z}}}^{-1} \cdot \widehat{E} \cdot \left(\widehat{G}_{\boldsymbol{\theta}_{\boldsymbol{Z}}}^{-1}\right)',
\]
where \(\widehat{G}_{\boldsymbol{\theta}_{\boldsymbol{Z}}}\) is the empirical gradient matrix of the estimating equation and \(\widehat{E}\) is the empirical variance of the estimating function. Specifically,
\begin{align*}
\widehat{G}_{\boldsymbol{\theta}_{\boldsymbol{Z}}} = -\frac{1}{N} \sum_{i=1}^N & S^{\text{mult}}_i \cdot \frac{1}{\pi(\boldsymbol{X}^{\text{mult}}_i; \widehat{\boldsymbol{\alpha}}^{\text{mult}})} \left\{ \left.\frac{e^{\widehat{\boldsymbol{\theta}}_{\boldsymbol{Z}}' \boldsymbol{Z}_i}}{(1 + e^{\widehat{\boldsymbol{\theta}}_{\boldsymbol{Z}}' \boldsymbol{Z}_i})^2} \cdot \boldsymbol{Z}_i \boldsymbol{Z}_i' + \frac{\partial \widehat{\boldsymbol{f}}(\boldsymbol{X}^{\text{mult}}_i, \boldsymbol{\theta}_{\boldsymbol{Z}})}{\partial \boldsymbol{\theta}_{\boldsymbol{Z}}} \right|_{\boldsymbol{\theta}_{\boldsymbol{Z}} = \widehat{\boldsymbol{\theta}}_{\boldsymbol{Z}}} \right\} \\
& \hspace{4cm}+\frac{1}{N} \sum_{i=1}^N \frac{S_{\text{ext},i}}{\pi_{\text{ext},i}} \cdot \left. \frac{\partial \widehat{\boldsymbol{f}}(\boldsymbol{X}^{\text{mult}}_i, \boldsymbol{\theta}_{\boldsymbol{Z}})}{\partial \boldsymbol{\theta}_{\boldsymbol{Z}}} \right|_{\boldsymbol{\theta}_{\boldsymbol{Z}} = \widehat{\boldsymbol{\theta}}_{\boldsymbol{Z}}}.
\end{align*}
\noindent
The empirical variance component \(\widehat{E}\) is given by:
\begin{align*}
&\widehat{E} = \frac{1}{N} \sum_{i=1}^N  \frac{S^{\text{mult}}_i}{\pi(\boldsymbol{X}^{\text{mult}}_i; \widehat{\boldsymbol{\alpha}}^{\text{mult}})^2} \cdot \left( D_i \boldsymbol{Z}_i - \frac{e^{\widehat{\boldsymbol{\theta}}_{\boldsymbol{Z}}' \boldsymbol{Z}_i}}{1 + e^{\widehat{\boldsymbol{\theta}}_{\boldsymbol{Z}}' \boldsymbol{Z}_i}} \cdot \boldsymbol{Z}_i - \widehat{\boldsymbol{f}}(\boldsymbol{X}^{\text{mult}}_i, \widehat{\boldsymbol{\theta}}_{\boldsymbol{Z}}) \right) \\
& \hspace{3cm}\cdot \left( D_i \boldsymbol{Z}_i - \frac{e^{\widehat{\boldsymbol{\theta}}_{\boldsymbol{Z}}' \boldsymbol{Z}_i}}{1 + e^{\widehat{\boldsymbol{\theta}}_{\boldsymbol{Z}}' \boldsymbol{Z}_i}} \cdot \boldsymbol{Z}_i - \widehat{\boldsymbol{f}}(\boldsymbol{X}^{\text{mult}}_i, \widehat{\boldsymbol{\theta}}_{\boldsymbol{Z}}) \right)' \\
&+ \frac{1}{N} \sum_{i=1}^N \frac{S_{\text{ext},i}}{\pi_{\text{ext},i}^2} \cdot \widehat{\boldsymbol{f}}(\boldsymbol{X}^{\text{mult}}_i, \widehat{\boldsymbol{\theta}}_{\boldsymbol{Z}}) \cdot \widehat{\boldsymbol{f}}(\boldsymbol{X}^{\text{mult}}_i, \widehat{\boldsymbol{\theta}}_{\boldsymbol{Z}})' \\
&+ \frac{1}{N} \sum_{i=1}^N \frac{S^{\text{mult}}_i S_{\text{ext},i}}{\pi(\boldsymbol{X}^{\text{mult}}_i; \widehat{\boldsymbol{\alpha}}^{\text{mult}}) \pi_{\text{ext},i}} \cdot \left( D_i \boldsymbol{Z}_i - \frac{e^{\widehat{\boldsymbol{\theta}}_{\boldsymbol{Z}}' \boldsymbol{Z}_i}}{1 + e^{\widehat{\boldsymbol{\theta}}_{\boldsymbol{Z}}' \boldsymbol{Z}_i}} \cdot \boldsymbol{Z}_i - \widehat{\boldsymbol{f}}(\boldsymbol{X}^{\text{mult}}_i, \widehat{\boldsymbol{\theta}}_{\boldsymbol{Z}}) \right)\cdot \widehat{\boldsymbol{f}}(\boldsymbol{X}^{\text{mult}}_i, \widehat{\boldsymbol{\theta}}_{\boldsymbol{Z}})' \\
&+ \frac{1}{N} \sum_{i=1}^N \frac{S^{\text{mult}}_i S_{\text{ext},i}}{\pi(\boldsymbol{X}^{\text{mult}}_i; \widehat{\boldsymbol{\alpha}}^{\text{mult}}) \pi_{\text{ext},i}} \cdot \widehat{\boldsymbol{f}}(\boldsymbol{X}^{\text{mult}}_i, \widehat{\boldsymbol{\theta}}_{\boldsymbol{Z}}) \cdot \left( D_i \boldsymbol{Z}_i - \frac{e^{\widehat{\boldsymbol{\theta}}_{\boldsymbol{Z}}' \boldsymbol{Z}_i}}{1 + e^{\widehat{\boldsymbol{\theta}}_{\boldsymbol{Z}}' \boldsymbol{Z}_i}} \cdot \boldsymbol{Z}_i - \widehat{\boldsymbol{f}}(\boldsymbol{X}^{\text{mult}}_i, \widehat{\boldsymbol{\theta}}_{\boldsymbol{Z}}) \right)'.
\end{align*}
\noindent
In the above, \(\widehat{\boldsymbol{f}}\) can be any flexible estimator (e.g., Random Forest, XGBoost) trained using cross-fitting. The gradient term
\[
\left. \frac{\partial \widehat{\boldsymbol{f}}(\boldsymbol{X}^{\text{mult}}_i, \boldsymbol{\theta}_{\boldsymbol{Z}})}{\partial \boldsymbol{\theta}_{\boldsymbol{Z}}} \right|_{\boldsymbol{\theta}_{\boldsymbol{Z}} = \widehat{\boldsymbol{\theta}}_{\boldsymbol{Z}}}
\]
is evaluated via numerical differentiation and is also estimated in a cross-fitted manner.
\section{Simulation Details}
\subsection{General Setup}\label{sec:simuset}
\setlength\arraycolsep{2pt}
In all the simulations, we have used the following factors in our simulation design:
\begin{itemize}
    \item \textbf{Target Population Size:} $N=50000$.
    \item \textbf{Number of Cohorts:} $K=3$
    \item \textbf{Number of Replications}: $R=500$
    \item \textbf{Disease Model Covariates} \textbf{$\boldsymbol Z=(Z_1,Z_2,Z_3)$:} The joint distribution of $(Z_1,Z_2,Z_3)$ is specified as,
     \begin{align*}
    \begin{pmatrix}
           Z_1 \\
           Z_2 \\
           Z_3
         \end{pmatrix} \sim \mathcal{N}_3\left(\begin{pmatrix}
           0\\
           0\\
           0
         \end{pmatrix},\begin{pmatrix}
           1 & 0.5 & 0.5\\
           0.5 & 1 & 0.5\\
           0.5 & 0.5 & 1
         \end{pmatrix}\right)\cdot
  \end{align*}
    \item \textbf{Disease outcome $D$}: $D$ is simulated from the conditional distribution specified by,
    $$D|Z_1,Z_2,Z_3\sim\text{Ber}\left(\frac{e^{\theta_0+\theta_1Z_1+\theta_2Z_2+\theta_3Z_3}}{1+e^{\theta_0+\theta_1Z_1+\theta_2Z_2+\theta_3Z_3}}\right)\cdot$$
    where, $\theta_0=-2$, $\theta_1=0.35$, $\theta_2=0.45$ and $\theta_3=0.25$.
    \item \textbf{Selection model covariate for internal non-probability sample:} For $k=1,2,3$, $W_k$ is an univariate random variable simulated from the conditional distribution,
    $$W_k|Z_1,Z_2,Z_3,D \sim \epsilon_{1k} + \mathcal{N}(\gamma_0\cdot D + \gamma_1\cdot Z_1 + \gamma_2\cdot Z_2 + \gamma_3\cdot Z_3,1) +\epsilon_{2k}\cdot$$
    where $\epsilon_{1k}\sim \mathcal{N}(0,1)$ with $\text{Cor}(\epsilon_{1k},Z_1)=\text{Cor}(\epsilon_{1k},Z_2)=\text{Cor}(\epsilon_{1k},Z_3)=0.5$, $\epsilon_{2k}\sim \mathcal{N}(0,1)$ $\forall k \in \{1,2,3\}$. We set $(\gamma_0,\gamma_1,\gamma_2,\gamma_3)=(1,1,0.8,0.6),(1,0.6,0.8,1),(1,1,1,1)$ for the three cohorts respectively.
    \item \textbf{Selection Model for the external probability sample:} For external data, the selection model can take any functional form. Note that these selection probabilities are known to us. In our case, we assumed that the functional form of the external selection model is given by,
    \begin{align*}
    \text{logit}(P(S_{\text{ext}}=1|D,Z_1,Z_2,Z_3))=\nu_{0}+\nu_{1}\cdot D+\nu_{2}\cdot Z_1 + \nu_{3}\cdot Z_2 + \nu_4 \cdot Z_3\cdot
    \end{align*}
    The values of $(\nu_0,\nu_1,\nu_2,\nu_3, \nu_4)$ are given by $(-0.6,1.2,0.4,-0.2,0.5)$. The probabilities $P(S_{\text{ext}}=1|Z_2,W,D)$ from the above equation  were multiplied by a factor of 0.75.
\end{itemize}
In the first simulation setup, the internal selection models are as follows :
\begin{itemize}
     \item \textbf{Cohort 1 :} $\boldsymbol Z_2=(Z_2,Z_3),W=W_1,D$
       \begin{align*}
             \text{logit}(P(S_1=1|Z_2,Z_3,W_1,D))&=\alpha_{01}+\alpha_{11} Z_2+
           \alpha_{21} Z_3 +  \alpha_{31} W_1 + \alpha_{41} D
       \end{align*}
    where $(\alpha_{01},\alpha_{11},\alpha_{21},\alpha_{31},\alpha_{41})=(-1,1.5,0.2,0.8,-0.3)$.
    \item \textbf{Cohort 2 :} $\boldsymbol Z_2=Z_3,W=W_2,D$
        \begin{align*}
             \text{logit}(P(S_2=1|Z_3,W_2,D))=\alpha_{02}+\alpha_{12} Z_3+
           \alpha_{22} W_2 +  \alpha_{32} D 
       \end{align*}
     where $(\alpha_{02},\alpha_{12},\alpha_{22},\alpha_{32})=(-1,1.25,0.4,0.6)$.
     \item \textbf{Cohort 3 :} $\boldsymbol Z_2=Z_2,W=W_3$
       \begin{align*}
             \text{logit}(P(S_3=1|Z_2,W_3))=\alpha_{03}+\alpha_{13} Z_2+
           \alpha_{23} W_3 
       \end{align*}
    where $(\alpha_{03},\alpha_{13},\alpha_{23})=(-3,0.8,0.5)$.
\end{itemize}
In the second simulation setup, we added interaction terms in the internal selection models : 
\begin{itemize}
     \item \textbf{Cohort 1 :} $\boldsymbol Z_2=(Z_2,Z_3),W=W_1,D$
\begin{equation*}
\small
\hspace{-1cm}\text{logit}(P(S_1=1|Z_2,Z_3,W_1,D))=\alpha_{01}+\alpha_{11} Z_2+\alpha_{21} Z_3 +  \alpha_{31} W_1 +\alpha_{41} D + \alpha_{51} DZ_2  + \alpha_{61} DZ_3 + \alpha_{71} DW_1
\end{equation*}
    where $(\alpha_{01},\alpha_{11},\alpha_{21},\alpha_{31},\alpha_{41})=(-1,1.5,0.2,0.8,-0.3)$.
    \item \textbf{Cohort 2 :} $\boldsymbol Z_2=Z_3,W=W_2,D$
        \begin{align*}
             &\text{logit}(P(S_2=1|Z_3,W_2,D))=\alpha_{02}+\alpha_{12} Z_3+
           \alpha_{22} W_2 +  \alpha_{32} D + \alpha_{42} DZ_3 + \alpha_{52} DW_2
       \end{align*}
     where $(\alpha_{02},\alpha_{12},\alpha_{22},\alpha_{32})=(-1,1.25,0.4,0.6)$.
     \item \textbf{Cohort 3 :} $\boldsymbol Z_2=Z_2,W=W_3$
       \begin{align*}
             \text{logit}(P(S_3=1|Z_2,W_3))=\alpha_{03}+\alpha_{13} Z_2+
           \alpha_{23} W_3 +  \alpha_{33} Z_2W_3
       \end{align*}
    where $(\alpha_{03},\alpha_{13},\alpha_{23})=(-3,0.8,0.5)$.
\end{itemize}

\subsection{JPS specifics for simulations}
\subsubsection{Criteria for coarsening variables for PS method}\label{sec:pr2_coarse}
For any continuous random variable say, $L$, a coarsened version $L'$ is defined as, 
\[   
L' = 
     \begin{cases}
      0 &\text{if}\hspace{0.2cm} L<\text{Cutoff}_1\\
       1 &\text{if}\hspace{0.2cm}<\text{Cutoff}_1<=L<=<\text{Cutoff}_2 \\
       \vdots & \vdots\\
      K-1 &\text{if}\hspace{0.2cm} L>\text{Cutoff}_K\\
     \end{cases}
\]
In this simulation for all the selection variables involved, we chose $K=2$ and $\text{Cutoff}_1=\epsilon_{0.15}$, $\text{Cutoff}_2=\epsilon_{0.85}$, where $\epsilon_{0.15}$ and $\epsilon_{0.85}$ are the 15 and 85 percentile quartiles for both the continuous random variables respectively.

\subsubsection{Fitting of JPS method}\label{sec:jpssup}
For both the setups, JPS has been fitted in two ways. In the first scenario, we assumed that the entire joint distribution of the discretized versions of the selection variables are available from the target population, namely $(D,Z_2',Z_3',W_1')$, $(D,Z_3',W_2')$ and $(Z_2',W_3')$ in the three cohorts respectively where $Z_2',Z_3',W_1',W_2',W_3'$ denote the discretized versions of $Z_2,Z_3,W_1,W_2,W_3$. In reality obtaining exact joint distribution of all the selection variables is extremely difficult especially for Cohort 1 and 2. Henceforth in the second scenario, we assumed that we have access of only conditional distributions $D|Z_2'$, $Z_3'|Z_2'$, $W_1'|Z_2'$, $W_2'|Z_2'$ and $W_3'|Z_2'$ from the target population. The joint selection probabilities are approximated using these available conditionals.
\section{Additional specifications for Data Analysis}
\subsection{Phenotype definitions and descriptions}
In this study, phenotypes are defined using the PheWAS R package (Version 0.99.5-4), which maps ICD-9-CM and ICD-10-CM codes to PheWAS codes (PheCodes), resulting in up to 1,817 distinct codes. For our analysis, the PheCodes for cancer include: 145, 145.1, 145.2, 145.3, 145.4, 145.5, 149, 149.1, 149.2, 149.3, 149.4, 149.5, 149.9, 150, 151, 153, 153.2, 153.3, 155, 155.1, 157, 158, 159, 159.2, 159.3, 159.4, 164, 165, 165.1, 170, 170.1, 170.2, 172.1, 172.11, 174, 174.1, 174.11, 174.2, 174.3, 175, 180, 180.1, 180.3, 182, 184, 184.1, 184.11, 184.2, 185, 187, 187.1, 187.2, 187.8, 189, 189.1, 189.11, 189.12, 189.2, 189.21, 189.4, 190, 191, 191.1, 191.11, 193, 194, 195, 195.1, 195.3, 196, 197, 198, 198.1, 198.2, 198.3, 198.4, 198.5, 198.6, 198.7, 199, 199.4, 200, 200.1, 201, 202, 202.2, 202.21, 202.22, 202.23, 202.24, 204, 204.1, 204.11, 204.12, 204.2, 204.21, 204.22, 204.3, 204.4, 209, and 860.\\

\noindent
In addition, other diseases are represented by specific PheCodes: diabetes (250), coronary heart disease (CHD) (411.4), triglycerides (272.12, 272.13), vitamin D deficiency (261.4), depression (296.2), and anxiety (300)
\subsection{Biological Sex as $\boldsymbol{Z}_{1\cap}$}

The JAIPW method relies on conditions C2 and C3 of the main text. We acknowledge these are strong assumptions that cannot be verified with observed data. Its plausibility rests on the hypothesis that sex functions as an indirect correlate, influencing selection primarily through other conditions that are included in $\boldsymbol{X}$. The MHB (mental health) cohort provides a clear illustration: its skewed sex distribution is consistent with our model because sex is a known risk factor for certain mental health diagnoses, and these diagnoses (which are included in $\boldsymbol{X}$) are the primary correlates of selection.\\

\noindent
However, we also note cases where this assumption is likely violated. For instance, the AIPW method's performance in the MIPACT cohort was imperfect, likely due to unmeasured confounders like income, healthcare access and education. These factors are known to be related to both sex and selection into MIPACT, and their omission from $\boldsymbol{X}$ would violate the conditional independence assumption. Despite this violation, the AIPW estimates still offered a substantial partial correction, demonstrating a clear improvement over the unweighted method even when the assumption was not perfectly met. This underscores our point that we are often able to reduce bias and rarely able to remove bias\\

\noindent
Similarly, in our second data analysis, we apply the same core assumptions to the Cancer Polygenic Risk Score (PRS). We assume the Cancer PRS is conditionally independent of selection ($S$) given the observed covariates ($\boldsymbol{X}$). The justification here is that the PRS does not directly influence selection; rather, it acts indirectly by increasing the risk of developing cancer ($D$), and it is the cancer diagnosis itself (the outcome) that is related to selection into the surgery cohort.

\subsection{Results for Age Adjusted Analysis}\label{sec:ageres}
\textbf{MGI-Anesthesiology:} Subpart (C) of Figure \ref{fig:realdata_age} shows the performances of all the methods when implemented using data from MGI-Anesthesiology. Utilizing unweighted logistic regression, the age-adjusted estimate obtained was 0.07 with a 95\% confidence interval (CI) of [0.04, 0.11], which is in the entirely opposite direction from the SEER registry reference range for marginal estimates, illustrated by the grey band in the accompanying figure. Upon incorporating cancer in the selection model, the estimates derived from the four IPW methods — PL, SR, PS, and CL are -0.08 (95\% CI [-0.13, -0.04]), -0.13 (95\% CI [-0.20, -0.06]), -0.08 (95\% CI [-0.13, -0.04]), and -0.05 (95\% CI [-0.09, -0.01]) respectively. These adjustments for selection bias position all of the 95\% CIs from these IPW methods towards the direction of association indicated by SEER reference range. Conversely, when cancer is not included in the selection model, the estimates for all methods exhibit bias in the contrary direction underscoring the importance of considering cancer in the selection modeling. For the AIPW method, estimates yielded -0.09 (95\% CI [-0.16, -0.09]), -0.07 (95\% CI [-0.18, -0.11]) when the selection model included cancer and the auxiliary model did and did not include cancer, respectively. When the selection model excluded cancer with the auxiliary including it, the result was -0.05 (95\% CI [-0.09, -0.01]).\\

\noindent
\textbf{MIPACT:} Subpart (D) of Figure \ref{fig:realdata_age} shows the performances of all the methods when implemented using data from MIPACT. The estimated association between cancer and biological sex using the unweighted logistic regression method is considerably biased, with an estimate of 0.38 and a 95\% confidence interval (CI) of [0.26, 0.51]. The IPW and AIPW methods were able to reduce the bias to some extent, particularly when cancer was included in the selection model or in the auxiliary score model (for AIPW). \\

\noindent
\textbf{MEND:} Subpart (E) of Figure \ref{fig:realdata_age} shows the performance of all methods using data from MEND. The estimated age-adjusted association between cancer and biological sex from the unweighted logistic regression method exhibited bias, with an estimate of 0.21 and a 95\% confidence interval (CI) of [0.05, 0.36]. The IPW and AIPW methods were able to reduce the bias to some extent, particularly when cancer was excluded from the selection model.\\

\noindent
\textbf{MHB:} Subpart (F) of Figure \ref{fig:realdata_age} provides a comparative analysis of different methods when applied to the MHB data. The unweighted logistic regression method results in a highly biased estimate of the age-adjusted association between cancer and biological sex, with an estimated effect size of 0.72 and a 95\% confidence interval of [0.44, 0.99]. This large bias may stem in part from the unique aspect of the MHB cohort, wherein biological sex is a variable that influences selection. IPW methods and JAIPW, when accounting for cancer, demonstrate some capacity to diminish the observed bias. Nonetheless, the estimates across all these methods are characterized by high variance. This is likely due to the modest sample size of the MHB cohort, which can lead to wider confidence intervals and less precise estimates. \\ 

\noindent
\textbf{Combined Data:} Subpart (A) of Figure \ref{fig:realdata_age} shows the performances of all the methods when the cohorts are combined together. Using unweighted logistic regression, the estimate for the age adjusted association between cancer and biological sex was 0.10 [0.07, 0.13], which is in the entirely opposite direction from the SEER registry reference range for marginal estimates. Upon incorporating cancer in the selection model, the estimates derived from the four IPW methods—JPL, JSR, JPS, and JCL—are -0.03 (95\% CI [-0.08, 0.03]), -0.04 (95\% CI [-0.11, 0.02]), -0.08 (95\% CI [-0.11, -0.04]), and 0.00 (95\% CI [-0.04, 0.04]) respectively.For the JAIPW analysis, including cancer in either the selection model or the auxiliary model resulted in estimates of -0.09 (95\% CI [-0.13, -0.06]), -0.09 (95\% CI [-0.13, -0.06]), and -0.02 (95\% CI [-0.06, 0.02]). All these methods corrected the direction of the cancer-biological sex association estimate.

\newpage
\begin{table}[ht]
\renewcommand{\arraystretch}{2}
\centering
\begin{adjustbox}{max width=1\textwidth}
\begin{tabular}{|c|c|c|c|c|c|c|c|c|c|c|}
\hline
\textbf{Method}               & \textbf{\begin{tabular}[c]{@{}c@{}}Selection \\ Model 1\end{tabular}} & \textbf{\begin{tabular}[c]{@{}c@{}}Selection\\  Model 2\end{tabular}} & \textbf{\begin{tabular}[c]{@{}c@{}}Selection \\ Model 3\end{tabular}} & \textbf{\begin{tabular}[c]{@{}c@{}}Auxiliary \\ Model\end{tabular}} & \textbf{MCSE $(\theta_1$)} & \textbf{ESE $(\theta_1$)} & \textbf{MCSE $(\theta_2$)} & \textbf{ESE $(\theta_2$)} & \textbf{MCSE $(\theta_3$)} & \textbf{ESE $(\theta_3$)} \\ \hline
\textbf{Unweighted}           & -                                                                     & -                                                                     & -                                                                     & -                                                                   & 0.017                      & 0.018                     & 0.018                      & 0.018                     & 0.017                      & 0.018                     \\ \hline
\textbf{Unweighted Diff}      & -                                                                     & -                                                                     & -                                                                     & -                                                                   & 0.017                      & 0.018                     & 0.019                      & 0.019                     & 0.017                      & 0.018                     \\ \hline
\multirow{4}{*}{\textbf{JPL}} & Correct                                                               & Correct                                                               & Correct                                                               & -                                                                   & 0.018                      & 0.018                     & 0.019                      & 0.019                     & 0.019                      & 0.018                     \\ \cline{2-11} 
                              & Incorrect                                                             & Correct                                                               & Correct                                                               & -                                                                   & 0.026                      & 0.020                     & 0.020                      & 0.021                     & 0.021                      & 0.020                     \\ \cline{2-11} 
                              & Incorrect                                                             & Incorrect                                                             & Correct                                                               & -                                                                   & 0.024                      & 0.022                     & 0.022                      & 0.024                     & 0.024                      & 0.021                     \\ \cline{2-11} 
                              & Incorrect                                                             & Incorrect                                                             & Incorrect                                                             & -                                                                   & 0.024                      & 0.021                     & 0.021                      & 0.023                     & 0.024                      & 0.021                     \\ \hline
\multirow{4}{*}{\textbf{JSR}} & Correct                                                               & Correct                                                               & Correct                                                               & -                                                                   & 0.021                      & 0.018                     & 0.018                      & 0.019                     & 0.019                      & 0.018                     \\ \cline{2-11} 
                              & Incorrect                                                             & Correct                                                               & Correct                                                               & -                                                                   & 0.023                      & 0.020                     & 0.020                      & 0.020                     & 0.020                      & 0.020                     \\ \cline{2-11} 
                              & Incorrect                                                             & Incorrect                                                             & Correct                                                               & -                                                                   & 0.023                      & 0.021                     & 0.023                      & 0.023                     & 0.024                      & 0.021                     \\ \cline{2-11} 
                              & Incorrect                                                             & Incorrect                                                             & Incorrect                                                             & -                                                                   & 0.023                      & 0.021                     & 0.021                      & 0.023                     & 0.023                      & 0.020                     \\ \hline
\multirow{4}{*}{\textbf{JCL}} & Correct                                                               & Correct                                                               & Correct                                                               & -                                                                   & 0.017                      & 0.018                     & 0.018                      & 0.019                     & 0.019                      & 0.018                     \\ \cline{2-11} 
                              & Incorrect                                                             & Correct                                                               & Correct                                                               & -                                                                   & 0.021                      & 0.020                     & 0.019                      & 0.020                     & 0.020                      & 0.020                     \\ \cline{2-11} 
                              & Incorrect                                                             & Incorrect                                                             & Correct                                                               & -                                                                   & 0.022                      & 0.022                     & 0.021                      & 0.023                     & 0.023                      & 0.021                     \\ \cline{2-11} 
                              & Incorrect                                                             & Incorrect                                                             & Incorrect                                                             & -                                                                   & 0.022                      & 0.021                     & 0.021                      & 0.023                     & 0.023                      & 0.021                     \\ \hline
\textbf{JPS Exact}            & -                                                                     & -                                                                     & -                                                                     &                                                                     & 0.017                      & 0.018                     & 0.018                      & 0.019                     & 0.018                      & 0.018                     \\ \hline
\textbf{JPS Approximate}      & -                                                                     & -                                                                     & -                                                                     &                                                                     & 0.017                      & 0.018                     & 0.018                      & 0.018                     & 0.018                      & 0.017                     \\ \hline
\multirow{4}{*}{\textbf{JAIPW}}              & Correct                                                               & Correct                                                               & Correct                                                               & Correct                                                             & 0.021                      & 0.023                     & 0.025                      & 0.027                     & 0.025                      & 0.027                     \\ \cline{2-11} 
                              & Correct                                                               & Correct                                                               & Correct                                                               & Incorrect                                                           & 0.020                      & 0.022                     & 0.025                      & 0.026                     & 0.024                      & 0.024                     \\ \cline{2-11} 
                              & Incorrect                                                             & Incorrect                                                             & Incorrect                                                             & Correct                                                             & 0.022                      & 0.026                     & 0.024                      & 0.028                     & 0.025                      & 0.028                     \\ \cline{2-11} 
                              & Incorrect                                                             & Incorrect                                                             & Incorrect                                                             & Incorrect                                                           & 0.023                      & 0.023                     & 0.027                      & 0.026                     & 0.024                      & 0.023                     \\ \hline
\end{tabular}
\end{adjustbox}
\vspace{0.5cm}
\caption{Comparison of Estimated Standard Error (ESE) using proposed variance estimators and Monte Carlo Standard Errors (MCSE) across the unweighted, four joint IPW methods (JPL, JSR, JCL, JPS), and JAIPW  under simulation setup 1. The results are obtained using number of simulation replications as $R=500$.\\
\textbf{Abbreviations:} Unweighted = Unweighted Logistic Regression; JSR = Joint Simplex Regression; JPL = Joint Pseudolikelihood; JPS = Joint Post Stratification; JCL = Joint Calibration; JAIPW = Joint Augmented Inverse Probability Weighted.}
\label{tab:table_s1}
\end{table}
\begin{table}[ht]
\renewcommand{\arraystretch}{2}
\centering
\begin{adjustbox}{max width=1\textwidth}
\begin{tabular}{|c|c|c|c|c|c|c|c|c|c|c|}
\hline
\textbf{Method}               & \textbf{\begin{tabular}[c]{@{}c@{}}Selection \\ Model 1\end{tabular}} & \textbf{\begin{tabular}[c]{@{}c@{}}Selection\\  Model 2\end{tabular}} & \textbf{\begin{tabular}[c]{@{}c@{}}Selection \\ Model 3\end{tabular}} & \textbf{\begin{tabular}[c]{@{}c@{}}Auxiliary \\ Model\end{tabular}} & \textbf{MCSE $(\theta_1$)} & \textbf{ESE $(\theta_1$)} & \textbf{MCSE $(\theta_2$)} & \textbf{ESE $(\theta_2$)} & \textbf{MCSE $(\theta_3$)} & \textbf{ESE $(\theta_3$)} \\ \hline
\textbf{Unweighted}           & -                                                                     & -                                                                     & -                                                                     & -                                                                   & 0.018                      & 0.018                     & 0.018                      & 0.017                     & 0.017                      & 0.018                     \\ \hline
\textbf{Unweighted Diff}      & -                                                                     & -                                                                     & -                                                                     & -                                                                   & 0.018                      & 0.020                     & 0.019                      & 0.021                     & 0.018                      & 0.019                     \\ \hline
\multirow{4}{*}{\textbf{JPL}} & Correct                                                               & Correct                                                               & Correct                                                               & -                                                                   & 0.059                      & 0.068                     & 0.091                      & 0.089                     & 0.058                      & 0.076                     \\ \cline{2-11} 
                              & Incorrect                                                             & Correct                                                               & Correct                                                               & -                                                                   & 0.018                      & 0.019                     & 0.019                      & 0.021                     & 0.019                      & 0.023                     \\ \cline{2-11} 
                              & Incorrect                                                             & Incorrect                                                             & Correct                                                               & -                                                                   & 0.018                      & 0.017                     & 0.019                      & 0.019                     & 0.019                      & 0.017                     \\ \cline{2-11} 
                              & Incorrect                                                             & Incorrect                                                             & Incorrect                                                             & -                                                                   & 0.018                      & 0.017                     & 0.019                      & 0.018                     & 0.019                      & 0.017                     \\ \hline
\multirow{4}{*}{\textbf{JSR}} & Correct                                                               & Correct                                                               & Correct                                                               & -                                                                   & 0.029                      & 0.026                     & 0.034                      & 0.029                     & 0.030                      & 0.026                     \\ \cline{2-11} 
                              & Incorrect                                                             & Correct                                                               & Correct                                                               & -                                                                   & 0.018                      & 0.018                     & 0.019                      & 0.019                     & 0.018                      & 0.018                     \\ \cline{2-11} 
                              & Incorrect                                                             & Incorrect                                                             & Correct                                                               & -                                                                   & 0.018                      & 0.018                     & 0.018                      & 0.018                     & 0.018                      & 0.018                     \\ \cline{2-11} 
                              & Incorrect                                                             & Incorrect                                                             & Incorrect                                                             & -                                                                   & 0.018                      & 0.018                     & 0.018                      & 0.018                     & 0.018                      & 0.018                     \\ \hline
\multirow{4}{*}{\textbf{JCL}} & Correct                                                               & Correct                                                               & Correct                                                               & -                                                                   & 0.024                      & 0.022                     & 0.024                      & 0.022                     & 0.020                      & 0.020                     \\ \cline{2-11} 
                              & Incorrect                                                             & Correct                                                               & Correct                                                               & -                                                                   & 0.018                      & 0.018                     & 0.019                      & 0.019                     & 0.018                      & 0.018                     \\ \cline{2-11} 
                              & Incorrect                                                             & Incorrect                                                             & Correct                                                               & -                                                                   & 0.018                      & 0.018                     & 0.019                      & 0.019                     & 0.018                      & 0.018                     \\ \cline{2-11} 
                              & Incorrect                                                             & Incorrect                                                             & Incorrect                                                             & -                                                                   & 0.018                      & 0.018                     & 0.019                      & 0.018                     & 0.018                      & 0.018                     \\ \hline
\textbf{JPS Exact}            & -                                                                     & -                                                                     & -                                                                     & -                                                                   & 0.019                      & 0.019                     & 0.019                      & 0.019                     & 0.018                      & 0.019                     \\ \hline
\textbf{JPS Approximate}      & -                                                                     & -                                                                     & -                                                                     & -                                                                   & 0.018                      & 0.018                     & 0.019                      & 0.019                     & 0.017                      & 0.017                     \\ \hline
\multirow{4}{*}{\textbf{JAIPW}}               & Correct                                                               & Correct                                                               & Correct                                                               & Correct                                                             & 0.032                      & 0.030                     & 0.026                      & 0.029                     & 0.027                      & 0.029                     \\ \cline{2-11}
                              & Correct                                                               & Correct                                                               & Correct                                                               & Incorrect                                                           & 0.035                      & 0.030                     & 0.035                      & 0.033                     & 0.030                      & 0.028                     \\ \cline{2-11}
                              & Incorrect                                                             & Incorrect                                                             & Incorrect                                                             & Correct                                                             & 0.023                      & 0.023                     & 0.025                      & 0.028                     & 0.024                      & 0.028                     \\ \cline{2-11}
                              & Incorrect                                                             & Incorrect                                                             & Incorrect                                                             & Incorrect                                                           & 0.023                      & 0.025                     & 0.022                      & 0.026                     & 0.024                      & 0.028                     \\ \hline
\end{tabular}
\end{adjustbox}
\vspace{0.5cm}
\caption{Comparison of Estimated Standard Error (ESE) using proposed variance estimators and Monte Carlo Standard Errors (MCSE) across the unweighted, four joint IPW methods (JPL, JSR, JCL, JPS), and JAIPW  under simulation setup 2. The results are obtained using number of simulation replications as $R=500$.\\
\textbf{Abbreviations:} Unweighted = Unweighted Logistic Regression; JSR = Joint Simplex Regression; JPL = Joint Pseudolikelihood; JPS = Joint Post Stratification; JCL = Joint Calibration; JAIPW = Joint Augmented Inverse Probability Weighted.}
\label{tab:table_s2}
\end{table}

\newpage
\begin{table}[]
\centering
\begin{adjustbox}{width=1\textwidth}
\begin{tabular}{|ccccc|}
\hline
\textbf{Cohort}                                                                                  & \textbf{Anesthesiology}                                                      & \textbf{MIPACT}                                                              & \textbf{MEND}                                                                & \textbf{MHB}                                                                 \\ \hline
\textbf{Cancer}                                                                                  & \begin{tabular}[c]{@{}c@{}}Yes (52.2\%)\\ No  (47.8\%)\end{tabular}          & \begin{tabular}[c]{@{}c@{}}Yes (24.7\%)\\ No (75.3\%)\end{tabular}           & \begin{tabular}[c]{@{}c@{}}Yes (35.8\%)\\ No (64.2\%)\end{tabular}           & \begin{tabular}[c]{@{}c@{}}Yes (16.8\%)\\ No (83.2\%)\end{tabular}           \\ \hline
\textbf{Gender}                                                                                  & \begin{tabular}[c]{@{}c@{}}Female (53.6\%)\\ Male (46.4\%)\end{tabular}      & \begin{tabular}[c]{@{}c@{}}Female (53.7\%)\\ Male (46.3\%)\end{tabular}      & \begin{tabular}[c]{@{}c@{}}Female (51.1\%)\\ Male (48.9\%)\end{tabular}      & \begin{tabular}[c]{@{}c@{}}Female (62.7\%)\\ Male (37.3\%)\end{tabular}      \\ \hline
\textbf{Age (Last Entry)}                                                                        & \begin{tabular}[c]{@{}c@{}}Mean : 58.8\\ Sd : 16.6\end{tabular}              & \begin{tabular}[c]{@{}c@{}}Mean : 48.8\\ Sd : 16.7\end{tabular}              & \begin{tabular}[c]{@{}c@{}}Mean: 57.4\\ Sd : 16.5\end{tabular}               & \begin{tabular}[c]{@{}c@{}}Mean : 41.1\\ Sd : 14.7\end{tabular}              \\ \hline
\textbf{Race}                                                                                    & \begin{tabular}[c]{@{}c@{}}Caucasian (91.0 \%)\\ Others (9.0\%)\end{tabular} & \begin{tabular}[c]{@{}c@{}}Caucasian (53.4\%)\\ Others (46.5\%)\end{tabular} & \begin{tabular}[c]{@{}c@{}}Caucasian (85.3\%)\\ Others (14.7\%)\end{tabular} & \begin{tabular}[c]{@{}c@{}}Caucasian (88.4\%)\\ Others (11.6\%)\end{tabular} \\ \hline
\textbf{BMI}                                                                                     & \begin{tabular}[c]{@{}c@{}}Mean : 30.0\\ Sd : 7.2\end{tabular}               & \begin{tabular}[c]{@{}c@{}}Mean : 28.5\\ Sd : 7.3\end{tabular}               & \begin{tabular}[c]{@{}c@{}}Mean : 32.6\\ Sd : 7.8\end{tabular}               & \begin{tabular}[c]{@{}c@{}}Mean : 29.0\\ Sd : 7.6\end{tabular}               \\ \hline
\textbf{CHD}                                                                                     & \begin{tabular}[c]{@{}c@{}}Yes (17.0\%)\\ No (83\%)\end{tabular}             & \begin{tabular}[c]{@{}c@{}}Yes (7.9\%)\\ No (92.1\%)\end{tabular}            & \begin{tabular}[c]{@{}c@{}}Yes (29.7\%)\\ No (70.3\%)\end{tabular}           & \begin{tabular}[c]{@{}c@{}}Yes (4.3\%)\\ No (95.7\%)\end{tabular}            \\ \hline
\textbf{Diabetes}                                                                                & \begin{tabular}[c]{@{}c@{}}Yes (30.9\%)\\ No (69.1\%)\end{tabular}           & \begin{tabular}[c]{@{}c@{}}Yes (29.8\%)\\ No (70.2\%)\end{tabular}           & \begin{tabular}[c]{@{}c@{}}Yes (96.2\%)\\ No (3.8\%)\end{tabular}            & \begin{tabular}[c]{@{}c@{}}Yes (39.6\%)\\ No (60.4\%)\end{tabular}           \\ \hline
\textbf{\begin{tabular}[c]{@{}c@{}}High Trigylcerides\\ (\textgreater{}=150 mg/dl)\end{tabular}} & \begin{tabular}[c]{@{}c@{}}Yes (9.4\%)\\ No (90.2\%)\end{tabular}            & \begin{tabular}[c]{@{}c@{}}Yes (12.0\%)\\ No (88.0\%)\end{tabular}           & \begin{tabular}[c]{@{}c@{}}Yes (42.6\%)\\ No (57.4\%)\end{tabular}           & \begin{tabular}[c]{@{}c@{}}Yes (6.6\%)\\ No (93.4\%)\end{tabular}            \\ \hline
\textbf{\begin{tabular}[c]{@{}c@{}}Vitamin D \\ Deficiency\\ (\textless{}=40)\end{tabular}}      & \begin{tabular}[c]{@{}c@{}}Yes (16.9\%)\\ No (83.1\%)\end{tabular}           & \begin{tabular}[c]{@{}c@{}}Yes (19.2\%)\\ No (80.8\%)\end{tabular}           & \begin{tabular}[c]{@{}c@{}}Yes (46.9\%)\\ No (53.1\%)\end{tabular}           & \begin{tabular}[c]{@{}c@{}}Yes (17.2\%)\\ No (82.8\%)\end{tabular}           \\ \hline
\textbf{Depression}                                                                              & \begin{tabular}[c]{@{}c@{}}Yes (32.6\%)\\ No (67.3\%)\end{tabular}           & \begin{tabular}[c]{@{}c@{}}Yes (27.0\%)\\ No (73.0\%)\end{tabular}           & \begin{tabular}[c]{@{}c@{}}Yes (44.0\%)\\ No (56.0\%)\end{tabular}           & \begin{tabular}[c]{@{}c@{}}Yes (81.6\%)\\ No (18.4\%)\end{tabular}           \\ \hline
\textbf{Anxiety}                                                                                 & \begin{tabular}[c]{@{}c@{}}Yes (34.7\%)\\ No (65.3\%)\end{tabular}           & \begin{tabular}[c]{@{}c@{}}Yes (32.5\%)\\ No (67.5\%)\end{tabular}           & \begin{tabular}[c]{@{}c@{}}Yes (40.2\%)\\ No (59.8\%)\end{tabular}           & \begin{tabular}[c]{@{}c@{}}Yes (89.8\%)\\ No (10.2\%)\end{tabular}           \\ \hline
\end{tabular}
\end{adjustbox}
\vspace{0.5cm}
\caption{Descriptive Summaries of the different variables of interest in different cohorts of interest of MGI, namely Anesthesiology, MIPACT (Michigan Predictive Activity and Clinical Trajectories), MEND (Metabolism, Endocrinology, and Diabetes) and MHB (Mental Health Biobank).  CHD stands for Coronary heart disease. For continuous variables we reported Mean (Sd). The phecodes used here have been discussed in details Supplementary Section S12.}
\label{tab:Table 3}
\end{table}

\begin{figure}[ht]
	\centering
    \includegraphics[width =0.9\linewidth]{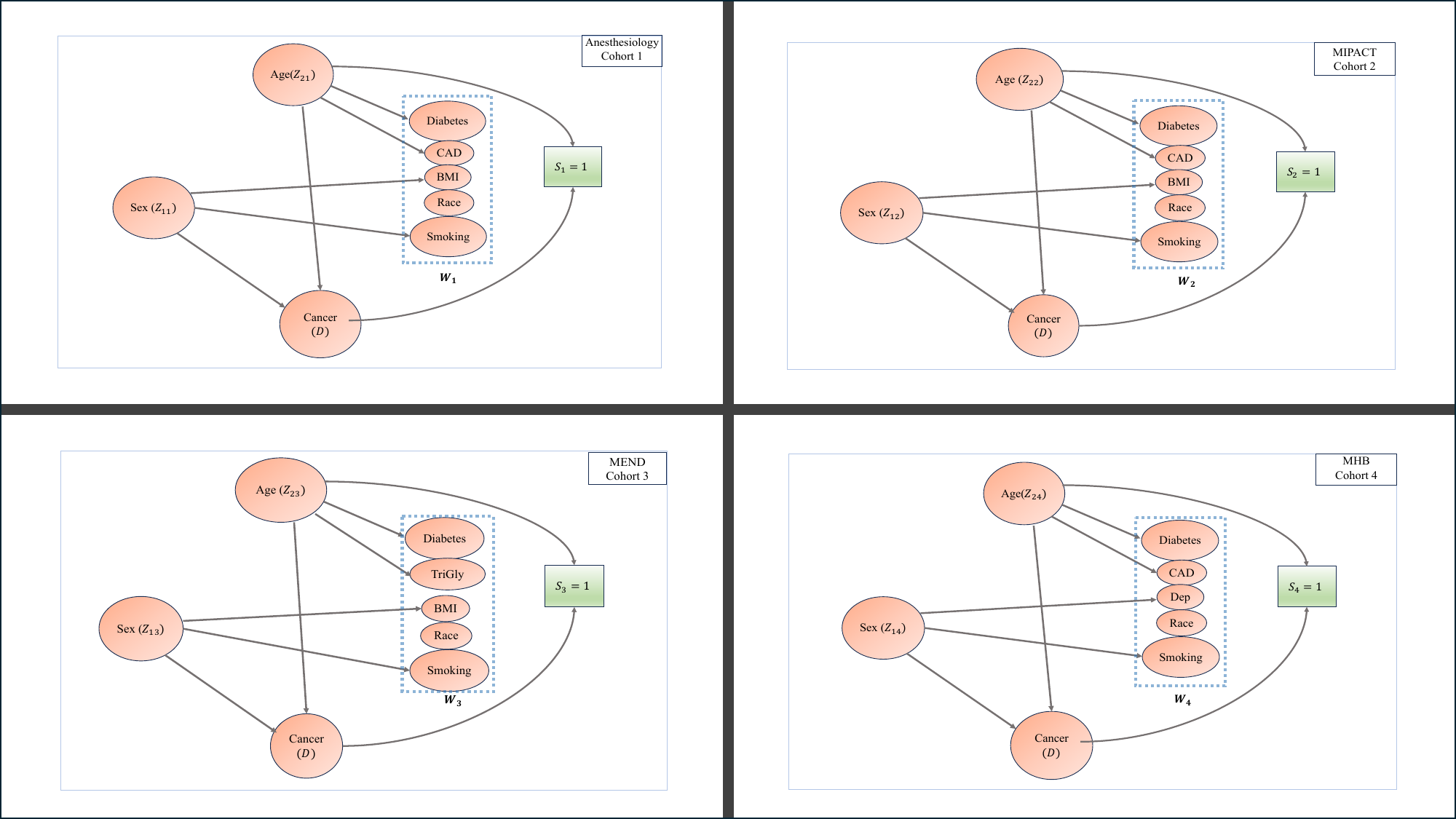}
	\caption{Directed Acyclic Graph Relationships between the disease and selection model variables in analysis of age-adjusted ($Z_2$) association between cancer ($D$) and biological sex ($Z_1$) in the four different cohorts of interest in the Michigan Genomics Initiative, namely MGI Anesthesiology (BB), MIPACT (Michigan Predictive Activity and Clinical Trajectories), MEND (Metabolism, Endocrinology, and Diabetes) and MHB (Mental Health BioBank). CAD, BMI, TriGly and Dep stand for Coronary Artery Disease, Body Mass Index, Triglycerides, and Depression, respectively.}
	\label{fig:dagmgi}
\end{figure}
\begin{figure}[ht]
	\centering
    \includegraphics[width =\linewidth]{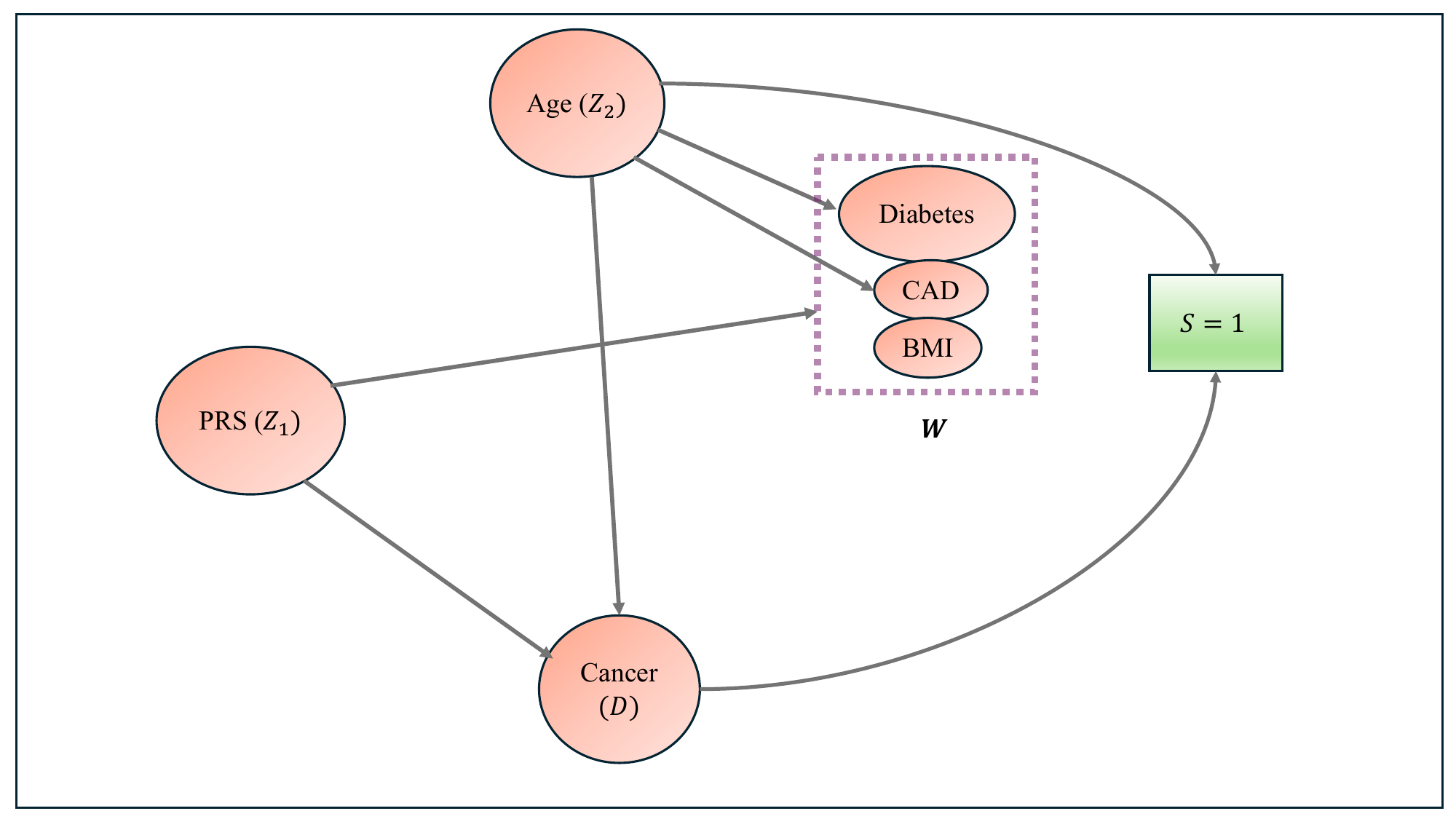}
	\caption{Directed Acyclic Graph Relationships between the disease and selection model variables in analysis of age-adjusted ($Z_2$) association between cancer ($D$) and Polygenic Risk Score (PRS)($Z_1$) for overall cancer. 
    in the Michigan Genomics Initiative. CAD and BMI stand for Coronary Artery Disease and Body Mass Index respectively.}
	\label{fig:dagmgi_PRS}
\end{figure}
\begin{figure}[ht]
    \centering
    \includegraphics[width =0.8\linewidth]{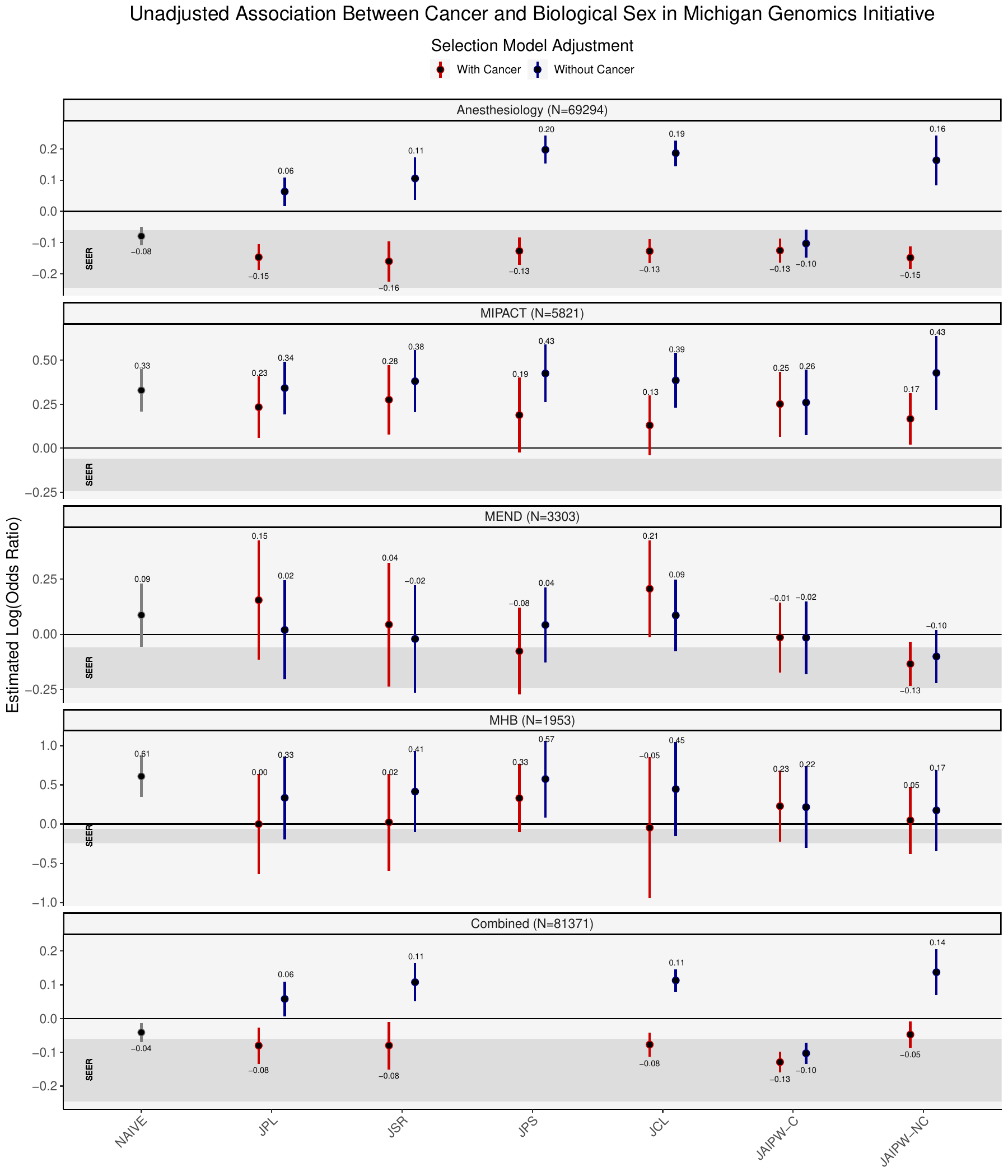}
    \caption{Estimates of the marginal log (Odds Ratio) for the association between cancer and biological sex, with 95\% confidence intervals, across Anesthesiology, MIPACT, MEND, MHB and the combined cohort.
    Comparisons are shown for the unweighted logistic regression (NAIVE, gray) and methods adjusting for selection bias (JPL, JSR, JPS, JCL, JAIPW-C and JAIPW-NC). For the IPW and JAIPW methods, estimates are shown either in red or blue depending on the selection model including cancer status (red, Cancer) or not (blue, No Cancer). JAIPW-C includes cancer in the auxiliary score model, while JAIPW-NC do not. The gray horizontal band represents the estimates of the log(Odds Ratio) from the SEER 2008-2016 registry.}\label{fig:realdata_supp}
\end{figure}
\begin{figure}[ht]
    \centering
    \includegraphics[width =0.8\linewidth]{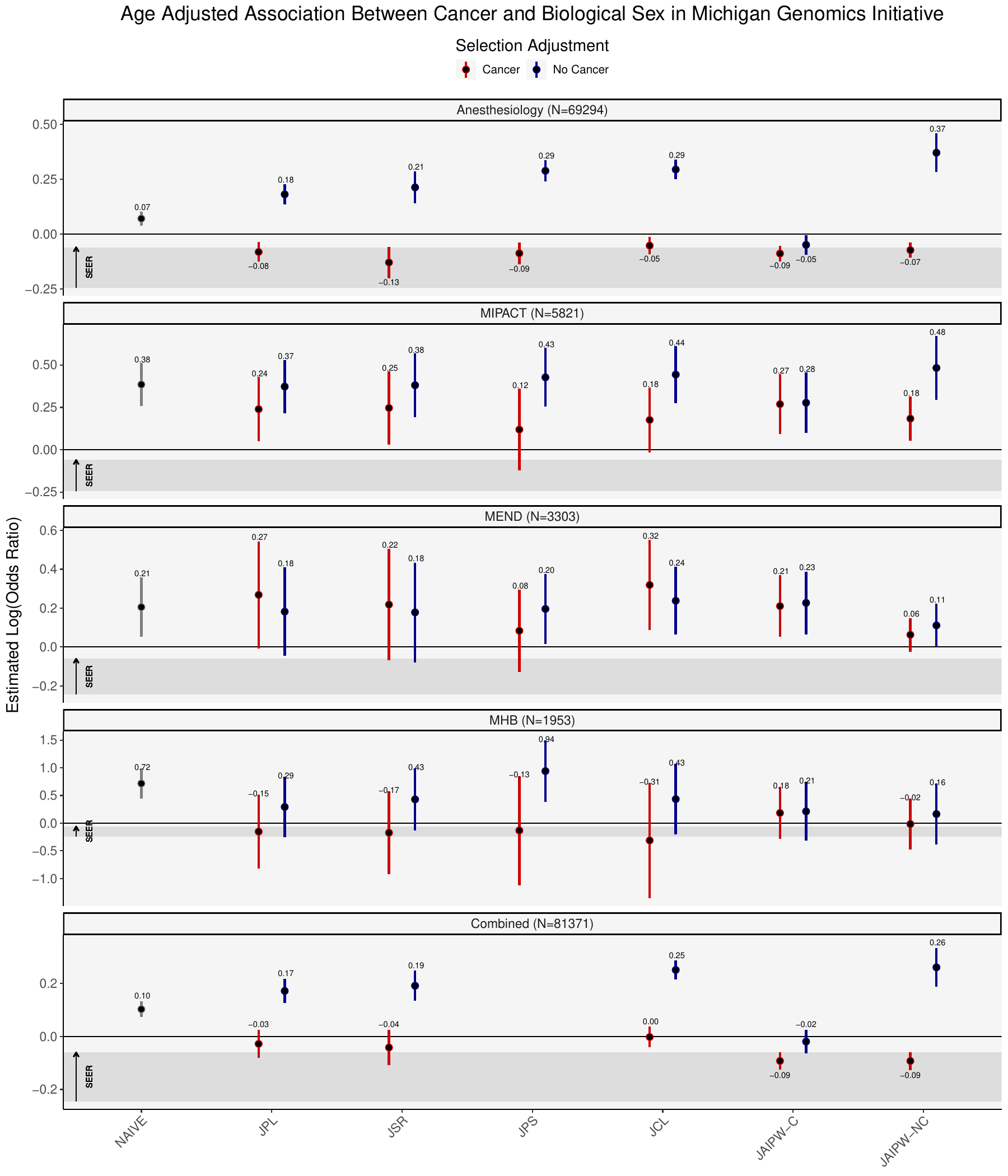}
    \caption{Estimates of the age-adjusted log (Odds Ratio) for the association between cancer and biological sex, with 95\% confidence intervals, across Anesthesiology, MIPACT, MEND, MHB and the combined cohort.
    Comparisons are shown for the unweighted logistic regression (NAIVE, gray) and methods adjusting for selection bias (JPL, JSR, JPS, JCL, JAIPW-C and JAIPW-NC). For the IPW and JAIPW methods, estimates are shown either in red or blue depending on the selection model including cancer status (red, Cancer) or not (blue, No Cancer). JAIPW-C includes cancer in the auxiliary score model, while JAIPW-NC do not. The gray horizontal band represents the estimates of the log(Odds Ratio) from the SEER 2008-2016 registry.}\label{fig:realdata_age}
\end{figure}
\begin{figure}[ht]
    \centering
    \includegraphics[width =\linewidth]{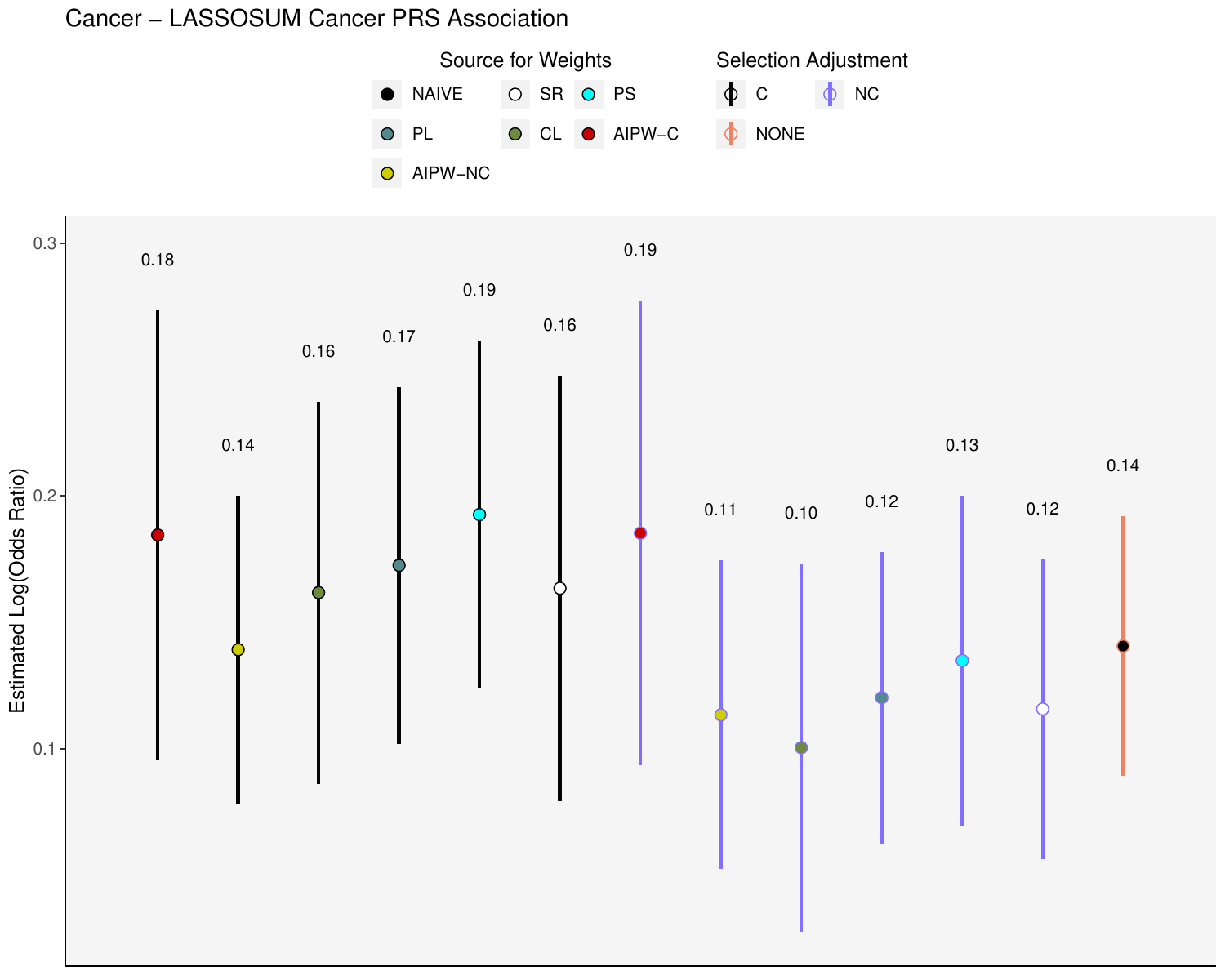}
    \caption{Estimates of the age-adjusted log odds ratio of cancer with one unit change in Interquartile Range (IQR) of  Polygenic Risk Score (PRS) for overall cancer  along with 95\% C.I in  using JAIPW, the four JIPW methods and the unweighted logistic regression with and without including cancer as a selection variable. The black and violet bars correspond to the estimates of the JAIPW method and the four JIPW methods without and with including cancer in the selection model respectively.}\label{fig:prs}
\end{figure}

\end{document}